\newif\iffull
\newif\ifnotes
  \setlist[itemize]{leftmargin=*}
  \setlist[enumerate]{leftmargin=*}
\def\grd@save@target#1{%
  \def\grd@target{#1}}
\def\grd@save@start#1{%
  \def\grd@start{#1}}
\tikzset{
  grid with coordinates/.style={
    to path={%
      \pgfextra{%
        \edef\grd@@target{(\tikztotarget)}%
        \tikz@scan@one@point\grd@save@target\grd@@target\relax
        \edef\grd@@start{(\tikztostart)}%
        \tikz@scan@one@point\grd@save@start\grd@@start\relax
        \draw[minor help lines] (\tikztostart) grid (\tikztotarget);
        \draw[major help lines] (\tikztostart) grid (\tikztotarget);
        \grd@start
        \pgfmathsetmacro{\grd@xa}{\the\pgf@x/1cm}
        \pgfmathsetmacro{\grd@ya}{\the\pgf@y/1cm}
        \grd@target
        \pgfmathsetmacro{\grd@xb}{\the\pgf@x/1cm}
        \pgfmathsetmacro{\grd@yb}{\the\pgf@y/1cm}
        \pgfmathsetmacro{\grd@xc}{\grd@xa + \pgfkeysvalueof{/tikz/grid with coordinates/major step}}
        \pgfmathsetmacro{\grd@yc}{\grd@ya + \pgfkeysvalueof{/tikz/grid with coordinates/major step}}
        \foreach \x in {\grd@xa,\grd@xc,...,\grd@xb}
        \node[anchor=north] at (\x,\grd@ya) {\pgfmathprintnumber{\x}};
        \foreach \y in {\grd@ya,\grd@yc,...,\grd@yb}
        \node[anchor=east] at (\grd@xa,\y) {\pgfmathprintnumber{\y}};
      }
    }
  },
  minor help lines/.style={
    help lines,
    step=\pgfkeysvalueof{/tikz/grid with coordinates/minor step}
  },
  major help lines/.style={
    help lines,
    line width=\pgfkeysvalueof{/tikz/grid with coordinates/major line width},
    step=\pgfkeysvalueof{/tikz/grid with coordinates/major step}
  },
  grid with coordinates/.cd,
  minor step/.initial=.2,
  major step/.initial=1,
  major line width/.initial=2pt,
}
\newcommand{\doclearpage}{%
\iffull
\clearpage
\else
\fi
}
\newcommand{\keywords}[1]{\bigskip\par\noindent{\footnotesize\textbf{Keywords\/}: #1}}
\newcommand{\sunderline}[1]{\underline{\smash{#1}}}
\newcommand{\defemph}[1]{\textbf{\emph{#1}}}
\newcommand{\mhl}[1]{\text{\hl{$#1$}}}
\newcommand{\FormatAuthor}[3]{
\begin{tabular}{c}
#1 \\ {\small\texttt{#2}} \\ {\small #3}
\end{tabular}
}
\theoremstyle{plain} %
\newtheorem{theorem}{Theorem}[section]
\newtheorem{lemma}[theorem]{Lemma}
\newtheorem{claim}[theorem]{Claim}
\newtheorem{corollary}[theorem]{Corollary}
\newtheorem{definition}[theorem]{Definition}
\newtheorem{construction}[theorem]{Construction}
\newtheorem*{uclaim}{Claim}
\theoremstyle{definition} %
\newtheorem{remark}[theorem]{Remark}
\theoremstyle{remark} %
\newtheoremstyle{soundnessstyle} %
    {\topsep}                    %
    {\topsep}                    %
    {}                           %
    {0em}                        %
    {\scshape}                   %
    {.}                          %
    {.5em}                       %
    {}  %
\theoremstyle{soundnessstyle}
\newcommand{\secref}[1]{Section~\protect\ref{#1}}
\newcommand{\appref}[1]{Appendix~\protect\ref{#1}}
\newcommand{\defref}[1]{Definition~\protect\ref{#1}}
\newcommand{\figref}[1]{Figure~\protect\ref{#1}}
\newcommand{\thmref}[1]{Theorem~\protect\ref{#1}}
\newcommand{\lemref}[1]{Lemma~\protect\ref{#1}}
\newcommand{\clmref}[1]{Claim~\protect\ref{#1}}
\newcommand{\corref}[1]{Corollary~\protect\ref{#1}}
\newcommand{\eqnref}[1]{Equation~\protect\ref{#1}}
\newcommand{\remref}[1]{Remark~\protect\ref{#1}}
\newcommand{\stepref}[1]{Step~\protect\ref{#1}}
\newcommand{\equnref}[1]{Equation~\protect\ref{#1}}
\newcommand{\CB}{\allowbreak}
\newcommand{\pair}[2]{(#1 ,\CB #2)}
\DeclareMathOperator{\poly}{poly}
\DeclareMathOperator{\polylog}{polylog}
\newcommand{\Ot}[1]{\tilde{O}(#1)}
\DeclareMathOperator{\Expectation}{\mathbb{E}}
\newcommand{\yes}{\mathsf{yes}}
\newcommand{\no}{\mathsf{no}}
\newcommand{\Bits}{\{0,1\}}
\newcommand{\Naturals}{\mathbb{N}}
\newcommand{\DefineEqual}{:=}
\newcommand{\SetCardinality}[1]{|#1|}
\newcommand{\BitSize}[1]{|#1|}
\newcommand{\Proof}{\pi\xspace}
\newcommand{\Randomness}{r}
\newcommand{\FormatComplexityClass}[1]{\mathbf{#1}}
\newcommand{\NTIME}{\FormatComplexityClass{NTIME}}
\newcommand{\BPP}{\FormatComplexityClass{BPP}}
\newcommand{\NP}{\FormatComplexityClass{NP}}
\newcommand{\sharpP}{\FormatComplexityClass{\#P}}
\newcommand{\PSPACE}{\FormatComplexityClass{PSPACE}}
\newcommand{\NEXP}{\FormatComplexityClass{NEXP}}
\newcommand{\IPCP}{\FormatComplexityClass{IPCP}}
\newcommand{\PZKIPCP}{\FormatComplexityClass{PZK\mbox{-}IPCP}}
\newcommand{\AM}{\FormatComplexityClass{AM}}
\newcommand{\coAM}{\FormatComplexityClass{coAM}}
\newcommand{\Domain}{D}
\newcommand{\Range}{R}
\newcommand{\SubDomain}{\tilde{\Domain}}
\newcommand{\Restrict}[2]{#1|_{#2}}
\newcommand{\Field}{\mathbb{F}}
\newcommand{\SubField}{\mathbb{K}}
\newcommand{\FieldSize}{q}
\newcommand{\Multiplicative}[1]{#1^{\times}}
\newcommand{\VariableX}{X}
\newcommand{\VariableY}{Y}
\newcommand{\VariableZ}{Z}
\newcommand{\Machine}{M}
\newcommand{\Lagrange}[1]{I_{#1}}
\NewDocumentCommand{\IndividualDegree}{m o}{\IfValueTF{#2}{\mathrm{deg}_{#2}(#1)}{\mathrm{deg}(#1)}}
\newcommand{\PolyA}{P}
\newcommand{\PolynomialRing}[3]{#1[#3_{1,\dots,#2}]}
\newcommand{\PolynomialRingIndOne}[4]{#1[#3_{1,\dots,#2}^{\leq #4}]}
\newcommand{\PolynomialRingIndOneXY}[7]{#1[#3_{1,\dots,#2}^{\leq #6},#5_{1,\dots,#4}^{\leq #7}]}
\newcommand{\pST}{\; \middle\vert \;}
\newcommand{\MakeDistribution}[1]{\mathcal{#1}}
\newcommand{\Distribution}{\MakeDistribution{D}}
\newcommand{\Relation}{\mathscr{R}}
\newcommand{\Language}{\mathscr{L}}
\newcommand{\Witnesses}[2]{#1\vert_{#2}}
\newcommand{\GetLanguage}[1]{\mathrm{Lan}(#1)}
\newcommand{\Instance}{\mathbbmss{x}}
\newcommand{\Witness}{\mathbbmss{w}}
\newcommand{\InstanceSize}{n}
\newcommand{\DeciderMachine}{M}
\newcommand{\DeciderTime}{T}
\newcommand{\Prover}{P\xspace}
\newcommand{\Verifier}{V\xspace}
\newcommand{\Simulator}{S\xspace}
\newcommand{\IOPize}[1]{#1}
\newcommand{\IOPVerifier}{\IOPize{\Verifier}}
\newcommand{\SCSymbol}{\mathrm{SC}}
\newcommand{\SCSubset}{H}
\newcommand{\SCVars}{m}
\newcommand{\SCDegree}{d}
\newcommand{\SCPoly}{F}
\newcommand{\SCSum}{a}
\newcommand{\SCInput}{\Field,\SCVars,\SCDegree,\SCSubset,\SCSum}
\newcommand{\SCRelation}{\Relation_{\SCSymbol}}
\newcommand{\RandPoly}{R}
\newcommand{\MaskedPoly}{Q}
\newcommand{\AnsTable}[1]{\mathsf{ans}_{#1}}
\newcommand{\IPSCProver}{\Prover_{\mathrm{IP}}}
\newcommand{\IPSCVerifier}{\Verifier_{\mathrm{IP}}}
\newcommand{\IPSCSimulator}{\Simulator_{\mathrm{IP}}}
\newcommand{\SCStrength}{b}
\newcommand{\CodeSimAlgorithm}{\mathcal{A}}
\newcommand{\EmptyVector}{\bot}
\newcommand{\ListSize}{\ell}
\newcommand{\SlowSimulator}{\Simulator_{\mathrm{slow}}}
\newcommand{\Strong}[1]{{#1^{\star}}}
\newcommand{\SubsetSize}{\lambda}
\newcommand{\SSCVars}{k}
\newcommand{\SSCSubset}{G}
\newcommand{\StrongRandPoly}{Z}
\newcommand{\AuxRandPoly}{A}
\newcommand{\SoundnessSet}{I}
\newcommand{\LD}[1]{\hat{#1}}
\newcommand{\RLD}[1]{\dot{#1}}
\newcommand{\Layer}[1]{V_{#1}}
\newcommand{\Add}[1]{\mathrm{add}_{#1}}
\newcommand{\Mult}[1]{\mathrm{mul}_{#1}}
\newcommand{\LDAdd}[1]{\LD{\Add{}}_{#1}}
\newcommand{\LDMult}[1]{\LD{\Mult{}}_{#1}}
\newcommand{\GKRVars}{m}
\newcommand{\GKRVarsInput}{m'}
\newcommand{\Depth}{D}
\newcommand{\Size}{S}
\newcommand{\Circuit}{C}
\newcommand{\WireOracles}{F}
\newcommand{\Interact}[2]{\langle #1,#2 \rangle}
\DeclareMathOperator{\rank}{rank}
\newcommand{\Malicious}[1]{\tilde{#1}}
\newcommand{\Simulated}[1]{#1_{\mathrm{sim}}}
\newcommand{\View}{\mathrm{View}}
\newcommand{\FView}[1]{v({#1})}
\newcommand{\IOPView}[2]{\View\;\Interact{#1}{#2}}
\newcommand{\IPCPView}[2]{\View\;\Interact{#1}{#2}}
\newcommand{\Characteristic}[1]{\mathrm{char}(#1)}
\newcommand{\RMSubDomain}{H}
\newcommand{\RMVars}{m}
\newcommand{\RMDegree}{d}
\newcommand{\TQBFLanguage}{\Language_{\mathrm{TQBF}}}
\newcommand{\Formula}{\phi}
\newcommand{\QFormula}{\Phi}
\newcommand{\Quantifier}{\mathrm{Q}}
\newcommand{\NumVars}{n}
\newcommand{\NumClauses}{c}
\newcommand{\Clause}{K}
\newcommand{\OracleSATRelation}{\Relation_{\mathrm{O3SAT}}}
\newcommand{\FormatComplexityFunction}[1]{\mathsf{#1}}
\newcommand{\SoundnessError}{\FormatComplexityFunction{\varepsilon}}
\newcommand{\ProofLength}{\FormatComplexityFunction{l}}
\newcommand{\QueryComplexity}{\FormatComplexityFunction{q}}
\newcommand{\NumRounds}{\FormatComplexityFunction{k}}
\newcommand{\QueryBound}{\FormatComplexityFunction{b}}
\newcommand{\AnyBound}{\FormatComplexityFunction{*}}
\newcommand{\Tree}{T}
\newcommand{\VertexSet}{V}
\newcommand{\EdgeSet}{E}
\newcommand{\Root}[1][]{r_{#1}}
\newcommand{\TreeDepth}[2][]{\mathsf{depth}_{#1}(#2)}
\newcommand{\TreeDegree}[2][]{\mathsf{out}_{#1}(#2)}
\newcommand{\Graph}{G}
\newcommand{\GraphDepth}[2][]{\mathsf{depth}_{#1}(#2)}
\newcommand{\GraphOutDegree}[2][]{\mathsf{out}_{#1}(#2)}
\newcommand{\GraphInDegree}[2][]{\mathsf{in}_{#1}(#2)}
\newcommand{\GraphMaxInDegree}[1]{\mathsf{in}(#1)}
\newcommand{\Width}[1]{\mathsf{width}(#1)}
\newcommand{\SPFreeProjection}[1][]{\rho_{#1}}
\newcommand{\SPSumProjection}[1][]{\sigma_{#1}}
\newcommand{\SPArity}[1]{\mathsf{arity}(#1)}
\newcommand{\SPMaxArity}{\mathsf{arity}}
\newcommand{\aodeg}{t}
\newcommand{\SPValueL}[2]{#2[#1]}
\newcommand{\LDSPValueL}[2]{\expandafter\LD#2[#1]}
\newcommand{\RLDSPValueL}[2]{\expandafter\RLD#2[#1]}
\newcommand{\SPFormula}{\mathcal{F}}
\newcommand{\SPFormulaTuple}{(\Field, \SPSubset, \InternalVertexSetDegree, \SPLeafDegree, \Tree, \SPPoly)}
\newcommand{\SPCircuit}{\mathcal{C}}
\newcommand{\SPCircuitTuple}{(\Field, \SPSubset, \InternalVertexSetDegree, \SPLeafDegree, \Graph, \SPPoly)}
\newcommand{\SPSubset}{H}
\newcommand{\SPDegree}{\delta}
\newcommand{\InternalVertexSetDegree}{\SPDegree_{\mathsf{in}}}
\newcommand{\SPLeafDegree}{\SPDegree_{\mathsf{lf}}}
\newcommand{\SPVars}[1][]{m_{#1}}
\newcommand{\SPPoly}[1][]{C_{#1}}
\newcommand{\SPLeaf}[1][]{\mathbbmss{x}_{#1}}
\newcommand{\SPAuxLeafM}[1][]{\tilde{\SPAuxInput}_{#1}}
\newcommand{\SPAuxLeaf}[1][]{\SPAuxInput_{#1}}
\newcommand{\SPInput}{\mathbbmss{x}}
\newcommand{\SPAuxInput}{\mathbbmss{z}}
\newcommand{\SPOutput}{\mathbbmss{y}}
\newcommand{\Labels}[1]{L_{#1}}
\newcommand{\SPFELanguage}{\Language_{\mathrm{SPFE}}}
\newcommand{\SPFSRelation}{\Relation_{\mathrm{SPFS}}}
\newcommand{\SPCELanguage}{\Language_{\mathrm{SPCE}}}
\newcommand{\SPCSRelation}{\Relation_{\mathrm{SPCS}}}
\newcommand{\ZeroPoly}[1]{\mathbb{Z}_{#1}}
\newcommand{\MaxSpace}[1]{\mathsf{space}(#1)}
\newcommand{\Curve}[1]{A_{#1}}
\newcommand{\ComposedCurve}[1]{B_{#1}}
\begin{document}
\title{%
A Zero Knowledge Sumcheck and its Applications
}
\author{%
\makebox[\linewidth][c]{
\begin{tabular}[h!]{ccc}
\FormatAuthor{Alessandro Chiesa}{alexch@berkeley.edu}{UC Berkeley}
&
\FormatAuthor{Michael A. Forbes}{miforbes@csail.mit.edu}{Simons Institute for the Theory of Computing}
&
\FormatAuthor{Nicholas Spooner}{nick.spooner@berkeley.edu}{University of Toronto and UC Berkeley}
\end{tabular}
}
}
\iffull
  \date{\today}
\else
  \date{}
\fi
\maketitle
\begin{abstract}

Many seminal results in Interactive Proofs (IPs) use algebraic techniques based on low-degree polynomials, the study of which is pervasive in theoretical computer science. Unfortunately, known methods for endowing such proofs with zero knowledge guarantees do not retain this rich algebraic structure.

In this work, we develop algebraic techniques for obtaining zero knowledge variants of proof protocols in a way that leverages and preserves their algebraic structure. Our constructions achieve unconditional (perfect) zero knowledge in the Interactive Probabilistically Checkable Proof (IPCP) model of Kalai and Raz \cite{KalaiR08} (the prover first sends a PCP oracle, then the prover and verifier engage in an Interactive Proof in which the verifier may query the PCP).

Our main result is a zero knowledge variant of the sumcheck protocol \cite{LundFKN92} in the IPCP model. The sumcheck protocol is a key building block in many IPs, including the protocol for polynomial-space computation due to Shamir \cite{Shamir92}, and the protocol for parallel computation due to Goldwasser, Kalai, and Rothblum \cite{GoldwasserKR15}. A core component of our result is an algebraic commitment scheme, whose hiding property is guaranteed by algebraic query complexity lower bounds \cite{AaronsonW09,JumaKRS09}.  This commitment scheme can then be used to considerably strengthen our previous work \cite{BenSassonCFGRS16} that gives a sumcheck protocol with much weaker zero knowledge guarantees, itself using algebraic techniques based on algorithms for polynomial identity testing \cite{RazS05,BogdanovW04}.

We demonstrate the applicability of our techniques by deriving zero knowledge variants of well-known protocols based on algebraic techniques. First, we construct zero knowledge IPCPs for $\NEXP$ starting with the Multi-prover Interactive Proofs of Babai, Fortnow, and Lund \cite{BabaiFL91}. This result is a direct application of our zero knowledge sumcheck and our algebraic commitment scheme, augmented with the use of `randomized' low-degree extensions.

We also construct protocols in a more restricted model where the prover and verifier engage in a standard Interactive Proof with oracle access to a uniformly random low-degree polynomial (soundness holds with respect to \emph{any} oracle). In this setting we achieve zero knowledge variants of the protocols of Shamir and of Goldwasser, Kalai, and Rothblum.

\keywords{zero knowledge; sumcheck; algebraic query complexity; probabilistically checkable and interactive proofs}

\end{abstract}

\iffull
\clearpage
\setcounter{tocdepth}{2}
{\footnotesize \tableofcontents}
\clearpage
\fi

\clearpage
\section{Introduction}
\label{sec:introduction}

The notion of \emph{Interactive Proofs} (IPs) \cite{BabaiM88,GoldwasserMR89} is fundamental in Complexity Theory and Cryptography. An Interactive Proof for a language $\Language$ is a protocol between a probabilistic polynomial-time \emph{verifier} and a resource-unbounded \emph{prover} that works as follows: given a common input $\Instance$, the prover and verifier exchange some number of messages and then the verifier either accepts or rejects. If $\Instance$ is in $\Language$ then the verifier always accepts; if instead $\Instance$ is not in $\Language$ then the verifier rejects with high probability, regardless of the prover's actions. The seminal results of Lund, Fortnow, Karloff, and Nisan~\cite{LundFKN92} and Shamir~\cite{Shamir92} demonstrate the surprising expressiveness of Interactive Proofs, in particular showing that every language decidable in polynomial space has an Interactive Proof.

Research on IPs has recently focused on new and more refined goals, motivated by the paradigm of \emph{delegation of computation}, in which a resource-limited verifier receives the help of a resource-rich prover to check the output of an expensive (but tractable) computation. In this setting bounding the complexity of the honest prover is important. While every IP protocol has a polynomial-space prover, this prover may run in superpolynomial time, even if the protocol is for a tractable language. Recent work has focused on \emph{doubly-efficient} IPs, where the prover is efficient (it runs in polynomial time) and the verifier is highly efficient (it runs in, say, quasilinear time). Doubly-efficient IPs can be achieved, with various tradeoffs, for many types of computation: languages decidable by uniform circuits of polylogarithmic depth \cite{GoldwasserKR15}; languages decidable in polynomial time and bounded-polynomial space \cite{ReingoldRR16}; and languages decidable by conjunctions of polynomially-many `local' conditions \cite{GoldreichR17}.

A key building block in \emph{all} of these protocols is the \emph{sumcheck protocol} \cite{LundFKN92}, which is an Interactive Proof for claims of the form ``$\sum_{\vec{\alpha} \in H^{m}} F(\vec{\alpha}) = 0$'', where $H$ is a subset of a finite field $\Field$ and $F$ is an $m$-variate polynomial over $\Field$ of small individual degree. The use of sumcheck imbues the aforementioned protocols with an algebraic structure, where the verifier \emph{arithmetizes} a boolean problem into a statement about low-degree polynomials, which can then be checked via the sumcheck protocol. This algebraic structure is not only elegant, but also very useful. Indeed, this structure is crucial not only for highly-efficient software and hardware systems for delegating computation \cite{CormodeMT12,ThalerRMP12,Thaler13,Thaler15,WahbyHGSW16,WahbyKBSTWW17} but also for a diverse set of compelling theoretical applications such as memory delegation \cite{ChungKLR11}, refereed delegation \cite{CanettiRR13}, IPs of proximity \cite{RothblumVW13}, and many others.

Despite these subsequent works demonstrating the flexibility of sumcheck to accommodate additional desirable properties, the \emph{zero knowledge} \cite{GoldwasserMR89} properties of sumcheck have not been explored. This is surprising because zero knowledge, the ability of the prover to establish the validity of an assertion while revealing no insight into its proof, is highly desirable for the cryptographic applications of Interactive Proofs. Unfortunately, achieving zero knowledge is nontrivial because the sumcheck protocol reveals the results of intermediate computations, in particular the partial sums $\sum_{\vec{\alpha} \in H^{m-i}} F(c_{1}, \ldots, c_{i}, \vec{\alpha})$ for $c_{1}, \ldots, c_{i} \in \Field$ chosen by the verifier. These partial sums are in general $\sharpP$-hard to compute so they convey significant additional knowledge to the verifier.

The goal of this work is to enlarge the existing algebraic toolkit based on low-degree polynomials and use these tools to provide a native extension of the sumcheck protocol that is zero knowledge, and to explore applications of such an extension. As we discuss shortly, however, we cannot expect to do so within the model of Interactive Proofs.

\subsection{Prior techniques for achieving zero knowledge}
\label{sec:achieving-zk}

We briefly describe why existing methods fall short of our goal, which is making the sumcheck protocol zero knowledge in an algebraic way. A seminal result in cryptography says that if one-way functions exist then every language having an IP also has a \emph{computational} zero knowledge IP \cite{GoldwasserMR89,ImpagliazzoY87,BenOrGGHKMR88}; this assumption is `minimal' in the sense that if one-way functions do not exist then computational zero knowledge IPs capture only ``average-case'' $\BPP$ \cite{Ostrovsky91,OstrovskyW93}. While powerful, such results are unsatisfactory from our perspective. First, cryptography adds significant efficiency overheads, especially when used in a non-blackbox way as these results do. Second, the results rely on transformations that erase all the algebraic structure of the underlying protocols. While these limitations can be mitigated by using cryptography that leverages some of the underlying structure \cite{CramerD98}, the costs incurred by the use of cryptography remain significant. Ideally, we wish to \emph{avoid} intractability assumptions.

Unfortunately, this is impossible to achieve under standard complexity assumptions, because Interactive Proofs that are \emph{statistical} zero knowledge are limited to languages in $\AM \cap \coAM$ \cite{Fortnow87,AielloH91}. Such languages (conjecturally) do not even include $\NP$, so that we cannot even hope to achieve a `zero knowledge sumcheck protocol' (which would give $\sharpP$).

The quest for zero knowledge without relying on intractability assumptions led to the formulation of \emph{Multi-prover Interactive Proofs} (MIPs) \cite{BenOrGKW88}, where the verifier exchanges messages with two or more non-communicating provers. Groundbreaking results establish that MIPs are very powerful: all (and only) languages decidable in non-deterministic exponential time have MIPs \cite{BabaiFL91} and, in fact, even \emph{perfect} zero knowledge MIPs \cite{BenOrGGHKMR88,DworkFKNS92}. Similar results hold even for the related model of \emph{Probabilistically Checkable Proofs} (PCPs) \cite{FortnowRS88,BabaiFLS91,FGLSS96,AroraS98,AroraLMSS98}, where the prover outputs a proof string that the verifier can check by reading only a few randomly-chosen locations. Namely, all (and only) languages decidable in non-deterministic exponential time have PCPs \cite{BabaiFLS91} and, in fact, even \emph{statistical} zero knowledge PCPs \cite{KilianPT97,IshaiMSX15}.

However, while information-theoretic, the aforementioned works rely on transformations that, once again, discard the rich algebraic structure of the underlying protocols. Thus, zero knowledge in this setting continues to be out of reach of simple and elegant algebraic techniques.

\subsection{Our goal: algebraic techniques for zero knowledge}
\label{sec:our-focus}

Our goal is to develop information-theoretic techniques for achieving zero knowledge in a way that leverages, and preserves, the algebraic structure of the sumcheck protocol and other protocols that build on it. An additional goal is to preserve the simplicity and elegance of these foundational protocols.

Yet, as discussed, we cannot hope to do so with Interactive Proofs, and so we work in another model. We choose to work in a model that combines features of both Interactive Proofs and PCPs: the \emph{Interactive PCP} (IPCP) model of Kalai and Raz \cite{KalaiR08}. The prover first sends to the verifier a long string as a PCP oracle, after which the prover and verifier engage in an Interactive Proof. The verifier is free at any point to query the PCP oracle at locations of its choice, and the verifier only pays for the number of queries it makes, so that exponentially-large PCP oracles are allowed.

Kalai and Raz \cite{KalaiR08} show that the IPCP model has efficiency advantages over both PCPs and IPs (individually). Goyal, Ishai, Mahmoody, and Sahai \cite{GoyalIMS10} construct efficient zero knowledge IPCPs, but their techniques mirror those for zero knowledge PCPs and, in particular, are not algebraic.

One can think of the IPCP model as lying somewhere `in between' the IP and MIP models. Indeed, it is equivalent to a (2-prover) MIP where one of the provers is stateless (its answers do not depend on the verifier's prior messages or queries). This means that soundness is easier to achieve for an IPCP than for an MIP. Zero knowledge, however, is more difficult for an IPCP than for an MIP, because the stateless prover cannot choose which queries it will answer.

A significant advantage of the IPCP model over the MIP model is that one can easily compile (public-coin) IPCPs into cryptographic proofs via transformations that preserve zero knowledge, while only making a black-box use of cryptography. For example, using collision-resistant functions one can obtain public-coin interactive arguments by extending ideas of \cite{Kilian92,IshaiMSX15}; also, using random oracles one can obtain publicly-verifiable non-interactive arguments via \cite{BenSassonCS16} (extending the Fiat--Shamir paradigm \cite{FiatS86} and Micali's ``CS proofs'' \cite{Micali00}). In contrast, known transformations for MIPs yield private-coin arguments \cite{BitanskyC12}, or do not preserve zero knowledge \cite{KalaiRR14}.

\subsection{Main result: a zero knowledge sumcheck}
\label{sec:intro-zk-sumcheck}

Our main result is a zero knowledge analogue of the sumcheck protocol \cite{LundFKN92}, a key building block in many protocols. We now informally state and discuss this result, and in the next sub-section we discuss its applications.

The goal of the sumcheck protocol is to efficiently verify claims of the form ``$\sum_{\vec{\alpha} \in H^{m}} \SCPoly(\vec{\alpha}) = 0$'', where $H$ is a subset of a finite field $\Field$ and $\SCPoly$ is an $m$-variate polynomial over $\Field$ of low individual degree. As the sumcheck protocol is often used in situations where the polynomial $\SCPoly$ is only implicitly defined (including this paper), it is helpful to adopt the viewpoint of Meir~\cite{Meir13}, regarding the sumcheck protocol as a \emph{reduction} from the summation ``$\sum_{\vec{\alpha} \in H^{m}}\SCPoly(\vec{\alpha}) = 0$'' to an evaluation ``$\SCPoly(\vec{c}) = b$''; the latter can be checked directly by the verifier or by another protocol. The verifier in this reduction does not need any information about $\SCPoly$, aside from knowing that $\SCPoly$ has small individual degree. The completeness property of the reduction is that if the summation claim is true, then so is the evaluation claim with probability one. Its soundness property is that if the summation claim is false, then so is the evaluation claim with high probability.

The theorem below states the existence of sumcheck protocol in the above sense that works in the IPCP model and is \emph{zero knowledge}, which means that a verifier does not learn any information beyond the fact that $\SCPoly$ sums to $0$ on $H^{m}$ (and, in our case, a single evaluation of $\SCPoly$). As usual, this means that we establish an efficient procedure for simulating the interaction of a (possibly malicious) verifier with the prover, where the simulator only uses the knowledge of the sum of $\SCPoly$ on $H^{m}$ (and a single evaluation of $\SCPoly$) but otherwise has no actual access to the prover (or $\SCPoly$). This interaction is a random variable depending on the randomness of both the verifier and the prover, and we show that the simulated interaction perfectly replicates this random variable.

\begin{theorem}[Informal version of \thmref{thm:strong-sumcheck-ipcp}]
\label{thm:strong-sumcheck-ipcp-intro}
There exists an IPCP for sumcheck with the following zero knowledge guarantee: the view of any probabilistic polynomial-time verifier in the protocol can be perfectly and efficiently simulated by a simulator that makes only a single query to $\SCPoly$. Moreover, we do not require the full power of the IPCP model: the honest prover's PCP consists only of a random multi-variate polynomial over $\Field$ of small individual degree.
\end{theorem}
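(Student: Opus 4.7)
The plan is to start from the standard sumcheck protocol and add an ``algebraic mask'' sent as the PCP oracle. The honest prover samples a uniformly random polynomial $\RandPoly$ of individual degree $d$ (matching $\SCPoly$), sends it as the PCP oracle, and then sends a single field element $z_{\RandPoly} \DefineEqual \sum_{\vec{\alpha} \in H^m} \RandPoly(\vec{\alpha})$ to the verifier. The verifier replies with a uniformly random $\rho \in \Field$, and the two parties run the standard public-coin sumcheck protocol on $\MaskedPoly \DefineEqual \rho \cdot \SCPoly + \RandPoly$ with target sum $z_{\RandPoly}$ (which equals $\rho \cdot 0 + z_{\RandPoly}$ when the original claim is true). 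Sumcheck ends with a claim $\MaskedPoly(\vec{c}) = v$ at a random $\vec{c} \in \Field^m$; the verifier then queries the PCP once to read $\RandPoly(\vec{c})$ and emits the reduced evaluation claim $\SCPoly(\vec{c}) = (v - \RandPoly(\vec{c}))/\rho$ (rejecting if $\rho = 0$).

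Completeness is immediate. For soundness, since $\RandPoly$ and $z_{\RandPoly}$ are both fixed (via the PCP oracle and the first prover message) before $\rho$ is drawn, if $\sum_{\vec{\alpha} \in H^m} \SCPoly(\vec{\alpha}) \neq 0$ then the equation $\sum \MaskedPoly = z_{\RandPoly}$ holds for at most one $\rho \in \Field$; a union bound with the standard sumcheck soundness $md/|\Field|$ yields overall error $O(md/|\Field|)$.

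For zero knowledge I would build a perfect, efficient simulator $\Simulator$ that uses a single query to $\SCPoly$. It lazily samples a ``fake'' random polynomial $\RandPoly'$ of individual degree $d$: each new PCP query by the verifier is answered with a fresh uniform element of $\Field$ (consistent with low-degree interpolation), and $\Simulator$ sends a uniform $z' \in \Field$ in place of $z_{\RandPoly}$. Upon receiving $\rho'$, in round $i$ the simulator sends a uniformly random degree-$d$ univariate polynomial $g_i(X)$ subject only to the telescoping constraint $\sum_{x \in H} g_i(x) = g_{i-1}(c_{i-1})$ (with $g_0(c_0) \DefineEqual z'$). The single query to $\SCPoly$ is used only in the final round: after $c_m$ arrives, $\Simulator$ queries $\SCPoly(\vec{c})$ and resamples $g_m$ uniformly subject to both the telescoping constraint and the terminal constraint $g_m(c_m) = \rho' \SCPoly(\vec{c}) + \RandPoly'(\vec{c})$.

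The main technical step, and where the real work lies, is to prove that this simulated view is distributed identically to the real view. I would proceed round by round: conditioned on the view so far, the real distribution of the next message $g_i$ and of any fresh PCP query answers in that round are deterministic affine-linear functions of $\RandPoly$'s $(d+1)^m$ coefficients (holding $\SCPoly$, $\rho$, and past $c_j$'s fixed); I must show that, over the random choice of $\RandPoly$, these are jointly uniform subject only to the telescoping relations among the $g_i$'s. This reduces to a linear-algebraic rank argument in the spirit of the algebraic query-complexity lower bounds of \cite{AaronsonW09,JumaKRS09}: the $\poly$-many query functionals, together with the $m(d+1)$ coefficient-extraction functionals for $g_1, \ldots, g_m$, span a space of dimension much less than $(d+1)^m$, and by Vandermonde-type independence of coefficient-extraction functionals of distinct partial sums and of point-evaluation functionals at distinct points, these functionals are linearly independent modulo exactly the telescoping constraints that $\Simulator$ enforces. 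The most delicate case is the final round, where $\RandPoly'(\vec{c})$, $g_m(c_m)$, and $\SCPoly(\vec{c})$ become coupled; verifying that the single query to $\SCPoly$ gives $\Simulator$ enough information to match the real distribution, and that no over-determination arises when $\vec{c}$ coincides with a previously queried point, is where the single-point-opening structure of the algebraic commitment does the real work.
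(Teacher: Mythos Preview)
Your protocol is exactly the ``weak'' zero knowledge sumcheck of \cite{BenSassonCFGRS16}, and it does \emph{not} achieve the single-query guarantee stated in the theorem. The gap is in your simulator argument. Consider a malicious verifier that, after sending all challenges $c_1,\dots,c_m$, queries the oracle $\RandPoly$ at points $(c_1,\dots,c_{m-1},\gamma)$ for several $\gamma\neq c_m$. In the real interaction $g_m(X)=\rho\,\SCPoly(c_1,\dots,c_{m-1},X)+\RandPoly(c_1,\dots,c_{m-1},X)$, so for every such $\gamma$ the verifier can compute $\SCPoly(c_1,\dots,c_{m-1},\gamma)=(g_m(\gamma)-\RandPoly(c_1,\dots,c_{m-1},\gamma))/\rho$. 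Your simulator samples $g_m$ and the oracle answers $\RandPoly'(c_1,\dots,c_{m-1},\gamma)$ essentially independently (you only couple them at the single point $c_m$), so in the simulated view these differences are random, whereas in the real view they are determined by $\SCPoly$. Your linear-independence heuristic fails precisely here: as functionals of $\RandPoly$, the quantities $g_m(\gamma)$ and $\RandPoly(c_1,\dots,c_{m-1},\gamma)$ are \emph{affinely dependent} (their difference is the constant $\rho\,\SCPoly(c_1,\dots,c_{m-1},\gamma)$), so no rank argument over $\RandPoly$'s coefficients can separate them. The upshot is that a correct simulator for your protocol needs one query to $\SCPoly$ per oracle query, which is the guarantee of \cite{BenSassonCFGRS16}, not the strong guarantee claimed.

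The paper closes this gap by hiding $\RandPoly$ behind an additional summation. The oracle is not $\RandPoly$ but a random $(\SCVars+\SSCVars)$-variate polynomial $\StrongRandPoly(\vec X,\vec Y)$ of degree $(d,2\lambda)$, with $\RandPoly(\vec X):=\sum_{\vec\beta\in\SSCSubset^{\SSCVars}}\StrongRandPoly(\vec X,\vec\beta)$; the algebraic query-complexity lower bound (Section~\ref{sec:algebraic-query-complexity}) shows that any $<\lambda^{\SSCVars}$ point queries to $\StrongRandPoly$ are statistically independent of $\RandPoly$, so the attack above is blocked. The first sumcheck on $\rho\SCPoly+\RandPoly$ proceeds as you describe, but at the end the prover cannot simply be queried for $\RandPoly(\vec c)$; instead it sends $w=\RandPoly(\vec c)$ and \emph{proves} ``$\sum_{\vec\beta}\StrongRandPoly(\vec c,\vec\beta)=w$'' via a second, inner invocation of the weak \cite{BenSassonCFGRS16} sumcheck (masked by a second random oracle polynomial $\AuxRandPoly$). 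The single query to $\SCPoly$ arises only from this outer reduction; all oracle queries hit $\StrongRandPoly$ or $\AuxRandPoly$, and the lower bound lets the simulator answer them with fresh randomness. What you are missing is this extra layer of indirection---the algebraic commitment to the mask---without which the mask is directly correlated with the transcript.
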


Our result significantly strengthens the IPCP for sumcheck of \cite{BenSassonCFGRS16} (co-authored by this paper's authors), which is only zero knowledge with respect to a simulator which requires \emph{unrestricted} query access to $\SCPoly$. Namely, in order to simulate a verifier's view, their simulator must make a number of queries to $\SCPoly$ that equals the number of queries to the PCP oracle made by the verifier. This zero knowledge guarantee is weaker than the above because a malicious verifier can make an \emph{arbitrarily-large} (but polynomial) number of queries to the PCP oracle. However, this weaker guarantee suffices in some cases, such as in the previous work \cite{BenSassonCFGRS16}.

Perhaps more damaging is that when using this `weakly zero knowledge' sumcheck protocol recursively (as required in applications), we would incur an \emph{exponential blowup}: each simulated query recursively requires further queries to be simulated. In contrast, the `strongly zero knowledge' sumcheck protocol that we achieve only requires the simulator to make a single query to $\SCPoly$ regardless of the malicious verifier's runtime, both providing a stronger zero knowledge guarantee and avoiding any exponential blow-up.

An important property of our sumcheck protocol (which also holds for \cite{BenSassonCFGRS16}), is that it suffices for the honest prover to send as an oracle a uniformly random polynomial of a certain arity and degree. This brings the result `closer to IP', in the sense that while IPCP captures all of $\NEXP$, only languages in $\PSPACE$ have IPCPs (with perfect completeness) where the honest prover behaves in this way. The same property holds for some of our applications.

\parhead{Algebraic commitments}
As detailed in \secref{sec:techniques}, a key ingredient of our result is a commitment scheme based on algebraic techniques. That is, the prover wishes to commit to a value $b\in\Field$, and to do so sends a PCP oracle to the verifier. To then reveal (decommit) $b$, the prover and verifier engage in an Interactive Proof. We show that the sumcheck protocol naturally yields such a commitment scheme, where the PCP oracle is simply a random low-degree polynomial $\RandPoly$ such that $\sum_{\vec{\alpha} \in \SCSubset^{\SCVars}} \RandPoly(\vec{\alpha}) = b$. The soundness guarantee of the sumcheck protocol shows that this commitment scheme is binding, so that the prover cannot ``reveal'' a value other than the correct $b$. To establish the hiding property, which states that the verifier cannot learn anything about $b$ before the prover reveals it, we leverage lower bounds on the algebraic query complexity of polynomial summation, previously studied for completely different reasons \cite{AaronsonW09,JumaKRS09}.

As our commitments are themselves defined by low degree polynomials, they are `transparent' to low degree testing. That is, in various protocols the prover sends to the verifier the evaluation table of a low-degree polynomial as a PCP oracle, and the verifier ensures that this evaluation table is (close to) low degree via low degree testing. In our zero knowledge setting, we need the prover to hide the evaluation table under a commitment (to be revealed selectively), and yet still allow the verifier to check that the underlying evaluation table represents a low-degree polynomial. Our commitments naturally have this property due to their algebraic structure, which we exploit in our applications discussed below.

Overall, the methods of this paper not only significantly deviate from traditional methods for achieving zero knowledge but also further illustrate the close connection between zero knowledge and Algebraic Complexity Theory, the theory of efficient manipulations of algebraic circuits and low-degree polynomials. This connection was first seen in our prior work developing the `weakly zero knowledge sumcheck' of \cite{BenSassonCFGRS16}, used here as a subroutine. Indeed, this subroutine derives its zero-knowledge guarantee from an efficient algorithm for adaptively simulating random low-degree polynomials \cite{BenSassonCFGRS16,BogdanovW04}. This algorithm itself relies on deterministic algorithms for polynomial identity testing of certain restricted classes of algebraic circuits \cite{RazS05}. We believe that it is an exciting research direction to further investigate this surprising connection, and to further broaden the set of information-theoretic algebraic techniques that are useful towards zero knowledge.

\subsection{Applications: delegating computation in zero knowledge}
\label{sec:intro-applicaqtions}

The original sumcheck protocol (without zero knowledge) has many applications, including to various methods of delegating computation. Our zero knowledge sumcheck protocol can be used to obtain zero knowledge analogues of foundational results in this area: we achieve natural zero knowledge extensions of the first construction of PCPs/MIPs \cite{BabaiFL91,BabaiFLS91}, Shamir's protocol \cite{Shamir92}, and doubly-efficient Interactive Proofs for low-depth circuits \cite{GoldwasserKR15}.

\subsubsection{Delegating non-deterministic exponential time}
\label{sec:intro-nexp}

One of the earliest and most influential applications of the sumcheck protocol is the construction of Multi-prover Interactive Proofs for $\NEXP$ due to Babai, Fortnow, and Lund \cite{BabaiFL91}; the same construction also demonstrated the power of low-degree testing as a tool for checking arbitrary computations, another highly influential insight. The subsequent improvements by Babai, Fortnow, Levin, and Szegedy \cite{BabaiFLS91} led to the formulation and the study of \emph{Probabilistically-Checkable Proofs} \cite{BabaiFLS91,FGLSS96} and then the celebrated PCP Theorem \cite{AroraS98,AroraLMSS98}.

We show how, by using our zero knowledge sumcheck protocol, we can obtain a zero knowledge analogue of the classical constructions of \cite{BabaiFL91,BabaiFLS91}. 

\begin{theorem}[Informal version of \thmref{thm:pzk-for-nexp}]
$\NEXP$ has perfect zero knowledge Interactive PCPs.
\end{theorem}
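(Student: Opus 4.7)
The plan is to follow the classical construction of Babai, Fortnow, and Lund~\cite{BabaiFL91}, replacing each component with a zero knowledge analogue built on the strong zero knowledge sumcheck (\thmref{thm:strong-sumcheck-ipcp-intro}) and the accompanying algebraic commitment scheme. First, I would arithmetize in the standard way: a language in $\NEXP$ reduces to a succinct oracle $3$-SAT instance, and satisfiability of the succinct formula reduces to a sumcheck statement of the form ``$\sum_{\vec{\alpha} \in \SCSubset^{\SCVars}} \SCPoly(\vec{\alpha}) = 0$,'' where $\SCPoly$ is a low-individual-degree polynomial obtained from the arithmetized clauses together with a low-degree extension $\LD{w}$ of the purported $\NEXP$ witness $w$.

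The next step is to replace the ordinary low-degree extension of $w$ with a \emph{randomized} low-degree extension $\RLD{w}$: the prover picks $\RLD{w}$ uniformly at random among polynomials of a suitably inflated individual degree that agree with $w$ on $\SCSubset^{\SCVars}$, and sends its evaluation table as a PCP oracle. The correctness and soundness of the BFL reduction are preserved (the arithmetization only uses the behavior of $\RLD{w}$ on $\SCSubset^{\SCVars}$, where it equals $w$), while the algebraic query-complexity lower bounds underlying our commitment scheme (\cite{AaronsonW09,JumaKRS09}) ensure that any polynomially-bounded verifier's queries to $\RLD{w}$ are information-theoretically independent of $w$. I would then have the prover and verifier execute the strong zero knowledge sumcheck of \thmref{thm:strong-sumcheck-ipcp-intro} on the resulting claim, which reduces it to a single evaluation query $\SCPoly(\vec{c}) = b$; the verifier resolves this query by combining the arithmetized (public) clause polynomial with a single query to $\RLD{w}$. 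In parallel, the verifier runs a standard low-degree test on $\RLD{w}$ to ensure it is close to a polynomial of the claimed degree; because $\RLD{w}$ is itself a random low-degree polynomial, this test is ``transparent'' and introduces no additional knowledge leakage.

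For the simulator, I would sample a uniformly random polynomial $\RLD{w}_{\mathrm{sim}}$ of the same degree and arity as $\RLD{w}$, independently of $w$; invoke the simulator guaranteed by \thmref{thm:strong-sumcheck-ipcp-intro}, whose single oracle query to $\SCPoly$ is answered by evaluating the public clause polynomial together with one evaluation of $\RLD{w}_{\mathrm{sim}}$; and answer all of the verifier's direct oracle queries (both the low-degree-test queries and any adaptive probes) from $\RLD{w}_{\mathrm{sim}}$. Since $\RLD{w}$ in the real execution is distributed uniformly over polynomials of the inflated degree that pass through $w$ on $\SCSubset^{\SCVars}$, and since any polynomial number of point queries outside this subdomain is uniform over $\Field$, the distribution of the simulator's answers is identical to the real one; composing this with the perfect simulator for the sumcheck yields perfect zero knowledge.

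The main obstacle I expect is parameter balancing. The degree of $\RLD{w}$ has to be chosen large enough that the algebraic query lower bound comfortably absorbs every query the verifier may make, including those implicit in the sumcheck simulator, the low-degree test, and the final evaluation check, while remaining small enough that the sumcheck soundness error and the proof length stay polynomial in the input size. A secondary subtlety is that the simulator for the sumcheck is guaranteed to make only one query to $\SCPoly$, but the verifier may directly query $\RLD{w}$ arbitrarily many times; the strong (single-query) guarantee of \thmref{thm:strong-sumcheck-ipcp-intro} is exactly what prevents these two query sets from interacting adversarially, so one must verify that the composed simulator's joint distribution over sumcheck transcript and $\RLD{w}$-queries matches the real one — this is where the improvement over the weakly zero knowledge sumcheck of~\cite{BenSassonCFGRS16} is indispensable, as that earlier protocol would force the simulator to make further queries that cannot be answered from a single random low-degree oracle.
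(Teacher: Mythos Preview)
Your proposal has a genuine gap: you conflate the \emph{randomized low-degree extension} with the \emph{algebraic commitment}, and the resulting protocol leaks the witness through the low-degree test.

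Concretely, you send as the PCP oracle a polynomial $\RLD{w}$ chosen uniformly among degree-$d$ polynomials that \emph{agree with} $w$ on $\SCSubset^{\SCVars}$, and then assert that the lower bounds of \cite{AaronsonW09,JumaKRS09} make any polynomial number of point queries to $\RLD{w}$ independent of $w$. That is not what those bounds say: they are about \emph{polynomial summation} (given oracle access to an $(\SCVars+\SSCVars)$-variate $\StrongRandPoly$, learning $\sum_{\vec{\beta}}\StrongRandPoly(\cdot,\vec{\beta})$ needs $\SetCardinality{\SCSubset}^{\SSCVars}$ queries). A random degree-$d$ extension of $w$ gives only roughly $(d-\SetCardinality{\SCSubset})$-wise independence of evaluations off $\SCSubset^{\SCVars}$; a verifier making more than $d$ queries recovers $w$. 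You flag ``parameter balancing'' as an obstacle, but it is in fact fatal: for the low-degree test to be sound the verifier must probe more points than the degree $d$, while hiding requires $d$ to exceed the number of probes. The paper names this circularity explicitly in \secref{sec:techniques-nexp}.

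The paper breaks the circularity with one extra layer. The prover does \emph{not} send a randomized extension of $A$; it sends a commitment $\StrongRandPoly(\vec{\VariableX},\vec{\VariableY})$ in $m+k$ variables, random subject to $\sum_{\vec{\beta}\in\SCSubset^{k}}\StrongRandPoly(\vec{\alpha},\vec{\beta})=A(\vec{\alpha})$ on $\SCSubset^{m}$, with degree $\SetCardinality{\SCSubset}+2$ in $\vec{\VariableX}$ and $2\SetCardinality{\SCSubset}$ in $\vec{\VariableY}$. This $\StrongRandPoly$ is itself low degree, so the verifier can low-degree test it directly; by the summation lower bound, fewer than $\SetCardinality{\SCSubset}^{k}$ queries to $\StrongRandPoly$ reveal nothing about $A$. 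After the strong zero knowledge sumcheck reduces the main claim to three evaluations $h_1,h_2,h_3$ of the implicit extension $A'(\vec{\VariableX})\DefineEqual\sum_{\vec{\beta}}\StrongRandPoly(\vec{\VariableX},\vec{\beta})$ at points in $(\Field\setminus\SCSubset)^{m}$, these $h_i$ are uniformly random (the implicit $A'$ has individual degree $\SetCardinality{\SCSubset}+2$, so three off-hypercube evaluations are independent). The prover then \emph{decommits} each $h_i$ by running the weak zero knowledge sumcheck of \cite{BenSassonCFGRS16} on ``$\sum_{\vec{\beta}}\StrongRandPoly(\vec{c}_i,\vec{\beta})=h_i$''; that simulator may query $\StrongRandPoly$ polynomially often, and those queries the outer simulator answers from a fresh uniformly random polynomial---which is perfect precisely because of the summation lower bound. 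Your sketch omits both the commitment layer and the decommitment subprotocol, and without them the construction cannot be made zero knowledge.
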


Our construction extends the protocol of \cite{BabaiFL91,BabaiFLS91}, which can be viewed as an IPCP that is later `compiled' into an MIP or a PCP. This protocol reduces the $\NEXP$-complete problem of determining the satisfiability of a `succinct' 3CNF, to testing whether there exists a low-degree polynomial satisfying an (exponentially large) set of constraints. A polynomial satisfying these constraints is necessarily a low-degree extension of a satisfying assignment, and thus implies the existence of such an assignment. The prover sends such a polynomial as an oracle, which the verifier then low-degree tests. That the constraints are satisfied can be checked using the sumcheck protocol.

To make this protocol zero knowledge we need to ensure that the oracle hides the original witness. We achieve this by sending not the low-degree extension itself but an algebraic commitment to it. The zero knowledge sumcheck then reduces the problem of checking the constraint on this witness to a single evaluation point of the low degree extension. However, this itself is not zero knowledge as evaluations of the low-degree extension can reveal information about the witness, especially if this evaluation is over the interpolating set $H^m$. Thus, our construction exploits the fact that the sumcheck protocol works for \emph{any} low-degree extension of the witness, and not just the one of minimal degree. Thus, the prover will instead send (the commitment to) a randomly sampled extension of the witness of slightly higher (but still constant) individual degree. The evaluations of this polynomial will (outside the interpolating set $H^m$) be $O(1)$-wise independent. As the sumcheck reduction will reduce to an evaluation point outside $H^m$ with high probability, the prover can then decommit the evaluation at this point to complete the sumcheck protocol without revealing any non-trivial information. We discuss this construction in more detail in \secref{sec:techniques-nexp}.

\subsubsection{Delegating polynomial space}
\label{sec:intro-pspace}

The above result shows that a powerful prover can convince a probabilistic polynomial-time verifier of $\NEXP$ statements in perfect zero knowledge in the IPCP model. Now we turn our attention to protocols where the honest prover need not be a $\NEXP$ machine. One such protocol, due to Shamir \cite{Shamir92}, provides an Interactive Proof for the $\PSPACE$-complete True Quantified Boolean Formula (TQBF) problem. This protocol is a more sophisticated application of the sumcheck protocol because sumcheck is applied \emph{recursively}.

We aim to obtain zero knowledge analogues for these types of more complex protocols as well but now, to tackle the greater complexity, we proceed in two steps. First, we design a generic framework called \emph{sum-product circuits} (which we believe to be of independent interest) that can express in a unified way a large class of `sumcheck-based Interactive Proofs', such as Shamir's protocol. Second, we show how to use our zero knowledge sumcheck protocol to obtain zero knowledge analogues of these, and thus also for Shamir's protocol. We discuss this further in \secref{sec:techniques-sum-product-circuits}.

As before, the resulting protocols are within the IPCP model. However, a key feature of these protocols that differentiates them from our result for $\NEXP$ is that the prover \emph{does not need the full power of IPCPs}: it suffices for the honest prover to send a PCP oracle that is the evaluation table of a random low-degree polynomial. Of course, soundness will continue to hold against any malicious prover that uses the PCP oracle in arbitrary ways.

\begin{theorem}[Informal version of \thmref{thm:pzk-for-TQBF}]
$\PSPACE$ has perfect zero knowledge Interactive PCPs, where the honest prover sends a random low-degree polynomial as the oracle.
\end{theorem}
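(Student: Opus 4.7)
The plan is to follow Shamir's arithmetization of TQBF and then make each sumcheck invocation zero knowledge using \thmref{thm:strong-sumcheck-ipcp-intro}. Given a TQBF instance $\Quantifier_1 x_1 \cdots \Quantifier_n x_n\, \Formula(x_1,\ldots,x_n)$, I would first arithmetize $\Formula$ into a low-individual-degree polynomial $p_\Formula$ over $\Field$ and rewrite the quantifier prefix using the standard Shamir arithmetization: $\exists x_i$ becomes $\sum_{x_i \in \{0,1\}}$, $\forall x_i$ becomes $\prod_{x_i \in \{0,1\}}$, with a linearization operator $L_{x_j}$ inserted between quantifiers to keep individual degrees bounded. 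The instance is true iff the resulting arithmetic expression evaluates to a specified target value at the root.

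Next, I would package this entire recursive structure as a \emph{sum-product circuit} (the framework alluded to in \secref{sec:techniques-sum-product-circuits}), whose internal nodes alternate between sums over $\{0,1\}$, products over $\{0,1\}$, and linearization steps, and whose leaves are evaluations of $p_\Formula$. I would then design an IPCP that descends this circuit by repeatedly invoking the zero knowledge sumcheck of \thmref{thm:strong-sumcheck-ipcp-intro} (together with the analogous zero knowledge product-check and linearization reductions obtainable in the same algebraic-commitment framework): each reduction turns a claim about a subexpression into a claim about the evaluation of a single low-degree polynomial at a fresh random point, which is handed off to the next reduction and eventually to a direct evaluation of $p_\Formula$ by the verifier.

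The critical point enabling this recursion is that the simulator for our zero knowledge sumcheck issues only \emph{one} oracle query per invocation, rather than one query per verifier query as in \cite{BenSassonCFGRS16}. Thus the polynomially-many levels of recursion produce only a polynomial number of simulated oracle queries overall, with no exponential blowup, and perfect zero knowledge is preserved end-to-end. To ensure that the honest prover's PCP oracle is a single random low-degree polynomial over $\Field$, I would have the prover sample one global uniformly random polynomial $\RandPoly$ of appropriate arity and individual degree, and use disjoint coordinate `slices' of $\RandPoly$ as the masking polynomials for each reduction in the recursion; since the marginals of a uniformly random low-degree polynomial on disjoint variable blocks are independent uniformly random low-degree polynomials, this yields the required independent masks for each level from a single oracle.

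The hard part, as I see it, will be showing that after this slicing the simulator's view remains perfectly indistinguishable from the real view throughout the entire recursion, particularly when products and linearizations (not just sums) are interleaved: each reduction must simultaneously preserve the binding and hiding of the algebraic commitment and be compatible with the next reduction's need to evaluate a fresh point of the masked low-degree extension without leaking information about intermediate partial sums/products. Controlling the degree growth induced by the linearization steps, so that soundness of the outer sumcheck and the algebraic query lower bounds underlying hiding both hold, while keeping the arity and degree of the single random polynomial the prover must send polynomial in $|\Instance|$, is the main technical obstacle; the sum-product circuit abstraction should pay off here by letting us bound these parameters uniformly across all applications of the framework.
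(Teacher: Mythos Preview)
Your high-level plan (arithmetize TQBF, cast as a sum-product circuit, apply the zero knowledge sumcheck recursively) matches the paper, but two concrete pieces diverge from what the paper actually does, and one of them is a real gap.

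\textbf{No separate ``product-check'' or ``linearization'' protocols are needed.} The paper does not build analogues of the zero knowledge sumcheck for products or for linearization. Instead it rewrites $\prod_{x\in\{0,1\}} f(x)$ and Shen's $\coprod_{x\in\{0,1\}} f(x)$ as \emph{sums} with a constant-degree combiner, e.g.\ $\prod_{x\in\{0,1\}} f(x)=\sum_{x,x'\in\{0,1\}}(1-x)x'\cdot f(x)f(x')$, so every internal vertex of the sum-product circuit is handled by the same zero knowledge sumcheck. Degree reduction (your ``linearization'') is not a separate step either: it is absorbed into the framework by taking the low-degree extension $\LDSPValueL{\SPInput}{v}$ at each vertex before summing, which keeps the summand's individual degree $O(\InternalVertexSetDegree\SPLeafDegree)$ uniformly across all levels. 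So the ``hard part'' you anticipate (making product-check and linearization ZK and compatible with the commitment) simply does not arise.

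\textbf{You are missing the mechanism that hides the intermediate evaluations.} The strong ZK sumcheck hides the partial sums, but at the end of each sumcheck the prover must still send the child values $h_j=\LDSPValueL{\SPInput}{u_j}(\vec c)$ so the verifier can check the combiner. These values are hard to compute and would leak. The paper's fix is to replace each $\LDSPValueL{\SPInput}{v}$ by a \emph{randomized} extension $\RLDSPValueL{\SPInput}{v}(\vec X)=\LDSPValueL{\SPInput}{v}(\vec X)+\ZeroPoly{\SPSubset^{\SPArity{v}}}(\vec X)\sum_{\vec\gamma}\RandPoly_v(\vec X,\vec\gamma)$, where each $\RandPoly_v$ is a fresh random low-degree polynomial placed in the oracle. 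This makes the $\GraphInDegree[\Graph]{v}$ evaluations of $\RLDSPValueL{\SPInput}{v}$ revealed during the protocol uniformly random outside $\SPSubset^{\SPArity{v}}$, and the verifier's challenges are restricted to $\Field\setminus\SPSubset$ to ensure they land there. Your proposal does not account for this leakage channel.

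\textbf{Minor point.} Your slicing trick for packaging all masks into one random polynomial is stated incorrectly: restrictions of a random low-degree polynomial to disjoint variable blocks share the constant term and are \emph{not} independent. The paper simply concatenates independently sampled random polynomials (the $\RandPoly_v$'s and the per-vertex ZK-sumcheck oracles) into one proof string; the informal phrase ``a random low-degree polynomial'' refers to the fact that every component of the oracle is a uniformly random polynomial, not that the oracle is literally a single polynomial.
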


As discussed above, any language having an IPCP where the honest prover sends a random low-degree polynomial (and achieves perfect completeness) can be decided in $\PSPACE$. In contrast, in the general case, deciding languages having IPCPs is $\NEXP$-hard. This result shows that moreover, all that is required to achieve unconditional zero knowledge for $\PSPACE$ is the ability to send a uniformly random polynomial as an oracle.

\subsubsection{Delegating low-depth circuits}
\label{sec:intro-gkr}

The doubly-efficient Interactive Proofs for low-depth circuits due to Goldwasser, Kalai, and Rothblum \cite{GoldwasserKR15} are another landmark result that makes a recursive use of the sumcheck protocol. Their construction can be viewed as a `scaled down' version of Shamir's protocol where the prover is efficient and the verifier is highly efficient.

We obtain a zero knowledge analogue of this protocol; again it suffices for the honest prover to send a random low-degree polynomial as the oracle (and soundness holds against any oracle). We do so by showing how the computation of low-depth circuits can be reduced to a corresponding sum-product circuit, by following the arithmetization in \cite{GoldwasserKR15}; then we rely on our zero knowledge results for sum-product circuits, mentioned above.

The protocol of \cite{GoldwasserKR15} is an IP for delegating certain \emph{tractable} computations: the evaluation of log-space uniform $\mathbf{NC}$ (circuits of polynomial size and polylogarithmic depth). The prover runs in polynomial time, while the verifier runs in quasilinear time and logarithmic space. But what does achieving zero knowledge mean in this case? If the simulator can run in polynomial time, then it can trivially simulate the verifier's view by simply running the honest prover. We thus need to consider a fine-grained notion of zero knowledge, by analyzing in more detail the overhead incurred by the simulator with respect to the malicious verifier's running time. This reckoning is similar to \cite{BermanRV17}, who study zero knowledge for Interactive Proofs of Proximity, and is a relaxation of \emph{knowledge tightness} \cite[Section 4.4.4.2]{Goldreich01}.

Concretely, we show that the running time of our simulator is a fixed (and small) polynomial in the verifier's running time, with only a polylogarithmic dependence on the size of the circuit. For example, the view of a malicious verifier running in time, say, $O(n^{2})$ can be simulated in time $\Ot{n^{6}}$. If the circuit has size $O(n^{8})$, then the zero knowledge guarantee is meaningful because the simulator is not able to evaluate the circuit.

\begin{theorem}[Informal version of \thmref{thm:gkr-space-uniform}]
Log-space uniform $\mathbf{NC}$ has perfect zero knowledge Interactive PCPs, where
the honest prover sends a random low-degree polynomial as the oracle, and
the verifier runs in quasilinear time and logarithmic space.
The simulator overhead is a small polynomial: the view of a verifier running in time $T$ can be simulated in time $T^{3} \cdot \polylog(n)$.
\end{theorem}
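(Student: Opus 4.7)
The plan is to derive this theorem as a corollary of the zero knowledge IPCP for sum-product circuits (the same framework that yields the \PSPACE\ result), by arithmetizing log-space uniform \textbf{NC} into an appropriate sum-product circuit, exactly as Goldwasser, Kalai, and Rothblum do, and then invoking our zero knowledge sum-product machinery \emph{in a black-box way}. The key point is that GKR's protocol is, at its core, a recursive application of sumcheck---layer by layer, a claim about the low-degree extension $\LD{V_i}$ of layer $i$ is reduced, via one sumcheck invocation, to a claim about $\LD{V_{i-1}}$ using the wire predicate polynomials $\LDAdd{i}, \LDMult{i}$. This chain of sumchecks is precisely what our sum-product circuit abstraction captures, so substituting each sumcheck with the zero knowledge sumcheck of \thmref{thm:strong-sumcheck-ipcp-intro} immediately yields a zero knowledge protocol.

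Concretely, first I would fix the GKR arithmetization to obtain a sum-product circuit $\SPCircuit$ whose structure mirrors the layered structure of the \textbf{NC} circuit, with each internal node summing a product involving $\LDAdd{i}, \LDMult{i}$ and values at children. Log-space uniformity of the original circuit translates to the statement that $\LDAdd{i}, \LDMult{i}$ can be evaluated at any point in logarithmic space (and quasilinear time), which is exactly what the GKR verifier relies upon. Second, I would invoke the zero knowledge IPCP for sum-product circuits to obtain an IPCP for evaluating $\SPCircuit$, where the single PCP oracle is a uniformly random low-degree polynomial $\RandPoly$ used by the zero knowledge sumcheck as an algebraic commitment, as in \secref{sec:intro-zk-sumcheck}.

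Third, I would perform the efficiency reckoning. For the honest prover, the oracle is literally one random low-degree polynomial, matching the theorem statement. For the verifier, each zero knowledge sumcheck round incurs only a constant-factor overhead over the plain sumcheck (one additional field operation and one query to $\RandPoly$ per round), so the quasilinear-time, log-space complexity of the GKR verifier is preserved; the final evaluation check at the bottom layer is handled by a direct query to $\RandPoly$ together with the decommitment protocol, neither of which breaks the log-space bound. For the simulator, I would use the fact that our zero knowledge sumcheck admits a simulator making only a \emph{single} query to the committed polynomial, so simulation does not blow up across the recursive layers; a verifier running in time $T$ can be perfectly simulated by running the simulator of the sum-product protocol once, whose overhead is dominated by polynomially many $T$-time computations of the (space-efficient) $\LDAdd{i}, \LDMult{i}$, yielding the claimed $T^3 \cdot \polylog(n)$ bound.

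The main obstacle is the fine-grained simulator accounting: because this is a \emph{doubly-efficient} setting, zero knowledge is only meaningful if the simulator runs in time much less than the circuit size, so we cannot afford any multiplicative dependence on circuit size in the simulator overhead. The delicate point is that the simulator must, on each sumcheck round, produce partial sums consistent with the verifier's queries to $\RandPoly$ without ever evaluating $\SPCircuit$; achieving this with cost polynomial only in the verifier's runtime (times $\polylog(n)$) requires using the single-query property of the zero knowledge sumcheck simulator together with the log-space evaluability of $\LDAdd{i}, \LDMult{i}$, and carefully amortizing the cost across the recursive invocations. A secondary subtlety is verifying that the random low-degree polynomial sent by the prover can be shared across all layers without breaking soundness, which follows from the composition analysis in the sum-product framework.
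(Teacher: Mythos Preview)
Your high-level approach---cast GKR as a sum-product circuit and invoke the zero knowledge IPCP for $\SPCELanguage$---matches the paper's strategy. However, there is a genuine gap at a crucial step.

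You write that ``log-space uniformity of the original circuit translates to the statement that $\LDAdd{i}, \LDMult{i}$ can be evaluated at any point in logarithmic space (and quasilinear time).'' The first half is correct: by \cite[Claim~4.6]{GoldwasserKR15}, these low-degree extensions are computable in $O(\log S(n))$ space. But $O(\log S)$ space only yields $\poly(S)$ time, not quasilinear time. The GKR verifier does \emph{not} evaluate $\LDAdd{i}, \LDMult{i}$ directly; doing so would cost time polynomial in the circuit size $S$ and destroy double efficiency. Instead, GKR \emph{delegates} these evaluations to the prover via a second level of recursion---a further application of the barebones protocol to the $O(\log S)$-space machine that computes $\LDAdd{i}, \LDMult{i}$. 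The paper mirrors this exactly: \lemref{lem:gkr-subcircuit} turns any space-$s$ computation into a sum-product subcircuit, and \secref{sec:proof-of-gkr-space-uniform} replaces each oracle leaf $u_{\Add{i}}, u_{\Mult{i}}$ of the barebones sum-product circuit with such a subcircuit before invoking \thmref{thm:pzk-for-SPCE}. Only after this composition are all leaves polynomials that the verifier can evaluate in $\polylog(S)$ time. (The paper flags this explicitly in \secref{sec:techniques-recursion}: ``To achieve zero knowledge we also have to tackle this form of recursion.'')

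This gap propagates directly to your simulator analysis. You invoke ``the log-space evaluability of $\LDAdd{i}, \LDMult{i}$'' to bound the simulator's cost, but log-space evaluability again gives only $\poly(S)$ time. If the simulator must evaluate $\LDAdd{i}(\vec{c})$ even once, the $T^{3} \cdot \polylog(n)$ bound is lost and the fine-grained zero knowledge statement becomes vacuous. The composition step is precisely what makes the simulator efficient: in the final sum-product circuit, the leaves are the wiring-predicate extensions of the \emph{small} (size $\poly(\log S)$) circuits produced by \cite[Lemma~4.3]{GoldwasserKR15}, and those are evaluable in $\poly(\log S)$ time.
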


An interesting open problem is whether the simulator overhead can be reduced to $T \cdot \polylog(n)$, as required in the (quite strict) definition given in \cite[Section 4.4.4.2]{Goldreich01}. It seems that our techniques are unlikely to achieve this because they depend on solving systems of linear equations in $\Omega(T)$ variables.

Finally, a property in \cite{GoldwasserKR15} that has been very useful in subsequent work is that the verifier only needs to query a single point in the low-degree extension of its input. In this case, the verifier runs in polylogarithmic time and logarithmic space. Our zero knowledge analogue retains these properties. Additionally, in this setting the size of the circuit is subexponential in the running time of the verifier. Our zero knowledge guarantee then implies that we obtain zero knowledge under the standard (not fine-grained) definition.

\doclearpage
\section{Techniques}
\label{sec:techniques}

We summarize the techniques underlying our contributions. We begin in \secref{sec:techniques-nexp} by recalling the protocol of Babai, Fortnow, and Lund, in order to explain its sources of information leakage and how one could prevent them via algebraic techniques. This discussion motivates the goal of an \emph{algebraic} commitment scheme, described in \secref{sec:technniques-algebraic-commitment}. Then in \secref{sec:techniques-strong-zksc} we explain how to use this tool to obtain our main result, a zero knowledge sumcheck protocol.

The rest of the section is then dedicated to explaining how to achieve our other applications, which involve achieving zero knowledge for \emph{recursive} uses of the sumcheck protocol. First we explain in \secref{sec:techniques-recursion} what are the challenges that arise with regard to zero knowledge in recursive invocations of the sumcheck protocol, such as in the protocol of Shamir. Then in \secref{sec:techniques-sum-product-circuits} we describe the framework of sum-product circuits, and the techniques within it that allows us to achieve zero knowledge for the protocols of Shamir and of Goldwasser, Kalai, and Rothblum.

\subsection{An algebraic approach for zero knowledge in the BFL protocol}
\label{sec:techniques-nexp}

We recall the protocol of Babai, Fortnow, and Lund \cite{BabaiFL91} (`BFL protocol'), in order to explain its sources of information leakage and how one could prevent them via algebraic techniques. These are the ideas that underlie our algebraic construction of an unconditional (perfect) zero knowledge IPCP for $\NEXP$ (see \secref{sec:intro-nexp}).

\parhead{The BFL protocol, and why it leaks}
The $\mathrm{O3SAT}$ problem is the following $\NEXP$-complete problem: given a boolean formula $B$, does there exist a boolean function $A$ such that 
\begin{equation*}
	B(z, b_{1}, b_{2}, b_{3}, A(b_{1}), A(b_{2}), A(b_{3}))=0 \quad \text{for all } z \in \Bits^{r}, b_{1}, b_{2}, b_{3} \in \Bits^{s}
	\;\;\text{?}
\end{equation*}
The BFL protocol constructs an IPCP for $\mathrm{O3SAT}$ and later converts it to an MIP. Only the first step is relevant for us.

In the BFL protocol, the honest prover first sends a PCP oracle $\LD{A} \colon \Field^{s} \to \Field$ that is the unique multilinear extension (in some finite field $\Field$) of a valid witness $A \colon \Bits^{s} \to \Bits$. The verifier must check that
\begin{inparaenum}[(a)]
	\item $\LD{A}$ is a boolean function on $\Bits^{s}$, and
	\item $\LD{A}$'s restriction to $\Bits^{s}$ is a valid witness for $B$.
\end{inparaenum}
To do these checks, the verifier arithmetizes $B$ into an arithmetic circuit $\LD{B}$, and reduces the checks to conditions that involve $\LD{A}$, $\LD{B}$, and other low-degree polynomials. A technique of \cite{BabaiFLS91} allows the verifier to `bundle' all of these conditions into a low-degree polynomial $f$ such that (with high probability over the choice of $f$) the conditions hold if and only if $f$ sums to $0$ on $\Bits^{r+3s+3}$. The verifier checks that this is the case via a sumcheck protocol with the prover. The soundness of the sumcheck protocol depends on the PCP oracle being the evaluation of a low-degree polynomial; the verifier checks this using a low-degree test.

We see that the BFL protocol is \emph{not} zero knowledge for two reasons:
\begin{inparaenum}[(i)]
\item the verifier has oracle access to $\LD{A}$ and, in particular, to the witness $A$;
\item the prover's messages during the sumcheck protocol leak further information about $A$ (namely, hard-to-compute partial sums of $f$, which itself depends on $A$).
\end{inparaenum}

\parhead{A blueprint for zero knowledge}
We now describe the `blueprint' for an approach to achieve zero knowledge in the BFL protocol. The prover does not send $\LD{A}$ directly but instead a \emph{commitment} to it. After this, the prover and verifier engage in a sumcheck protocol with suitable zero knowledge guarantees; at the end of this protocol, the verifier needs to evaluate $f$ at a point of its choice, which involves evaluating $\LD{A}$ at three points. Now the prover reveals the requested values of $\LD{A}$, without leaking any information beyond these, so that the verifier can perform its check. We explain how these ideas motivate the need for certain algebraic tools, which we later obtain and use to instantiate our approach.

\parhead{(1) Randomized low-degree extension}
Even if the prover reveals only three values of $\LD{A}$, these may still leak information about $A$. We address this problem via a \emph{randomized low-degree extension}. Indeed, while the prover in the BFL protocol sends the \emph{unique} multilinear extension of $A$, one can verify that \emph{any} extension of $A$ of sufficiently low degree also works. We exploit this flexibility as follows: the prover randomly samples $\LD{A}$ in such a way that any three evaluations of $\LD{A}$ do not reveal any information about $A$. Of course, if any of these evaluations is within $\Bits^{s}$, then no extension of $A$ has this property. Nevertheless, during the sumcheck protocol, the prover can ensure that the verifier chooses only evaluations outside of $\Bits^{s}$ (by aborting if the verifier deviates), which incurs only a small increase in the soundness error. With this modification in place, it suffices for the prover to let $\LD{A}$ be a random degree-$4$ extension of $A$: by a dimensionality argument, any $3$ evaluations outside of $\Bits^{s}$ are now independent and uniformly random in $\Field$. Remarkably, we are thus able to reduce a claim about $A$ to a claim which contains \emph{no information} about $A$.

\parhead{(2) Low-degree testing the commitment}
The soundness of the sumcheck protocol relies on $f$ having low degree, or at least being close to a low-degree polynomial. This in turn depends on the PCP oracle $\LD{A}$ being close to a low-degree polynomial. If the prover sends $\LD{A}$, the verifier can simply low-degree test it. However, if the prover sends a commitment to $\LD{A}$, then it is not clear what the verifier should do. One option would be for the prover to reveal \emph{more} values of $\LD{A}$ (in addition to the aforementioned three values), in order to enable the verifier to conduct its low-degree test on $\LD{A}$. The prover would then have to ensure that revealing these additional evaluations is `safe' by increasing the amount of independence among values in $\LD{A}$ (while still restricting the verifier to evaluations outside of $\Bits^{s}$), which would lead to a blowup in the degree of $\LD{A}$ that is proportional to the number of queries that the verifier wishes to make. For the low-degree test to work, however, the verifier \emph{must} see more evaluations than the degree. In sum, this circularity is inherent. To solve this problem, we will design an `algebraic' commitment scheme that is \emph{transparent to low-degree tests}: the verifier can perform a low-degree test on the commitment itself (without the help of the prover), which will ensure access to a $\LD{A}$ that is low-degree. We discuss this further in \secref{sec:technniques-algebraic-commitment}.

\parhead{(3) Sumcheck in zero knowledge}
We need a sumcheck protocol where the prover's messages leak little information about $f$. The prior work in \cite{BenSassonCFGRS16} achieves an IPCP for sumcheck that is `weakly' zero knowledge: any verifier learns at most one evaluation of $f$ for each query it makes to the PCP oracle. If the verifier could evaluate $f$ by itself, as was the case in that paper, this guarantee would suffice for zero knowledge. In our setting, however, the verifier \emph{cannot} evaluate $f$ by itself because $f$ is (necessarily) hidden behind the algebraic commitment.

One approach to compensate would be to further randomize $\LD{A}$ by letting $\LD{A}$ be a random extension of $A$ of some well-chosen degree $d$. We are limited to $d$ of polynomial size because the honest verifier's running time is $\Omega(d)$. But this means that a polynomial-time malicious verifier, participating in the protocol of \cite{BenSassonCFGRS16} and making $d^{2}$ queries to the PCP oracle, could learn information about $A$.

We resolve this by relying on more algebraic techniques, achieving an IPCP for sumcheck with a much stronger zero knowledge guarantee (see \thmref{thm:strong-sumcheck-ipcp-intro}): any malicious verifier that makes polynomially-many queries to the PCP oracle learns only a \emph{single} evaluation of $f$. This suffices for zero knowledge in our setting: learning one evaluation of $f$ implies learning only three evaluations of $\LD{A}$, which can be made `safe' if $\LD{A}$ is chosen to be a random extension of $A$ of high-enough degree. Our sumcheck protocol uses as building blocks both our algebraic commitment scheme and the \cite{BenSassonCFGRS16} sumcheck; we summarize its construction in \secref{sec:techniques-strong-zksc}.

\begin{remark}
Kilian, Petrank, and Tardos \cite{KilianPT97} construct PCPs for $\NEXP$ that are statistical zero knowledge, via a combinatorial construction that makes black-box use of the PCP Theorem. Our modification of the BFL protocol achieves a perfect zero knowledge IPCP via algebraic techniques, avoiding the use of the PCP Theorem.
\end{remark}

\subsection{Algebraic commitments from algebraic query complexity lower bounds}
\label{sec:technniques-algebraic-commitment}

We describe how the sumcheck protocol can be used to construct an information-theoretic commitment scheme that is `algebraic', in the IPCP model. (Namely, an algebraic \emph{interactive locking scheme}; see \remref{rem:gims-comparison} below.) The prover commits to a message by sending to the verifier a PCP oracle that perfectly hides the message; subsequently, the prover can reveal positions of the message by engaging with the verifier in an Interactive Proof, whose soundness guarantees statistical binding. A key algebraic property that we rely on is that the commitment is `transparent' to low-degree tests.

\parhead{Committing to an element}
We first consider the simple case of committing to a single element $a$ in $\Field$. Let $k$ be a security parameter, and set $N \DefineEqual 2^{k}$. Suppose that the prover samples a random $B$ in $\Field^{N}$ such that $\sum_{i=1}^{N} B_{i} = a$, and sends $B$ to the verifier as a commitment. Observe that any $N-1$ entries of $B$ do not reveal any information about $a$, and so any verifier with oracle access to $B$ that makes less than $N$ queries cannot learn any information about $a$. However, as $B$ is unstructured it is not clear how the prover can convince the verifier that $\sum_{i=1}^{N} B_{i} = a$.

Instead, we can consider imbuing $B$ with additional structure by providing its low-degree extension. That is, the prover thinks of $B$ as a function from $\Bits^{k}$ to $\Field$, and sends its unique multilinear extension $\LD{B} \colon \Field^{k} \to \Field$ to the verifier. Subsequently, the prover can reveal $a$ to the verifier, and then engage in a sumcheck protocol for the claim ``$\sum_{\vec{\beta} \in \Bits^{k}} \LD{B}(\vec{\beta}) = a$'' to establish the correctness of $a$. The soundness of the sumcheck protocol protects the verifier against cheating provers and hence guarantees that this scheme is binding.

However, giving $B$ additional structure calls into question the hiding property of the scheme. Indeed, surprisingly a result of \cite{JumaKRS09} shows that this new scheme is \emph{not} hiding (in fields of characteristic different than 2): it holds that $\LD{B}(2^{-1}, \ldots, 2^{-1}) = a \cdot 2^{-k}$ for any choice of $B$, so the verifier can learn $a$ with only a single query to $\LD{B}$!

Sending an extension of $B$ has created a new problem: querying the extension outside of $\Bits^{k}$, the verifier can learn information that may require many queries to $B$ to compute. Indeed, this additional power is precisely what underlies the soundness of the sumcheck protocol. To resolve this, we need to understand what the verifier can learn about $B$ given some low-degree extension $\LD{B}$. This is precisely the setting of \emph{algebraic query complexity} \cite{AaronsonW09}.

A natural approach is to let $\LD{B}$ be chosen uniformly at random from the set of degree-$d$ extensions of $B$ for some $d > 1$. It is not hard to see that if $d$ is very large (say, $\SetCardinality{\Field}$) then $2^{k}$ queries are required to determine the summation of $\LD{B}$ on $H^{m}$. But we need $d$ to be small to achieve soundness. A result of \cite{JumaKRS09} shows that $d = 2$ suffices: given a random multiquadratic extension $\LD{B}$ of $B$, one needs $2^{k}$ queries to $\LD{B}$ to determine $\sum_{\vec{\beta} \in \Bits^{k}} \LD{B}(\vec{\beta})$.

\parhead{Committing to a polynomial}
The prover in our zero knowledge protocols needs to commit not just to a single element but to the evaluation of an $m$-variate polynomial $Q$ over $\Field$ of degree $d_{Q}$. We extend our ideas to this setting.

Letting $K$ be a subset of $\Field$ of size $d_{Q}+1$, the prover samples a random $B^{\vec{x}}$ in $\Field^{N}$ such that $\sum_{i=1}^{N} B^{\vec{x}}_{i} = Q(\vec{x})$ for each $\vec{x} \in K^{m}$. We can view all of these strings as a single function $B \colon K^{m} \times \Bits^{k} \to \Field$, and as before we consider its unique low-degree extension $\LD{B} : \Field^{m} \times \Field^{k} \to \Field$; viewed as a polynomial, $\LD{B}(\vec{\VariableX},\vec{\VariableY})$ has degree at most $d_{Q}$ in $\vec{\VariableX}$ and is multilinear in $\vec{\VariableY}$. Observe that since $\sum_{\vec{\beta} \in \Bits^{k}} \LD{B}(\vec{\VariableX}, \vec{\beta})$ is a polynomial of individual degree $d_{Q}$ that agrees with $Q$ on $K^{m}$, it must equal $Q$. The binding property of the commitment scheme is clear: the prover can decommit to $Q(\vec{\alpha})$ for any $\vec{\alpha} \in \Field^{m}$ by using the sumcheck protocol as before. We are left to argue the hiding property.

It is not difficult to see that we run into the same issue as in the single-value case: we have $\LD{B}(\vec{\alpha}, 2^{-1}, \ldots, 2^{-1}) = Q(\vec{\alpha}) \cdot 2^{-k}$ for any $\vec{\alpha} \in \Field^{m}$. We resolve this by again choosing a \emph{random} extension $\LD{B}$ of degree $d > 1$ in $\vec{\VariableY}$ (and degree $d_{Q}$ in $\vec{\VariableX}$). Yet, arguing the hiding property now requires a \emph{stronger} statement than the one proved in \cite{JumaKRS09}. Not only do we need to know that the verifier cannot determine $Q(\vec{\alpha})$ for a particular $\vec{\alpha} \in \Field^{m}$, but we need to know that the verifier cannot determine $Q(\vec{\alpha})$ for \emph{any} $\vec{\alpha} \in \Field^{m}$, or even \emph{any linear combination of any such values}. We prove that this stronger guarantee holds in the same parameter regime: if $d > 1$ then $2^{k}$ queries are both necessary and sufficient.

\parhead{Transparency to low-degree tests}
Recall that a key algebraic property we required from our commitment scheme is that the verifier can perform a low-degree test on the committed polynomial without the assistance of the prover. Our commitment scheme naturally has this property. If the PCP oracle $\Malicious{B} \colon \Field^{m} \times \Field^{k} \to \Field$ is low-degree, then it is a commitment to the low-degree $Q \colon \Field^{m} \to \Field$ defined as $Q(\vec{\VariableX}) \DefineEqual \sum_{\vec{\beta} \in \Bits^{k}} \Malicious{B}(\vec{\VariableX}, \vec{\beta})$. In fact, even if $\Malicious{B} \colon \Field^{m} \times \Field^{k} \to \Field$ is merely \emph{close} to a low-degree $\LD{B} \colon \Field^{m} \times \Field^{k} \to \Field$, then we can still regard $\Malicious{B}$ as a commitment to the low-degree $Q \colon \Field^{m} \to \Field$ defined as $Q(\vec{\VariableX}) \DefineEqual \sum_{\vec{\beta} \in \Bits^{k}} \LD{B}(\vec{\VariableX}, \vec{\beta})$, because the verifier can check claims of the form ``$Q(\vec{\alpha}) = a$'' by obtaining the value of $\LD{B}$ it needs at the end of the sumcheck protocol via self-correction on $\Malicious{B}$.

\parhead{Beyond the boolean hypercube}
For efficiency reasons analogous to those in \cite{BabaiFLS91,GoldwasserKR15}, instead of viewing $B$ as a function from $K^{m} \times \Bits^{k}$ to $\Field$, we view $B$ as a function from $K^{m} \times H^{k'}$ to $\Field$ for a subset $H$ of $\Field$ of size $\Field^{\Omega(1)}$ and $k' \DefineEqual \log N/\log \SetCardinality{H}$. This requires us to extend our claims about the algebraic query complexity of polynomial summation to arbitrary sets $H$. We show that if $d > 2(\SetCardinality{H} - 1)$, then $\SetCardinality{H}^{k'} = N$ queries are necessary to determine $Q(\vec{\alpha})$ for any $\vec{\alpha}$ (or any linear combination of these). See \secref{sec:algebraic-query-complexity} for details.

\parhead{Decommitting in zero knowledge}
To use our commitment scheme in zero knowledge protocols, we must ensure that, in the decommitment phase, the verifier cannot learn any information beyond the value $a \DefineEqual Q(\vec{\alpha})$ for a chosen $\vec{\alpha}$. To decommit, the prover sends the value $a$ and has to convince the verifier that the claim ``$\sum_{\vec{\beta} \in \Bits^{k}} \LD{B}(\vec{\alpha}, \vec{\beta}) = a$'' is true. However, if the prover and verifier simply run the sumcheck protocol on this claim, the prover leaks partial sums $\sum_{\vec{\beta} \in \Bits^{k-i}} \LD{B}(\vec{\alpha}, c_{1}, \ldots, c_{i}, \vec{\beta})$ for $c_{1}, \ldots, c_{i} \in \Field$ chosen by the verifier, which could reveal additional information about $Q$. Instead, the prover and verifier run on this claim the IPCP for sumcheck of \cite{BenSassonCFGRS16}, whose `weak' zero knowledge guarantee ensures that this cannot happen. (Thus, in addition to the commitment, the honest prover also sends the evaluation of a random low-degree polynomial as required by the IPCP for sumcheck of \cite{BenSassonCFGRS16}.)

\begin{remark}[comparison with \cite{GoyalIMS10}]
\label{rem:gims-comparison}
Goyal, Ishai, Mahmoody, and Sahai \cite{GoyalIMS10} define and construct \emph{interactive locking schemes}, information-theoretic commitment schemes in the IPCP model. Their scheme is combinatorial, and we do not know how to use it in our setting (it is not clear how to low-degree test the committed message without disrupting zero knowledge). Putting this difference aside, their construction and our construction are incomparable. On the one hand, we achieve perfect hiding while they only achieve statistical hiding. On the other hand, their scheme is `oracle efficient' (any query to the oracle can be computed statelessly in polynomial time) while our scheme is not.
\end{remark}

\subsection{A zero knowledge sumcheck protocol}
\label{sec:techniques-strong-zksc}

We summarize the ideas behind our main result, a zero knowledge sumcheck protocol (see \thmref{thm:strong-sumcheck-ipcp-intro}). This result not only enables us to modify the BFL protocol to achieve zero knowledge (as discussed above), but also to modify the Shamir and GKR protocols to achieve zero knowledge (as discussed below). The two building blocks underlying our sumcheck protocol are our algebraic commitments (see \secref{sec:technniques-algebraic-commitment} above) and the IPCP for sumcheck of \cite{BenSassonCFGRS16}. We now cover necessary background and then describe our protocol.

\parhead{Previous sumcheck protocols}
The sumcheck protocol \cite{LundFKN92} is an IP for claims of the form ``$\sum_{\vec{\alpha} \in H^{m}} F(\vec{\alpha}) = 0$'', where $H$ is a subset of a finite field $\Field$ and $F$ is an $m$-variate polynomial over $\Field$ of small individual degree. The protocol has $m$ rounds: in round $i$, the prover sends the univariate polynomial $g_{i}(\VariableX_{i}) := \sum_{\vec{\alpha} \in H^{m-i}} F(c_{1}, \ldots, c_{i-1}, \VariableX_{i}, \vec{\alpha})$; the verifier checks that $\sum_{\alpha_{i} \in H} g_{i}(\alpha_{i}) = g_{i-1}(c_{i-1})$ and replies with a uniformly random challenge $c_{i} \in \Field$. After round $m$, the verifier outputs the claim ``$F(c_{1}, \dots, c_{m}) = g_{m}(c_{1}, \dots, c_{m})$''. If $F$ is of sufficiently low degree and does not sum to $\SCSum$ over the space, then the output claim is false with high probability. Note that the verifier does not need access to $F$.

The IPCP for sumcheck of \cite{BenSassonCFGRS16} modifies the above protocol as follows. The prover first sends a PCP oracle that equals the evaluation of a random `masking' polynomial $R$; the verifier checks that $R$ is (close to) low degree. After that the prover and verifier conduct an Interactive Proof. The prover sends $z \in \Field$ that allegedly equals $\sum_{\vec{\alpha} \in H^{m}} R(\vec{\alpha})$, and the verifier responds with a uniformly random challenge $\rho \in \Field^{*}$. The prover and verifier now run the (standard) sumcheck protocol to reduce the claim ``$\sum_{\vec{\alpha} \in H^{m}} \rho F(\vec{\alpha}) + R(\vec{\alpha}) = \rho \SCSum + z$'' to a claim ``$\rho F(\vec{c}) + R(\vec{c})=b$'' for random $\vec{c} \in \Field^{m}$. The verifier queries $R$ at $\vec{c}$ and then outputs the claim $``F(\vec{c})=\frac{b-R(\vec{c})}{\rho}$''. If $\sum_{\vec{\alpha} \in H^{m}} F(\vec{\alpha}) \neq a$ then with high probability over the choice of $\rho$ and the verifier's messages in the sumcheck protocol, this claim will be false.

A key observation is that if the verifier makes no queries to $R$, then the prover's messages are identically distributed to the sumcheck protocol applied to a uniformly random polynomial $Q$. When the verifier does make queries to $R$, simulating the resulting conditional distribution involves techniques from Algebraic Complexity Theory, as shown in \cite{BenSassonCFGRS16}. Given $Q$, the verifier's queries to $R(\vec{\alpha})$ for $\vec{\alpha} \in \Field^{m}$ are identically distributed to $Q(\vec{\alpha}) - \rho F(\vec{\alpha})$. Thus the simulator need only make at most one query to $F$ for every query to $R$. That is, any verifier making $q$ queries to $R$ learns no more than it would learn by making $q$ queries to $F$ alone.

As discussed, this zero knowledge guarantee does not suffice for the applications that we consider: when a sumcheck protocol is used as a subroutine of another protocol, $F$ may itself be recursively defined in terms of large sums which the verifier cannot evaluate on its own. The verifier does, however, have oracle access to $R$, and so can learn enough information about $F$ to break zero knowledge. 

\parhead{Our sumcheck protocol}
The zero knowledge guarantee that we aim for is the following: any polynomial-time verifier learns no more than it would by making \emph{one} query to $F$, regardless of its number of queries to the PCP oracle.

The main idea to achieve this guarantee is the following. The prover sends a PCP oracle that is an \emph{algebraic commitment} $Z$ to the aforementioned masking polynomial $R$. Then, as before, the prover and verifier run the sumcheck protocol to reduce the claim ``$\sum_{\vec{\alpha} \in H^{m}} \rho F(\vec{\alpha}) + R(\vec{\alpha}) = \rho a + z$'' to a claim ``$\rho F(\vec{c}) + R(\vec{c})=b$'' for random $\vec{c} \in \Field^{m}$.

We now face two problems. First, the verifier cannot simply query $R$ at $\vec{c}$ and then output the claim $``F(\vec{c})=\frac{b-R(\vec{c})}{\rho}$'', since the verifier only has oracle access to the commitment $Z$ of $R$. Second, the prover could cheat the verifier by having $Z$ be a commitment to an $R$ that is far from low degree, which allows cheating in the sumcheck protocol.

The first problem is addressed by the fact that our algebraic commitment scheme has a decommitment sub-protocol that is zero knowledge: the prover can reveal $R(\vec{c})$ in such a way that no other values about $R$ are also revealed as a side-effect. As discussed, this relies on the protocol of \cite{BenSassonCFGRS16}, used a subroutine (for the second time).

The second problem is taken care of by the fact that our algebraic commitment scheme is `transparent' to low-degree tests: the verifier simply performs a low-degree test on $Z$, which by self-correction gives the verifier oracle access to a low-degree $Z'$ that \emph{is} a commitment to a low-degree $R$.

Overall, the only value that a malicious verifier can learn is $F(\vec{c})$ for $\vec{c} \in \Field^{m}$ of its choice.

\begin{remark}
Our sumcheck protocol `leaks' a single evaluation of $\SCPoly$. We believe that this limitation is inherent: the honest verifier always outputs a true claim about one evaluation of $\SCPoly$, which it cannot do without learning that evaluation. Either way, this guarantee is strong enough for applications: we ensure that learning a single evaluation of $\SCPoly$ does not harm zero knowledge, either because it carries no information or because the verifier can evaluate $\SCPoly$ itself.
\end{remark}

\subsection{Challenges: handling recursion}
\label{sec:techniques-recursion}

We have so far discussed the ideas behind our main result (a zero knowledge sumcheck protocol) and how to use it to achieve a natural zero knowledge analogue of the classical MIP/PCP construction for $\NEXP$ \cite{BabaiFL91,BabaiFLS91}. Other applications require additional ideas to overcome challenges that arise when the sumcheck protocol is used \emph{recursively}.

\parhead{Shamir's protocol}
Consider the goal of achieving zero knowledge for Shamir's protocol for $\PSPACE$ \cite{Shamir92}. This protocol reduces checking any $\PSPACE$ computation to checking that:
\begin{equation*}
	\sum_{x_{1} \in \Bits} \prod_{x_{2} \in \Bits} \ldots \sum_{x_{n-1} \in \Bits} \prod_{x_{n} \in \Bits} \LD{\Formula}(x_{1}, \ldots, x_{n}) = 0
\end{equation*}
where $\LD{\Formula}$ is the (efficiently computable) arithmetization over a finite field $\Field$ of a certain boolean formula $\Formula$. Since Shamir's protocol is similar to the sumcheck protocol, a natural starting point would be to try to merely adapt the techniques that `worked' in the case of the sumcheck protocol. However, the similarity between the two protocols is only superficial (e.g., it lacks the useful linear structure present in the sumcheck protocol). An accurate way to compare the two is to view Shamir's protocol as a \emph{recursive} application of the sumcheck protocol, as shown by Meir \cite{Meir13}.

For example, the TQBF problem is \emph{downward self-reducible} \cite{TrevisanV07}: for $i \in \{1,\dots,n\}$ let
\begin{equation*}
G_{i}(\VariableX_{1}, \ldots, \VariableX_{i}) \DefineEqual \sum_{x_{i+1} \in \Bits} \prod_{x_{i+2} \in \Bits} \ldots \sum_{x_{n-1} \in \Bits} \prod_{x_{n} \in \Bits} \LD{\Formula}(\VariableX_{1}, \ldots, \VariableX_{i}, x_{i+1}, \ldots, x_{n}) \enspace.
\end{equation*}
From this one obtains the recurrence
\begin{equation}
\label{eqn:shamir-recurrence}
G_{i}(\VariableX_{1}, \ldots, \VariableX_{i}) = \sum_{x_{i+1} \in \Bits} G_{i+2}(\VariableX_{1}, \ldots, \VariableX_{i}, x_{i+1}, 0) \cdot G_{i+2}(\VariableX_{1}, \ldots, \VariableX_{i}, x_{i+1}, 1) \enspace.
\end{equation}
In other words, with oracle access to $G_{i+2}$, one can use a (small) sumcheck to compute $G_{i}$. This suggests a recursive approach: if we could check evaluations of $G_{i+2}$ in zero knowledge, then maybe we could use this as a subprotocol to check evaluations of $G_{i}$ also in zero knowledge.

\parhead{GKR's protocol}
The recursive structure is perhaps more evident in the doubly-efficient Interactive Proof of Goldwasser, Kalai, and Rothblum \cite{GoldwasserKR15} (`GKR protocol'). Its barebones sub-protocol checks a more complex arithmetic expression: the output of a layered arithmetic circuit. Fix an input $x$ to the circuit, and let $\Layer{i}(j) \colon [S] \to \Field$ be the value of the $j$-th gate in layer $i$ ($S$ is the number of gates in a layer). For some subset $\SPSubset \subseteq \Field$ and sufficiently large $\GKRVars$, one views $\Layer{i}$ as a function from $\SPSubset^{\GKRVars}$ to $\Field$ by imposing some ordering on $\SPSubset^{\GKRVars}$. One can relate $\Layer{i-1}$ to $\Layer{i}$ as follows:
\begin{equation}
\label{eqn:gkr-recurrence}
\Layer{i-1}(\vec{z}) = \sum_{\vec{\omega}_{1},\vec{\omega}_{2} \in \SPSubset^{\GKRVars}} \Add{i}(\vec{z}, \vec{\omega}_{1}, \vec{\omega}_{2}) \cdot \big(\Layer{i}(\vec{\omega}_{1}) + \Layer{i}(\vec{\omega}_{2})\big)
+ \Mult{i}(\vec{z}, \vec{\omega}_{1}, \vec{\omega}_{2}) \cdot \big(\Layer{i}(\vec{\omega}_{1}) \cdot \Layer{i}(\vec{\omega}_{2})\big)
\end{equation}
where $\Add{i}(\vec{z}, \vec{\omega}_{1}, \vec{\omega}_{2})$ is $1$ if the $\vec{z}$-th gate in layer $i-1$ is an addition gate whose inputs are the $\vec{\omega}_{1}$-th and $\vec{\omega}_{2}$-th gates in layer $i$, and $\Mult{i}$ is defined similarly for multiplication gates.

We again see a recursive structure: a function defined as the summation over some product space of a polynomial whose terms are functions of the same form; this allows to check $\Layer{i-1}$ given a protocol for checking $\Layer{i}$.

The use of recursion in the GKR protocol is even more involved: the barebones protocol relies on the verifier having oracle access to low-degree extensions of $\Add{i}$ and $\Mult{i}$. For very restricted classes of circuits, the verifier can efficiently `implement' these oracles; however, for the class of circuits that is ultimately supported by the protocol this requires a further sub-protocol that delegates the evaluation of these oracles to the prover, and this is done by \emph{composing} multiple instances of the GKR protocol. To achieve zero knowledge we also have to tackle this form of recursion.

\parhead{The leakage of recursion}
By now the central role of recursion in applications of the sumcheck protocol is clear. There are two main sources of leakage that we need to overcome in order to achieve zero knowledge in such applications.
\begin{enumerate}

  \item Checking evaluations of $G_{i+2}$, $V_{i}$, or $\Add{i}$ and $\Mult{i}$, even in zero knowledge, leaks the evaluations themselves. The verifier, however, is not able to compute these itself (else it would not need to delegate), which means that information is leaked.

  \item The number of claims can grow exponentially: a claim about $G_{i}$ (resp. $V_{i-1}$) is reduced to \emph{two} claims about $G_{i+2}$ (resp. $V_{i}$). There are standard techniques that leverage interaction to reduce multiple claims about a low-degree polynomial to a single one, but we need to replace these with zero knowledge equivalents.

\end{enumerate}
We tackle both issues by devising a general framework that captures their shared algebraic structure, solving these problems within this framework, and then recovering the protocols of Shamir and GKR as special cases.

\subsection{Sum-product circuits}
\label{sec:techniques-sum-product-circuits}

We introduce the notion of \emph{sum-product circuits} and show that the sumcheck protocol naturally gives rise to algebraic Interactive Proofs for checking the value of such circuits. We then explain how to achieve zero knowledge variants of these by building on the techniques discussed in \secref{sec:techniques-nexp}. We recover zero knowledge variants of the protocols of Shamir and GKR as special cases of this approach.

Sum-product circuits are an abstract way of encoding `sum-product expressions'. A sum-product expression is either a polynomial over some finite field $\Field$ represented by a small arithmetic circuit or a polynomial of the form
\begin{equation}
\label{eqn:sum-product-exp}
\sum_{\vec{\beta} \in H^{m}} C\big(\vec{\VariableX}, \vec{\beta}, P_{1}(\vec{\VariableX}, \vec{\beta}), \ldots, P_{n}(\vec{\VariableX}, \vec{\beta})\big)
\end{equation}
where $C$ is a low-degree `combiner' polynomial represented by a small arithmetic circuit, and $P_{1}, \ldots, P_{n}$ are sum-product expressions. Both \eqnref{eqn:shamir-recurrence} (for Shamir's protocol) and \eqnref{eqn:gkr-recurrence} (for GKR's protocol) are of this form.

Like a standard arithmetic circuit, a sum-product circuit is a directed acyclic graph associated with a field $\Field$ in which we associate to each vertex a \emph{value}, which in our case is the sum-product expression that it computes. Each internal vertex is labeled by a combiner polynomial, and there is an edge from $u$ to $v$ if the sum-product expression of $v$ appears in that of $u$. For example, the above expression would correspond to a vertex labeled with $C$, with outgoing edges to the vertices corresponding to $P_{1}, \ldots, P_{n}$. An input to the circuit is a labeling of the leaf vertices with small arithmetic circuits. We now spell this out in a little more detail.

\begin{definition}[Informal version of \defref{def:sum-product-circuit}]
A \defemph{sum-product circuit} $\SPCircuit$ is a rooted directed acyclic graph where each internal vertex is labeled with an arithmetic circuit $\SPPoly[v]$ over a finite field $\Field$. An \emph{input} $\SPInput$ to $\SPCircuit$ labels each leaf $v$ with a polynomial $\SPLeaf[v]$ over $\Field$. The value of a vertex $v$ on input $\SPInput$ is a multivariate polynomial $\SPValueL{\SPInput}{v}$ over $\Field$ defined as follows: if $v$ is a leaf vertex then $\SPValueL{\SPInput}{v}$ equals $\SPLeaf[v]$; if instead $v$ is an internal vertex then, for a chosen integer $m$,
\begin{equation}
	\label{eqn:vertex-value-informal}
	\SPValueL{\SPInput}{v}(\vec{\VariableX}) \DefineEqual \sum_{\vec{\beta} \in \SPSubset^{\SPVars}} \SPPoly[v]\big(\vec{\VariableX},\vec{\beta},
	\SPValueL{\SPInput}{u_{1}}(\vec{\VariableX}, \vec{\beta}),\ldots,\SPValueL{\SPInput}{u_{\aodeg}}(\vec{\VariableX}, \vec{\beta})
	\big) \enspace.
\end{equation}
The \defemph{value} of $\SPCircuit$ on input $\SPInput$ is denoted $\SPValueL{\SPInput}{\SPCircuit}$ and equals the value of the root vertex $\Root$ (and we require that $\SPValueL{\SPInput}{\SPCircuit} \in \Field$).
\end{definition}

We next describe an Interactive Proof that works for any sum-product circuit. The protocols of Shamir \cite{Shamir92} and of GKR \cite{GoldwasserKR15} can be viewed as this protocol applied to specific sum-product circuits (computing $G_{0}$ and $V_{0}$ respectively). After that, we explain how to modify the Interactive Proof to obtain a corresponding zero knowledge IPCP for any sum-product circuit, which allows us to derive our zero knowledge variants of these two protocols. 

A significant advantage of working with sum-product circuits is that they are easy to compose. For example, we can view the composition of the GKR protocol with itself as a \emph{composition of sum-product circuits}. We can then apply our zero knowledge IPCP to the resulting circuit and directly obtain a zero knowledge analogue of the full GKR protocol.

\subsubsection{Delegating the evaluation of a sum-product circuit}

We explain how to use the sumcheck protocol to obtain an Interactive Proof for checking the value of a sum-product circuit. The protocol is recursively defined: to prove that $\SPValueL{\SPInput}{\SPCircuit} = a$ (i.e., that $\SPValueL{\SPInput}{r} = a$), it suffices to show that the values of $r$'s children $u_{1}, \ldots, u_{t}$ satisfy \eqnref{eqn:vertex-value-informal} where the left-hand side is $a$. The sumcheck protocol interactively reduces this claim to a new claim ``$\SPPoly[v]\big(\vec{c},\SPValueL{\SPInput}{u_{1}}( \vec{c}),\ldots,\SPValueL{\SPInput}{u_{\aodeg}}(\vec{c})\big) = b$'' for $\vec{c} \in \Field^{m}$ chosen uniformly at random by the verifier and $b \in \Field$ chosen by the prover. The prover sends $h_{1} \DefineEqual \SPValueL{\SPInput}{u_{1}}(\vec{c}), \ldots, h_{\aodeg} \DefineEqual \SPValueL{\SPInput}{u_{\aodeg}}(\vec{c})$, reducing this new claim to the set of claims ``$h_{i} = \SPValueL{\SPInput}{u_{i}}(\vec{c})$'' for $i = 1, \ldots, \aodeg$ and ``$\SPPoly[v](\vec{c}, h_{1}, \ldots, h_{\aodeg}) = b$''. The latter can be checked by the verifier directly and the rest can be recursively checked via the same procedure. Eventually the protocol reaches the leaf vertices, which are labeled with small arithmetic circuits that the verifier can evaluate on its own.

One technicality is that, as defined, the degree of the polynomial at a vertex may be exponentially large, and so the prover would have to send exponentially-large messages in the sumcheck protocol. To avoid this, we use a well-known interactive sub-protocol for \emph{degree reduction} \cite{Shen92,GoldwasserKR15}. Since for all $\vec{x}$ the value $\SPValueL{\SPInput}{v}(\vec{x})$ depends only on $\SPValueL{\SPInput}{u_{1}}(\vec{x},\vec{\beta}), \ldots, \SPValueL{\SPInput}{u_{t}}(\vec{x},\vec{\beta})$ for $\vec{\beta} \in \SPSubset^{\SPVars}$, we can safely replace each $\SPValueL{\SPInput}{u_{i}}$ with the unique degree-$(\SetCardinality{\SPSubset}-1)$ extension $\LDSPValueL{\SPInput}{u_{i}}$ of its evaluation over $\SPSubset^{\SPVars}$. The degree of the summand in \eqnref{eqn:vertex-value-informal} is now at most $\delta \SetCardinality{H}$, where $\delta$ is the \emph{total} degree of $\SPPoly[v]$. Now that the sumcheck protocol is only invoked on low-degree polynomials, efficiency is recovered.

Another technicality is that since a sum-product circuit is a directed acyclic graph (as opposed to a tree), it is possible that a single vertex $v$ will have many claims about it. If each such claim reduces to many claims about other vertices, the number of claims to check could grow exponentially. This is in fact the case in both Shamir's and GKR's protocols. To avoid this blowup, the verifier checks a random linear combination of the claims about each vertex $v$. It is not difficult to see that soundness is preserved, and the number of claims per vertex is reduced to one.

\subsubsection{Achieving zero knowledge}

The Interactive Proof for sum-product circuits that we have described above is not zero knowledge. First, the sumcheck protocol, which is used to reduce claims about parent vertices to claims about child vertices, leaks information in the form of partial sums of the summand polynomial, as usual. Second, in order to reduce a claim about the root to claims about its children, the prover must provide evaluations of the polynomials of the children. These may be hard for the verifier to compute (indeed, if the verifier could compute both of these on its own then there would be no need to recurse). We use the ideas discussed in \secref{sec:techniques-nexp} to resolve both of these issues, obtaining a zero knowledge variant in the IPCP model (where the honest prover sends a random low-degree polynomial as the oracle).

We resolve the first issue by using our zero knowledge sumcheck protocol. Its zero knowledge guarantee states that the protocol reveals only one value of the summand function, which can be computed via one query to each of the $\LDSPValueL{\SPInput}{u_{i}}$, which are precisely the $h_{i}$'s sent by the prover. We are left to ensure that $h_{i}$'s do not leak information.

As in our modification of the BFL protocol, rather than taking the unique degree-$(\SetCardinality{\SPSubset}-1)$ extension $\LDSPValueL{\SPInput}{v}(\vec{\VariableX})$ of $\SPValueL{\SPInput}{v}(\vec{\VariableX})$, we will instead take a \emph{random} degree-$(\SetCardinality{\SPSubset}+\delta)$ extension $\RLDSPValueL{\SPInput}{v}$, where $\delta$ depends only on the circuit structure (in all of our protocols, $\delta$ is a small constant). This ensures that the few evaluations actually revealed by the prover are uniformly random in $\Field$. The prover sends, for each vertex $v$, the evaluation of a random polynomial $R_{v}$, which defines the random low-degree extension as $\RLDSPValueL{\SPInput}{v}(\vec{\VariableX}) \DefineEqual \LDSPValueL{\SPInput}{v}(\vec{\VariableX}) + \ZeroPoly{\SPSubset^{\SPVars}}(\vec{\VariableX}) \cdot R_{v}(\vec{\VariableX})$ where $\ZeroPoly{\SPSubset^{\SPVars}}$ is a degree-$\SetCardinality{\SPSubset}$ polynomial that is zero on $\SPSubset^{\SPVars}$ and nonzero on $(\Field - \SPSubset)^{\SPVars}$. The prover cannot simply send $R_{v}$, however, because the verifier could then query it in order to `derandomize' $\RLDSPValueL{\SPInput}{v}$. Instead, the prover sends a commitment to $R_{v}$ using our algebraic commitment scheme. The decommitment is performed `implicitly' during the sumcheck for vertex $v$. See \secref{sec:pzk-sum-product-evaluation} for details.

Finally, recall that in order to avoid a blowup in the number of claims we have to check, the verifier checks a random linear combination of the claims about any given vertex; this is a linear operation. Also, to avoid a blowup in the degree, we take the low-degree extension, which is also a linear operation. Both of these operations are `compatible' with sumcheck, and thus zero knowledge is straightforwardly maintained.

\doclearpage
\section{Roadmap}
\label{sec:roadmap}

After providing formal definitions in \secref{sec:preliminaries}, the rest of the paper is organized as summarized by the table below. The shaded boxes denote some previous results that we rely on.

\begin{center}
\newcommand{\ResultHeader}[1]{\textbf{#1}}
\tikzstyle{headbox} = [rectangle, align=center, draw=black, fill=white, font=\footnotesize, anchor=south west]
\tikzstyle{backbox} = [rectangle, minimum width=12cm, align=center, draw=black, fill=white, font=\footnotesize, anchor=south west]
\tikzstyle{vtext} = [align=center, font=\footnotesize, rotate=90]
\tikzstyle{newresult} = [rectangle, rounded corners, minimum width=3cm, minimum height=1cm, align=center, draw=black, fill=white, font=\footnotesize]
\tikzstyle{oldresult} = [rectangle, rounded corners, minimum width=3cm, minimum height=1cm, align=center, draw=black, fill=gray!15, font=\footnotesize]
\tikzstyle{arrow} = [thick,->,>=stealth]
\begin{tikzpicture}[node distance=1.5cm]
\node (backbox1) [backbox,  minimum height=2cm] at (-2,-1) {};
\node (backbox2) [backbox,  minimum height=4cm] at (-2,1) {};
\node (backbox3) [backbox,  minimum height=4cm] at (-2,5) {};
\node (headbox1) [headbox,  minimum height=8cm, minimum width=0.5cm] at (-3,1) {};
\node (headbox2) [headbox,  minimum height=4cm, minimum width=0.5cm] at (-2.5,1) {};
\node (headbox3) [headbox,  minimum height=4cm, minimum width=0.5cm] at (-2.5,5) {};
\node (headbox4) [headbox,  minimum height=2cm, minimum width=1cm] at (-3,-1) {};
\node (text1) [vtext] at (-2.7,5) {\textbf{Interactive Probabilistically Checkable Proofs}};
\node (text2) [vtext] at (-2.2,3) {sum-only computations};
\node (text3) [vtext] at (-2.2,7) {sum-product alternations};
\node (text4) [vtext] at (-2.7,0) {\textbf{Algebraic}};
\node (text5) [vtext] at (-2.2,0) {\textbf{Complexity}};
\node (raz) [oldresult] at (0,0) {\ResultHeader{\cite{RazS05}} \\ derandomize PIT for sums \\ of products of univariates};
\node (scd) [oldresult] at (4,0) {\ResultHeader{\cite{BenSassonCFGRS16}} \\ succinct constraint detection \\ for multi-variate low-degree \\ polynomials and their sums};
\node (aqc) [newresult] at (8,0) {\ResultHeader{\S\ref{sec:algebraic-query-complexity}: \thmref{thm:sum-query-lower-bound}} \\ lower bounds for \\ algebraic query complexity \\ of polynomial summation};
\node (pzkcomm) [newresult] at (8,2) {\ResultHeader{\S\ref{sec:strong-zk-sumcheck}} \\ perfectly-hiding \\ statistically-binding \\ algebraic commitment};
\node (strong) [newresult] at (5.5,4) {\ResultHeader{\S\ref{sec:strong-zk-sumcheck}: \thmref{thm:strong-sumcheck-ipcp}} \\ strong PZK sumcheck};
\node (weak) [oldresult] at (4,2) {\ResultHeader{\cite{BenSassonCFGRS16}} \\ weak PZK sumcheck};
\node (sharp) [oldresult] at (1,4) {\ResultHeader{\cite{BenSassonCFGRS16}} \\ PZK analogue of LFKN's \\ protocol for $\sharpP$};
\node (eval) [newresult] at (1,6) {\ResultHeader{\S\ref{sec:pzk-sum-product-evaluation}: \thmref{thm:pzk-for-SPCE}} \\ PZK for sum-product \\ circuit \underline{evaluation}};
\node (sat) [newresult] at (4.5,6) {\ResultHeader{\S\ref{sec:pzk-sum-product-satisfaction}: \thmref{thm:pzk-for-SPCS}} \\ PZK for sum-product \\ circuit \underline{satisfaction}};
\node (pspace) [newresult] at (-0.35,8) {\ResultHeader{\S\ref{sec:zk-pspace}: \thmref{thm:pzk-for-TQBF}} \\ PZK analogue \\ of Shamir's protocol \\ for $\PSPACE$};
\node (gkr) [newresult] at (2.8,8) {\ResultHeader{\S\ref{sec:zk-gkr}: \thmref{thm:gkr-space-uniform}} \\ PZK analogue \\ of GKR's protocol \\ for low-depth circuits};
\node (nexp) [newresult] at (8,8) {\ResultHeader{\S\ref{sec:zk-nexp}: \thmref{thm:pzk-for-nexp}} \\ PZK analogue \\ of BFLS's protocol \\ for $\NEXP$};
\draw [arrow] (aqc) -- (pzkcomm);
\draw [arrow] (scd) -- (weak);
\draw [arrow] (raz) -- (scd);
\draw [arrow] (pzkcomm) -- (strong);
\draw [arrow] (weak) -- (strong);
\draw [arrow] (weak) -- (sharp);
\draw [arrow] (strong) -- (eval);
\draw [arrow] (strong) -- (nexp);
\draw [arrow] (eval) -- (sat);
\draw [arrow] (eval) -- (pspace);
\draw [arrow] (eval) -- (gkr);
\draw [arrow] (sat) -- (nexp);
\end{tikzpicture}
\end{center}

\doclearpage
\section{Preliminaries}
\label{sec:preliminaries}

\subsection{Basic notations}
\label{sec:basic-notations}

For $n \in \Naturals$ we denote by $[n]$ the set $\{1,\ldots,n\}$. For $m,n \in \Naturals$ we denote by $m+[n]$ the set $\{m+1,\ldots,m+n\}$. For a set $X$, $n \in \Naturals$, $I \subseteq [n]$, and $\vec{x} \in X^{n}$, we denote by $\vec{x}_{I}$ the vector $\big(x_{i}\big)_{i \in I}$ that is $\vec{x}$ restricted to the coordinates in $I$.

\parhead{Functions, distributions, fields}
We use $f \colon \Domain \to \Range$ to denote a function with domain $\Domain$ and range $\Range$; given a subset $\SubDomain$ of $\Domain$, we use $\Restrict{f}{\SubDomain}$ to denote the restriction of $f$ to $\SubDomain$. Given a distribution $\Distribution$, we write $x \gets \Distribution$ to denote that $x$ is sampled according to $\Distribution$. We denote by $\Field$ a finite field and by $\Field_{\FieldSize}$ the field of size $\FieldSize$. Arithmetic operations over $\Field_{q}$ take time $\polylog q$ and space $O(\log q)$.

\parhead{Polynomials}
We denote by $\PolynomialRing{\Field}{m}{\VariableX}$ the ring of polynomials in $m$ variables over $\Field$. Given a polynomial $P$ in $\PolynomialRing{\Field}{m}{\VariableX}$, $\IndividualDegree{P}[\VariableX_{i}]$ is the degree of $P$ in the variable $\VariableX_{i}$. The \emph{individual degree} of a polynomial is its maximum degree in any variable, $\max_{1 \leq i \leq m}{\IndividualDegree{P}[\VariableX_{i}]}$; we always refer to the individual degree unless otherwise specified. We denote by $\PolynomialRingIndOne{\Field}{m}{\VariableX}{d}$ the subspace consisting of $P \in \PolynomialRing{\Field}{m}{\VariableX}$ with individual degree at most $d$.

\parhead{Languages and relations}
We denote by $\Language$ a language consisting of \emph{instances} $\Instance$, and by $\Relation$ a (binary ordered) relation consisting of pairs $(\Instance,\Witness)$, where $\Instance$ is the \emph{instance} and $\Witness$ is the \emph{witness}. We denote by $\GetLanguage{\Relation}$ the language corresponding to $\Relation$, and by $\Witnesses{\Relation}{\Instance}$ the set of witnesses in $\Relation$ for $\Instance$ (if $\Instance \not\in \GetLanguage{\Relation}$ then $\Witnesses{\Relation}{\Instance} \DefineEqual \emptyset$). As always, we assume that $\BitSize{\Witness}$ is bounded by some computable function of $\InstanceSize \DefineEqual \BitSize{\Instance}$; in fact, we are mainly interested in relations arising from nondeterministic languages: $\Relation \in \NTIME(\DeciderTime)$ if there exists a $\DeciderTime(\InstanceSize)$-time machine $\DeciderMachine$ such that $\DeciderMachine(\Instance,\Witness)$ outputs $1$ if and only if $(\Instance,\Witness) \in \Relation$. Throughout, we assume that $\DeciderTime(\InstanceSize) \geq \InstanceSize$.

\parhead{Low-degree extensions}
Let $\Field$ be a finite filed, $H$ a subset of $\Field$, and $m$ a positive integer. The \emph{low-degree extension} (LDE) of a function $f \colon H^{m} \to \Field$ is denoted $\LD{f}$ and is the unique polynomial in $\PolynomialRingIndOne{\Field}{m}{\VariableX}{\SetCardinality{H}-1}$ that agrees with $f$ on $H^{m}$. In particular, $\LD{f} \colon \Field^{m} \to \Field$ is defined as follows:
\begin{equation*}
\LD{f}(\vec{\VariableX})
\DefineEqual
  \sum_{\vec{\beta} \in H^{m}}
  \Lagrange{H^{m}}(\vec{\VariableX}, \vec{\beta})
  \cdot
  f(\vec{\beta})
\enspace,
\end{equation*}
where $\Lagrange{H^{m}}(\vec{\VariableX}, \vec{\VariableY}) \DefineEqual \prod_{i=1}^{m} \sum_{\omega \in H} \prod_{\gamma \in H \setminus \{\omega\}} \frac{(\VariableX_{i} - \gamma)(\VariableY_{i} - \gamma)}{(\omega - \gamma)^{2}}$ is the unique polynomial in $\PolynomialRingIndOne{\Field}{m}{\VariableX}{\SetCardinality{H}-1}$ such that, for all $(\vec{\alpha},\vec{\beta}) \in H^{m} \times H^{m}$, $\Lagrange{H^{m}}(\vec{\alpha},\vec{\beta})$ equals $1$ when $\vec{\alpha}=\vec{\beta}$ and equals $0$ otherwise. Note that $\Lagrange{H^{m}}(\vec{\VariableX}, \vec{\VariableY})$ can be generated and evaluated in time $\poly(\SetCardinality{H}, m, \log \SetCardinality{\Field})$ and space $O(\log \SetCardinality{\Field} + \log m)$, so $\LD{f}(\vec{\alpha})$ can be evaluated in time $\SetCardinality{H}^{m} \cdot \poly(\SetCardinality{H}, m, \log \SetCardinality{\Field})$ and space $O(m \cdot \log \SetCardinality{\Field})$.

\subsection{Sampling partial sums of random low-degree polynomials}
\label{sec:partial-sums}

Let $\Field$ be a finite field, $\SCVars,\SCDegree$ positive integers, and $\SCSubset$ a subset of $\Field$, and recall that $\PolynomialRingIndOne{\Field}{\SCVars}{\VariableX}{\SCDegree}$ is the subspace of $\PolynomialRing{\Field}{\SCVars}{\VariableX}$ consisting of those polynomials with individual degrees at most $\SCDegree$. Given $Q \in \PolynomialRingIndOne{\Field}{\SCVars}{\VariableX}{\SCDegree}$ and $\vec{\alpha} \in \Field^{\leq\SCVars}$ (vectors over $\Field$ of length at most $\SCVars$), we define $Q(\vec{\alpha}) \DefineEqual \sum_{\vec{\gamma} \in \SCSubset^{\SCVars - \SetCardinality{\vec{\alpha}}}} Q(\vec{\alpha}, \vec{\gamma})$, i.e., the answer to a query that specifies only a prefix of the variables is the sum of the values obtained by letting the remaining variables range over $\SCSubset$.

In \secref{sec:strong-zk-sumcheck} we rely on the fact, formally stated below and proved in \cite{BenSassonCFGRS16}, that one can efficiently sample the distribution $\RandPoly(\vec{\alpha})$, where $\RandPoly$ is uniformly random in $\PolynomialRingIndOne{\Field}{\SCVars}{\VariableX}{\SCDegree}$ and $\vec{\alpha} \in \Field^{\leq\SCVars}$ is fixed, \emph{even conditioned on any polynomial number of (consistent) values for $\RandPoly(\vec{\alpha}_{1}),\dots,\RandPoly(\vec{\alpha}_{\ListSize})$} (with $\vec{\alpha}_{1},\dots,\vec{\alpha}_{\ListSize} \in \Field^{\leq\SCVars}$). More precisely, the sampling algorithm runs in time that is only $\poly(\log \SetCardinality{\Field}, \SCVars, \SCDegree, \SetCardinality{\SCSubset}, \ListSize)$, which is much faster than the trivial running time of $\Omega(\SCDegree^{\SCVars})$ achieved by sampling $\RandPoly$ explicitly. This ``succinct'' sampling follows from the notion of \emph{succinct constraint detection} studied in \cite{BenSassonCFGRS16} for the case of partial sums of low-degree polynomials.

\begin{corollary}[\cite{BenSassonCFGRS16}]
\label{cor:efficient-poly-simulator}
There exists a probabilistic algorithm $\CodeSimAlgorithm$ such that, for every finite field $\Field$, positive integers $\SCVars,\SCDegree$, subset $\SCSubset$ of $\Field$, subset $S = \{(\alpha_{1},\beta_{1}), \dots, (\alpha_{\ListSize}, \beta_{\ListSize})\} \subseteq \Field^{\leq\SCVars} \times \Field$, and $(\alpha,\beta) \in \Field^{\leq\SCVars} \times \Field$,
\begin{equation*}
\Pr\Big[
\CodeSimAlgorithm(\Field,\SCVars,\SCDegree,\SCSubset,S,\alpha) = \beta
\Big]
=
\Pr_{\RandPoly \gets \PolynomialRingIndOne{\Field}{\SCVars}{\VariableX}{\SCDegree}}
\left[
\RandPoly(\alpha) = \beta
\pST
\begin{array}{c}
\RandPoly(\alpha_{1}) = \beta_{1} \\
\vdots \\
\RandPoly(\alpha_{\ListSize}) = \beta_{\ListSize}
\end{array}
\right]\enspace.
\end{equation*}
Moreover $\CodeSimAlgorithm$ runs in time $\SCVars(\SCDegree \ListSize \SetCardinality{\SCSubset} + \SCDegree^{3}\ListSize^{3}) \cdot \poly(\log \SetCardinality{\Field}) = \ListSize^{3} \cdot \poly(\SCVars, \SCDegree, \SetCardinality{\SCSubset}, \log \SetCardinality{\Field})$.
\end{corollary}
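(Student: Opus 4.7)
The plan is to reduce the conditional sampling problem to a linear algebra task on the dual $V^*$ of $V := \PolynomialRingIndOne{\Field}{\SCVars}{\VariableX}{\SCDegree}$, and then invoke the succinct constraint detection procedure of \cite{BenSassonCFGRS16} for partial sums of low-degree polynomials. Since each partial sum $L_\alpha(\RandPoly) := \sum_{\vec{\gamma} \in \SCSubset^{\SCVars - \SetCardinality{\alpha}}} \RandPoly(\alpha,\vec{\gamma})$ is a linear functional on $V$, when $\RandPoly$ is drawn uniformly from $V$ the joint distribution of $\big(L_{\alpha_1}(\RandPoly),\dots,L_{\alpha_\ListSize}(\RandPoly),L_\alpha(\RandPoly)\big)$ is uniform over a linear subspace of $\Field^{\ListSize+1}$. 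Hence the conditional distribution of $L_\alpha(\RandPoly)$ given $\{L_{\alpha_i}(\RandPoly)=\beta_i\}_{i \in [\ListSize]}$ is either a point mass at $\sum_i c_i \beta_i$ --- precisely when $L_\alpha = \sum_i c_i L_{\alpha_i}$ holds in $V^*$ --- or the uniform distribution on $\Field$ otherwise. Thus $\CodeSimAlgorithm$ need only decide membership of $L_\alpha$ in $\mathrm{span}(L_{\alpha_1},\ldots,L_{\alpha_\ListSize})$ and, in the affirmative case, output explicit coefficients.

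The main obstacle is efficiency: the ambient space $V^*$ has dimension $(\SCDegree+1)^{\SCVars}$, exponential in $\SCVars$. The crucial structural observation is that every partial sum functional has a rank-one tensor representation in the monomial basis. Expanding $\RandPoly = \sum_{\vec{e}} c_{\vec{e}}\, \vec{\VariableX}^{\,\vec{e}}$,
\[
L_\alpha(\RandPoly) \;=\; \sum_{\vec{e}} c_{\vec{e}} \prod_{j \leq \SetCardinality{\alpha}} \alpha_j^{e_j} \prod_{j > \SetCardinality{\alpha}} \sigma_{e_j}, \qquad \sigma_e := \sum_{\gamma \in \SCSubset} \gamma^e,
\]
so $L_\alpha$ factorises as an elementary tensor $v_1(\alpha) \otimes \cdots \otimes v_{\SCVars}(\alpha)$ whose $j$-th factor in $\Field^{\SCDegree+1}$ is either the power vector $(1,\alpha_j,\ldots,\alpha_j^{\SCDegree})$ (when $j \leq \SetCardinality{\alpha}$) or the fixed power-sum vector $(\sigma_0,\ldots,\sigma_{\SCDegree})$ (when $j > \SetCardinality{\alpha}$). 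Equivalently, each $L_{\alpha_i}$ is encoded by a polynomial $\phi_{\alpha_i}(\vec{\VariableY}) = \prod_{j=1}^{\SCVars} f_{i,j}(\VariableY_j)$ that is a product of $\SCVars$ univariates of individual degree at most $\SCDegree$, and the identity $\sum_i c_i L_{\alpha_i} = L_\alpha$ in $V^*$ is equivalent to $\sum_i c_i \phi_{\alpha_i} = \phi_{\alpha}$ as polynomials in $\PolynomialRing{\Field}{\SCVars}{\VariableY}$.

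With this reduction in hand, I would run the succinct constraint detection algorithm of \cite{BenSassonCFGRS16} --- itself leveraging the deterministic identity test for sums of products of univariates of \cite{RazS05} --- as a variable-by-variable sweep that, at step $j$, maintains a basis for the image of $\mathrm{span}(L_{\alpha_1},\ldots,L_{\alpha_\ListSize},L_\alpha)$ under projection onto the first $j$ tensor factors. Because this span has dimension at most $\ListSize+1$, each intermediate basis fits in a matrix of size $O(\ListSize) \times O(\ListSize \SCDegree)$; a single update requires evaluating the relevant per-variable tensor factors (cost $O(\SCDegree \ListSize \SetCardinality{\SCSubset})$ once the power sums $\sigma_e$ are precomputed by Horner's rule) and one Gaussian elimination (cost $O(\SCDegree^3 \ListSize^3)$). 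Aggregating over the $\SCVars$ variables and multiplying by $\poly(\log \SetCardinality{\Field})$ for field arithmetic yields the claimed running time. Once the final basis is obtained, $\CodeSimAlgorithm$ reads off whether $L_\alpha \in \mathrm{span}(L_{\alpha_1},\ldots,L_{\alpha_\ListSize})$ together with witness coefficients $c_1,\ldots,c_\ListSize$ (if any), and returns either $\sum_i c_i \beta_i$ or a uniform $\beta \gets \Field$ --- which samples the correct conditional distribution by the discussion in the first paragraph.
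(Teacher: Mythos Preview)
The paper does not prove this statement at all: it is stated as a corollary of \cite{BenSassonCFGRS16} and left without proof, with only the one-line remark that the ``succinct'' sampling follows from the notion of \emph{succinct constraint detection} for partial sums of low-degree polynomials studied in that prior work. Your proposal correctly reconstructs the argument underlying that cited result: the reduction of conditional sampling to detecting whether $L_\alpha$ lies in the span of $L_{\alpha_1},\dots,L_{\alpha_\ListSize}$ in $V^*$, the rank-one tensor (product-of-univariates) structure of each partial-sum functional in the monomial basis, and the variable-by-variable sweep based on the \cite{RazS05} identity test that keeps all intermediate data polynomial in $\SCVars,\SCDegree,\ListSize,\SetCardinality{\SCSubset}$. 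This is exactly the approach the paper is pointing to, so there is nothing to compare.
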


\subsection{Interactive probabilistically checkable proofs}
\label{sec:ipcp}

An \emph{Interactive Probabilistically Checkable Proof} (Interactive PCP, IPCP) \cite{KalaiR08} is a Probabilistically Checkable Proof \cite{BabaiFLS91,FeigeGLSS91,AroraS98,AroraLMSS98} followed by an Interactive Proof \cite{Babai85,GoldwasserMR89}. Namely, the prover $\Prover$ and verifier $\Verifier$ interact as follows: $\Prover$ sends to $\Verifier$ a probabilistically checkable proof $\Proof$; afterwards, $\Prover$ and $\Verifier^{\Proof}$ engage in an interactive proof. Thus, $\Verifier$ may read a few bits of $\Proof$ but must read subsequent messages from $\Prover$ in full. An \emph{IPCP system} for a relation $\Relation$ is thus a pair $\pair{\Prover}{\Verifier}$, where $\Prover,\Verifier$ are probabilistic interactive algorithms working as described, that satisfies naturally-defined notions of perfect completeness and soundness with a given error $\SoundnessError(\cdot)$; see \cite{KalaiR08} for details.

We say that an IPCP has $\NumRounds$ rounds if this ``PCP round'' is followed by a $\NumRounds$-round interactive proof. (Though note that \cite{BenSassonCFGRS16} counts the PCP round towards round complexity.) Beyond round complexity, we also measure how many bits the prover sends and how many the verifier reads: the \emph{proof length} $\ProofLength$ is the length of $\Proof$ in bits plus the number of bits in all subsequent prover messages; the \emph{query complexity} $\QueryComplexity$ is the number of bits of $\Proof$ read by the verifier plus the number of bits in all subsequent prover messages (since the verifier must read all of those bits).

In this work, we do not count the number of bits in the verifier messages, nor the number of random bits used by the verifier; both are bounded from above by the verifier's running time, which we do consider. Overall, we say that a language $\Language$ (resp., relation $\Relation$) belongs to the complexity class $\IPCP[\SoundnessError,\NumRounds,\ProofLength,\QueryComplexity]$ if there is an IPCP system for $\Language$ (resp., $\Relation$) in which:
\begin{inparaenum}[(1)]
  \item the soundness error is $\SoundnessError(\InstanceSize)$;
  \item the number of rounds is at most $\NumRounds(\InstanceSize)$;
  \item the proof length is at most $\ProofLength(\InstanceSize)$;
  \item the query complexity is at most $\QueryComplexity(\InstanceSize)$.
\end{inparaenum}
We sometimes also specify the time and/or space complexity of the (honest) prover algorithm and/or (honest) verifier algorithm.

Finally, an IPCP is \emph{non-adaptive} if the verifier queries are non-adaptive, i.e., the queried locations depend only on the verifier's inputs; it is \emph{public-coin} if each verifier message is chosen uniformly and independently at random, and all of the verifier queries happen after receiving the last prover message. \emph{All of the IPCPs discussed in this paper are both non-adaptive and public-coin.}

\subsection{Zero knowledge for Interactive PCPs}
\label{sec:zk}

We define the notion of zero knowledge for IPCPs that we consider: \emph{perfect zero knowledge via straightline simulators}. This notion is quite strong not only because it unconditionally guarantees perfect simulation of the verifier's view but also because straightline simulation typically implies desirable properties. We first provide context and then definitions.

At a high level, zero knowledge requires that the verifier's view can be efficiently simulated without the prover. Converting the informal statement into a mathematical one involves many choices, including choosing which verifier class to consider (e.g., the honest verifier? all polynomial-time verifiers?), the quality of the simulation (e.g., is it identically distributed to the view? statistically close to it? computationally close to it?), the simulator's dependence on the verifier (e.g., is it non-uniform? or is the simulator universal?), and others. The definition below considers the case of perfect simulation via universal simulators against verifiers making a bounded number of queries to the proof oracle.

Moreover, in the case of universal simulators, one distinguishes between a non-blackbox use of the verifier, which means that the simulator takes the verifier's code as input, and a blackbox use of it, which means that the simulator only accesses the verifier via a restricted interface; we consider this latter case. Different models of proof systems call for different interfaces, which grant carefully-chosen ``extra powers'' to the simulator (in comparison to the prover) so to ensure that efficiency of the simulation does not imply the ability to efficiently decide the language. For example: in ZK IPs, the simulator may rewind the verifier; in ZK PCPs, the simulator may adaptively answer oracle queries. In ZK IPCPs (our setting), the natural definition would allow a blackbox simulator to rewind the verifier \emph{and also} to adaptively answer oracle queries. The definition below, however, considers only simulators that are straightline \cite{FeigeS89,DworkS98}, that is they do not rewind the verifier, because our constructions achieve this stronger notion.

We are now ready to define the notion of perfect zero knowledge via straightline simulators for IPCPs \cite{GoyalIMS10}.

\begin{definition}
	\label{def:ipcp-view}
	Let $A,B$ be algorithms and $x,y$ strings. We denote by $\IOPView{B(y)}{A(x)}$ the \defemph{view} of $A(x)$ in an IPCP protocol with $B(y)$, i.e., the random variable $(x,r,s_{1},\dots,s_{n},t_{1},\dots,t_{m})$ where $x$ is $A$'s input, $r$ is $A$'s randomness, $s_{1},\dots,s_{n}$ are $B$'s messages, and $t_{1},\dots,t_{m}$ are the answers to $A$'s queries to the proof oracle sent by $B$.
\end{definition}

Straightline simulators in the context of IPs were used in \cite{FeigeS89}, and later defined in \cite{DworkS98}. The definition below considers this notion in the context of IPCPs, where the simulator also has to answer oracle queries by the verifier. Note that since we consider the notion of perfect zero knowledge, the definition of straightline simulation needs to allow the efficient simulator to work even with inefficient verifiers \cite{GoyalIMS10}.

\begin{definition}
\label{def:ipcp-access}
We say that an algorithm $B$ has \defemph{straightline access} to another algorithm $A$ if $B$ interacts with $A$, without rewinding, by exchanging messages with $A$ and also answering any oracle queries along the way. We denote by $B^{A}$ the concatenation of $A$'s random tape and $B$'s output. (Since $A$'s random tape could be super-polynomially large, $B$ cannot sample it for $A$ and then output it; instead, we restrict $B$ to not see it, and we prepend it to $B$'s output.)
\end{definition}

\begin{definition}
\label{def:zk-ipcp}
An IPCP system $\pair{\Prover}{\Verifier}$ for a relation $\Relation$ is perfect zero knowledge (via straightline simulators) \sunderline{against unbounded queries} (resp., \sunderline{against query bound $\QueryBound$}) with simulator overhead $s \colon \Naturals \times \Naturals \to \Naturals$ if there exists a simulator algorithm $\Simulator$ such that for every algorithm (resp., $\QueryBound$-query algorithm) $\Malicious{\Verifier}$ and instance-witness pair $(\Instance,\Witness) \in \Relation$, $\Simulator^{\Malicious{\Verifier}} (\Instance)$ and $\IPCPView{\Prover(\Instance,\Witness)}{\Malicious{\Verifier}(\Instance)}$ are identically distributed. Moreover, $\Simulator$ must run in time $O(s(\BitSize{\Instance}, \QueryComplexity_{\Malicious{\Verifier}}(\BitSize{\Instance})))$, where $\QueryComplexity_{\Malicious{\Verifier}}(\cdot)$ is $\Malicious{\Verifier}$'s query complexity.

The case of a language $\Language$ is similar: the quantification is for all $\Instance \in \Language$ and the view to simulate is $\IPCPView{\Prover(\Instance)}{\Malicious{\Verifier}(\Instance)}$.
\end{definition}

\begin{remark}
Throughout this paper, an algorithm is \defemph{$\QueryBound$-query} if it makes \sunderline{strictly fewer than} $\QueryBound$ queries to its oracle. This is because all of our results will be of a `query threshold' character, i.e. if the verifier makes $\QueryBound$ queries it learns some information, but any verifier making strictly fewer queries learns nothing.
\end{remark}

\begin{remark}
The standard definition of zero knowledge allows the simulator overhead $s$ to be any fixed polynomial.
\end{remark}

\begin{remark}
The definition above places a strict bound on the running time of the simulator. This is in contrast to most zero knowledge results, which can only bound its \emph{expected} running time.
\end{remark}

We say that a language $\Language$ (resp., relation $\Relation$) belongs to the complexity class $\PZKIPCP[\SoundnessError,\NumRounds,\ProofLength,\QueryComplexity,\QueryBound,s]$ if there is an IPCP system for $\Language$ (resp., $\Relation$), with the corresponding parameters, that is perfect zero knowledge with query bound $\QueryBound$; also, it belongs to the complexity class $\PZKIPCP[\SoundnessError,\NumRounds,\ProofLength,\QueryComplexity,\AnyBound,s]$ if the same is true with unbounded queries. In this paper we only consider zero knowledge against bounded queries. (Note that, even in this case, one can `cover' all polynomial-time malicious verifiers by setting $\QueryBound$ to be superpolynomial in the input size.)

\begin{remark}
Kalai and Raz \cite{KalaiR08} give a general transformation for IPCPs that reduces the verifier's query complexity $\QueryComplexity$ to $1$. The transformation preserves our zero knowledge guarantee, with a small increase in the simulator overhead.
\end{remark}

\subsection{Sumcheck protocol and its zero knowledge variant}
\label{sec:sumcheck-protocol}

The sumcheck protocol \cite{LundFKN92} is a fundamental building block of numerous results in complexity theory and cryptography. We rely on it in \secref{sec:strong-zk-sumcheck}, so we briefly review it here. The protocol consists of an Interactive Proof for a claim of the form ``$\sum_{\alpha_{1},\dots,\alpha_{\SCVars} \in \SCSubset} \SCPoly(\alpha_{1},\dots,\alpha_{\SCVars})=\SCSum$'', where $\SCPoly$ is an $\SCVars$-variate polynomial of individual degree $\SCDegree$ with coefficients in a finite field $\Field$, $\SCSubset$ is a subset of $\Field$, and $\SCSum$ is an element of $\Field$. The prover and verifier receive $(\Field,\SCVars,\SCDegree,\SCSubset,\SCSum)$ as input; in addition, the prover receives $\SCPoly$ as input while the verifier has only oracle access to $\SCPoly$. In the $i$-th round, the prover sends the univariate polynomial $\SCPoly_{i}(\VariableX) \DefineEqual \sum_{\alpha_{i+1},\dots,\alpha_{\SCVars} \in \SCSubset} \SCPoly(c_{1},\dots,c_{i-1},\VariableX,\alpha_{i+1},\dots,\alpha_{\SCVars})$, and the verifier replies with a uniformly random element $c_{i} \in \Field$ and checks that $\SCPoly_{i-1}(c_{i-1}) = \sum_{\alpha \in \SCSubset} \SCPoly_{i}(\alpha)$ (defining $\SCPoly_{0}(c_{0})$ to be the element $\SCSum$). At the end of the interaction, the verifier also checks that $\SCPoly_{\SCVars}(c_{\SCVars})=\SCPoly(c_{1},\dots,c_{\SCVars})$, by querying $\SCPoly$ at the random location $(c_{1},\dots,c_{\SCVars})$. This interactive proof is public-coin, and has $\SCVars$ rounds, communication complexity $\poly(\log \SetCardinality{\Field}, \SCVars)$, and soundness error $\frac{\SCVars\SCDegree}{\SetCardinality{\Field}}$. The prover runs in time $\poly(\log \SetCardinality{\Field},\SetCardinality{\SCSubset}^{\SCVars})$ and space $\poly(\log \SetCardinality{\Field}, \SCVars, \SetCardinality{\SCSubset})$ and the verifier runs in time $\poly(\log \SetCardinality{\Field}, \SCVars, \SCDegree, \SetCardinality{\SCSubset})$ and space $O(\log \SetCardinality{\Field} \cdot \SCVars)$.

The sumcheck protocol is \emph{not} zero knowledge, because the prover reveals partial sums of $\SCPoly$ to the verifier. If we assume the existence of one-way functions, the protocol can be made computational zero knowledge by leveraging the fact that it is public-coin \cite{GoldwasserMR89,ImpagliazzoY87,BenOrGGHKMR88} (in fact, if we further assume the hardness of certain problems related to discrete logarithms then more efficient transformations are known \cite{CramerD98}); moreover, there is strong evidence that assuming one-way functions is necessary \cite{Ostrovsky91,OstrovskyW93}. Even more, achieving statistical zero knowledge for sumcheck instances would cause unlikely complexity-theoretic collapses \cite{Fortnow87,AielloH91}.

Nevertheless, \cite{BenSassonCFGRS16} have shown that, in the \emph{Interactive PCP} model (see \secref{sec:ipcp}), a simple variant of the sumcheck protocol is \emph{perfect zero knowledge}. The variant is as follows: the prover sends a proof oracle containing the evaluation of a random $\SCVars$-variate polynomial $\AuxRandPoly$ of individual degree $\SCDegree$, conditioned on summing to $0$ on $\SCSubset^{\SCVars}$; the verifier replies with a random element $\rho \in \Field$; then the prover and verifier engage in a sumcheck protocol for the claim ``$\sum_{\vec{\alpha} \in \SCSubset^{\SCVars}} \rho\SCPoly(\vec{\alpha})+\AuxRandPoly(\vec{\alpha})=\SCSum$'', with the verifier accessing $\AuxRandPoly$ via self-correction (after low-degree testing it). The proof oracle thus consists of $\SetCardinality{\Field}^{\SCVars}$ field elements, and the verifier accesses only $\poly(\log \SetCardinality{\Field}, \SCVars, \SCDegree)$ of them.

The auxiliary polynomial $\AuxRandPoly$ acts as a ``masking polynomial'', and yields the following zero knowledge guarantee: there exists a polynomial-time simulator algorithm that perfectly simulates the view of any malicious verifier, provided it can query $\SCPoly$ in as many locations as the \emph{total} number of queries that the malicious verifier makes to either $\SCPoly$ or $\AuxRandPoly$.

\doclearpage
\section{Algebraic query complexity of polynomial summation}
\label{sec:algebraic-query-complexity}

We have described in \secref{sec:technniques-algebraic-commitment} an algebraic commitment scheme based on the sumcheck protocol and lower bounds on the algebraic query complexity of polynomial summation. The purpose of this section is to describe this construction in more detail, and then provide formal statements for the necessary lower bounds.

We begin with the case of committing to a single element $a \in \Field$. The prover chooses a uniformly random string $B \in \Field^{N}$ such that $\sum_{i=1}^{N} B_{i} = a$, for some $N \in \Naturals$. Fixing some $\SCDegree \in \Naturals$, $\SSCSubset \subseteq \Field$ and $\SSCVars \in \Naturals$ such that $\SetCardinality{\SSCSubset} \leq \SCDegree+1$ and $\SetCardinality{\SSCSubset}^{\SSCVars} = N$, the prover views $B$ as a function from $\SSCSubset^{\SSCVars}$ to $\Field$ (via an ordering on $\SSCSubset^{\SSCVars}$) and sends the evaluation of a degree-$d$ extension $\LD{B} \colon \Field^{\SSCVars} \to \Field$ of $B$. The verifier tests that $\LD{B}$ is indeed (close to) a low-degree polynomial but (ideally) cannot learn any information about $a$ without reading \emph{all} of $B$ (i.e., without making $N$ queries). Subsequently, the prover can decommit to $a$ by convincing the verifier that $\sum_{\vec{\beta} \in \SSCSubset^{\SSCVars}} \LD{B}(\vec{\beta}) = a$ via the sumcheck protocol.

To show that the above is a commitment scheme, we must show both binding and hiding. Both properties depend on the choice of $\SCDegree$. The binding property follows from the soundness of the sumcheck protocol, and we thus would like the degree $d$ of $\LD{B}$ to be as small as possible. A natural choice would be $\SCDegree = 1$ (so $\SetCardinality{\SSCSubset} = 2$), which makes $\LD{B}$ the unique multilinear extension of $B$. However (as discussed in \secref{sec:technniques-algebraic-commitment}) this choice of parameters does not provide any hiding: it holds that $\sum_{\beta \in \Bits^{k}} B(\beta) = \LD{B}(2^{-1}, \ldots, 2^{-1}) \cdot 2^{k}$ (as long as $\Characteristic{\Field} \neq 2$). We therefore need to understand how the choice of $d$ affects the number of queries to $\LD{B}$ required to compute $a$. This is precisely the setting of \emph{algebraic query complexity}, which we discuss next.

The algebraic query complexity (defined in \cite{AaronsonW09} to study `algebrization') of a function $f$ is the (worst-case) number of queries to some low-degree extension $\LD{B}$ of a string $B$ required to compute $f(B)$. This quantity is bounded from above by the standard query complexity of $f$, but it may be the case (as above) that the low-degree extension confers additional information that helps in computing $f$ with fewer queries. The usefulness of this information depends on parameters $\SCDegree$ and $\SSCSubset$ of the low-degree extension. Our question amounts to understanding this dependence for the function $\textsc{Sum} \colon \Field^{N} \to \Field$ given by $\textsc{Sum}(B) \DefineEqual \sum_{i=1}^{N} B_{i}$. This has been studied before in \cite{JumaKRS09}: if $\SSCSubset = \Bits$ and $\SCDegree = 2$ then the algebraic query complexity of $\textsc{Sum}$ is exactly $N$.

For our purposes, however, it is not enough to commit to a single field element. Rather, we need to commit to the evaluation of a polynomial $Q \colon \Field^{\SCVars} \to \Field$ of degree $\SCDegree_{Q}$, which we do as follows. Let $K$ be a subset of $\Field$ of size $\SCDegree_{Q}+1$. The prover samples, for each $\vec{\alpha} \in K^{\SCVars}$, a random string $B^{\vec{\alpha}} \in \Field^{N}$ such that $\textsc{Sum}(B^{\vec{\alpha}}) = Q(\vec{\alpha})$. The prover views these strings as a function $B \colon K^{\SCVars} \times \SSCSubset^{\SSCVars} \to \Field$, and takes a low-degree extension $\LD{B} \colon \Field^{\SCVars} \times \Field^{\SSCVars} \to \Field$. The polynomial $\LD{B}(\vec{\VariableX}, \vec{\VariableY})$ has degree $d_{Q}$ in $\vec{\VariableX}$ and $d$ in $\vec{\VariableY}$; this is a commitment to $Q$ because $\sum_{\vec{\beta} \in \SSCSubset^{\SSCVars}} \LD{B}(\vec{\VariableX}, \vec{\beta})$ is a degree-$d_{Q}$ polynomial that agrees with $Q$ on $K^{\SCVars}$, and therefore equals $Q$.

Once again we will decommit to $Q(\vec{\alpha})$ using the sumcheck protocol, and so for binding we need $d$ to be small. For hiding, as in the single-element case, if $d$ is too small then a few queries to $\LD{\mathcal{B}}$ can yield information about $Q$. Moreover, it could be the case that the verifier can leverage the fact that $\LD{\mathcal{B}}$ is a \emph{joint} low-degree extension to learn some linear combination of evaluations of $Q$. We must exclude these possibilities in order to obtain our zero knowledge guarantees.

This question amounts to a generalization of algebraic query complexity where, given a list of strings $B_{1}, \ldots, B_{M}$, we determine how many queries we need to make to their \emph{joint} low-degree extension $\LD{B}$ to determine any nontrivial linear combination $\sum_{i=1}^{M} c_{i} \cdot \textsc{Sum}(B_{i})$. We will show that the `generalized' algebraic query complexity of $\textsc{Sum}$ is exactly $N$ provided $\SCDegree \geq 2(\SetCardinality{\SSCSubset}-1)$ (which is also the case for the standard algebraic query complexity).

In the remainder of the section we state our results in a form equivalent to the above that is more useful to us. Given an arbitrary polynomial $\StrongRandPoly \in \PolynomialRingIndOneXY{\Field}{\SCVars}{\VariableX}{\SSCVars}{\VariableY}{\SCDegree}{\SCDegree'}$, we ask how many queries are required to determine any nontrivial linear combination of $\sum_{\vec{y} \in \SSCSubset^{\SSCVars}} \StrongRandPoly(\vec{\alpha}, \vec{y})$ for $\vec{\alpha} \in \Field^{\SCVars}$. The following theorem is more general: it states that not only do we require many queries to determine \emph{any} linear combination, but that the number of queries grows linearly with the number of independent combinations that we wish to learn.

\begin{theorem}[algebraic query complexity of polynomial summation]
\label{thm:sum-query-lower-bound}
Let $\Field$ be a field, $\SCVars, \SSCVars, \SCDegree, \SCDegree' \in \Naturals$, and $\SSCSubset, K, L$ be finite subsets of $\Field$ such that $K \subseteq L$, $\SCDegree' \geq \SetCardinality{\SSCSubset} - 2$, and $\SetCardinality{K} = \SCDegree+1$. If $S \subseteq \Field^{\SCVars+\SSCVars}$ is such that there exist matrices $C \in \Field^{L^{m} \times \ell}$ and $D \in \Field^{S \times \ell}$ such that for all $\StrongRandPoly \in \PolynomialRingIndOneXY{\Field}{\SCVars}{\VariableX}{\SSCVars}{\VariableY}{\SCDegree}{\SCDegree'}$ and all $i \in \{1, \ldots, \ell\}$
	\begin{equation*}
	\sum_{\vec{\alpha} \in L^{\SCVars}} C_{\vec{\alpha},i} \sum_{\vec{y} \in \SSCSubset^{\SSCVars}} \StrongRandPoly(\vec{\alpha}, \vec{y}) = \sum_{\vec{q} \in S} D_{\vec{q},i} \StrongRandPoly(\vec{q}) \enspace,
	\end{equation*}
	then $\SetCardinality{S} \geq \rank(BC) \cdot (\min\{\SCDegree' - \SetCardinality{\SSCSubset} + 2, \SetCardinality{\SSCSubset}\})^{\SSCVars}$, where $B \in \Field^{K^{\SCVars} \times L^{\SCVars}}$ is such that column $\vec{\alpha}$ of $B$ represents $\StrongRandPoly(\vec{\alpha})$ in the basis $(\StrongRandPoly(\vec{\beta}))_{\vec{\beta} \in K^{\SCVars}}$.
\end{theorem}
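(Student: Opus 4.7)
The plan is to reduce the multi-functional statement to a single-functional algebraic query bound for $w$, with an extra factor of $r$ arising from a rank analysis of an $r\times r$ matrix equation. First I would use Lagrange interpolation on $K^{\SCVars}$ to simplify the left-hand side: since each $\StrongRandPoly(\cdot,\vec{y})$ has individual degree $\le \SCDegree = \SetCardinality{K}-1$ in $\vec{\VariableX}$, the matrix $B$ is precisely the interpolation matrix expressing values at $L^{\SCVars}$ in terms of values at $K^{\SCVars}$, so the LHS becomes $\sum_{\vec{\beta} \in K^{\SCVars}} (BC)_{\vec{\beta}, i} \sum_{\vec{y} \in \SSCSubset^{\SSCVars}} \StrongRandPoly(\vec{\beta}, \vec{y})$. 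Writing $V := \PolynomialRingIndOne{\Field}{\SCVars}{\VariableX}{\SCDegree}$, $W := \PolynomialRingIndOne{\Field}{\SSCVars}{\VariableY}{\SCDegree'}$, $w := \sum_{\vec{y} \in \SSCSubset^{\SSCVars}} \mathrm{ev}_{\vec{y}} \in W^*$, and $U_i := \sum_{\vec{\beta}} (BC)_{\vec{\beta}, i} \mathrm{ev}_{\vec{\beta}} \in V^*$, the hypothesis reads $U_i \otimes w = \sum_{\vec{q} \in S} D_{\vec{q}, i} \mathrm{ev}_{\vec{q}}$ in $(V \otimes W)^*$, and because $\{\mathrm{ev}_{\vec{\beta}}\}_{\vec{\beta} \in K^{\SCVars}}$ is a basis of $V^*$ the span of $\{U_i\}$ has dimension exactly $r := \rank(BC)$.

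Next I would select $r$ indices so that $U_{i_1}, \ldots, U_{i_r}$ are linearly independent in $V^*$, and choose a biorthogonal family $P_1, \ldots, P_r \in V$ with $U_{i_j}(P_k) = \delta_{j,k}$. Evaluating the tensor identity on $\StrongRandPoly = P_k \cdot Q$ for arbitrary $Q \in W$ and each $(j,k) \in [r]\times[r]$, the LHS collapses to $\delta_{j,k}\, w(Q)$ and the RHS to $\sum_{(\vec{a},\vec{b}) \in S} D_{(\vec{a},\vec{b}), i_j} P_k(\vec{a}) Q(\vec{b})$. Setting $A_{\vec{b}} := \{\vec{a} : (\vec{a}, \vec{b}) \in S\}$, $\pi(S) := \{\vec{b} : A_{\vec{b}} \neq \emptyset\}$, and defining the $A_{\vec{b}} \times r$ matrices $(\mathbf{D}_{\vec{b}})_{\vec{a}, j} := D_{(\vec{a}, \vec{b}), i_j}$ and $(\mathbf{P}_{\vec{b}})_{\vec{a}, k} := P_k(\vec{a})$, the whole family of identities packages into the $r\times r$ matrix equation
\begin{equation*}
\sum_{\vec{b} \in \pi(S)} Q(\vec{b}) \cdot J(\vec{b}) \;=\; w(Q) \cdot I_r \quad \text{for all } Q \in W, \qquad J(\vec{b}) := \mathbf{D}_{\vec{b}}^{T} \mathbf{P}_{\vec{b}}.
\end{equation*}

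Finally I would extract the lower bound in two steps. Reading the $(i,i)$ diagonal entries yields $w = \sum_{\vec{b} \in \pi(S)} J(\vec{b})_{i,i}\, \mathrm{ev}_{\vec{b}}$ as a functional on $W$, i.e. $w$ factors through the restriction $W \to \Field^{\pi(S)}$. A single-functional algebraic query lower bound, which I would prove separately via the classical fact that a nonzero signed measure annihilating univariate polynomials of degree $\le \SCDegree'$ must have support of size at least $\SCDegree' + 2$ (tensored across the $\SSCVars$ coordinates and combined with the trivial case where the support already lies in $\SSCSubset$), then guarantees that any expression $w = \sum_{\vec{b}} \bar{w}_{\vec{b}}\, \mathrm{ev}_{\vec{b}}$ satisfies $\SetCardinality{\mathrm{supp}(\bar{w})} \ge M := (\min\{\SCDegree' - \SetCardinality{\SSCSubset} + 2,\, \SetCardinality{\SSCSubset}\})^{\SSCVars}$. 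On the other hand, after reducing to a $W$-interpolating subset of $\pi(S)$ (where $W \to \Field^{\pi(S)}$ is injective), the off-diagonal constraints force $J(\vec{b})_{i,j} = 0$ for $i \neq j$ and the diagonal constraints force $J(\vec{b})_{i,i} = \bar{w}_{\vec{b}}$ uniformly in $i$, so $J(\vec{b}) = \bar{w}_{\vec{b}}\, I_r$. Whenever $\bar{w}_{\vec{b}} \neq 0$, the factorization $J(\vec{b}) = \mathbf{D}_{\vec{b}}^{T} \mathbf{P}_{\vec{b}}$ then has full rank $r$, which forces $\SetCardinality{A_{\vec{b}}} \ge r$. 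Summing, $\SetCardinality{S} = \sum_{\vec{b}} \SetCardinality{A_{\vec{b}}} \ge r \cdot \SetCardinality{\mathrm{supp}(\bar{w})} \ge rM$. The main obstacle is the interpolating-set reduction in the $\SetCardinality{A_{\vec{b}}} \ge r$ step: one must carefully restrict both $W$ and $\pi(S)$ to kill the off-diagonal slack while preserving the single-functional lower bound $M$, which is the delicate coupling between the two factors.
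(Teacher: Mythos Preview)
Your first step (Lagrange interpolation onto $K^{\SCVars}$, giving $C' = BC$) matches the paper exactly. After that, however, the paper takes a completely different and much shorter route than your fiber decomposition. It fixes a subset $H \subseteq \SSCSubset$ of size $M := \min\{\SCDegree' - \SetCardinality{\SSCSubset} + 2, \SetCardinality{\SSCSubset}\}$ and considers the space $P_0$ of polynomials of degree $(\SCDegree, \SetCardinality{H}-1)$ in $(\vec{\VariableX}, \vec{\VariableY})$ vanishing on $S$; this has dimension $\ge (\SCDegree+1)^{\SCVars} M^{\SSCVars} - \SetCardinality{S}$. By averaging over $\vec{\beta}_0 \in H^{\SSCVars}$, some slice has rank $\ge (\SCDegree+1)^{\SCVars} - \SetCardinality{S}/M^{\SSCVars}$. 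Then for any $p \in P_0$ one multiplies by the Lagrange indicator $q$ of $\vec{\beta}_0$ in $\SSCSubset^{\SSCVars}$ (degree $\SetCardinality{\SSCSubset}-1$), so that $Z := q \cdot p$ has $Y$-degree $\le \SetCardinality{H} + \SetCardinality{\SSCSubset} - 2 \le \SCDegree'$, still vanishes on $S$, and satisfies $\sum_{\vec{y} \in \SSCSubset^{\SSCVars}} Z(\vec{\alpha}, \vec{y}) = p(\vec{\alpha}, \vec{\beta}_0)$. Plugging $Z$ into the hypothesis gives $B_{\vec{\beta}_0} C' = 0$, whence $\rank(C') \le \SetCardinality{S}/M^{\SSCVars}$. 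No tensor formalism, no fiber analysis, no separate single-functional lemma.

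Your approach has two genuine gaps. First, the ``single-functional lower bound'' for $\SSCVars > 1$ does not follow from tensoring the univariate Vandermonde fact: knowing $\SetCardinality{\pi_j(T)} \ge M$ for each coordinate projection does not give $\SetCardinality{T} \ge M^{\SSCVars}$ (a diagonal set has small projections but is itself small). Proving this bound seems to require essentially the same averaging-over-$H^{\SSCVars}$ argument the paper uses for the full theorem, so you are not actually reducing to something easier. Second, and more seriously, your endgame fails even granting the single-functional bound. From $\sum_{\vec{b}} J(\vec{b})_{ii}\,\mathrm{ev}_{\vec{b}} = w$ you get $\SetCardinality{\{\vec{b} : J(\vec{b})_{ii} \ne 0\}} \ge M$ for each $i$, hence $\sum_{\vec{b}} \SetCardinality{\{i : J(\vec{b})_{ii} \ne 0\}} \ge rM$. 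But the number of nonzero diagonal entries of $J(\vec{b})$ can exceed $\rank J(\vec{b})$ (e.g.\ the all-ones $2\times 2$ matrix has rank $1$ and two nonzero diagonal entries), so this does not give $\sum_{\vec{b}} \SetCardinality{A_{\vec{b}}} \ge rM$. Your proposed fix---forcing $J(\vec{b}) = \bar{w}_{\vec{b}} I_r$ via an ``interpolating-set reduction''---requires the $\mathrm{ev}_{\vec{b}}$ to be linearly independent in $W^*$, which need not hold, and restricting to a subset where they are independent discards exactly the off-diagonal kernel contributions you need to control. The coupling you flag as ``delicate'' is in fact the whole difficulty; the paper's averaging trick sidesteps it entirely.
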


We describe a special case of the above theorem that is necessary for our zero knowledge results, and then give an equivalent formulation in terms of random variables that we use in later sections. (Essentially, the linear structure of the problem implies that `worst-case' statements are equivalent to `average-case' statements.)

\begin{corollary}
\label{cor:special-case-query}
Let $\Field$ be a finite field, $\SSCSubset$ be a subset of $\Field$, and $\SCDegree,\SCDegree' \in \Naturals$ with $\SCDegree' \geq 2(\SetCardinality{\SSCSubset}-1)$. If $S \subseteq \Field^{\SCVars+\SSCVars}$ is such that there exist $(c_{\vec{\alpha}})_{\vec{\alpha} \in \Field^{\SCVars}}$ and $(d_{\vec{\beta}})_{\vec{\beta} \in \Field^{\SCVars+\SSCVars}}$ such that
\begin{itemize}[nolistsep]

  \item for all $\StrongRandPoly \in \PolynomialRingIndOneXY{\Field}{\SCVars}{\VariableX}{\SSCVars}{\VariableY}{\SCDegree}{\SCDegree'}$ it holds that
$
\sum_{\vec{\alpha} \in \Field^{\SCVars}} c_{\vec{\alpha}} \sum_{\vec{y} \in \SSCSubset^{\SSCVars}} \StrongRandPoly(\vec{\alpha}, \vec{y})
= \sum_{\vec{q} \in S} d_{\vec{q}} \StrongRandPoly(\vec{q})
$
and

  \item there exists $\StrongRandPoly' \in \PolynomialRingIndOneXY{\Field}{\SCVars}{\VariableX}{\SSCVars}{\VariableY}{\SCDegree}{\SCDegree'}$ such that $\sum_{\vec{\alpha} \in \Field^{\SCVars}} c_{\vec{\alpha}} \sum_{\vec{y} \in \SSCSubset^{\SSCVars}} \StrongRandPoly'(\vec{\alpha}, \vec{y}) \neq 0$,

\end{itemize}  
then $\SetCardinality{S} \geq \SetCardinality{\SSCSubset}^{\SSCVars}$.
\end{corollary}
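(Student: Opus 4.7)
The plan is to derive the corollary directly from \thmref{thm:sum-query-lower-bound} by instantiating it with $L = \Field$ (allowed because $\Field$ is finite, so $K \subseteq L = \Field$), $\ell = 1$, $C \in \Field^{\Field^{\SCVars} \times 1}$ the column vector with entries $C_{\vec{\alpha},1} \DefineEqual c_{\vec{\alpha}}$, and $D \in \Field^{S \times 1}$ the column vector with entries $D_{\vec{q},1} \DefineEqual d_{\vec{q}}$. The first hypothesis of the corollary is then exactly the linear identity demanded by the theorem. Moreover, the degree assumption $\SCDegree' \geq 2(\SetCardinality{\SSCSubset}-1)$ gives $\SCDegree' - \SetCardinality{\SSCSubset} + 2 \geq \SetCardinality{\SSCSubset}$, so $\min\{\SCDegree'-\SetCardinality{\SSCSubset}+2,\SetCardinality{\SSCSubset}\} = \SetCardinality{\SSCSubset}$ and the theorem's bound simplifies to $\SetCardinality{S} \geq \rank(BC) \cdot \SetCardinality{\SSCSubset}^{\SSCVars}$.

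The remaining step is to show $\rank(BC) \geq 1$, which will give the desired $\SetCardinality{S} \geq \SetCardinality{\SSCSubset}^{\SSCVars}$. Recall that column $\vec{\alpha}$ of $B \in \Field^{K^{\SCVars} \times \Field^{\SCVars}}$ expresses $\StrongRandPoly(\vec{\alpha})$ as a linear combination of $(\StrongRandPoly(\vec{\beta}))_{\vec{\beta} \in K^{\SCVars}}$; since $\SetCardinality{K} = \SCDegree+1$ and polynomials in $\PolynomialRingIndOneXY{\Field}{\SCVars}{\VariableX}{\SSCVars}{\VariableY}{\SCDegree}{\SCDegree'}$ have individual degree at most $\SCDegree$ in $\vec{\VariableX}$, the entries $B_{\vec{\beta},\vec{\alpha}}$ are the Lagrange interpolation coefficients in the $\vec{\VariableX}$-coordinates, independent of $\vec{\VariableY}$. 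Consequently, for every $\StrongRandPoly$,
\begin{equation*}
\sum_{\vec{\alpha} \in \Field^{\SCVars}} c_{\vec{\alpha}} \sum_{\vec{y} \in \SSCSubset^{\SSCVars}} \StrongRandPoly(\vec{\alpha}, \vec{y})
=
\sum_{\vec{\beta} \in K^{\SCVars}} (BC)_{\vec{\beta},1} \sum_{\vec{y} \in \SSCSubset^{\SSCVars}} \StrongRandPoly(\vec{\beta}, \vec{y}) \enspace.
\end{equation*}
If we had $BC = 0$, the left-hand side would vanish for every $\StrongRandPoly$, contradicting the second hypothesis of the corollary (the existence of the witnessing $\StrongRandPoly'$). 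Hence $\rank(BC) \geq 1$, and plugging this into the theorem's conclusion finishes the argument.

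The step I expect to be most delicate is the $\rank(BC) \geq 1$ argument, because one has to be careful about what $B$ actually represents: it is essential that the Lagrange coefficients in $B$ depend only on $\vec{\VariableX}$-coordinates (so that they commute with the $\vec{y}$-summation) and that $K^{\SCVars}$ genuinely interpolates individual-degree-$\SCDegree$ polynomials (which is why the hypothesis $\SetCardinality{K} = \SCDegree+1$ appears in \thmref{thm:sum-query-lower-bound}). Beyond these two bookkeeping points, the derivation is a direct specialization.
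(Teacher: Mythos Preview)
Your proposal is correct and is exactly the intended derivation: the paper presents this corollary as a direct special case of \thmref{thm:sum-query-lower-bound} without spelling out the details, and you have supplied precisely those details (choose $L=\Field$, $\ell=1$, note that $\SCDegree'\ge 2(\SetCardinality{\SSCSubset}-1)$ makes the minimum equal $\SetCardinality{\SSCSubset}$, and use the nontriviality hypothesis to force $\rank(BC)\ge 1$). Your identification of the one delicate point, that the interpolation matrix $B$ acts only on the $\vec{\VariableX}$-coordinates and so commutes with the $\vec{y}$-summation, matches the rewriting step that opens the paper's proof of \thmref{thm:sum-query-lower-bound}.
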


\begin{corollary}[equivalent statement of \corref{cor:special-case-query}]
\label{cor:partial-sum-indep-vars}
Let $\Field$ be a finite field, $\SSCSubset$ be a subset of $\Field$, and $\SCDegree,\SCDegree' \in \Naturals$ with $\SCDegree' \geq 2(\SetCardinality{\SSCSubset}-1)$.  Let $Q$ be a subset of $\Field^{\SCVars+\SSCVars}$ with $\SetCardinality{Q} < \SetCardinality{\SSCSubset}^{\SSCVars}$ and let $\StrongRandPoly$ be uniformly random in $ \PolynomialRingIndOneXY{\Field}{\SCVars}{\VariableX}{\SSCVars}{\VariableY}{\SCDegree}{\SCDegree'}$. The ensembles $\big(\sum_{\vec{y} \in \SSCSubset^{\SSCVars}} Z(\vec{\alpha},\vec{y})\big)_{\vec{\alpha} \in \Field^{\SCVars}}$ and $\big(\StrongRandPoly(\vec{q})\big)_{\vec{q} \in Q}$ are independent.
\end{corollary}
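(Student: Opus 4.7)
The plan is to deduce this from \corref{cor:special-case-query} via the standard dictionary between worst-case linear-algebraic statements and average-case distributional statements for uniform random elements of a finite $\Field$-vector space. Let $V \DefineEqual \PolynomialRingIndOneXY{\Field}{\SCVars}{\VariableX}{\SSCVars}{\VariableY}{\SCDegree}{\SCDegree'}$ and, for each $\vec{\alpha} \in \Field^{\SCVars}$ and $\vec{q} \in Q$, define the linear functionals $L_{\vec{\alpha}} \colon V \to \Field$ by $L_{\vec{\alpha}}(\StrongRandPoly) \DefineEqual \sum_{\vec{y} \in \SSCSubset^{\SSCVars}} \StrongRandPoly(\vec{\alpha},\vec{y})$ and $M_{\vec{q}} \colon V \to \Field$ by $M_{\vec{q}}(\StrongRandPoly) \DefineEqual \StrongRandPoly(\vec{q})$. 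The first ensemble in the statement is then $(L_{\vec{\alpha}}(\StrongRandPoly))_{\vec{\alpha}}$ and the second is $(M_{\vec{q}}(\StrongRandPoly))_{\vec{q}\in Q}$, both evaluated at a $\StrongRandPoly$ drawn uniformly from the finite $\Field$-vector space $V$.

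Next I would invoke the standard fact that two tuples $X = (\phi_{i}(Z))_{i}$ and $Y = (\psi_{j}(Z))_{j}$ of linear functionals applied to a uniformly random $Z$ in a finite-dimensional $\Field$-vector space $V$ are independent if and only if the $\Field$-spans of $\{\phi_{i}\}$ and $\{\psi_{j}\}$ in $V^{\star}$ intersect trivially, i.e., whenever $\sum_{i} c_{i}\phi_{i} = \sum_{j} d_{j}\psi_{j}$ as functionals on $V$, both sides are the zero functional. The ``only if'' direction is immediate: a nonzero shared functional would yield a nontrivial linear relation between $X$ and $Y$ that holds with probability one, contradicting independence. The ``if'' direction follows from rank--nullity: the condition is equivalent to $\ker(\Phi) + \ker(\Psi) = V$, where $\Phi,\Psi$ denote the maps $Z \mapsto X,Y$, so the joint map $Z \mapsto (X,Y)$ surjects onto $\mathrm{Im}(\Phi) \times \mathrm{Im}(\Psi)$; since $Z$ is uniform and the joint map is linear, the joint distribution is uniform on this product, hence $X$ and $Y$ are independent.

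It remains to verify the trivial-intersection condition for our $L_{\vec{\alpha}}$'s and $M_{\vec{q}}$'s. Suppose $(c_{\vec{\alpha}})_{\vec{\alpha} \in \Field^{\SCVars}}$ and $(d_{\vec{q}})_{\vec{q} \in Q}$ satisfy
\[
  \sum_{\vec{\alpha} \in \Field^{\SCVars}} c_{\vec{\alpha}} L_{\vec{\alpha}} \;=\; \sum_{\vec{q} \in Q} d_{\vec{q}} M_{\vec{q}}
\]
as functionals on $V$. Taking $S \DefineEqual Q$, this equality is exactly the first hypothesis of \corref{cor:special-case-query}, and $\SetCardinality{Q} < \SetCardinality{\SSCSubset}^{\SSCVars}$ by assumption. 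By the contrapositive of \corref{cor:special-case-query}, the second hypothesis must fail, i.e., $\sum_{\vec{\alpha}} c_{\vec{\alpha}} L_{\vec{\alpha}}(\StrongRandPoly) = 0$ for \emph{every} $\StrongRandPoly \in V$; consequently the right-hand side $\sum_{\vec{q}} d_{\vec{q}} M_{\vec{q}}$ is also the zero functional on $V$. Thus the spans intersect trivially, and by the general fact above the two ensembles are independent.

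I do not anticipate a genuine obstacle: \corref{cor:special-case-query} does the heavy lifting, and the only work left is the linear-algebraic translation, whose sole subtlety is the rank--nullity step that guarantees the joint image is the full product. It may be worth noting, for clarity, that the tacit ``worst-case to average-case'' move exploits the fact that over a finite-dimensional vector space, any two linear images of a uniform distribution are independent precisely when the corresponding coordinate functionals are linearly disjoint in the dual, so that Corollary~\ref{cor:special-case-query} and Corollary~\ref{cor:partial-sum-indep-vars} are genuinely equivalent formulations and not merely implications.
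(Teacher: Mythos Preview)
Your proposal is correct and follows essentially the same approach as the paper: establish the linear-algebraic equivalence between statistical independence and trivial intersection of the spans of the relevant functionals (the paper's Claim~\ref{claim:linear-algebra-fact}, phrased there in terms of coordinate restrictions and proved via matrix ranks and column/null spaces), then apply the query lower bound in contrapositive. The only cosmetic differences are that you work directly in the dual $V^{\star}$ with a kernel-sum/rank--nullity argument and invoke Corollary~\ref{cor:special-case-query}, whereas the paper works with coordinate projections on an evaluation space and cites Theorem~\ref{thm:sum-query-lower-bound}.
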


The proofs of these results, and derivations of corresponding upper bounds, are provided in \appref{sec:query-complexity-appendix}.

\doclearpage
\section{Zero knowledge sumcheck from algebraic query lower bounds}
\label{sec:strong-zk-sumcheck}

We leverage lower bounds on the algebraic query complexity of polynomial summation (\secref{sec:algebraic-query-complexity}) to obtain an analogue of the sumcheck protocol with a strong zero knowledge guarantee, which we use in the applications that we consider.

The sumcheck protocol \cite{LundFKN92} is an Interactive Proof for claims of the form $\sum_{\vec{x} \in \SCSubset^{\SCVars}} \SCPoly(\vec{x}) = a$, where $\SCSubset$ is a subset of a finite field $\Field$, $\SCPoly$ is an $\SCVars$-variate polynomial over $\Field$ of individual degree at most $\SCDegree$, and $a$ is an element of $\Field$. The sumcheck protocol is \emph{not} zero knowledge (conjecturally).

Prior work \cite{BenSassonCFGRS16} obtains a sumcheck protocol, in the Interactive PCP model, with a certain zero knowledge guarantee. In that protocol, the prover first sends a proof oracle that consists of the evaluation of a random $\SCVars$-variate polynomial $\RandPoly$ of individual degree at most $\SCDegree$; after that, the prover and the verifier run the (standard) sumcheck protocol on a new polynomial obtained from $\SCPoly$ and $\RandPoly$. The purpose of $\RandPoly$ is to `mask' the partial sums, which are the intermediate values sent by the prover during the sumcheck protocol.

The zero knowledge guarantee in \cite{BenSassonCFGRS16} is the following: \emph{any verifier that makes $q$ queries to $\RandPoly$ learns at most $q$ evaluations of $\SCPoly$}. This guarantee suffices to obtain a zero knowledge protocol for $\sharpP$ (the application in \cite{BenSassonCFGRS16}) because the verifier can evaluate $\SCPoly$ efficiently at any point (as $\SCPoly$ is merely an arithmetization of a 3SAT formula).

We achieve a much stronger guarantee: \emph{any verifier that makes polynomially-many queries to $\RandPoly$ learns at most a single evaluation of $\SCPoly$} (that, moreover, lies within a chosen subset $\SoundnessSet^{\SCVars}$ of $\Field^{\SCVars}$). Our applications require this guarantee because we use the sumcheck simulator as a sub-simulator in a larger protocol, where $\SCPoly$ is a randomized low-degree extension of some function that is hard to compute for the verifier. The randomization introduces bounded independence, which makes a small number of queries easy to simulate.

The main idea to achieve zero knowledge as above is the following. Rather than sending the masking polynomial $\RandPoly$ directly, the prover sends a (perfectly-hiding and statistically-binding) commitment to it in the form of a random $(\SCVars+\SSCVars)$-variate polynomial $\StrongRandPoly$. The `real' mask is recovered by summing out $k$ variables: $\RandPoly(\vec{\VariableX}) \DefineEqual \sum_{\vec{\beta} \in \SSCSubset^{\SSCVars}} \StrongRandPoly(\vec{\VariableX}, \vec{\beta})$. Our lower bounds on the algebraic query complexity of polynomial summation (\secref{sec:algebraic-query-complexity}) imply that any $q$ queries to $\StrongRandPoly$, with $q < \SetCardinality{\SSCSubset}^{\SSCVars}$, yield \emph{no information} about $\RandPoly$. The prover, however, can elect to decommit to $\RandPoly(\vec{c})$ for a single point $\vec{c} \in \SoundnessSet^{\SCVars}$ chosen by the verifier. This is achieved using the zero knowledge sumcheck protocol of \cite{BenSassonCFGRS16} as a subroutine: the prover sends $w \DefineEqual \RandPoly(\vec{c})$ and then proves that $w = \sum_{\vec{\beta} \in \SSCSubset^{\SSCVars}} \StrongRandPoly(\vec{c}, \vec{\beta})$.

The protocol thus proceeds as follows. Given a security parameter $\SubsetSize \in \Naturals$, the prover sends the evaluations of two polynomials $\StrongRandPoly \in \PolynomialRingIndOneXY{\Field}{\SCVars}{\VariableX}{\SSCVars}{\VariableY}{\SCDegree}{2\SubsetSize}$ and $\AuxRandPoly \in \PolynomialRingIndOne{\Field}{\SSCVars}{\VariableY}{2\SubsetSize}$ as proof oracles. The verifier checks that both of these evaluations are close to low-degree, and uses self-correction to make for querying them. The prover sends two field elements $z_{1}$ and $z_{2}$, which are (allegedly) the summations of $\StrongRandPoly$ and $\AuxRandPoly$ over $\SCSubset^{\SCVars} \times \SSCSubset^{\SSCVars}$ and $\SSCSubset^{\SSCVars}$, respectively. The verifier replies with a random challenge $\rho \in \Field \setminus \{0\}$. The prover and the verifier then engage in the standard (not zero knowledge) sumcheck protocol on the claim ``$\sum_{\vec{\alpha} \in \SCSubset^{\SCVars}} \rho \SCPoly(\vec{\alpha}) + \RandPoly(\vec{\alpha}) = \rho \SCSum + z_{1}$''. This reduces the correctness of this claim to checking a claim of the form ``$\rho \SCPoly(\vec{c}) + \RandPoly(\vec{c}) = b$'' for some $\vec{c} \in \SoundnessSet^{\SCVars}$ and $b \in \Field$; the prover then decommits to $w \DefineEqual \RandPoly(\vec{c})$ as above. In sum, the verifier deduces that, with high probability, the claim ``$\rho \SCPoly(\vec{c}) = b - w$'' is true if and only if the original claim was.

If the verifier could evaluate $\SCPoly$ then the verifier could simply check the aforementioned claim and either accept or reject. We do not give the verifier access to $\SCPoly$ and, instead, we follow \cite{Meir13} and phrase sumcheck as a \emph{reduction} from a claim about a sum of a polynomial over a large product space to a claim about the evaluation of that polynomial at a single point. This view of the sumcheck protocol is useful later on when designing more complex protocols, which employ sumcheck as a subprotocol. The completeness and soundness definitions below are thus modified according to this viewpoint, where the verifier simply outputs the claim at the end.

Our protocol will be sound relative to a \emph{promise variant} of the sumcheck protocol, which we now define.

\begin{definition}
\label{def:sumcheck-relation}
The sumcheck relation and its promise variant are defined as follows.
\begin{itemize}

\item The \defemph{sumcheck relation} is the relation $\SCRelation$ of instance-witness pairs $\big( (\Field,\SCVars,\SCDegree,\SCSubset,\SCSum) , \SCPoly \big)$ such that:
\begin{itemize}[nolistsep]
  \item $\Field$ is a finite field, $\SCSubset$ is a subset of $\Field$, $\SCSum$ is an element of $\Field$, and $\SCVars,\SCDegree$ are positive integers with $\frac{\SCVars\SCDegree}{\SetCardinality{\Field}} < \frac{1}{2}$;
  \item $\SCPoly$ is a polynomial in $\PolynomialRingIndOne{\Field}{\SCVars}{\VariableX}{\SCDegree}$ and sums to $\SCSum$ on $\SCSubset^{\SCVars}$.
\end{itemize}

  \item The \defemph{sumcheck promise relation} is the pair of relations $(\SCRelation^{\yes},\SCRelation^{\no})$ where $\SCRelation^{\yes} \DefineEqual \SCRelation$ and $\SCRelation^{\no}$ are the pairs $\big( (\Field,\SCVars,\SCDegree,\SCSubset,\SCSum) , \SCPoly \big)$ such that $(\Field,\SCVars,\SCDegree,\SCSubset,\SCSum)$ is as above and $\SCPoly$ is in $\PolynomialRingIndOne{\Field}{\SCVars}{\VariableX}{\SCDegree}$ but does \sunderline{not} \mbox{sum to $\SCSum$ on $\SCSubset^{\SCVars}$.}

\end{itemize}
\end{definition}

\begin{remark}
In the case where the verifier can easily determine that $\SCPoly$ is low-degree (e.g., $\SCPoly$ is given as an arithmetic circuit), a protocol for the promise relation can be used to check the plain relation. In our setting, the verifier cannot even access $\SCPoly$, and so the promise is necessary.
\end{remark}

The definition below captures our zero knowledge goal for the sumcheck promise relation, in the Interactive PCP model. The key aspect of this definition is that the simulator is only allowed to make a \emph{single} query to the summand polynomial $\SCPoly$; in contrast, the definition of \cite{BenSassonCFGRS16} allows the simulator to make as many queries to $\SCPoly$ as the malicious verifier makes to the proof oracle. (Another aspect, motivated by the simulator's limitation, is that we now have to explicitly consider a bound $\SCStrength$ on a malicious verifier's queries.) This \emph{strong} form of zero knowledge, achieved by severely restricting the simulator's access to $\SCPoly$, is crucial for achieving the results in our paper.

\begin{definition}
\label{def:strong-sumcheck-ipcp}
A \defemph{$\SCStrength$-strong perfect zero knowledge Interactive PCP system for sumcheck} with soundness error $\SoundnessError$ is a pair of interactive algorithms $(\Prover,\Verifier)$ that satisfies the following properties.
\begin{itemize}

  \item \textsc{Completeness.}
  For every $\big( (\SCInput) , \SCPoly \big) \in \SCRelation^{\yes}$,
  $\Verifier(\SCInput)$ when interacting with
  $\Prover^{\SCPoly}(\SCInput)$
  outputs a claim of the form ``$F(\vec{\gamma}) = a$'' (with $\vec{\gamma} \in \Field^{\SCVars}$ and $a \in \Field$) that is true with probability $1$.

  \item \textsc{Soundness.}
  For every $\big( (\SCInput) , \SCPoly \big) \in \SCRelation^{\no}$ and malicious prover $\Malicious{\Prover}$,
  $\Verifier(\SCInput)$
  when interacting with $\Malicious{\Prover}$
  outputs a claim of the form ``$F(\vec{\gamma}) = a$'' (with $\vec{\gamma} \in \Field^{\SCVars}$ and $a \in \Field$) that is true with probability at most $\SoundnessError$.

  \item \textsc{Zero knowledge.}
  There exists a straightline simulator $\Simulator$ such that, for every instance-witness pair $\big( (\SCInput) , \SCPoly \big) \in \SCRelation^{\yes}$ and $\SCStrength$-query malicious verifier $\Malicious{\Verifier}$, the following two distributions are equal
\begin{equation*}
\Simulator^{\Malicious{\Verifier},\SCPoly}(\SCInput)
\quad\text{and}\quad
\IPCPView{\Prover^{\SCPoly}(\SCInput)}{\Malicious{\Verifier}}
\enspace.
\end{equation*}
Moreover, the simulator $\Simulator$ makes \sunderline{only a single query} to $\SCPoly$ (at a location that possibly depends on $\Malicious{\Verifier}$ and its random choices) and runs in time $\poly(\log \SetCardinality{\Field}, \SCVars, \SCDegree, \SetCardinality{\SCSubset}, \QueryComplexity_{\Malicious{\Verifier}})$, where $\QueryComplexity_{\Malicious{\Verifier}}$ is $\Malicious{\Verifier}$'s query complexity.
\end{itemize}
\end{definition}

The main result of this section, stated below, is a construction that efficiently fulfills the definition above.

\begin{theorem}[Strong PZK Sumcheck]
\label{thm:strong-sumcheck-ipcp}
For every positive integer $\SubsetSize$ with $\SubsetSize \leq \SetCardinality{\Field}$, positive integer $\SSCVars$, and subset $\SoundnessSet$ of $\Field$, there exists a $\SubsetSize^{\SSCVars}$-strong perfect zero knowledge Interactive PCP system $(\Prover,\Verifier)$ for sumcheck with soundness error $\SoundnessError = O(\frac{(\SCVars+\SSCVars) \cdot (\SCDegree + \SubsetSize)}{\SetCardinality{\SoundnessSet}})$ and the following efficiency parameters.
\begin{itemize}

  \item \emph{Oracle round:}
  $\Prover$ sends an oracle proof string $\Proof$, consisting of the evaluation tables of polynomials $\StrongRandPoly \in \PolynomialRingIndOneXY{\Field}{\SCVars}{\VariableX}{\SSCVars}{\VariableY}{\SCDegree}{2\SubsetSize}$ and $\AuxRandPoly \in \PolynomialRingIndOne{\Field}{\SSCVars}{\VariableY}{2\SubsetSize}$ drawn uniformly at random.

  \item \emph{Interactive proof:}
  after the oracle round, $\Prover$ and $\Verifier$ engage in an ($\SCVars+\SSCVars+1$)-round interactive proof; across the interaction, the verifier sends to the prover $\SCVars + \SSCVars + 1$ field elements, while the prover sends to the verifier $O((\SCVars+\SSCVars) \cdot \SCDegree)$ field elements. (In particular, the interaction is public-coin.)

  \item \emph{Queries:}
  after the interactive proof, $\Verifier$ non-adaptively queries $\Proof$ at $\poly(\log \SetCardinality{\Field}, \SCVars, \SCDegree)$ locations.

  \item \emph{Space and time:}
  \begin{itemize}[nolistsep]
    \item $\Prover$ runs in time $\SetCardinality{\Field}^{O(\SCVars+\SSCVars)}$ and space $\poly(\log \SetCardinality{\Field}, \SCDegree^{\SCVars}, \SubsetSize^{\SSCVars}, \SetCardinality{\SCSubset})$, and
    \item $\Verifier$ runs in time $\poly(\log \SetCardinality{\Field}, \SCVars, \SCDegree, \SetCardinality{\SCSubset}, \SSCVars, \SubsetSize)$ and space $O((\SCVars + \SSCVars) \log \SetCardinality{\Field})$.
  \end{itemize}
\end{itemize}
In addition, there is a simulator $\Simulator$ witnessing perfect zero knowledge for $(\Prover, \Verifier)$ such that $\Simulator$'s single query to the summand polynomial $\SCPoly$ belongs to the set $\SoundnessSet^{\SCVars}$, and $S$ runs in time $\poly(\SetCardinality{H}, \SCVars, \SCDegree, \SSCVars, \SubsetSize, \log \SetCardinality{\Field}) \cdot \QueryComplexity_{\Malicious{\Verifier}}^{3}$.
\end{theorem}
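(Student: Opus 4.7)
The plan is to analyze the protocol sketched in \secref{sec:techniques-strong-zksc}. The honest prover samples $\StrongRandPoly\in\PolynomialRingIndOneXY{\Field}{\SCVars}{\VariableX}{\SSCVars}{\VariableY}{\SCDegree}{2\SubsetSize}$ and $\AuxRandPoly\in\PolynomialRingIndOne{\Field}{\SSCVars}{\VariableY}{2\SubsetSize}$ uniformly and sends their evaluation tables as the PCP oracle; the verifier low-degree tests both and then uses self-correction. Setting $\RandPoly(\vec{\VariableX})\DefineEqual\sum_{\vec{\beta}\in\SSCSubset^{\SSCVars}}\StrongRandPoly(\vec{\VariableX},\vec{\beta})$, the prover transmits $z_1\DefineEqual\sum_{\SCSubset^{\SCVars}}\RandPoly$ and $z_2\DefineEqual\sum_{\SSCSubset^{\SSCVars}}\AuxRandPoly$, receives a uniform $\rho\in\Field\setminus\{0\}$, and runs the standard (non-ZK) sumcheck protocol on the claim $\sum_{\vec{\alpha}\in\SCSubset^{\SCVars}}\rho\SCPoly(\vec{\alpha})+\RandPoly(\vec{\alpha})=\rho\SCSum+z_1$, with challenges drawn uniformly from $\SoundnessSet$ (rather than all of $\Field$). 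This reduces to a single claim $\rho\SCPoly(\vec{c})+\RandPoly(\vec{c})=b$ at some $\vec{c}\in\SoundnessSet^{\SCVars}$; the prover sends $w\DefineEqual\RandPoly(\vec{c})$ and proves $w=\sum_{\vec{\beta}\in\SSCSubset^{\SSCVars}}\StrongRandPoly(\vec{c},\vec{\beta})$ via the weak PZK sumcheck of \cite{BenSassonCFGRS16} with $\AuxRandPoly$ as mask; the verifier outputs the claim $\SCPoly(\vec{c})=(b-w)/\rho$. Completeness is immediate by chaining the two sub-protocols. For soundness on $(\SCInput,\SCPoly)\in\SCRelation^{\no}$, low-degree testing and self-correction reduce to the case that $\StrongRandPoly,\AuxRandPoly$ are low-degree, so a low-degree $\RandPoly^\star$ is well-defined; if the top-level claim is false, then either at most one $\rho$ makes it true or the standard sumcheck soundness contributes $O((\SCVars+\SSCVars)(\SCDegree+\SubsetSize)/\SetCardinality{\SoundnessSet})$, and cheating in the decommitment is caught by the soundness of the \cite{BenSassonCFGRS16} sub-protocol.

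The heart of the argument is zero knowledge. The simulator $\Simulator$ runs $\Malicious{\Verifier}$ straightline while maintaining two growing lists $\mathcal{L}_\StrongRandPoly,\mathcal{L}_\AuxRandPoly$ of affine constraints on $\StrongRandPoly$ and $\AuxRandPoly$, and answers each oracle query using \corref{cor:efficient-poly-simulator} to sample the correct conditional distribution. The prover's messages are generated as follows: sample $z_1,z_2\in\Field$ uniformly, recording the corresponding summation constraints; receive $\rho$; for $i=1,\dots,\SCVars$, sample $g_i$ uniformly from the affine subspace of univariate polynomials of degree $\leq\SCDegree$ satisfying $\sum_{\alpha\in\SCSubset}g_i(\alpha)=g_{i-1}(c_{i-1})$ (with $g_0(c_0)\DefineEqual\rho\SCSum+z_1$), and append the constraint expressing $g_i$ as a linear functional of $\StrongRandPoly$ to $\mathcal{L}_\StrongRandPoly$. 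Once $g_{\SCVars}$ and $c_{\SCVars}$ are fixed, $\Simulator$ makes its \emph{single} query to $\SCPoly$, at the point $\vec{c}\in\SoundnessSet^{\SCVars}$, and sets $w\DefineEqual g_{\SCVars}(c_{\SCVars})-\rho\SCPoly(\vec{c})$. The decommitment phase is simulated by invoking the \cite{BenSassonCFGRS16} weak-ZK sub-simulator, whose queries to $\StrongRandPoly$ and $\AuxRandPoly$ are served by $\Simulator$'s own lazy implementations, accumulating further constraints.

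The main obstacle is proving that this simulated view is \emph{identically} distributed to the real view. We proceed by a hybrid over the sumcheck rounds, reducing the distributional identity at each step to an independence statement. Specifically, the affine constraints on $\StrongRandPoly$ imposed by $z_1$ and by $g_1,\dots,g_{\SCVars},w$ are all linear functionals of the partial-sum family $\{\sum_{\vec{\beta}\in\SSCSubset^{\SSCVars}}\StrongRandPoly(\vec{\VariableX},\vec{\beta})\}$, whereas $\Malicious{\Verifier}$'s queries evaluate $\StrongRandPoly$ itself at arbitrary points. Since $\Malicious{\Verifier}$ makes fewer than $\SetCardinality{\SSCSubset}^{\SSCVars}$ queries and $2\SubsetSize\geq 2(\SetCardinality{\SSCSubset}-1)$, \corref{cor:partial-sum-indep-vars} yields that these two families are \emph{independent} under the uniform distribution on $\StrongRandPoly$. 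This is precisely what licenses sampling the transcript (which fixes the partial-sum functionals) and the query answers (which probe $\StrongRandPoly$ at points) from factorized distributions; the analogous independence for $\AuxRandPoly$ is inherited from the correctness of the \cite{BenSassonCFGRS16} sub-simulator.

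The remaining items are routine once the independence is in hand: the single query to $\SCPoly$ lands in $\SoundnessSet^{\SCVars}$ because the sumcheck challenges are drawn from $\SoundnessSet$; the prover/verifier complexity bounds follow by direct inspection; and the simulator's running time of $\poly(\SetCardinality{\SCSubset},\SCVars,\SCDegree,\SSCVars,\SubsetSize,\log\SetCardinality{\Field})\cdot\QueryComplexity_{\Malicious{\Verifier}}^{3}$ is dominated by $O(\QueryComplexity_{\Malicious{\Verifier}})$ invocations of \corref{cor:efficient-poly-simulator} on constraint lists of size $O(\QueryComplexity_{\Malicious{\Verifier}}+\SCVars+\SSCVars)$, each costing a cubic factor in the list size.
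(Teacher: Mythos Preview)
Your proposal is correct and follows essentially the same approach as the paper: the protocol you describe is exactly \conref{con:strong-sumcheck-ip} (the second sumcheck with mask $\AuxRandPoly$ is the \cite{BenSassonCFGRS16} sub-protocol inlined), and your soundness and zero knowledge arguments match \lemref{lem:strong-zk-sumcheck}.

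The only notable difference is organizational. The paper first constructs an \emph{inefficient} simulator that draws $\Simulated{\StrongRandPoly}$ and $\Simulated{\MaskedPoly}$ as full polynomials, proves perfect simulation via three explicit hybrids (swapping $\MaskedPoly$ for $\Simulated{\MaskedPoly}$, then $(\MaskedPoly',\StrongRandPoly)$ for $(\Simulated{\MaskedPoly}',\Simulated{\StrongRandPoly}')$, then $\Simulated{\StrongRandPoly}'$ for $\Simulated{\StrongRandPoly}$ on pre-redraw queries), and only afterwards replaces the full polynomials by the lazy sampler of \corref{cor:efficient-poly-simulator}. You go straight to the efficient lazy simulator and argue correctness in one shot via the independence of \corref{cor:partial-sum-indep-vars}. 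Both are valid; the paper's route makes the hybrid structure more transparent, while yours is more compact.

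One minor imprecision: your instruction ``append the constraint expressing $g_i$ as a linear functional of $\StrongRandPoly$ to $\mathcal{L}_\StrongRandPoly$'' is not literally implementable, since $g_i$ depends on the partial sums of $\SCPoly$, which the simulator does not know. What your subsequent argument (correctly) shows is that these constraints lie entirely in the partial-sum family and hence, by \corref{cor:partial-sum-indep-vars}, are \emph{independent} of all point evaluations of $\StrongRandPoly$; so they can simply be omitted from $\mathcal{L}_\StrongRandPoly$ when answering verifier queries. The paper handles this by the ``redraw'' step (\stepref{step:strong-zksc-redraw}), whose effect on point queries is nil for exactly the same reason.
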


\begin{remark}
With two-way access to the random tape, the prover can be made to run in space $\poly(\log \SetCardinality{\Field}, \SCDegree, \SCVars, \SubsetSize, \SSCVars, \SetCardinality{\SCSubset})$.
\end{remark}

We divide the proof in two steps. First (\secref{sec:pzk-step-1}), we exhibit a protocol with the above properties in a hybrid model in which the prover and verifier have access to random low-degree polynomials. Second (\secref{sec:pzk-step-2}), we use low-degree testing and self-correction to `compile' this protocol into an Interactive PCP.

\subsection{Step 1}
\label{sec:pzk-step-1}

We construct a public-coin Interactive Proof for sumcheck that achieves zero knowledge in a model where the prover and verifier have access to certain low-degree polynomials. In the soundness case, these may be arbitrary; in the zero knowledge case, these are random and depend only on the size parameters of the instance.

\begin{construction}
\label{con:strong-sumcheck-ip}
Let $\SSCSubset$ be any subset of $\Field$ of size $\SubsetSize$. In the Interactive Proof system $\pair{\Strong{\IPSCProver}}{\Strong{\IPSCVerifier}}$:
\begin{itemize}

  \item $\Strong{\IPSCProver}$ and $\Strong{\IPSCVerifier}$ receive a sumcheck instance $(\SCInput)$ as common input;
  
  \item \mbox{$\Strong{\IPSCProver}$ and $\Strong{\IPSCVerifier}$ receive polynomials $\StrongRandPoly \in \PolynomialRingIndOneXY{\Field}{\SCVars}{\VariableX}{\SSCVars}{\VariableY}{\SCDegree}{2\SubsetSize}$ and $\AuxRandPoly \in \PolynomialRingIndOne{\Field}{\SSCVars}{\VariableY}{2\SubsetSize}$ as oracles;}
  
  \item $\Strong{\IPSCProver}$ additionally receives a summand polynomial $\SCPoly \in \PolynomialRingIndOne{\Field}{\SCVars}{\VariableX}{\SCDegree}$ as an oracle.

\end{itemize}
The interaction between $\Strong{\IPSCProver}$ and $\Strong{\IPSCVerifier}$ proceeds as follows:
\begin{enumerate}

  \item \label{step:sum-challenge} $\Strong{\IPSCProver}$ sends two elements in $\Field$ to $\Strong{\IPSCVerifier}$: $z_{1} \DefineEqual \sum_{\vec{\alpha} \in \SCSubset^{\SCVars}}\sum_{\vec{\beta} \in \SSCSubset^{\SSCVars}} \StrongRandPoly(\vec{\alpha},\vec{\beta})$ and $z_{2} \DefineEqual \sum_{\vec{\beta} \in \SSCSubset^{\SSCVars}} \AuxRandPoly(\vec{\beta})$.
  
  \item $\Strong{\IPSCVerifier}$ draws a random element $\rho_{1}$ in $\Field \setminus \{0\}$ and sends it to $\Strong{\IPSCProver}$.

  \item $\Strong{\IPSCProver}$ and $\Strong{\IPSCVerifier}$ run the sumcheck IP \cite{LundFKN92} on the statement ``$\sum_{\vec{\alpha} \in \SCSubset^{\SCVars}} \MaskedPoly(\vec{\alpha})=\rho_{1}\SCSum+z_{1}$'' where
\begin{equation*}
\MaskedPoly(\VariableX_{1},\dots,\VariableX_{\SCVars})
\DefineEqual
\rho_{1} \SCPoly(\VariableX_{1},\dots,\VariableX_{\SCVars})
+
\sum_{\vec{\beta} \in \SSCSubset^{\SSCVars}} \StrongRandPoly(\VariableX_{1},\dots,\VariableX_{\SCVars}, \vec{\beta})
\enspace,
\end{equation*}
with $\Strong{\IPSCProver}$ playing the role of the prover and $\Strong{\IPSCVerifier}$ that of the verifier, and the following modification.

For $i=1,\dots,\SCVars$, in the $i$-th round, $\Strong{\IPSCVerifier}$ samples its random element $c_{i}$ from the set $\SoundnessSet$ rather than from all of $\Field$; if $\Strong{\IPSCProver}$ ever receives $c_{i} \in \Field \setminus \SoundnessSet$, it immediately aborts. In particular, in the $\SCVars$-th round, $\Strong{\IPSCProver}$ sends a polynomial $g_{\SCVars}(\VariableX_{\SCVars}) \DefineEqual \rho_{1} \SCPoly(c_{1}, \dots, c_{\SCVars-1}, \VariableX_{\SCVars}) + \sum_{\vec{\beta} \in \SSCSubset^{\SSCVars}} \StrongRandPoly(c_{1}, \dots, c_{\SCVars-1}, \VariableX_{\SCVars}, \vec{\beta})$ for some $c_{1}, \dots, c_{\SCVars-1} \in \SoundnessSet$.

  \item $\Strong{\IPSCVerifier}$ sends $c_{\SCVars} \in \SoundnessSet$ to $\Strong{\IPSCProver}$.
  
  \item $\Strong{\IPSCProver}$ sends the element $w \DefineEqual \sum_{\vec{\beta} \in \SSCSubset^{\SSCVars}} \StrongRandPoly(\vec{c}, \vec{\beta})$ to $\Strong{\IPSCVerifier}$, where $\vec{c} \DefineEqual (c_{1}, \dots, c_{\SCVars})$.
  
  \item $\Strong{\IPSCVerifier}$ draws a random element $\rho_{2}$ in $\Field \setminus \{0\}$ and sends it to $\Strong{\IPSCProver}$.

  \item $\Strong{\IPSCProver}$ and $\Strong{\IPSCVerifier}$ engage in the sumcheck IP \cite{LundFKN92} on the claim ``$\sum_{\vec{\beta} \in \SSCSubset^{\SSCVars}} \rho_{2} \StrongRandPoly(\vec{c}, \vec{\beta}) + \AuxRandPoly(\vec{\beta}) = \rho_{2} w + z_{2}$''.

  \item $\Strong{\IPSCVerifier}$ outputs the claim ``$\SCPoly(\vec{c}) = \frac{g_{\SCVars}(c_{\SCVars}) - w}{\rho_{1}}$''.

\end{enumerate}
\end{construction}

We prove the following lemma about the construction above.

\begin{lemma}
\label{lem:strong-zk-sumcheck}
The IP system $\pair{\Strong{\IPSCProver}}{\Strong{\IPSCVerifier}}$ satisfies the following properties.
\begin{itemize}

  \item \textsc{Completeness.}
  For every instance-witness pair $\big( (\SCInput) , \SCPoly \big) \in \SCRelation^{\yes}$, polynomial $\StrongRandPoly \in \PolynomialRingIndOneXY{\Field}{\SCVars}{\VariableX}{\SSCVars}{\VariableY}{\SCDegree}{2\SubsetSize}$, and polynomial $\AuxRandPoly \in \PolynomialRingIndOne{\Field}{\SSCVars}{\VariableY}{2\SubsetSize}$, $\Strong{\IPSCVerifier}^{\StrongRandPoly,\AuxRandPoly}(\SCInput)$ when interacting with $\Strong{\IPSCProver}^{\SCPoly,\StrongRandPoly,\AuxRandPoly}(\SCInput)$ outputs a claim of the form ``$\SCPoly(\vec{\gamma}) = a$'' (with $\vec{\gamma} \in \Field^{\SCVars}$ and $a \in \Field$) that is true with probability $1$.

  \item \textsc{Soundness.}
  For every instance-witness pair $\big( (\SCInput) , \SCPoly \big) \in \SCRelation^{\no}$, polynomial $\StrongRandPoly \in \PolynomialRingIndOneXY{\Field}{\SCVars}{\VariableX}{\SSCVars}{\VariableY}{\SCDegree}{2\SubsetSize}$, polynomial $\AuxRandPoly \in \PolynomialRingIndOne{\Field}{\SSCVars}{\VariableY}{2\SubsetSize}$, and malicious prover $\Malicious{\Prover}$, $\Strong{\IPSCVerifier}^{\StrongRandPoly,\AuxRandPoly}(\SCInput)$ when interacting with $\Malicious{\Prover}$ outputs a claim of the form ``$\SCPoly(\vec{\gamma}) = a$'' (with $\vec{\gamma} \in \Field^{\SCVars}$ and $a \in \Field$) that is true with probability at most $\frac{\SCVars\SCDegree}{\SetCardinality{\SoundnessSet}} + \frac{\SSCVars \cdot 2\SubsetSize + 2}{\SetCardinality{\Field}-1}$.

  \item \textsc{Zero knowledge.}
  There exists a straightline simulator $\Strong{\IPSCSimulator}$ such that, for every instance-witness pair $\big( (\SCInput) , \SCPoly \big) \in \SCRelation^{\yes}$ and $\SubsetSize^{\SSCVars}$-query malicious verifier $\Malicious{\Verifier}$, the following two distributions are equal
\begin{equation*}
\Strong{\IPSCSimulator}^{\Malicious{\Verifier},\SCPoly}(\SCInput)
\quad\text{and}\quad
\IPCPView{\Strong{\IPSCProver}^{\SCPoly,\StrongRandPoly,\AuxRandPoly}(\SCInput)}{\Malicious{\Verifier}^{\StrongRandPoly,\AuxRandPoly}}
\enspace,
\end{equation*}
where $\StrongRandPoly$ is uniformly random in $\PolynomialRingIndOneXY{\Field}{\SCVars}{\VariableX}{\SSCVars}{\VariableY}{\SCDegree}{2\SubsetSize}$ and $\AuxRandPoly$ is uniformly random in $\PolynomialRingIndOne{\Field}{\SSCVars}{\VariableY}{2\SubsetSize}$. Moreover:
\begin{itemize}
  \item $\Strong{\IPSCSimulator}$ makes a single query to $\SCPoly$ at a point in $\SoundnessSet^{\SCVars}$;
  \item $\Strong{\IPSCSimulator}$ runs in time
\begin{equation*}
(\SCVars + \SSCVars)((\SCDegree+\SubsetSize)\QueryComplexity_{\Malicious{\Verifier}}\SetCardinality{\SCSubset} + (\SCDegree+\SubsetSize)^{3}\QueryComplexity_{\Malicious{\Verifier}}^{3}) \cdot \poly(\log\SetCardinality{\Field}) = \poly(\SetCardinality{H}, \SCVars, \SCDegree, \SSCVars, \SubsetSize, \log \SetCardinality{\Field}) \cdot \QueryComplexity_{\Malicious{\Verifier}}^{3}
\end{equation*}
where $\QueryComplexity_{\Malicious{\Verifier}}$ is $\Malicious{\Verifier}$'s query complexity;
  \item $\Strong{\IPSCSimulator}$'s behavior does not depend on $\SCSum$ until after the simulated $\Malicious{\Verifier}$ sends its first message.
\end{itemize}
\end{itemize}
\end{lemma}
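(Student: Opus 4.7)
\textbf{Completeness} is direct: when $(\Instance, \SCPoly) \in \SCRelation^{\yes}$ and the prover is honest, the first sumcheck is invoked on a true claim and its final round yields $g_{\SCVars}(c_{\SCVars}) = \rho_1 \SCPoly(\vec c) + w$, so the output claim $\SCPoly(\vec c) = (g_{\SCVars}(c_{\SCVars}) - w)/\rho_1$ holds; the second sumcheck merely certifies the honestly computed $w$. For \textbf{soundness}, write $R(\vec{\VariableX}) := \sum_{\vec\beta \in \SSCSubset^{\SSCVars}} \StrongRandPoly(\vec{\VariableX}, \vec\beta)$ and $w^\star := R(\vec c)$, and assume $\sum_{\vec\alpha \in \SCSubset^{\SCVars}} \SCPoly(\vec\alpha) \neq \SCSum$. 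The first sumcheck's input claim is a nonzero affine function of $\rho_1$ and hence false for all but one $\rho_1 \in \Field \setminus \{0\}$; conditioned on this, standard sumcheck soundness with challenges drawn from $\SoundnessSet$ rather than $\Field$ bounds by $\SCVars \SCDegree / |\SoundnessSet|$ the probability that $g_{\SCVars}(c_{\SCVars}) = \rho_1 \SCPoly(\vec c) + w^\star$ still holds. When this identity fails, the output claim is true only if $w \neq w^\star$, which forces the second sumcheck's input claim to be a nonzero affine function of $\rho_2$; another application of standard sumcheck soundness (with challenges in all of $\Field$) pushes the argument through, and a union bound yields the stated error.

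The bulk of the proof is \textbf{zero knowledge}, for which I plan to build a straightline simulator combining three ingredients. First, \corref{cor:partial-sum-indep-vars} says any $\SubsetSize^{\SSCVars}-1$ queries to $\StrongRandPoly$ are jointly independent of every linear combination of $\{R(\vec\alpha)\}_{\vec\alpha \in \Field^{\SCVars}}$ (and analogously for $\AuxRandPoly$), so from $\Malicious{\Verifier}$'s viewpoint $R$ is a uniformly random element of $\PolynomialRingIndOne{\Field}{\SCVars}{\VariableX}{\SCDegree}$, independent of the verifier's $\StrongRandPoly$-queries. Second, \corref{cor:efficient-poly-simulator} efficiently samples values of a uniformly random low-degree polynomial consistent with any list of prior linear constraints, in time cubic in the list length. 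Third, for uniformly random $Q$ of individual degree $\SCDegree$ with a fixed total sum, the standard sumcheck messages are uniformly random degree-$\SCDegree$ univariates satisfying the chain of partial-sum constraints, so since $Q := \rho_1 \SCPoly + R$ is itself uniform whenever $R$ is, the $g_i$ can be sampled without any knowledge of $\SCPoly$. Armed with these, the simulator maintains dynamic constraint lists for $\StrongRandPoly$ and $\AuxRandPoly$, answers verifier queries via \corref{cor:efficient-poly-simulator}, sends $z_1$ and $z_2$ as fresh uniform values (valid by the first ingredient), samples each $g_i$ as a uniform degree-$\SCDegree$ univariate satisfying $\sum_{\alpha \in \SCSubset} g_i(\alpha) = g_{i-1}(c_{i-1})$, and once $\vec c \in \SoundnessSet^{\SCVars}$ is assembled issues its sole query $\SCPoly(\vec c)$ and sets $w := g_{\SCVars}(c_{\SCVars}) - \rho_1 \SCPoly(\vec c)$; the second sumcheck and any further verifier queries are then handled by sampling the $h_j$'s and remaining query answers jointly consistent with the accumulated $(\StrongRandPoly,\AuxRandPoly)$-constraints via \corref{cor:efficient-poly-simulator}. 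The runtime bound follows from the $O(\ell^3)$ cost of each such call with list length $\ell = O(\QueryComplexity_{\Malicious{\Verifier}} + \SCVars + \SSCVars)$, repeated the same number of times, together with per-round $\poly(\SCDegree,\SubsetSize,|\SCSubset|)$ field work.

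The hard part is the distributional proof: establishing, via an invariant maintained through the protocol, that the joint distribution of verifier queries, prover messages, and internal state produced by the simulator exactly matches the honest distribution over uniformly random $\StrongRandPoly$ and $\AuxRandPoly$. The delicate step is the first sumcheck, where in the real protocol $g_i$ depends explicitly on partial sums of $\SCPoly$ while in the simulation it does not; the required equivalence hinges on the linear-combination form of \corref{cor:partial-sum-indep-vars}, which ensures that $\Malicious{\Verifier}$'s partial $\StrongRandPoly$-information is never enough to distinguish $\rho_1 \cdot (\text{partial sum of } \SCPoly) + (\text{partial sum of } R)$ from an independent uniform degree-$\SCDegree$ univariate subject to the same sum constraints. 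Once the invariant is established for the first sumcheck, the second sumcheck is handled by the analogous (and simpler) statement for $\AuxRandPoly$, and the single query to $\SCPoly$ is automatically at a point in $\SoundnessSet^{\SCVars}$ because the construction enforces every sumcheck challenge $c_i$ to lie in $\SoundnessSet$.
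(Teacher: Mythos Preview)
Your proposal is correct and matches the paper's approach: both rely on \corref{cor:partial-sum-indep-vars} to decouple the verifier's $\StrongRandPoly$-queries from the mask $R(\vec X)=\sum_{\vec\beta}\StrongRandPoly(\vec X,\vec\beta)$, simulate the first sumcheck as if on a uniformly random $Q$, fix $w$ via a single query $\SCPoly(\vec c)$ with $\vec c\in\SoundnessSet^{\SCVars}$, and invoke \corref{cor:efficient-poly-simulator} for efficiency. The only difference is presentational---the paper first builds an inefficient simulator that draws full polynomials (with an explicit redraw of $\StrongRandPoly_{\mathrm{sim}}$ once $w$ is determined) and proves perfect simulation via a short hybrid argument before appealing to \corref{cor:efficient-poly-simulator}, whereas you describe the efficient constraint-list simulator directly.
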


\begin{proof}
Completeness is clear from the protocol description and the completeness property of sumcheck. Soundness follows from the fact that, if $((\SCInput),\SCPoly) \in \SCRelation^{\no}$, we can argue as follows:
\begin{itemize}

  \item For every polynomial $\StrongRandPoly \in \PolynomialRingIndOneXY{\Field}{\SCVars}{\VariableX}{\SSCVars}{\VariableY}{\SCDegree}{2\SubsetSize}$, with probability $1 - \frac{1}{\SetCardinality{\Field}-1}$ over the choice of $\rho_{1}$, $\sum_{\vec{\alpha} \in \SCSubset^{\SCVars}} \MaskedPoly(\vec{\alpha}) \neq \rho_{1}\SCSum + z_{1}$, i.e., the sumcheck claim is false.
  
  \item Therefore, by the soundness guarantee of sumcheck, with probability at least $1 - \SCVars \SCDegree/\SetCardinality{\SoundnessSet}$, either the verifier rejects or $\rho_{1} \SCPoly(\vec{c}) + \sum_{\vec{\beta} \in \SSCSubset^{\SSCVars}} \StrongRandPoly(\vec{c}, \vec{\beta}) \neq g_{\SCVars}(c_{\SCVars})$.
  
  \item Finally, we distinguish two cases depending on $\Malicious{\Prover}$:
  \begin{itemize}[nolistsep]
    \item If $\Malicious{\Prover}$ sends $w \neq \sum_{\vec{\beta} \in \SSCSubset^{\SSCVars}} \StrongRandPoly(\vec{c}, \vec{\beta})$, then $\sum_{\vec{\beta} \in \SSCSubset^{\SSCVars}} \rho_{2} \StrongRandPoly(\vec{c}, \vec{\beta}) + \AuxRandPoly(\vec{\beta}) \neq \rho_{2} w + z_{2}$ with probability $1 - \frac{1}{\SetCardinality{\Field}-1}$ over the choice of $\rho_{2}$, for any choice of $A$. In this case, by the soundness guarantee of the sumcheck protocol the verifier rejects with probability at least $1 - \frac{\SSCVars\cdot2\SubsetSize}{\SetCardinality{\Field}}$.
    \item If $\Malicious{\Prover}$ sends $w = \sum_{\vec{\beta} \in \SSCSubset^{\SSCVars}} \StrongRandPoly(\vec{c}, \vec{\beta})$, then $\SCPoly(\vec{c}) \neq \frac{g_{\SCVars}(c_{\SCVars}) - w}{\rho_{1}}$ with probability $1$.
  \end{itemize}
\end{itemize}
Taking a union bound on the above cases yields the claimed soundness error.

To show the (perfect) zero knowledge guarantee, we need to construct a suitably-efficient straightline simulator that perfectly simulates the view of any malicious verifier $\Malicious{\Verifier}$. We first construct an \emph{inefficient} simulator $\Strong{\SlowSimulator}$, and prove that its output follows the desired distribution; afterwards, we explain how the simulator can be made efficient.

\begin{mdframed}
{\small
The simulator $\Strong{\SlowSimulator}$, given straightline access to $\Malicious{\Verifier}$ and oracle access to $\SCPoly$, works as follows:

\begin{enumerate}

  \item Draw $\Simulated{\StrongRandPoly} \in \PolynomialRingIndOneXY{\Field}{\SCVars}{\VariableX}{\SSCVars}{\VariableY}{\SCDegree}{2\SubsetSize}$ and $\Simulated{\AuxRandPoly} \in \PolynomialRingIndOne{\Field}{\SSCVars}{\VariableY}{\SCDegree}$ uniformly at random.
  
  \item Begin simulating $\Malicious{\Verifier}$. Its queries to $\StrongRandPoly$ and $\AuxRandPoly$ are answered according to $\Simulated{\StrongRandPoly}$ and $\Simulated{\AuxRandPoly}$ respectively.
  
  \item Send $\Simulated{z}^{1} \DefineEqual \sum_{\vec{\alpha} \in \SCSubset^{\SCVars}} \sum_{\vec{\beta} \in \SSCSubset^{\SSCVars}} \StrongRandPoly(\vec{\alpha},\vec{\beta})$ and $\Simulated{z}^{2} \DefineEqual \sum_{\vec{\beta} \in \SSCSubset^{\SSCVars}} \AuxRandPoly(\vec{\beta})$ to $\Malicious{\Verifier}$.
  
  \item \label{step:strong-zksc-first-sumcheck} Receive $\tilde{\rho}_{1}$. Draw $\Simulated{\MaskedPoly} \in \PolynomialRingIndOne{\Field}{\SCVars}{\VariableX}{\SCDegree}$ uniformly at random conditioned on $\sum_{\vec{\alpha} \in \SCSubset^{\SCVars}} \Simulated{\MaskedPoly}(\vec{\alpha}) = \tilde{\rho}_{1} \SCSum + \Simulated{z}^{1}$, then engage in the sumcheck protocol on the claim ``$\sum_{\vec{\alpha} \in \SCSubset^{\SCVars}} \Simulated{\MaskedPoly}(\vec{\alpha}) = \tilde{\rho}_{1} \SCSum + \Simulated{z}^{1}$''. If in any round $\Malicious{\Verifier}$ sends $c_i \not\in \SoundnessSet$ as a challenge, abort.
  
  \item \label{step:strong-zksc-redraw} Let $\vec{c} \in \SoundnessSet^{\SCVars}$ be the point chosen by $\Malicious{\Verifier}$ in the sumcheck protocol above. Query $\SCPoly(\vec{c})$, and draw $\Simulated{\StrongRandPoly}' \in \PolynomialRingIndOneXY{\Field}{\SCVars}{\VariableX}{\SSCVars}{\VariableY}{\SCDegree}{2\SubsetSize}$ uniformly at random conditioned on
  \begin{itemize}[nolistsep]
    \item $\sum_{\vec{\alpha} \in \SCSubset^{\SCVars}} \sum_{\vec{\beta} \in \SSCSubset^{\SSCVars}} \Simulated{\StrongRandPoly}'(\vec{\alpha}, \vec{\beta}) = \Simulated{z}^{1}$,
    \item $\sum_{\vec{\beta} \in \SSCSubset^{\SSCVars}} \Simulated{\StrongRandPoly}'(\vec{c}, \vec{\beta}) = \Simulated{w}$, where $\Simulated{w} \DefineEqual \Simulated{\MaskedPoly}(\vec{c}) - \tilde{\rho}_{1} \SCPoly(\vec{c})$, and
    \item $\Simulated{\StrongRandPoly}'(\vec{\gamma}) = \Simulated{\StrongRandPoly}(\vec{\gamma})$ for all previous queries $\vec{\gamma}$ to $\StrongRandPoly$.
  \end{itemize}    
  From this point on, answer all queries to $\StrongRandPoly$ with $\Simulated{\StrongRandPoly}'$.

  \item Receive $\tilde{\rho}_{2}$ from $\Malicious{\Verifier}$. Draw $\Simulated{\MaskedPoly}' \in \PolynomialRingIndOne{\Field}{\SCVars}{\VariableX}{\SCDegree}$ uniformly at random conditioned on
  \begin{itemize}[nolistsep]
    \item $\sum_{\vec{\beta} \in \SSCSubset^{\SSCVars}} \Simulated{\MaskedPoly}'(\vec{\beta}) = \tilde{\rho}_{2} \Simulated{w} + \Simulated{z}^{2}$, and
    \item $\Simulated{\MaskedPoly}'(\vec{\gamma}) = \tilde{\rho}_{2} \Simulated{\StrongRandPoly}'(\vec{c}, \vec{\gamma}) + \Simulated{\AuxRandPoly}(\vec{\gamma})$ for all previous queries $\vec{\gamma}$ to $\AuxRandPoly$.
  \end{itemize}
  From this point on, answer all queries to $\AuxRandPoly(\vec{\gamma})$ with $\Simulated{\MaskedPoly}'(\vec{\gamma}) - \tilde{\rho}_{2} \Simulated{\StrongRandPoly}'(\vec{c}, \vec{\gamma})$.
  
  \item Engage in the sumcheck protocol on the claim ``$\sum_{\vec{\beta} \in \SSCSubset^{\SSCVars}} \Simulated{\MaskedPoly}'(\vec{\beta}) = \tilde{\rho}_{2} \Simulated{w} + \Simulated{z}^{2}$''.
  
  \item Output the view of the simulated $\Malicious{\Verifier}$.
\end{enumerate}
}
\end{mdframed}

Let $\MaskedPoly(\vec{\VariableX}) \DefineEqual \tilde{\rho}_{1} \SCPoly(\vec{\VariableX}) + \sum_{\vec{\beta} \in \SSCSubset^{\SSCVars}} \StrongRandPoly(\vec{\VariableX}, \vec{\beta})$, and $\MaskedPoly'(\vec{\VariableY}) \DefineEqual \tilde{\rho}_{2} \StrongRandPoly(\vec{c}, \vec{\VariableY}) + \AuxRandPoly(\vec{\VariableY})$. Observe that there exists a (deterministic) function $\FView{\cdot}$ such that
\begin{equation*}
\IPCPView{\Strong{\IPSCProver}^{\SCPoly,\StrongRandPoly,\AuxRandPoly}}{\Malicious{\Verifier}^{\SCPoly,\StrongRandPoly,\AuxRandPoly}} = \FView{\MaskedPoly, \MaskedPoly', \SCPoly, \StrongRandPoly, \StrongRandPoly, \Randomness}
\quad\text{and}\quad
\Strong{\SlowSimulator}^{\Malicious{\Verifier},\SCPoly} = \FView{\Simulated{\MaskedPoly}, \Simulated{\MaskedPoly}', \SCPoly, \Simulated{\StrongRandPoly}, \Simulated{\StrongRandPoly}', \Randomness} \enspace,
\end{equation*}
where the random variable $\Randomness$ is $\Malicious{\IOPVerifier}$'s private randomness. Indeed,
\begin{itemize}[nolistsep]

\item $\Malicious{\Verifier}$'s queries to $\StrongRandPoly$ up to \stepref{step:strong-zksc-redraw} are answered by $\StrongRandPoly$ and $\Simulated{\StrongRandPoly}$ respectively, and after \stepref{step:strong-zksc-redraw} by $\StrongRandPoly$ and $\Simulated{\StrongRandPoly}'$ respectively;

\item $\Malicious{\Verifier}$'s queries to $\AuxRandPoly(\cdot)$ are answered by $\MaskedPoly'(\cdot) - \tilde{\rho}_{2} \StrongRandPoly(\vec{c}, \cdot)$ and $\Simulated{\MaskedPoly}'(\cdot) - \tilde{\rho}_{2} \Simulated{\StrongRandPoly}'(\vec{c}, \cdot)$ respectively;

\item the messages between $\Strong{\IPSCProver}$ and $\Malicious{\Verifier}$ in the first sumcheck are a sumcheck on $\MaskedPoly$ and $\Simulated{\MaskedPoly}$ respectively;

\item $w = \MaskedPoly(\vec{c}) - \tilde{\rho}_{1} \SCPoly(\vec{c})$ in one case and $\Simulated{w} = \Simulated{\MaskedPoly}(\vec{c}) - \tilde{\rho}_{1} \SCPoly(\vec{c})$ in the other case; and

\item the messages between $\Strong{\IPSCProver}$ and $\Malicious{\Verifier}$ in the second sumcheck are a sumcheck on $\MaskedPoly'$ and $\Simulated{\MaskedPoly}'$ respectively.

\end{itemize}
We now argue that the outputs of $\FView{\cdot}$ in the two cases are identically distributed:
\begin{equation*}
\FView{\MaskedPoly, \MaskedPoly', \SCPoly, \StrongRandPoly, \StrongRandPoly, \Randomness}
\sim
\FView{\Simulated{\MaskedPoly}, \Simulated{\MaskedPoly}', \SCPoly, \Simulated{\StrongRandPoly}, \Simulated{\StrongRandPoly}', \Randomness}
\enspace.
\end{equation*}
We do so via several `hybrids', the first of which considers the distribution of $\MaskedPoly$. For every fixed $B \in \PolynomialRingIndOne{\Field}{\SCVars}{\VariableX}{\SCDegree}$, consider the following probability value, where $\StrongRandPoly \in \PolynomialRingIndOneXY{\Field}{\SCVars}{\VariableX}{\SSCVars}{\VariableY}{\SCDegree}{2\SubsetSize}$ is uniformly random, and $U \subseteq \Field^{\SCVars+\SSCVars} \times \Field$ is a set of query-answer pairs:
\begin{align*}
p(B|U) &\DefineEqual \Pr_{Z} \big[
\MaskedPoly(\vec{\alpha}) = B(\vec{\alpha})
\quad \forall\, \vec{\alpha} \in \Field^{\SCVars} 
\;
\big\vert
\;
\StrongRandPoly(\vec{\gamma}) = \omega
\quad \forall\, (\vec{\gamma},\omega) \in U 
\big] \\
&= \Pr_{Z} \left[
\sum_{\vec{\beta} \in \SSCSubset^{\SSCVars}} \StrongRandPoly(\vec{\alpha}, \vec{\beta}) = B(\vec{\alpha}) - \tilde{\rho}_{1} \SCPoly(\vec{\alpha})
\quad \forall\, \vec{\alpha} \in \Field^{\SCVars} 
\;
\middle\vert
\;
\StrongRandPoly(\vec{\gamma}) = \omega
\quad \forall\, (\vec{\gamma},\omega) \in U
\right]
\end{align*}
The probability value $p(B|U)$ describes the distribution of $\MaskedPoly$ in a real execution, when $U$ is the set of query-answer pairs to $\StrongRandPoly$ made by $\Malicious{\Verifier}$. By our lower bounds on the algebraic query complexity of polynomial summation (\corref{cor:partial-sum-indep-vars}), whenever $\SetCardinality{U} < \SubsetSize^{\SSCVars}$,
\begin{equation*}
p(B|U) = \Pr_{Z} \left[
\sum_{\vec{\beta} \in \SSCSubset^{\SSCVars}} \StrongRandPoly(\vec{\alpha}, \vec{\beta}) = B(\vec{\alpha}) - \tilde{\rho}_{1} \SCPoly(\vec{\alpha})
\quad \forall\, \vec{\alpha} \in \Field^{\SCVars} \right]
=
1/\SetCardinality{\PolynomialRingIndOne{\Field}{\SCVars}{\VariableX}{\SCDegree}}.
\end{equation*}
since $\StrongRandPoly$ is uniformly random. Thus $\MaskedPoly$ is uniformly random, and hence is identically distributed to $\Simulated{\MaskedPoly}$. Thus:
\begin{equation*}
\FView{\mhl{\MaskedPoly}, \MaskedPoly', \SCPoly, \StrongRandPoly, \StrongRandPoly, \Randomness}
\sim
\FView{\mhl{\Simulated{\MaskedPoly}}, \MaskedPoly', \SCPoly, \StrongRandPoly, \StrongRandPoly, \Randomness}
\enspace.
\end{equation*}

Next, $\MaskedPoly'(\vec{\VariableY}) = \tilde{\rho}_{2} \StrongRandPoly(\vec{c}, \vec{\VariableY}) + \AuxRandPoly(\vec{\VariableY})$ is uniformly random such that $\sum_{\vec{\beta} \in \SSCSubset^{\SSCVars}} \MaskedPoly'(\vec{\beta}) = \tilde{\rho}_{2} w + z_{2}$, while $\Simulated{\MaskedPoly'}$ is uniformly random such that $\sum_{\vec{\beta} \in \SSCSubset^{\SSCVars}} \Simulated{\MaskedPoly}'(\vec{\beta}) = \tilde{\rho}_{2} \Simulated{w} + \Simulated{z}^{2}$ and $\Simulated{\MaskedPoly}'(\vec{\gamma}_i) = \tilde{\rho}_{2} \Simulated{\StrongRandPoly}'(\vec{c}, \vec{\gamma}_i) + \Simulated{\AuxRandPoly}(\vec{\gamma}_i)$ for $\vec{\gamma}_{1}, \ldots \vec{\gamma}_{k}$ adversarially chosen. Since $\Simulated{\AuxRandPoly}$ is itself uniformly random such that $\sum_{\vec{\beta} \in \SSCSubset^{\SSCVars}} \Simulated{\AuxRandPoly}(\vec{\beta}) = 0$, this gives that $(\MaskedPoly', \StrongRandPoly) \sim (\Simulated{\MaskedPoly}', \Simulated{\StrongRandPoly}')$, as $\StrongRandPoly$ and $\Simulated{\StrongRandPoly}'$ are both uniformly random summing to zero. Moreover, queries to $\Simulated{\StrongRandPoly}'$ are independent of $\Simulated{\MaskedPoly}$ by \corref{cor:partial-sum-indep-vars}. Thus,
\begin{equation*}
\FView{\Simulated{\MaskedPoly}, \mhl{\MaskedPoly'}, \SCPoly, \mhl{\StrongRandPoly}, \mhl{\StrongRandPoly}, \Randomness}
\sim
\FView{\Simulated{\MaskedPoly}, \mhl{\Simulated{\MaskedPoly}'}, \SCPoly, \mhl{\Simulated{\StrongRandPoly}'}, \mhl{\Simulated{\StrongRandPoly}'}, \Randomness}
\enspace.
\end{equation*}

Finally we examine $\Simulated{\StrongRandPoly}$, which is used only to answer queries to $\StrongRandPoly$ until \stepref{step:strong-zksc-redraw}. Note that $\Simulated{\StrongRandPoly}'$ is drawn identically to $\Simulated{\StrongRandPoly}$ except for the additional condition that $\sum_{\vec{\beta} \in \SSCSubset^{\SSCVars}} \Simulated{\StrongRandPoly}'(\vec{c}, \vec{\beta}) = \Simulated{v}$, and that $\Malicious{\Verifier}$'s queries to $\StrongRandPoly$ remain consistent. By \corref{cor:partial-sum-indep-vars}, the summation condition is independent of the answers to $\Malicious{\Verifier}$'s queries to $\StrongRandPoly$. So $\Malicious{\Verifier}$'s queries to $\Simulated{\StrongRandPoly}$ are hence identically distributed to the same queries to $\Simulated{\StrongRandPoly}'$, and thus
\begin{equation*}
\FView{\Simulated{\MaskedPoly}, \Simulated{\MaskedPoly}', \SCPoly, \mhl{\Simulated{\StrongRandPoly}'}, \Simulated{\StrongRandPoly}', \Randomness}
\sim
\FView{\Simulated{\MaskedPoly}, \Simulated{\MaskedPoly}', \SCPoly, \mhl{\Simulated{\StrongRandPoly}}, \Simulated{\StrongRandPoly}', \Randomness}
\enspace.
\end{equation*}
This concludes the argument for the correctness of the inefficient simulator $\Strong{\SlowSimulator}$.

To complete the proof of zero knowledge, we note that $\Strong{\SlowSimulator}$ can be transformed into an efficient simulator $\Strong{\IPSCSimulator}$ by using succinct constraint detection for the Reed--Muller code extended with partial sums \cite{BenSassonCFGRS16}: more precisely, we can use the algorithm of \corref{cor:efficient-poly-simulator} to answer both point and sum queries to $\StrongRandPoly$, $\AuxRandPoly$, and $\MaskedPoly$, in a stateful way, maintaining corresponding tables $\AnsTable{\Simulated{\StrongRandPoly}}$, $\AnsTable{\Simulated{\AuxRandPoly}}$, and $\AnsTable{\Simulated{\MaskedPoly}}$.
\end{proof}

\subsection{Step 2}
\label{sec:pzk-step-2}

The Interactive Proof described and analyzed in \secref{sec:pzk-step-1} assumes that the prover and verifier have access to certain low-degree polynomials. We now use low-degree testing and self-correction to compile that Interactive Proof into an Interactive PCP, where the prover sends to the verifier evaluations of these polynomials as part of the proof oracle. This will conclude the proof of \thmref{thm:strong-sumcheck-ipcp}.

\begin{proof}[Proof of \thmref{thm:strong-sumcheck-ipcp}]
Construct an IPCP system $(\Prover,\Verifier)$ for sumcheck as follows:
\begin{itemize}
		
  \item The prover $\Prover$, given input $(\SCInput)$ and oracle access to $\SCPoly$, samples polynomials $\StrongRandPoly \in \PolynomialRingIndOneXY{\Field}{\SCVars}{\VariableX}{\SSCVars}{\VariableY}{\SCDegree}{2\SubsetSize}$, $\AuxRandPoly \in \PolynomialRingIndOne{\Field}{\SSCVars}{\VariableY}{2\SubsetSize}$ uniformly at random, and sends their evaluations to the verifier $\Verifier$; then $\Prover$ simulates $\Strong{\IPSCProver}^{\SCPoly,\StrongRandPoly,\AuxRandPoly}(\Field,\SCVars,\SCDegree,\SCSubset,\SCSum)$.
		
  \item The verifier $\Verifier$, after receiving a proof string $\Proof = (\StrongRandPoly, \AuxRandPoly)$, simulates $\Strong{\IPSCVerifier}^{\SCPoly,\Proof}(\SCInput)$ up to $\IPSCVerifier$'s two queries $\vec{\alpha}_{1} \in \Field^{\SCVars+\SSCVars}, \vec{\alpha}_{2} \in \Field^{\SSCVars}$ to $\StrongRandPoly,\AuxRandPoly$ respectively (which occur after the interaction), which $\Verifier$ does not answer directly but instead answers as follows. First, $\Verifier$ checks that $\StrongRandPoly$ is $\varrho$-close to the evaluation of a polynomial in $\PolynomialRingIndOneXY{\Field}{\SCVars}{\VariableX}{\SSCVars}{\VariableY}{\SCDegree}{2\SubsetSize}$ by performing an individual-degree test with proximity parameter $\varrho \DefineEqual \frac{1}{8}$ and soundness error $\epsilon \DefineEqual \frac{\SCVars\SCDegree}{\SetCardinality{\Field}}$ \cite{GoldreichS06,GurR15}; then, $\Verifier$ computes $\StrongRandPoly(\vec{\alpha}_{1})$ via self-correction with soundness error $\epsilon$ \cite{RubinfeldS96,AroraS03}, and replies with this value. The procedures above are repeated to determine $\AuxRandPoly(\vec{\alpha}_{2})$. Both procedures require $\poly(\log \SetCardinality{\Field}, \SCVars, \SCDegree)$ queries and time. Finally, $\Verifier$ rejects if $\IPSCVerifier$ rejects or the individual degree test rejects.
		
\end{itemize}
Completeness and perfect zero knowledge of $(\Prover,\Verifier)$ are inherited, in a straightforward way, from those of $(\IPSCProver,\IPSCVerifier)$. We now argue soundness. Consider an instance-witness pair $\big( (\SCInput) , \SCPoly \big) \in \SCRelation^{\no}$ and a malicious prover $\Malicious{\Prover}$, and denote by $\Malicious{\Proof} = (\Malicious{\StrongRandPoly}, \Malicious{\AuxRandPoly})$ the proof string sent by $\Malicious{\Prover}$. We distinguish between the following two cases.
\begin{itemize}
		
  \item \emph{Case 1: $\Malicious{\StrongRandPoly}$ is $\varrho$-far from evaluations of polynomials in $\PolynomialRingIndOneXY{\Field}{\SCVars}{\VariableX}{\SSCVars}{\VariableY}{\SCDegree}{2\SubsetSize}$ or $\Malicious{\AuxRandPoly}$ is $\varrho$-far from evaluations of polynomials in $\PolynomialRingIndOne{\Field}{\SSCVars}{\VariableY}{2\SubsetSize}$.}
  
  In this case, the low-degree test accepts with probability at most $\epsilon$.
		
  \item \emph{Case 2: $\Malicious{\StrongRandPoly}$ is $\varrho$-close to evaluations of polynomials in $\PolynomialRingIndOneXY{\Field}{\SCVars}{\VariableX}{\SSCVars}{\VariableY}{\SCDegree}{2\SubsetSize}$ and $\Malicious{\AuxRandPoly}$ is $\varrho$-close to evaluations of polynomials in $\PolynomialRingIndOne{\Field}{\SSCVars}{\VariableY}{2\SubsetSize}$.}
		
  In this case, let $\Malicious{\StrongRandPoly}'$ be the unique polynomial in $\PolynomialRingIndOneXY{\Field}{\SCVars}{\VariableX}{\SSCVars}{\VariableY}{\SCDegree}{2\SubsetSize}$ whose evaluation is $\varrho$-close to $\Malicious{\StrongRandPoly}$; this polynomial exists because $\varrho$ is less than the unique decoding radius (of the corresponding Reed--Muller code), which equals $\frac{1}{2} (1 - \frac{\SCDegree}{\SetCardinality{\Field}})^{\SCVars} (1 - \frac{2\SubsetSize}{\SetCardinality{\Field}})^{\SSCVars}$, and is at least $\frac{1}{4}$ by the assumption that $\frac{\SCVars\SCDegree}{\SetCardinality{\Field}} + \frac{\SSCVars \cdot 2\SubsetSize}{\SetCardinality{\Field}} < \frac{1}{2}$. $\Malicious{\AuxRandPoly}'$ is defined analogously. By the soundness of $(\IPSCProver,\IPSCVerifier)$, the probability that $\IPSCVerifier^{\SCPoly,\Malicious{\StrongRandPoly}',\Malicious{\AuxRandPoly}'}$ accepts is at most $\frac{\SCVars\SCDegree}{\SetCardinality{\SoundnessSet}} + \frac{\SSCVars \cdot 2\SubsetSize + 2}{\SetCardinality{\Field}}$ (see \lemref{lem:strong-zk-sumcheck}). However $\Verifier$ only has access to $\Malicious{\Proof}$, and uses self-correction on it to compute $\Malicious{\StrongRandPoly}', \Malicious{\AuxRandPoly}'$ at the location $\vec{\alpha} \in \Field^{\SCVars}$ required by $\IPSCVerifier$; the probability that the returned values are not correct is at most $2\epsilon$. Hence, by a union bound, $\Verifier$ accepts with probability at most $\frac{\SCVars\SCDegree}{\SetCardinality{\SoundnessSet}} + \frac{\SSCVars \cdot 2\SubsetSize + 2}{\SetCardinality{\Field}} + 2\epsilon$.
		
\end{itemize}
Overall, we deduce that $\Verifier$ accepts with probability at most $\max \{\epsilon \,,\, \frac{\SCVars\SCDegree}{\SetCardinality{\SoundnessSet}} + \frac{\SSCVars \cdot 2\SubsetSize + 2}{\SetCardinality{\Field}} + 2\epsilon \} \leq 6 \frac{(\SCVars+\SSCVars) \cdot (\SCDegree+\SubsetSize)}{\SetCardinality{\SoundnessSet}}$.
\end{proof}

\doclearpage
\section{Zero knowledge for non-deterministic exponential time}
\label{sec:zk-nexp}

In this section we use the zero knowledge sumcheck protocol developed in \secref{sec:strong-zk-sumcheck} (along with the \cite{BenSassonCFGRS16} protocol) to derive a zero knowledge analogue of the \cite{BabaiFL91,BabaiFLS91} protocol for $\NEXP$.%
\footnote{This section is written so that the proof can be understood independently of subsequent sections of the paper. Using the framework of sum-product circuits developed in \secref{sec:sum-product-protocols}, we can simplify this proof; see \appref{sec:zk-nexp-spc}.}%
Recall that in this protocol, the prover first sends a low-degree extension of a $\NEXP$ witness, and then engages in the \cite{LundFKN92} sumcheck protocol on a polynomial related to the instance. To make this zero knowledge, the prover will first take a \emph{randomized} low-degree extension $R$ of the witness (which provides some bounded independence). The oracle contains a commitment to $R$: the prover draws a polynomial uniformly at random subject to the condition that `summing out' a few of its variables yields $R$, and places its evaluation in the oracle.

The prover and verifier then engage in the zero knowledge sumcheck detailed in \secref{sec:strong-zk-sumcheck} on the \cite{BabaiFLS91} polynomial. This ensures that the verifier learns nothing through the interaction except for a single evaluation of the summand polynomial, which corresponds to learning a constant number of evaluations of the randomized witness. Bounded independence ensures that these evaluations do not leak any information. The prover provides these evaluations to the verifier, who will then check their correctness by engaging in an instance of the \cite{BenSassonCFGRS16} sumcheck protocol for each evaluation. Note that here we are satisfied with the weaker guarantee provided by the \cite{BenSassonCFGRS16} protocol because the simulator is able to simulate any polynomial number of queries to the commitment.

Following \cite{BabaiFLS91}, the arithmetization encodes bit strings as elements in $\SPSubset^{m}$ for some $\SPSubset$ of size $\poly(\BitSize{B})$, rather than with $\SPSubset=\Bits$ as in \cite{BabaiFL91}, for greater efficiency.

\medskip
We start by defining the \emph{oracle 3-satisfiability problem}, which is the $\NEXP$-complete problem used by \cite{BabaiFL91} to construct two-prover interactive proofs for $\NEXP$.

\begin{definition}[$\OracleSATRelation$]
	\label{def:O3SAT}
	The \defemph{oracle 3-satisfiability relation}, denoted $\OracleSATRelation$, consists of all instance-witness pairs $(\Instance,\Witness)=\big((r, s, B), A\big)$, where $r, s$ are positive integers, $B \colon \Bits^{r + 3s + 3} \to \Bits$ is a boolean formula, and $A \colon \Bits^{s} \to \Bits$ is a function, that satisfy the following condition:
	\begin{equation*}
	\forall\,z \in \Bits^{r},\;
	\forall\,b_{1}, b_{2}, b_{3} \in \Bits^{s},\;
	B\big(z, b_{1}, b_{2}, b_{3}, A(b_{1}), A(b_{2}), A(b_{3})\big) = 1
	\enspace.
	\end{equation*}
\end{definition}

\begin{theorem}[PZK IPCP for $\NEXP$]
	\label{thm:pzk-for-nexp}
	For every query bound function $\SCStrength(n)$, the $\NEXP$-complete relation $\OracleSATRelation$ has a (public coin and non-adaptive) Interactive PCP that is perfect zero knowledge against all $\SCStrength$-query malicious verifiers. In more detail:
	{\small
		\begin{equation*}
		\OracleSATRelation \in
		\PZKIPCP
		\left[
		\begin{tabular}{rl}
		soundness error: & $1/2$ \\
		round complexity: & $O(r + s + \log \SCStrength)$ \\[1mm]
		proof length: & $\poly(2^{\BitSize{B}}, \SCStrength)$ \\
		query complexity: & $\poly(\BitSize{B} + \log \SCStrength)$ \\[1mm]
		prover time: & $\poly(2^{\SetCardinality{B}}, \SCStrength)$ \\
		verifier time: &  $\poly(\BitSize{B} + \log \SCStrength)$ \\[1mm]
		verifier space: & $O(\BitSize{B} + \log \SCStrength)$ \\[1mm]
		simulator overhead: & $\poly(\BitSize{B} + \log \SCStrength) \cdot \QueryComplexity_{\Malicious{\Verifier}}^{3}$
		\end{tabular}
		\right]
		\enspace.
		\end{equation*}}
\end{theorem}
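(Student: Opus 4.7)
The plan is to implement the blueprint sketched in Section~\ref{sec:techniques-nexp}, combining the BFLS arithmetization with our strong zero knowledge sumcheck (\thmref{thm:strong-sumcheck-ipcp}), our algebraic commitment scheme (\secref{sec:technniques-algebraic-commitment}), and the weakly zero knowledge sumcheck of \cite{BenSassonCFGRS16}. First, I would fix a finite field $\Field$ of size $\poly(2^{\BitSize{B}},\SCStrength)$, a subset $\SPSubset\subseteq\Field$ of size $\poly(\BitSize{B})$ with $\{0,1\}^{s}\hookrightarrow\SPSubset^{m}$ for $m=O(s/\log\SetCardinality{\SPSubset})$, and an analogous embedding for $\{0,1\}^{r+3s+3}$. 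Following \cite{BabaiFLS91}, the arithmetization turns the O3SAT witness $A\colon\Bits^{s}\to\Bits$ into a low-degree extension $\LD{A}\colon\Field^{m}\to\Field$ and bundles the satisfaction constraints into a single polynomial $f$ (depending on $\LD{A}$ and on an arithmetization of $B$) such that $\sum_{\vec{z}\in\SPSubset^{r+3s+3}}f(\vec{z})=0$ iff $A$ is a valid witness. To inject the bounded independence needed for zero knowledge, the honest prover does \emph{not} use the multilinear $\LD{A}$; instead it samples $\RLD{A}\in\PolynomialRingIndOne{\Field}{m}{\VariableX}{d}$ uniformly at random subject to $\Restrict{\RLD{A}}{\SPSubset^{m}}$ encoding $A$, where $d$ is a small constant (it suffices to take $d$ so that any three evaluations of $\RLD{A}$ outside $\SPSubset^{m}$ are $O(1)$-wise independent).

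The IPCP itself proceeds as follows. The prover sends as the PCP oracle an algebraic commitment $\StrongRandPoly\in\PolynomialRingIndOneXY{\Field}{m}{\VariableX}{\SSCVars}{\VariableY}{d}{2\SubsetSize}$ to $\RLD{A}$, i.e.\ a random polynomial with $\sum_{\vec{\beta}\in\SSCSubset^{\SSCVars}}\StrongRandPoly(\vec{\VariableX},\vec{\beta})=\RLD{A}(\vec{\VariableX})$, where $\SubsetSize^{\SSCVars}>\SCStrength$ is chosen so that the algebraic query complexity lower bound of \corref{cor:partial-sum-indep-vars} applies. The verifier individual-degree tests $\StrongRandPoly$ so that thereafter it may use self-correction to obtain reliable access to an underlying low-degree committed polynomial. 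Now the prover and verifier run the strong PZK sumcheck of \thmref{thm:strong-sumcheck-ipcp} on the claim $\sum_{\vec{z}\in\SPSubset^{r+3s+3}}f(\vec{z})=0$, with the soundness-challenge set $\SoundnessSet\DefineEqual\Field\setminus\SPSubset$ (this is what keeps all challenges away from the interpolating set). That protocol reduces the claim to a single evaluation $f(\vec{c})$ for some $\vec{c}\in\SoundnessSet^{r+3s+3}$, which in turn depends on three evaluations $\RLD{A}(\vec{u}_{1}),\RLD{A}(\vec{u}_{2}),\RLD{A}(\vec{u}_{3})$ with each $\vec{u}_{i}$ lying in $(\Field\setminus\SPSubset)^{m}$. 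The prover sends these three values $v_{1},v_{2},v_{3}$; the verifier finishes by checking, for each $i$, that $v_{i}=\sum_{\vec{\beta}\in\SSCSubset^{\SSCVars}}\StrongRandPoly(\vec{u}_{i},\vec{\beta})$ via the weakly zero knowledge sumcheck of \cite{BenSassonCFGRS16} applied to the (self-corrected) oracle $\StrongRandPoly$, and then recomputes $f(\vec{c})$ from $v_{1},v_{2},v_{3}$ and its own arithmetization of $B$ to perform the final sumcheck acceptance check.

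Completeness is immediate from completeness of the two sumcheck subprotocols. For soundness, low-degree testing plus self-correction lets us replace $\StrongRandPoly$ by a genuine low-degree polynomial, at which point the strong sumcheck soundness bound plus the BFLS analysis (together with the soundness of the weak sumcheck used to open the commitment) gives soundness error $1/2$ after suitable choice of $\SetCardinality{\Field}$. For zero knowledge I would compose three simulators in straightline fashion: (i)~the strong PZK sumcheck simulator from \thmref{thm:strong-sumcheck-ipcp}, which answers the verifier's queries to $\StrongRandPoly$ and fakes the sumcheck transcript using only one query to $f$, which it obtains from three queries to $\RLD{A}$ at points in $(\Field\setminus\SPSubset)^{m}$; (ii)~a sampler for $\RLD{A}$ that returns three independent uniform field elements (valid because $d$ is large enough that $\RLD{A}$ restricted to any three points outside $\SPSubset^{m}$ is uniform, independent of the witness); and (iii)~the simulator of the \cite{BenSassonCFGRS16} sumcheck, invoked three times to handle the commitment-opening subprotocols, each invocation simulable because the corresponding summand polynomial $\StrongRandPoly(\vec{u}_{i},\cdot)$ is already being simulated by the strong-sumcheck simulator's state (via succinct constraint detection as in \corref{cor:efficient-poly-simulator}). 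The main obstacle I expect is precisely this bookkeeping: I must argue that the three weak-sumcheck invocations can be driven from the \emph{same} simulated oracle $\Simulated{\StrongRandPoly}$ that the strong-sumcheck simulator maintains, without inconsistencies and within the $\SCStrength$ query budget, and that the compounded simulator overhead remains $\poly(\BitSize{B}+\log\SCStrength)\cdot\QueryComplexity_{\Malicious{\Verifier}}^{3}$; everything else (parameter setting, prover/verifier complexities, round count) then follows routinely from the efficiency clauses of the subprotocols.
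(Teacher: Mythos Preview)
Your plan is essentially the paper's direct proof: BFLS arithmetization, an algebraic commitment $\StrongRandPoly$ to a randomized extension of the witness, the strong PZK sumcheck of \thmref{thm:strong-sumcheck-ipcp} for the main claim with $\SoundnessSet=\Field\setminus\SPSubset$, three invocations of the \cite{BenSassonCFGRS16} sumcheck to open the commitment at the three needed points, and low-degree testing plus self-correction on $\StrongRandPoly$. The paper even collapses your two steps ``sample $\RLD{A}$, then commit'' into one, drawing $\StrongRandPoly$ directly so that $\sum_{\vec{\beta}}\StrongRandPoly(\cdot,\vec{\beta})$ is the randomized extension.

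A few things to correct. First, your parameters are off in ways that break the stated bounds: the paper takes $\SetCardinality{\Field}=\poly(\BitSize{B},\log\SCStrength)$, not $\poly(2^{\BitSize{B}},\SCStrength)$ (your choice makes the oracle size $\SetCardinality{\Field}^{\Theta(m)}$ exceed $\poly(2^{\BitSize{B}},\SCStrength)$), and the individual degree of the randomized extension must be $\SetCardinality{\SPSubset}+O(1)$, not an absolute constant---with $\SetCardinality{\SPSubset}=\poly(\BitSize{B})$ a constant-degree polynomial cannot even interpolate $A$ on $\SPSubset^{m}$. Second, the BFLS bundling does \emph{not} give ``$\sum f=0$ iff $A$ is valid''; it gives a polynomial $F(\vec{\VariableX},\vec{\VariableY})$ that is identically zero iff $A$ is valid, so the verifier must first send random $\vec{x},\vec{y}$ and the sumcheck is then run on the claim $F(\vec{x},\vec{y})=\sum_{\vec{\alpha}}f(\vec{x},\vec{y},\vec{\alpha})=0$. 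Without this randomization step the soundness argument fails. Third, in the simulator architecture, $\Simulated{\StrongRandPoly}$ is \emph{not} part of the strong-sumcheck simulator's state; it is a separate, freshly drawn uniform polynomial (via \corref{cor:efficient-poly-simulator}) used to answer both the verifier's direct queries to $\StrongRandPoly$ and the queries made by the three \cite{BenSassonCFGRS16} subsimulators, while the strong-sumcheck simulator handles only its own oracle $\Proof_{0}$. The ``bookkeeping'' you anticipate is then straightforward: fewer than $\SCStrength$ total queries hit $\Simulated{\StrongRandPoly}$, so \corref{cor:partial-sum-indep-vars} makes the joint distribution match.
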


Note that the prover running time given above assumes that the prover is given a witness as auxiliary input.

\begin{proof}
	Let $\Field$ be an extension field of $\Field_{2}$. Let $\LD{B} \colon \Field^{m} \to \Field$ be the `direct' arithmetization of the negation of $B$: rewrite $B$ by using ANDs and NOTs; negate its output; replace each $\mathrm{AND}(a,b)$ with $a \cdot b$ and $\mathrm{NOT}(a)$ with $1-a$. For every $\vec{x} \in \Bits^{r + 3s + 3}$, $\LD{B}(\vec{x}) = 0$ if $B(\vec{x})$ is true, and $\LD{B}(\vec{x}) = 1$ if $B(\vec{x})$ is false. Note that $\LD{B}$ is computable in time $\poly(\BitSize{B})$ and has total degree $O(\BitSize{B})$.
	
	Observe that $(r, s, B) \in \OracleSATRelation$ if and only if there exists a multilinear function $\LD{A} \colon \Field^{s} \to \Field$ that is boolean on $\Bits^{s}$ such that $\LD{B}(\vec{z}, \vec{b}_{1}, \vec{b}_{2}, \vec{b}_{3}, \LD{A}(\vec{b}_{1}), \LD{A}(\vec{b}_{2}), \LD{A}(\vec{b}_{3})) = 0$ for all $\vec{z} \in \Bits^{r}$, $\vec{b}_{1}, \vec{b}_{2}, \vec{b}_{3} \in \Bits^{s}$. The requirement that $\LD{A}$ is boolean on $\Bits^{s}$ can be encoded by $2^{s}$ constraints: $\LD{A}(\vec{b})(1 - \LD{A}(\vec{b})) = 0$ for every $\vec{b} \in \Bits^{s}$.
	
	These constraints can be expressed as follows:
	\begin{align*}
	\left\{
	g_{1}(\vec{\alpha}) \DefineEqual \LD{B}(\vec{z}, \vec{b}_{1}, \vec{b}_{2}, \vec{b}_{3}, \LD{A}(\vec{b}_{1}), \LD{A}(\vec{b}_{2}), \LD{A}(\vec{b}_{3})) = 0 \right\}_{\vec{z} \in \Bits^{r},\, \vec{b}_{i} \in \Bits^{s}} \\
	\left\{
	g_{2}(\vec{\beta}) \DefineEqual \LD{A}(\vec{b})(1 - \LD{A}(\vec{b})) = 0 \right\}_{\vec{b} \in \Bits^{s}}
	\quad\quad\quad\quad\quad\quad\quad
	\end{align*}
	
	Let $F$ be the polynomial over $\Field$ given by
	\begin{equation*}
	F(\vec{\VariableX}, \vec{\VariableY})
	\DefineEqual
	\sum_{\vec{\alpha} \in \Bits^{r + 3s}}
	\left(
	g_{1}(\vec{\alpha}) \vec{\VariableX}^{\vec{\alpha}} + g_{2}(\vec{\alpha}_{[s]}) \vec{\VariableY}^{\vec{\alpha}}
	\right)
	\enspace,
	\end{equation*}
	where $\vec{\VariableX}^{\vec{\alpha}} \DefineEqual \VariableX_{1}^{\alpha_{1}} \cdots \VariableX_{\ell}^{\alpha_{\ell}}$ for $\vec{\alpha} \in \Bits^{\ell}$, and $\vec{\alpha}_{[s]}$ are the first $s$ coordinates in $\vec{\alpha}$.
	
	Note that $F$ is the zero polynomial if and only if all the above equations hold. Since $F$ is a polynomial of total degree $r + 3s$, if $F$ is not the zero polynomial then it is zero on at most an $\frac{r + 3s}{\SetCardinality{\Field}}$ fraction of points in $\Field^{2(r + 3s)}$.
	
	For $\alpha_{i} \in \Bits$ it holds that $\VariableX_{i}^{\alpha_{i}} = 1 + (\VariableX_{i} - 1)\alpha_{i}$, so we can also write
	\begin{equation*}
	F(\vec{\VariableX}, \vec{\VariableY})
	= \sum_{\vec{\alpha} \in \Bits^{r + 3s}}
	\left(
	g_{1}(\vec{\alpha})       \cdot \prod_{i=1}^{r+3s} (1 + (\VariableX_{i} - 1)\alpha_{i})
	+  g_{2}(\vec{\alpha}_{[s]}) \cdot \prod_{i=1}^{r+3s} (1 + (\VariableY_{i} - 1)\alpha_{i})
	\right) =: \sum_{\vec{\alpha} \in \Bits^{r + 3s}} f(\vec{\VariableX},\vec{\VariableY},\vec{\alpha})
	\enspace.
	\end{equation*}
	
	Let $\SCSubset$ be a subfield of $\Field$; define $m_{1} \DefineEqual r/\log \SetCardinality{\SPSubset}$ and $m_{2} \DefineEqual s/\log \SetCardinality{\SPSubset}$ (assuming without loss of generality that both are integers). For $i \in \{1, 2\}$, let $\gamma_{i} \colon \SPSubset^{m_{i}} \to \Bits^{m_{i} \log \SetCardinality{\SPSubset}}$ be the lexicographic order on $\SPSubset^{m_{i}}$. The low-degree extension $\LD{\gamma}_{i}$ of $\gamma_{i}$ is computable by an arithmetic circuit constructible in time $\poly(\SetCardinality{\SPSubset}, m_{i}, \log \SetCardinality{\Field})$ \cite[Claim 4.2]{GoldwasserKR15}. Let $\gamma \colon \SPSubset^{m_{1} + 3m_{2}} \to \Bits^{r + 3s}$ be such that $\gamma(\vec{\alpha}, \vec{\beta}_{1}, \vec{\beta}_{2}, \vec{\beta}_{3}) = (\gamma_{1}(\vec{\alpha}), \gamma_{2}(\vec{\beta}_{1}), \gamma_{2}(\vec{\beta}_{2}), \gamma_{2}(\vec{\beta}_{3}))$ for all $\vec{\alpha} \in \SPSubset^{m_{1}}$, $\vec{\beta}_{1}, \vec{\beta}_{2}, \vec{\beta}_{3} \in \SPSubset^{m_{2}}$; let $\LD{\gamma} \colon \Field^{m_{1} + 3m_{2}} \to \Field^{r + 3s}$ be its low-degree extension.
	
	We can use the above notation to write $F$ equivalently as
	\begin{align*}
	F(\vec{\VariableX}, \vec{\VariableY}) =
	\sum_{\substack{\vec{\alpha} \in \SCSubset^{m_{1}} \\
			\vec{\beta}_{1}, \vec{\beta}_{2}, \vec{\beta}_{3} \in \SCSubset^{m_{2}}}}
	&g_{1}(\LD{\gamma}(\vec{\alpha}, \vec{\beta}_{1}, \vec{\beta}_{2}, \vec{\beta}_{3}))
	\prod_{i=1}^{r+3s} (1 + (\VariableX_{i} - 1)\LD{\gamma}(\vec{\alpha}, \vec{\beta}_{1}, \vec{\beta}_{2}, \vec{\beta}_{3})_{i}) \\
	&+g_{2}(\LD{\gamma}_{2}(\vec{\beta}_{1}))
	\prod_{i=1}^{r+3s} (1 + (\VariableY_{i} - 1)\LD{\gamma}(\vec{\alpha}, \vec{\beta}_{1}, \vec{\beta}_{2}, \vec{\beta}_{3})_{i}) \enspace.
	\end{align*}
	
	We are now ready to specify the protocol. Let $\SSCVars \DefineEqual \lceil \log \SCStrength / \log \SetCardinality{\SCSubset} \rceil$.
	\begin{enumerate}
		\item The prover draws a polynomial $\StrongRandPoly$ uniformly at random from $\PolynomialRingIndOneXY{\Field}{m_{2}}{\VariableX}{\SSCVars}{\VariableY}{\SetCardinality{\SCSubset}+2}{2\SetCardinality{\SCSubset}}$, subject to the condition that $\sum_{\vec{\beta} \in \SSCSubset^{\SSCVars}} \StrongRandPoly(\vec{\alpha},\vec{\beta}) = A(\gamma_{2}(\vec{\alpha}))$ for all $\vec{\alpha} \in \SCSubset^{m_{2}}$. It then generates an oracle $\Proof_{0}$ for the $\SetCardinality{\SCSubset}^{\SSCVars}$-strong zero knowledge sumcheck protocol (\secref{sec:strong-zk-sumcheck}) on input $(\Field, m_{1}+3m_{2}, \deg(f), \SCSubset, 0)$ and oracles $\Proof_{1},\Proof_{2},\Proof_{3}$ for the \cite{BenSassonCFGRS16} zero knowledge sumcheck protocol on input $(\Field, \SSCVars, 2\SetCardinality{\SCSubset}, \SCSubset, \cdot)$. (Recall that in both zero knowledge sumchecks, the oracle message does not depend on the claim itself.) The prover sends an oracle which is the concatenation of the evaluation of $\StrongRandPoly$ with $(\Proof_{0}, \Proof_{1}, \Proof_{2}, \Proof_{3})$.
		
		\item The verifier chooses $\vec{x}, \vec{y} \in \Field^{r+3s}$ uniformly at random and sends them to the prover. The prover and verifier engage in the zero knowledge sumcheck protocol of \secref{sec:strong-zk-sumcheck} on the claim ``$F(\vec{x}, \vec{y}) = 0$'' with $\SoundnessSet = \Field \setminus \SCSubset$ using $\Proof_{1}$ as the oracle message. This reduces the claim to checking that $f(\vec{x},\vec{y},\vec{c},\vec{c}'_{1},\vec{c}'_{2},\vec{c}'_{3}) = a$ for uniformly random $\vec{c} \in (\Field \setminus \SCSubset)^{m_{1}}$, $\vec{c}'_{1}, \vec{c}'_{2}, \vec{c}'_{3} \in (\Field \setminus \SCSubset)^{m_{2}}$ and some $a \in \Field$ provided by the prover.
		
		\item The prover provides $h_{i} \DefineEqual A(\gamma_{2}(\vec{c}'_{i}))$ for each $i \in \{1,2,3\}$. The verifier substitutes these values into the expression for $f$ to check the above claims, and rejects if they do not hold.
		
		\item The prover and verifier engage in the zero knowledge sumcheck protocol of \cite{BenSassonCFGRS16} on the claims ``$\sum_{\vec{\beta} \in \SCSubset^{\SSCVars}} \StrongRandPoly(\vec{\alpha},\vec{\beta}) = h_{i}$'' for each $i \in \{1,2,3\}$, using $\Proof_{i}$ as the oracle message.
		
		\item The verifier checks that $\StrongRandPoly$ is low-degree (with proximity parameter $\varrho \DefineEqual \frac{1}{8}$ and soundness error $\SoundnessError \DefineEqual \frac{1}{\SetCardinality{\Field}}$), and uses self-correction (with soundness error $\SoundnessError$) to query it at the points required by the \cite{BenSassonCFGRS16} protocol above.
	\end{enumerate}
	
	\parhead{Completeness}
	If $((r, s, B), A) \in \OracleSATRelation$ then $F(\vec{\VariableX}, \vec{\VariableY})$ is the zero polynomial; hence $F(\vec{x}, \vec{y}) = 0$ for all $\vec{x}, \vec{y} \in \Field^{r + 3s}$. Completeness follows from the completeness of the zero knowledge sumcheck protocols.
	
	\parhead{Soundness}
	Suppose that $(r, s, B) \notin \Language(\OracleSATRelation)$ and let $(\Malicious{\StrongRandPoly}, \Malicious{\Proof}_{0}, \Malicious{\Proof}_{1}, \Malicious{\Proof}_{2}, \Malicious{\Proof}_{3})$ be the oracle message. If $\Malicious{\StrongRandPoly}$ is $\varrho$-far from an evaluation of a polynomial in $\PolynomialRingIndOneXY{\Field}{m_{2}}{\VariableX}{\SSCVars}{\VariableY}{\SetCardinality{\SCSubset}+2}{\SubsetSize}$ then the verifier rejects with probability at least $1 - \SoundnessError$. Otherwise, there exists a unique polynomial $\StrongRandPoly \in \PolynomialRingIndOneXY{\Field}{m_{2}}{\VariableX}{\SSCVars}{\VariableY}{\SetCardinality{\SCSubset}+2}{\SubsetSize}$ whose evaluation is $\varrho$-close to $\Malicious{\StrongRandPoly}$. Let $\Malicious{A} \DefineEqual \sum_{\vec{\beta} \in \SCSubset^{\SSCVars}} \StrongRandPoly(\vec{\VariableX},\vec{\beta})$, which we think of as playing the role of $\LD{A}(\gamma_{2}(\cdot))$ in $F$.
	
	If $(r, s, B) \notin \Language(\OracleSATRelation)$ then there is no choice of $\LD{A}$ such that $F(\vec{\VariableX}, \vec{\VariableY})$ is the zero polynomial. Thus, $F(\vec{x}, \vec{y}) = 0$ with probability at most $(r + 3s)/\SetCardinality{\Field}$ over the choice of $\vec{x}, \vec{y}$. By the soundness of the zero knowledge sumcheck protocol (\thmref{thm:strong-sumcheck-ipcp}), the verifier outputs a false claim ``$f(\vec{x},\vec{y},\vec{\alpha}) = a$'' with probability at least $1 - O((m_{1}+m_{2}+\SSCVars)\SetCardinality{\SCSubset})/(\SetCardinality{\Field} - \SetCardinality{\SCSubset}))$. If substituting $h_{i}$ for $\LD{A}(\gamma_{2}(\vec{c}'_{i}))$ in $f$ does not yield $a$, then the verifier rejects. Otherwise, it must be the case that for at least one $i \in \{1,2,3\}$, $\Malicious{A}(\vec{c}'_{i}) \neq h_{i}$. By the soundness of the \cite{BenSassonCFGRS16} sumcheck protocol, the verifier rejects with probability at least $1 - O(\frac{\SSCVars \SetCardinality{\SCSubset}}{\SetCardinality{\Field}})$. Taking a union bound, the verifier rejects with probability at least $1 - O((m_{1}+m_{2}+\SSCVars)\SetCardinality{\SCSubset}/\SetCardinality{\Field}) = 1 - O((r+s+\log \SCStrength) \SetCardinality{\SCSubset}/\SetCardinality{\Field})$.
	
	\parhead{Zero knowledge}
	Perfect zero knowledge for this protocol is witnessed by the following simulator.
\begin{mdframed}
	{\small
		
		\begin{enumerate}[nolistsep]
			
			\item Draw a uniformly random polynomial $\Simulated{\StrongRandPoly} \in \PolynomialRingIndOneXY{\Field}{m_{2}}{\VariableX}{\SSCVars}{\VariableY}{\SetCardinality{\SCSubset}+2}{2\SetCardinality{\SCSubset}}$. 
			
			\item Run the $\SetCardinality{\SCSubset}^{k}$-strong ZK sumcheck simulator on input $(\Field, m_{1}+3m_{2}, \deg(f), \SCSubset, 0)$, and use it to answer queries to $\Proof_{0}$ throughout. In parallel, run three copies of the simulator for the \cite{BenSassonCFGRS16} sumcheck on input $(\Field, k, 2\SetCardinality{\SCSubset}, \SCSubset, \cdot)$, and use them to answer queries to $\Proof_{1}, \Proof_{2}, \Proof_{3}$ respectively. Recall that the behavior of each simulator does not depend on the claim being proven until after the first simulated message, so we can choose these later.
				
			\item Receive $\vec{x}, \vec{y} \in \Field^{r+3s}$ from $\Malicious{\Verifier}$. 
			
			\item Simulate the strong ZK sumcheck protocol on the claim ``$F(\vec{x},\vec{y}) = 0$''. The subsimulator will query $f$ at a single location $\vec{c} \in (\Field - \SCSubset)^{r+3s}$. Reply with the value $f(\vec{x},\vec{y},\vec{c})$, for $\vec{c} = (\vec{c}_{0}, \vec{c}_{1}, \vec{c}_{2}, \vec{c}_{3}) \in (\Field \setminus \SCSubset)^{r+3s}$. To compute this requires values $\LD{A}(\LD{\gamma}(\vec{c}_{i}))$ for $i \in \{1,2,3\}$; we substitute each of these with $\Simulated{h}^{i} \in \Field$ drawn uniformly at random (except: if $\vec{c}_{i} = \vec{c}_{j}$ for $i \neq j$ then fix $\Simulated{h}^{i} = \Simulated{h}^{j}$).
			
			\item For $i \in \{1,2,3\}$, simulate the \cite{BenSassonCFGRS16} sumcheck protocol on the claim ``$\sum_{\vec{\beta} \in \SCSubset^{\SSCVars}} \StrongRandPoly(\vec{\alpha},\vec{\beta}) = \Simulated{h}^{i}$''. Whenever the subsimulator queries $\StrongRandPoly$, answer using $\Simulated{\StrongRandPoly}$.
			
		\end{enumerate}
	}%
\end{mdframed}

The verifier's view consists of its interaction with $\Prover$ during the four sumchecks, and its queries to the oracle. The \secref{sec:strong-zk-sumcheck} zero knowledge sumcheck subsimulator guarantees that the queries to $\Proof_{0}$ and the first sumcheck are perfectly simulated given a single query to $f$ at the point $\vec{c} \in (\Field \setminus \SCSubset)^{r+3s}$ chosen by $\Malicious{\Verifier}$. Since $\LD{A}'(\vec{\VariableX}) = \sum_{\vec{\beta} \in \SCSubset^{\SSCVars}} \StrongRandPoly(\vec{\VariableX},\vec{\beta}) \in \PolynomialRingIndOne{\Field}{\SCVars}{\VariableX}{\SetCardinality{\SCSubset}+2}$, the evaluation of $\LD{A}$ at any $3$ points outside of $\SCSubset^{\SCVars}$ does not determine its value at any point in $\SCSubset^{\SCVars}$. In particular, this means that the values $h_{i}$ sent by the prover in the original protocol are independently uniformly random in $\Field$ (except if $\vec{c}_{i} = \vec{c}_{j}$ for $i \neq j$ as above). Thus the $\Simulated{h}^{i}$ are identically distributed to the $h_{i}$, and therefore both the prover message and the simulator's query are perfectly simulated.

The \cite{BenSassonCFGRS16} sumcheck simulator ensures that the view of the verifier in the rest of the sumchecks is perfectly simulated given $q_{\Malicious{\Verifier}}$ queries to $\StrongRandPoly$, where $q_{\Malicious{\Verifier}}$ is the number of queries the verifier makes across all $\Proof_{i}$, $i \in \{1,2,3\}$. Hence the number of `queries' the simulator makes to $\Simulated{\StrongRandPoly}$ is strictly less than $\SCStrength$ (because $\Malicious{\Verifier}$ is $\SCStrength$-query). By \corref{cor:partial-sum-indep-vars}, any set of strictly less than $\SCStrength$ queries to $\StrongRandPoly$ is independent of $\LD{A}'$, and so the answers are identically distributed to the answers to those queries if they were made to a uniformly random polynomial, which is the distribution of $\Simulated{\StrongRandPoly}$.

Clearly drawing a uniformly random polynomial in $\Simulated{\StrongRandPoly} \in \PolynomialRingIndOneXY{\Field}{m_{2}}{\VariableX}{\SSCVars}{\VariableY}{\SetCardinality{\SCSubset}+2}{2\SetCardinality{\SCSubset}}$ is not something we can do in polynomial time. However, we can instead use the algorithm of \corref{cor:efficient-poly-simulator} to draw $\StrongRandPoly$ (a simple modification allows us to handle different degrees in $\vec{\VariableX}, \vec{\VariableY}$, or we could simply set the degree bound for both to be $2\SetCardinality{\SCSubset}$; the proof still goes through). The running time of the simulator is then $\poly(m_{1}, m_{2}, k, \SetCardinality{\SCSubset}, \log \SetCardinality{\Field})$.

It remains to choose $\Field$ and $\SCSubset$. We set $\SetCardinality{\SCSubset} = \poly(r + s + \log b)$ and $\SetCardinality{\Field} = \poly(\SetCardinality{\SCSubset})$ large enough that the soundness error is $o(1)$. The running time of the verifier is then $\poly(\BitSize{B}, \log \SCStrength)$, as is the running time of the simulator. The proof length is $\Field^{O(m_{1}+m_{2}+k)} = 2^{O(r+s)} \cdot \poly(\SCStrength)$.
\end{proof}

\doclearpage
\section{Delegating sum-product computations}
\label{sec:sum-product-protocols}

We define \emph{sum-product circuits}, a type of computation involving alternations of
\begin{inparaenum}[(i)]
  \item summing polynomials over hypercubes, and
  \item combining polynomials via low-degree arithmetic circuits.
\end{inparaenum}
Computing the output of a sum-product circuit is (conjecturally) hard (indeed, we will show that it is $\PSPACE$-complete), but we show how to efficiently delegate such computations via an Interactive Proof.

We proceed in three steps. First, we provide intuition for why it is natural to consider sum-product alternations (\secref{sec:sum-product-intuition}). Then, we define sum-product \emph{formulas}, which are a special case (in a way that is analogous to how boolean formulas specialize boolean circuits) and show how to delegate their evaluation (\secref{sec:sum-product-formulas}). Finally, we define sum-product \emph{circuits} and show how to delegate their evaluation (\secref{sec:sum-product-circuits}).

In later sections, we additionally achieve zero knowledge via an Interactive PCP (\secref{sec:pzk-for-sum-products}), and explain how to `program' sum-product circuits so that:
their evaluation captures $\PSPACE$ (\secref{sec:zk-pspace}) or, more generally, low-depth circuit computations (\secref{sec:zk-gkr}); and
their satisfaction captures $\NEXP$ (\secref{sec:zk-nexp}).

\subsection{Intuition for definition}
\label{sec:sum-product-intuition}

We provide intuition for why it is natural to consider sum-product alternations. Let $\Field$ be a finite field, $H$ a subset of $\Field$, and $m$ a positive integer.%
\footnote{Throughout, we assume that $\SetCardinality{\Field}$ is at least a constant fraction larger than $\SetCardinality{H}$. In fact, we will typically take $\SetCardinality{\Field}$ to be larger than this (e.g., at least polynomial in $\SetCardinality{H}$) to achieve good soundness.}
The sumcheck protocol (\secref{sec:sumcheck-protocol}) supports checking claims of the form ``$a = \sum_{\vec{\beta} \in \SPSubset^{\SPVars}} P(\vec{\beta})$'' for a given field element $a \in \Field$ and low-degree $\SPVars$-variate polynomial $P$ over $\Field$, if the verifier can efficiently evaluate $P$ at any point (e.g., the verifier has a small arithmetic circuit for $P$, or the verifier has oracle access to $P$, or others).
\begin{center}
Can the verifier still check the claim even if $P$ is an expression involving other polynomials?
\end{center}
Suppose that $P(\vec{\VariableX})$ (allegedly) equals $C(\vec{\VariableX},P_{1}(\vec{\VariableX}),\dots,P_{t}(\vec{\VariableX}))$ for some low-degree $t$-variate `combiner' polynomial $C$ and low-degree $m$-variate polynomials $P_{1},\dots,P_{t}$, and suppose that the verifier has small arithmetic circuits for all these polynomials. In this case the verifier can still efficiently evaluate $P$ at any given point, and the sumcheck protocol directly applies. However, now suppose instead that each polynomial $P_{i}(\vec{\VariableX})$ \emph{itself} (allegedly) equals $\sum_{\vec{\gamma} \in \SPSubset^{\SPVars}} C(\vec{\VariableX},P_{i,1}(\vec{\VariableX},\vec{\gamma}),\dots,P_{i,t}(\vec{\VariableX},\vec{\gamma}))$ for some low-degree $2m$-variate polynomials $P_{i,1},\dots,P_{i,t}$. Now the sumcheck protocol \emph{does not} directly apply, due to the \emph{alternation} of sums and products. What to do?

\parhead{Sum-product expressions, and protocols for them}
More generally (and informally), we call $P \colon \Field^{m} \to \Field$ an $m$-variate \emph{sum-product expression} if
\begin{inparaenum}[(i)]
  \item $P$ is a low-degree (individual degree less than $\SetCardinality{H}$) arithmetic circuit, or
  \item $P(\vec{\VariableX})$ equals $\sum_{\vec{\beta} \in H^{m}} C(\vec{\VariableX}, \vec{\beta}, P_{1}(\vec{\VariableX},\vec{\beta}),\dots,P_{t}(\vec{\VariableX},\vec{\beta}))$ where $C$ is a low-degree `combiner' arithmetic circuit and $P_{1},\dots,P_{t}$ are $2m$-variate sum-product expressions.
\end{inparaenum}

By building on ideas of \cite{Shamir92,Shen92,GoldwasserKR15}, we can \emph{still} use the sumcheck protocol, now as a subroutine of a larger Interactive Proof, to verify claims of the form ``$a = P(\vec{\omega})$'' for a given $a \in \Field$, sum-product expression $P$, and $\vec{\omega} \in \Field^{m}$, as we now sketch --- and thereby handle sum-product alternations.

If $P$ is an arithmetic circuit, then the verifier can check the claim directly by evaluating $P$ at $\vec{\omega}$. Otherwise, proceed as follows. Define $\LD{P}$ to be the low-degree extension of $P$ (see \secref{sec:basic-notations}):
\begin{equation*}
\LD{P}(\VariableX)
= \sum_{\vec{\alpha} \in H^{m}}
    \Lagrange{\SPSubset^{n}}(\vec{\VariableX}, \vec{\alpha})
    \sum_{\vec{\beta} \in H^{m}}
      C(\vec{\alpha},\vec{\beta},P_{1}(\vec{\alpha},\vec{\beta}),\dots,P_{t}(\vec{\alpha},\vec{\beta}))
\enspace.
\end{equation*}
Recall that $\Lagrange{H^{m}}(\vec{\VariableX}, \vec{\VariableY})$ is the unique $m$-variate polynomial, of degree less than $\SetCardinality{H}$, such that, for all $(\vec{\alpha},\vec{\beta}) \in H^{m} \times H^{m}$, $\Lagrange{H^{m}}(\vec{\alpha},\vec{\beta})$ equals $1$ when $\vec{\alpha}=\vec{\beta}$ and equals $0$ otherwise.

The prover and verifier run the sumcheck protocol on the claim ``$a = \LD{P}(\vec{\omega})$'' and obtain a new claim
\begin{equation*}
\text{``}
a' =
\Lagrange{\SPSubset^{n}}(\vec{\omega}, \vec{r}_{2})
\cdot C(\vec{r}_{1},\vec{r}_{2}, P_{1}(\vec{r}_{1},\vec{r}_{2}),\dots,P_{t}(\vec{r}_{1},\vec{r}_{2}))
\text{''}
\end{equation*}
for some $a' \in \Field$ derived from the prover's messages and $\vec{r}_{1},\vec{r}_{2} \in \Field^{m}$ drawn uniformly at random by the verifier. The prover then sends $h_{1},\dots,h_{t}$ and the verifier checks that $a' = \Lagrange{\SPSubset^{n}}(\vec{\omega}, \vec{r}_{2}) \cdot C(h_{1},\dots,h_{t})$. (Note that this expression involves only low-degree polynomials.) The verifier then recursively checks, for $i=1,\dots,t$, that ``$a_{i} = P_{i}(\vec{r}_{1},\vec{r}_{2})$'', relying on the fact that each $P_{i}$ is itself a sum-product expression.

The reason for taking the low-degree extension $\LD{P}$ of $P$ is to prevent a degree blowup for intermediate claims, and is also used in the GKR protocol \cite{GoldwasserKR15} as well as Shen's protocol \cite{Shen92} (known as \emph{degree reduction} there). In particular, depending on the form of the combiner $C$, the degree of $P(\vec{X})$ can be somewhat larger than that of the $P_{i}$ subexpressions. Even a factor $2$ increase in the degree would, after $k$ rounds, lead to a factor $2^{k}$ increase overall, which for modest $k$ would make the communication complexity of the sumcheck protocol superpolynomial. The degree reduction step ensures that the degrees of the intermediate claims do not increase.

\parhead{Towards sum-product formulas}
The above informal discussion motivates the formulation of tree-like computations that combine values of previous hypercube sums by way of functions of bounded degree --- we call these \emph{sum-product formulas} (in analogy to boolean formulas that are also tree-like computations). Our definition also features crucial degrees of freedom, which make `programming' these formulas more efficient, that we now discuss.

First, we allow each internal vertex $v$ in the tree to be labeled with a potentially different combiner arithmetic circuit $\SPPoly[v]$ of small (total) degree. An input $\SPInput$ to the sum-product formula then consists of labeling each leaf vertex $v$ with a polynomial $\SPLeaf[v]$, potentially represented as an arithmetic circuit, of small (individual) degree, and the edges in the tree determine how to `evaluate' a vertex, as follows. The \emph{value} of a vertex $v$ on input $\SPInput$ equals the circuit $\SPLeaf[v]$ if $v$ is a leaf vertex, or equals $\sum_{\vec{\beta} \in \SPSubset^{\SPVars}}
\SPPoly[v]
\big(
\SPValueL{\SPInput}{u_{1}}
(\vec{\VariableX},\vec{\beta})
\dots,
\SPValueL{\SPInput}{u_{\aodeg}}
(\vec{\VariableX},\vec{\beta})
\big)$
if $v$ is an internal vertex, where $u_{1}, \ldots, u_{\aodeg}$ are the children of $v$. The value of the formula on $\SPInput$ is the value of the root on $\SPInput$.

Second, we allow flexible `arity' in the sums: each edge $e$ is labeled with finite sets of positive integers $\SPFreeProjection[e]$ and $\SPSumProjection[e]$ that determine which `free variables' and `summation variables' are passed on to the child corresponding to $e$. In other words, now the recursion looks like
$
\sum_{\vec{\beta} \in \SPSubset^{\SPVars[v]}}
\SPPoly[v]
\big(
\SPValueL{\SPInput}{u_{1}}
(\Restrict{\vec{\VariableX}}{\SPFreeProjection[e_{1}]},
\Restrict{\vec{\beta}}{\SPSumProjection[e_{1}]}),
\dots,
\SPValueL{\SPInput}{u_{\aodeg}}
(\Restrict{\vec{\VariableX}}{\SPFreeProjection[e_{\aodeg}]},
\Restrict{\vec{\beta}}{\SPSumProjection[e_{\aodeg}]})
\big)$
where $e_{1} = (v, u_{1}), \dots, e_{\aodeg} = (v, u_{\aodeg})$ and $\SPVars[v] \DefineEqual \max (\SPSumProjection[e_{1}] \cup \cdots \cup \SPSumProjection[e_{\TreeDegree[\Tree]{v}}])$.

In \secref{sec:sum-product-formulas} we provide the formal definition of sum-product formulas, and also describe how to outsource computations about them. We provide this only as a simpler stepping stone towards the next definition.

\parhead{Sum-product circuits: re-using sub-computations}
A boolean formula is limited in that it cannot re-use sub-computations; a sum-product formula is similarly limited. Thus, in analogy to boolean circuits, we consider \emph{sum-product circuits}, in which sub-computations can be re-used according to an underlying directed acyclic graph (in fact, we will need a multi-graph), rather than a tree. Of course, one can always reduce sum-product circuits to sum-product formulas by `opening up' the graph into a tree --- but in the worst case this incurs an exponential blowup in the resulting tree. We use a standard trick to modify the protocol for sum-product formulas so to support merging multiple sub-claim computations at a vertex into one claim (regardless of the in-degree of the vertex), which avoids this explosion. This is also a necessary step in GKR's protocol \cite{GoldwasserKR15} (though we implement it differently for compatibility with our zero-knowledge protocols).

In \secref{sec:sum-product-circuits} we provide a formal definition of sum-product circuits, and then describe how to extend the ideas discussed so far to also support outsourcing computations about sum-product circuits. In the rest of the paper we only use (and must use) sum-product circuits, as sum-product formulas are not expressive enough for our purposes.

\subsection{Sum-product formulas}
\label{sec:sum-product-formulas}

The purpose of this section is to
\begin{inparaenum}[(i)]
  \item introduce \emph{sum-product formulas}, and
  \item give proof systems for two computational problems about these, \emph{evaluation} and \emph{satisfaction}.
\end{inparaenum}

\subsubsection{Formal definition}
\label{sec:sum-product-formulas-definition}

As with a boolean formula, the `topology' of a a sum-product formula is a \emph{tree}. A (rooted) tree $\Tree=(\VertexSet,\EdgeSet)$ is an acyclic connected graph in which edges are directed away from a distinguished vertex, known as the root of $\Tree$ and denoted $\Root[\Tree]$; the sinks of the graph are known as the leaves of $\Tree$ while all other vertices are known as internal vertices. The \emph{depth} of a vertex $v$, denoted $\TreeDepth[\Tree]{v}$, is the number of edges on the path from $\Root[\Tree]$ to $v$ (thus $\Root[\Tree]$  has depth $0$). The depth of $\Tree$, denoted $\TreeDepth[]{\Tree}$, is the maximum depth of any vertex $v$ in $\VertexSet$. The width of $\Tree$, denoted $\Width{\Tree}$, is the maximum number of vertices at any depth: $\max_{i=1}^{\TreeDepth[]{\Tree}} \SetCardinality{\{ v \in \VertexSet : \TreeDepth[\Tree]{v} = i \}}$. The \emph{out-degree} of a vertex $v$ is denoted $\TreeDegree[\Tree]{v}$ and equals the number of children of $v$; the \emph{in-degree} is $1$ for all vertices except the root.

As outlined in \secref{sec:sum-product-intuition}, we eventually consider trees in which each internal vertex specifies a function that is recursively defined in terms of its children's functions. The number of inputs to these functions varies from vertex to vertex, and we specify the \emph{arity} of these functions via certain edge labels. Namely, each edge $e = (u, v)$ is labeled by two ``projections'' $\SPFreeProjection[e]$ and $\SPSumProjection[e]$ that, respectively, specify which free variables of $u$ (the $\vec{\VariableX}$ part in \eqnref{eqn:vertex-value-simplified} below) and summation variables of $u$ (the $\vec{\beta}$ part in \eqnref{eqn:vertex-value-simplified} below) are passed on to $v$. In order for these projections to yield a well-defined notion of arity, they must satisfy certain consistency properties, and this motivates the following definition.

\begin{definition}
\label{def:ari-tree}
A tuple $\Tree = (\VertexSet, \EdgeSet, \SPFreeProjection, \SPSumProjection)$ is an \defemph{ari-tree} if $(\VertexSet, \EdgeSet)$ is a tree and both $\SPFreeProjection$ and $\SPSumProjection$ label every edge $e$ in $\EdgeSet$ with finite sets of positive integers $\SPFreeProjection[e]$ and $\SPSumProjection[e]$ that satisfy the following property. For every vertex $v$ in $\VertexSet$, there exists a (unique) non-negative integer $\SPArity{v}$ such that:
\begin{inparaenum}[(1)]
  \item if $v$ is the root then $\SPArity{v} = 0$, otherwise $\SPArity{v} = \SetCardinality{\SPFreeProjection[e]} + \SetCardinality{\SPSumProjection[e]}$ where $e$ is $v$'s (unique) incoming edge;
  \item $\SPFreeProjection[e_{1}], \dots, \SPFreeProjection[e_{\aodeg}] \subseteq \{1, \dots, \SPArity{v}\}$, where $e_{1}, \dots, e_{\aodeg}$ are $v$'s outgoing edges.
\end{inparaenum}
\end{definition}

For convenience, we denote by $\SPMaxArity(\Tree)$ the maximum of $\SPArity{v}$ across all vertices $v$ in the vertex set $\VertexSet$ of $\Tree$. Moreover, for every vertex $v$, we define $\SPVars[v] \DefineEqual \max (\SPSumProjection[e_{1}] \cup \cdots \cup \SPSumProjection[e_{\aodeg}])$ so that $\SPSumProjection[e_{1}],\dots,\SPSumProjection[e_{\aodeg}] \subseteq \{1, \dots, \SPVars[v]\}$.

We are now ready to define a sum-product formula $\SPFormula$, an input $\SPInput$ for $\SPFormula$, and how to evaluate $\SPFormula$ on $\SPInput$.

\begin{definition}
\label{def:sum-product-formula}
A \defemph{sum-product formula} is a tuple $\SPFormula = \SPFormulaTuple$ where:
$\Field$ is a finite field,
$\SPSubset$ is a subset of $\Field$ (represented as a list of field elements), $\InternalVertexSetDegree,\SPLeafDegree$ are positive integers (represented in unary) with $\SPLeafDegree \geq \SetCardinality{\SPSubset}$,
$\Tree = (\VertexSet,\EdgeSet,\SPFreeProjection,\SPSumProjection)$ is an ari-tree, and
$\SPPoly{}$ labels each internal vertex $v$ of $\Tree$ with an arithmetic circuit $\SPPoly[v](\vec{\VariableX},\vec{\VariableY},\vec{\VariableZ}) \colon \Field^{\SPArity{v}} \times  \Field^{\SPVars[v]} \times \Field^{\TreeDegree[\Tree]{v}} \to \Field$ of \underline{total} degree at most $\InternalVertexSetDegree$.
An \defemph{input} $\SPInput$ for $\SPFormula$ labels each leaf vertex $v$ of $\Tree$ with a polynomial $\SPLeaf[v] \colon \Field^{\SPArity{v}} \to \Field$ of \underline{individual} degree at most $\SPLeafDegree$.
The \defemph{value} of $\SPFormula$ on an input $\SPInput$ is denoted $\SPValueL{\SPInput}{\SPFormula}$ and equals $\SPValueL{\SPInput}{\Root[\Tree]}$, which we define below.

\begin{mdframed}
The value of a vertex $v$ of $\Tree$ on an input $\SPInput$ is denoted $\SPValueL{\SPInput}{v}$ and is recursively defined as follows. If $v$ is a leaf vertex, then $\SPValueL{\SPInput}{v}$ equals the polynomial $\SPLeaf[v]$. If $v$ is an internal vertex, then $\SPValueL{\SPInput}{v}$ is the $\SPArity{v}$-variate polynomial over $\Field$ defined by the following expression:
\begin{equation}
\label{eqn:vertex-value-simplified}
\sum_{\vec{\beta} \in \SPSubset^{\SPVars[v]}}
\SPPoly[v]
\big(
\vec{\VariableX},
\vec{\beta},
\SPValueL{\SPInput}{u_{1}}
(\Restrict{\vec{\VariableX}}{\SPFreeProjection[e_{1}]},
\Restrict{\vec{\beta}}{\SPSumProjection[e_{1}]}),
\dots,
\SPValueL{\SPInput}{u_{\aodeg}}
(\Restrict{\vec{\VariableX}}{\SPFreeProjection[e_{\aodeg}]},
\Restrict{\vec{\beta}}{\SPSumProjection[e_{\aodeg}]})
\big)
\enspace,
\end{equation}
where $\aodeg \DefineEqual \TreeDegree[\Tree]{v}$, and $e_{1} = (v, u_{1}), \dots, e_{\aodeg} = (v, u_{\aodeg})$ are the outgoing edges of $v$. In particular, $\SPValueL{\SPInput}{\SPFormula} = \SPValueL{\SPInput}{\Root[\Tree]}$ is a constant in $\Field$.
\end{mdframed}
\end{definition}

Given a sum-product formula we can ask two types of computational problems:
  (\emph{evaluation}) does a given input lead to a given output?
  (\emph{satisfaction}) does there exist an input that leads to a given output?
We now define each of these.

\begin{definition}[SPFE problem]
\label{def:formula-evaluation}
The \defemph{sum-product formula evaluation problem} is the following: given a sum-product formula $\SPFormula$, value $\SPOutput$, and input $\SPInput$ (given as a mapping from each leaf vertex $v$ of $\SPFormula$'s ari-tree $\Tree$ to an arithmetic circuit computing the polynomial $\SPLeaf[v]$), determine if $\SPValueL{\SPInput}{\SPFormula}=\SPOutput$. This problem induces the language
\begin{equation*}
\SPFELanguage
\DefineEqual
\big\{
(\SPFormula, \SPOutput, \SPInput)
\text{ s.t. }
\SPValueL{\SPInput}{\SPFormula} = \SPOutput
\big\}
\enspace.
\end{equation*}
(When $\SPInput$ is given as above, $\MaxSpace{\SPInput}$ denotes the maximum space required to evaluate any circuit in $\SPInput$.)
\end{definition}

\begin{definition}[SPFS problem]
\label{def:formula-satisfaction}
The \defemph{sum-product formula satisfaction problem} is the following: given a sum-product formula $\SPFormula$, partial mapping of leaf vertices to arithmetic circuits $\SPInput$, and value $\SPOutput$, determine if there exists a mapping $\SPAuxInput$ from the leaf vertices not in the domain of $\SPInput$ to polynomials s.t.\ $\SPValueL{\SPInput ,\SPAuxInput}{\SPFormula}=\SPOutput$. This problem induces the relation
\begin{equation*}
\SPFSRelation
\DefineEqual
\big\{
\big((\SPFormula,\SPOutput,\SPInput),\SPAuxInput\big)
\text{ s.t. }
\SPValueL{\SPInput,\SPAuxInput}{\SPFormula} = \SPOutput
\big\}
\enspace.
\end{equation*}
(We refer to $\SPInput$ as the explicit input and $\SPAuxInput$ as the auxiliary input.)
\end{definition}

\subsubsection{Delegating sum-product formula evaluation problems}
\label{sec:sum-product-formulas-evaluation}

We give an Interactive Proof to delegate sum-product formula \emph{evaluation} problems.

\begin{theorem}[IP for SPFE]
\label{thm:ip-for-SPFE}
There exists a public-coin Interactive Proof for the language $\SPFELanguage$. In more detail:
\begin{equation*}
\SPFELanguage \in
\mathbf{AM}
\left[
\begin{tabular}{rl}
soundness error: & $O(\InternalVertexSetDegree\SPLeafDegree \cdot \SPMaxArity(\Tree) \cdot \SetCardinality{\VertexSet(\Tree)}/\SetCardinality{\Field})$ \\
round complexity: & $O( \TreeDepth[]{\Tree} \cdot \SPMaxArity(\Tree))$ \\[1mm]
prover time: & $\poly(\BitSize{\SPFormula}, \BitSize{\SPInput}, \SetCardinality{\SPSubset}^{\SPMaxArity(\Tree)})$ \\
verifier time: & $\poly(\BitSize{\SPFormula}) + O(\BitSize{\SPInput})$ \\[1mm]
verifier space: & $O(\SPMaxArity(\Tree) \cdot \Width{\Tree} \cdot \log \SetCardinality{\Field} + \log \BitSize{\SPFormula} + \MaxSpace{\SPInput})$
\end{tabular}
\right]
\enspace.
\end{equation*}
\end{theorem}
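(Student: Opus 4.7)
The plan is to design a recursive, public-coin protocol that mirrors the tree structure of $\Tree$. At the root, the verifier's initial claim is ``$\SPValueL{\SPInput}{\Root[\Tree]} = \SPOutput$''. At each internal vertex $v$ the protocol reduces a claim of the form ``$\LDSPValueL{\SPInput}{v}(\vec{\omega}) = a$'' (where $\LDSPValueL{\SPInput}{v}$ is the individual-degree-$(\SetCardinality{\SPSubset}-1)$ extension of the restriction of $\SPValueL{\SPInput}{v}$ to $\SPSubset^{\SPArity{v}}$) into one claim of the same form about each child of $v$. This bottoms out at leaves, where the verifier evaluates $\SPLeaf[v]$ directly from the arithmetic circuit provided in $\SPInput$. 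Using the low-degree extension at each vertex is essential to prevent the individual degrees (and hence the sumcheck message length) from blowing up geometrically with depth.

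The key step is the per-vertex reduction. Expanding via Lagrange interpolation,
\begin{equation*}
\LDSPValueL{\SPInput}{v}(\vec{\omega})
= \sum_{\vec{\alpha} \in \SPSubset^{\SPArity{v}}} \Lagrange{\SPSubset^{\SPArity{v}}}(\vec{\omega},\vec{\alpha}) \sum_{\vec{\beta} \in \SPSubset^{\SPVars[v]}} \SPPoly[v]\big(\vec{\alpha},\vec{\beta}, \LDSPValueL{\SPInput}{u_{1}}(\Restrict{\vec{\alpha}}{\SPFreeProjection[e_{1}]},\Restrict{\vec{\beta}}{\SPSumProjection[e_{1}]}), \ldots\big)\,,
\end{equation*}
which is a summation over $\SPSubset^{\SPArity{v}+\SPVars[v]}$ of a polynomial whose individual degree in each summation variable is at most $\InternalVertexSetDegree \cdot \SPLeafDegree$ (since $\SPPoly[v]$ has total degree $\InternalVertexSetDegree$ and each $\LDSPValueL{\SPInput}{u_{i}}$ has individual degree at most $\SPLeafDegree$, using $\SPLeafDegree \geq \SetCardinality{\SPSubset}$). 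The prover and verifier run the standard \cite{LundFKN92} sumcheck on this claim, reducing it to a single claim about the summand evaluated at a random $(\vec{r}_{1},\vec{r}_{2}) \in \Field^{\SPArity{v}+\SPVars[v]}$. The prover then sends alleged values $h_{1},\ldots,h_{\TreeDegree[\Tree]{v}}$ for the children's polynomials; the verifier checks locally that $\Lagrange{\SPSubset^{\SPArity{v}}}(\vec{\omega},\vec{r}_{1}) \cdot \SPPoly[v](\vec{r}_{1},\vec{r}_{2},h_{1},\ldots)$ matches the sumcheck output, and then recurses on each claim ``$\LDSPValueL{\SPInput}{u_{i}}(\Restrict{\vec{r}_{1}}{\SPFreeProjection[e_{i}]},\Restrict{\vec{r}_{2}}{\SPSumProjection[e_{i}]}) = h_{i}$''.

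Completeness is immediate from the completeness of sumcheck and the fact that $\LDSPValueL{\SPInput}{\Root[\Tree]} = \SPValueL{\SPInput}{\Root[\Tree]}$ (the root has no free variables). For soundness, each internal vertex contributes error at most $(\SPArity{v}+\SPVars[v]) \cdot \InternalVertexSetDegree \SPLeafDegree / \SetCardinality{\Field}$ from its sumcheck. Since the tree has $\SetCardinality{\VertexSet(\Tree)}$ vertices and $\SPArity{v}+\SPVars[v] \leq 2\SPMaxArity(\Tree)$, a union bound over all sumchecks encountered on the (tree-shaped) recursion yields total error $O(\InternalVertexSetDegree \SPLeafDegree \cdot \SPMaxArity(\Tree) \cdot \SetCardinality{\VertexSet(\Tree)}/\SetCardinality{\Field})$.

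For efficiency: the protocol has $O(\SPArity{v}+\SPVars[v])$ rounds per vertex on any root-to-leaf path, giving $O(\TreeDepth[]{\Tree} \cdot \SPMaxArity(\Tree))$ rounds when processed depth-first. The prover at each vertex must compute partial sums of size $\SetCardinality{\SPSubset}^{\SPArity{v}+\SPVars[v]}$, giving overall prover time $\poly(\BitSize{\SPFormula},\BitSize{\SPInput},\SetCardinality{\SPSubset}^{\SPMaxArity(\Tree)})$. The verifier's per-vertex work is $\poly(\SPMaxArity(\Tree),\InternalVertexSetDegree,\SPLeafDegree,\log\SetCardinality{\Field})$ plus evaluating Lagrange and $\SPPoly[v]$; at leaves it additionally evaluates a circuit from $\SPInput$, contributing $O(\BitSize{\SPInput})$ in total. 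Processing the tree depth-first, the verifier maintains a stack of pending claims (one per unresolved sibling) whose total size is $O(\SPMaxArity(\Tree) \cdot \Width{\Tree} \cdot \log\SetCardinality{\Field})$, plus leaf-evaluation space $\MaxSpace{\SPInput}$ and $O(\log\BitSize{\SPFormula})$ for bookkeeping. The main obstacle is the careful accounting of the projections $\SPFreeProjection,\SPSumProjection$ so that the sumcheck variables are properly threaded into the child subclaims without inflating the arity; beyond this, everything is a standard recursive application of sumcheck.
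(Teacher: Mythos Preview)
Your proposal is essentially the paper's proof: the same Lagrange-based rewriting of $\LDSPValueL{\SPInput}{v}$, the same per-vertex sumcheck reduction, the same ``prover sends $h_{i}$, verifier checks locally and recurses'' step, and the same union-bound soundness argument.

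One small slip in the efficiency accounting: depth-first processing does \emph{not} give $O(\TreeDepth[]{\Tree} \cdot \SPMaxArity(\Tree))$ rounds---a sequential depth-first traversal incurs $O(\SPMaxArity(\Tree))$ rounds \emph{per internal vertex}, totalling $O(\SetCardinality{\VertexSet(\Tree)} \cdot \SPMaxArity(\Tree))$. To achieve the stated round bound you must run all sumchecks at the same depth \emph{in parallel} (layer by layer), which is exactly what the paper does; this is also what yields the $\Width{\Tree}$ factor in the verifier's space bound (the verifier stores one label per vertex in the current layer). Your degree bound on the summand is also slightly understated---the Lagrange factor contributes an additional $\SetCardinality{\SPSubset}-1$ in the $\vec{\alpha}$ variables---but this is absorbed in the constant since $\SPLeafDegree \geq \SetCardinality{\SPSubset}$.
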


Before describing the Interactive Proof system, we define for every vertex $v$ in the ari-tree (of a sum-product formula) a function $\LDSPValueL{\SPInput}{v}$ based on the function $\SPValueL{\SPInput}{v}$, as follows. If $v$ is a leaf vertex, then $\LDSPValueL{\SPInput}{v}$ equals the polynomial $\SPLeaf[v]$. If instead $v$ is an internal vertex, then $\LDSPValueL{\SPInput}{v}$ is the low-degree extension of the evaluation of $\SPValueL{\SPInput}{v}$ on $\SPSubset^{\SPArity{v}}$:
\begin{equation}
\label{eq:lde-of-v}
\LDSPValueL{\SPInput}{v}(\vec{\VariableX})
\DefineEqual
\sum_{\vec{\alpha} \in \SPSubset^{\SPArity{v}}}
  \Lagrange{\SPSubset^{\SPArity{v}}}(\vec{\VariableX}, \vec{\alpha})
  \sum_{\vec{\beta} \in \SPSubset^{\SPVars[v]}}
    \SPPoly[v]\big(
	  \vec{\alpha}, \vec{\beta},
      \SPValueL{\SPInput}{u_{1}}(\Restrict{\vec{\alpha}}{\SPFreeProjection[e_{1}]},
       \Restrict{\vec{\beta}}{\SPSumProjection[e_{1}]}),
      \dots,
      \SPValueL{\SPInput}{u_{\aodeg}}(\Restrict{\vec{\alpha}}{\SPFreeProjection[e_{\aodeg}]},
      \Restrict{\vec{\beta}}{\SPSumProjection[e_{\aodeg}]})
    \big) 
\enspace.
\end{equation}
Since $\LDSPValueL{\SPInput}{v}$ agrees with $\SPValueL{\SPInput}{v}$ on $\SPSubset^{\SPArity{v}}$, we can equivalently define $\LDSPValueL{\SPInput}{v}$ in terms of the $\LDSPValueL{\SPInput}{u_{j}}$ rather than the $\SPValueL{\SPInput}{u_{j}}$:
\begin{align*}
\LDSPValueL{\SPInput}{v}(\vec{\VariableX})
\DefineEqual
& \sum_{\vec{\alpha} \in \SPSubset^{\SPArity{v}}}
\Lagrange{\SPSubset^{\SPArity{v}}}(\vec{\VariableX}, \vec{\alpha})
\sum_{\vec{\beta} \in \SPSubset^{\SPVars[v]}}
\SPPoly[v]\big(
\vec{\alpha}, \vec{\beta},
\LDSPValueL{\SPInput}{u_{1}}(\Restrict{\vec{\alpha}}{\SPFreeProjection[e_{1}]},
\Restrict{\vec{\beta}}{\SPSumProjection[e_{1}]}),
\dots,
\LDSPValueL{\SPInput}{u_{\aodeg}}(\Restrict{\vec{\alpha}}{\SPFreeProjection[e_{\aodeg}]},
\Restrict{\vec{\beta}}{\SPSumProjection[e_{\aodeg}]})
\big) \\
=&\sum_{\vec{\alpha} \in \SPSubset^{\SPArity{v}}}
\sum_{\vec{\beta} \in \SPSubset^{\SPVars[v]}}
\Lagrange{\SPSubset^{\SPArity{v}}}(\vec{\VariableX}, \vec{\alpha})
\cdot
\SPPoly[v]\big(
\vec{\alpha}, \vec{\beta},
\LDSPValueL{\SPInput}{u_{1}}(\Restrict{\vec{\alpha}}{\SPFreeProjection[e_{1}]},
\Restrict{\vec{\beta}}{\SPSumProjection[e_{1}]}),
\dots,
\LDSPValueL{\SPInput}{u_{\aodeg}}(\Restrict{\vec{\alpha}}{\SPFreeProjection[e_{\aodeg}]},
\Restrict{\vec{\beta}}{\SPSumProjection[e_{\aodeg}]})
\big)
\enspace.
\end{align*}
Note that the summand in the last line above is a polynomial, and its individual degree in $(\vec{\alpha},\vec{\beta})$ is at most $\SetCardinality{\SPSubset} + \InternalVertexSetDegree \cdot \max\{\SPLeafDegree ,\SetCardinality{\SPSubset}\} \leq 2 \InternalVertexSetDegree \SPLeafDegree$. Indeed, $\LDSPValueL{\SPInput}{u_{i}}(\Restrict{\vec{\alpha}}{\SPFreeProjection[e_{i}]},
\Restrict{\vec{\beta}}{\SPSumProjection[e_{i}]})$ has individual degree at most $\SPLeafDegree$ if $u_{i}$ is a leaf, and individual degree at most $\SetCardinality{\SPSubset}$ otherwise (as $\LDSPValueL{\SPInput}{v}$ has individual degree at most $\SetCardinality{\SPSubset}$ in $\vec{\VariableX}$); $\SPPoly[v]$ computes a polynomial of total degree at most $\InternalVertexSetDegree$; and $\Lagrange{\SPSubset^{\SPArity{v}}}(\vec{\VariableX}, \vec{\alpha})$ has individual degree at most $\SetCardinality{\SPSubset}$ in $\vec{\alpha}$. We use this degree bound below.

\begin{proof}
The prover and verifier receive a SPFE instance $(\SPFormula, \SPOutput, \SPInput)$ as input. They both associate, for each vertex $v$ of its ari-tree $\Tree$, a label $(\vec{\gamma}_{v}, a_{v})$ with $\vec{\gamma}_{v} \in \Field^{\SPArity{v}}$ and $a_{v} \in \Field$; for the root, this label equals $(\EmptyVector, \SPOutput)$, while for all other vertices this label is defined during the protocol. The prover and verifier then interact as follows.
\begin{enumerate}

  \item For every internal vertex $v$ of $\Tree$ taken in (any) topological order, letting $\aodeg \DefineEqual \TreeDegree[\Tree]{v}$:
  \begin{enumerate}[nolistsep]
    
    \item The prover and verifier invoke the sumcheck protocol \cite{LundFKN92,Shamir92} on the claim ``$\LDSPValueL{\SPInput}{v}(\vec{\gamma}_{v}) = a_{v}$''. By the end of this subprotocol, the verifier has chosen $\vec{c}_{1} \in \Field^{\SPArity{v}}$ and $\vec{c}_{2} \in \Field^{\SPVars[v]}$ uniformly at random, and has derived from the prover's messages a value $b \in \Field$ that allegedly satisfies the following equality:
\begin{equation}
\label{eqn:SPFE-claim}
b \DefineEqual \Lagrange{\SPSubset^{\SPArity{v}}}(\vec{\gamma}_{v}, \vec{c}_{1})
\cdot
\SPPoly[v]\big(
\vec{c}_{1}, \vec{c}_{2},
\LDSPValueL{\SPInput}{u_{1}}(\Restrict{\vec{c}_{1}}{\SPFreeProjection[e_{1}]},
\Restrict{\vec{c}_{2}}{\SPSumProjection[e_{1}]}),
\dots,
\LDSPValueL{\SPInput}{u_{\aodeg}}(\Restrict{\vec{c}_{1}}{\SPFreeProjection[e_{\aodeg}]},
\Restrict{\vec{c}_{2}}{\SPSumProjection[e_{\aodeg}]})
\big) \enspace,
\end{equation}
where $e_{1} = (v, u_{1}), \dots, e_{\aodeg} = (v, u_{\aodeg})$ are $v$'s outgoing edges.

    \item The prover sends $h_{1} \DefineEqual \LDSPValueL{\SPInput}{u_{1}}(\Restrict{\vec{c}_{1}}{\SPFreeProjection[e_{1}]},
    \Restrict{\vec{c}_{2}}{\SPSumProjection[e_{1}]}), \dots, h_{\aodeg} \DefineEqual \LDSPValueL{\SPInput}{u_{\aodeg}}(\Restrict{\vec{c}_{1}}{\SPFreeProjection[e_{\aodeg}]},
    \Restrict{\vec{c}_{2}}{\SPSumProjection[e_{\aodeg}]}) \in \Field$, and the verifier checks that
\begin{equation}
\label{eqn:SPFE-check}
b = \Lagrange{\SPSubset^{\SPArity{v}}}(\vec{\gamma}_{v}, \vec{c}_{1})
\cdot
\SPPoly[v]\big(
\vec{c}_{1}, \vec{c}_{2},
h_{1},
\dots,
h_{\aodeg}
\big) \enspace.
\end{equation}

   \item For $j = 1,\dots, \aodeg$, the verifier sets $(\vec{\gamma}_{u_{j}},a_{u_{j}}) \DefineEqual ((\Restrict{\vec{c}_{1}}{\SPFreeProjection[e_{j}]},
   \Restrict{\vec{c}_{2}}{\SPSumProjection[e_{j}]}), h_{j})$.

  \end{enumerate}

  \item For every leaf vertex $v$ of $\Tree$, the verifier checks that $\LDSPValueL{\SPInput}{v}(\vec{\gamma}_{v}) = a_{v}$, i.e., that $\SPLeaf[v](\vec{\gamma}_{v}) = a_{v}$.

\end{enumerate}
While the above description considers sequential invocations of the sumcheck protocol, these can be run in parallel in $\TreeDepth[]{\Tree}$ phases: first the root (which has depth $0$), then all vertices of depth $1$, then all vertices of depth $2$, and so on until all vertices of depth $\TreeDepth[]{\Tree}-1$. Each such phase requires $O(\SPMaxArity(\Tree))$ rounds, so that the number of rounds is now $O(\TreeDepth[]{\Tree} \cdot \SPMaxArity(\Tree))$, as claimed. The claimed running times for the prover and verifier follow immediately from the above description. The claimed space bound follows from the observation that in phase $i$, if we also check the leaf vertices at depth $i$ during this phase, then the verifier may discard $(\vec{\gamma}_{v}, a_{v})$ for all vertices $v$ with $\TreeDepth[\Tree]{v} < i-1$; and that the value of $\Lagrange{\SPSubset^{\SPArity{v}}}$ can be computed in space $O(\log(\SPArity{v}) + \log \SetCardinality{\Field})$. We are left to argue the claimed soundness error.

If for some internal vertex $v$ it holds that $\SPValueL{\SPInput}{v}(\gamma_{v}) \neq a_{v}$, then the soundness property of the sumcheck protocol implies that either the verifier rejects or \equnref{eqn:SPFE-claim} holds with probability at most $2\InternalVertexSetDegree \SPLeafDegree \cdot \SPArity{v}/\SetCardinality{\Field}$. In this latter case, either \equnref{eqn:SPFE-claim} fails to hold and the verifier rejects, or there exists $j \in \{1, \dots, \TreeDegree[\Tree]{v}\}$ such that $e_j \neq \LDSPValueL{\SPInput}{u_{j}}(\vec{c}_{1},\vec{c}_{2})$, which means that there exists a vertex $u$ in the next layer (in fact, $u = u_{j}$ suffices) for which $\LDSPValueL{\SPInput}{u}(\vec{\gamma}_{u}) \neq a_{u}$. If $u$ is a leaf vertex then the verifier will reject when considering $u$; otherwise we repeat the above argument. Taking a union bound over the internal vertices of $\Tree$ yields the claimed soundness error.
\end{proof}

\subsubsection{Delegating sum-product formula satisfaction problems}
\label{sec:sum-product-formulas-satisfaction}

We give an Interactive PCP to delegate sum-product formula \emph{satisfaction} problems, via a simple extension of the Interactive Proof for \emph{evaluation} problems in the previous section. Similarly to \cite{Shamir92,Shen92,GoldwasserKR15}, the verifier only needs to access the formula's input at a few locations, at the end of the protocol; thus the prover can simply send the input as a proof oracle, and the verifier can query it (via suitable low-degree testing and self-correction of polynomials).

\begin{theorem}[IPCP for SPFS]
\label{thm:ipcp-for-SPFS}
There exists a (public-coin and non-adaptive) Interactive PCP for the relation $\SPFSRelation$. In more detail:
\begin{equation*}
\SPFSRelation \in
\mathbf{IPCP}
\left[
\begin{tabular}{rl}
soundness error: & $O(\InternalVertexSetDegree\SPLeafDegree \cdot \SPMaxArity(\Tree) \cdot \SetCardinality{\VertexSet(\Tree)}/\SetCardinality{\Field})$ \\
round complexity: & $O( \TreeDepth[]{\Tree} \cdot \SPMaxArity(\Tree))$ \\[1mm]
proof length: & $O(\SetCardinality{\VertexSet(\Tree)} \cdot \SetCardinality{\Field}^{\SPMaxArity(\Tree)})$ \\
query complexity: & $\SetCardinality{\VertexSet(\Tree)} \cdot \poly(\log \SetCardinality{\Field}, \SPMaxArity(\Tree), \SPLeafDegree)$ \\[1mm]
prover time: & $\poly(\BitSize{\SPFormula}, \BitSize{\SPInput}, \BitSize{\SPAuxInput}, \SetCardinality{\SPSubset}^{\SPMaxArity(\Tree)})$ \\
verifier time: & $\poly(\BitSize{\SPFormula}, \BitSize{\SPInput})$ \\[1mm]
verifier space: & $O(\SPMaxArity(\Tree) \cdot \Width{\Tree} \cdot \log \SetCardinality{\Field} + \log \BitSize{\SPFormula} + \MaxSpace{\SPInput})$
\end{tabular}
\right]
\enspace.
\end{equation*}
\end{theorem}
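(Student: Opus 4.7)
The plan is to obtain the IPCP for $\SPFSRelation$ by `compiling' the Interactive Proof for $\SPFELanguage$ of \thmref{thm:ip-for-SPFE}: we let the prover send the auxiliary input $\SPAuxInput$ as a PCP oracle in the first round, and then have the prover and verifier run the SPFE protocol on the combined input $(\SPFormula, \SPOutput, \SPInput \cup \SPAuxInput)$. The only step of the SPFE protocol that requires direct access to $\SPAuxInput$ is the very last one, in which the verifier checks, for each leaf vertex $v$, that $\SPLeaf[v](\vec{\gamma}_{v}) = a_{v}$. For the leaves provided explicitly in $\SPInput$ the verifier evaluates the circuit directly; for the leaves belonging to $\SPAuxInput$ the verifier reads the corresponding values from the PCP oracle.

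Concretely, in the oracle round the prover sends, for each leaf vertex $v$ not covered by $\SPInput$, the evaluation table of a polynomial $\SPAuxLeaf[v] \colon \Field^{\SPArity{v}} \to \Field$ of individual degree at most $\SPLeafDegree$; this contributes the claimed $O(\SetCardinality{\VertexSet(\Tree)} \cdot \SetCardinality{\Field}^{\SPMaxArity(\Tree)})$ proof length. During the interactive phase the prover and verifier simulate the SPFE protocol exactly as in \thmref{thm:ip-for-SPFE}, inheriting its round complexity, prover time, verifier time, and verifier space. At the end, for each auxiliary leaf the verifier performs an individual-degree test on the corresponding oracle with proximity parameter $\varrho \DefineEqual 1/8$ and soundness error $\SoundnessError \DefineEqual \InternalVertexSetDegree\SPLeafDegree/\SetCardinality{\Field}$ (via \cite{GoldreichS06,GurR15}), and then retrieves $\SPAuxLeaf[v](\vec{\gamma}_{v})$ by self-correction with the same error (via \cite{RubinfeldS96,AroraS03}); each such check uses $\poly(\log \SetCardinality{\Field}, \SPMaxArity(\Tree), \SPLeafDegree)$ queries, for a total of $\SetCardinality{\VertexSet(\Tree)} \cdot \poly(\log \SetCardinality{\Field}, \SPMaxArity(\Tree), \SPLeafDegree)$ queries, as claimed.

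For soundness, suppose $(\SPFormula, \SPOutput, \SPInput)$ has no valid witness and let $(\Malicious{\SPAuxLeaf[v]})_{v}$ be the oracles sent by a malicious prover. If some $\Malicious{\SPAuxLeaf[v]}$ is $\varrho$-far from the evaluation of any polynomial of individual degree at most $\SPLeafDegree$, the low-degree test rejects except with probability $\SoundnessError$. Otherwise, for each $v$ there is a unique such polynomial $\SPAuxLeaf[v]$ $\varrho$-close to $\Malicious{\SPAuxLeaf[v]}$ (since $\varrho$ is below the unique decoding radius of the associated Reed--Muller code, which is at least $\frac{1}{4}$ under our field-size assumption). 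Then $(\SPInput \cup \{\SPAuxLeaf[v]\}_{v})$ is a well-defined labeling, and by assumption $\SPValueL{\SPInput \cup \{\SPAuxLeaf[v]\}_{v}}{\SPFormula} \neq \SPOutput$, so the SPFE soundness guarantee of \thmref{thm:ip-for-SPFE} bounds the acceptance probability by $O(\InternalVertexSetDegree\SPLeafDegree \cdot \SPMaxArity(\Tree) \cdot \SetCardinality{\VertexSet(\Tree)}/\SetCardinality{\Field})$; a union bound with self-correction errors preserves this bound up to a constant factor.

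The main technical point requiring care will be the soundness analysis: the SPFE protocol reasons about the single function $\LDSPValueL{\SPInput}{v}$, but the verifier only has \emph{approximate} oracle access to the decoded $\SPAuxLeaf[v]$, so one must ensure that the values returned by self-correction are consistent with the unique decoding $\SPAuxLeaf[v]$ at every query. This is handled routinely by choosing the self-correction error small enough and union-bounding over the $\SetCardinality{\VertexSet(\Tree)}$ leaf queries, but it is what forces the particular choice of $\varrho$ and $\SoundnessError$ above, and is the one place where the compilation goes beyond a purely syntactic lifting of \thmref{thm:ip-for-SPFE}.
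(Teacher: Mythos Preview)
Your proposal is correct and follows essentially the same approach as the paper's proof: send the auxiliary input $\SPAuxInput$ as evaluation tables in the oracle round, run the SPFE protocol of \thmref{thm:ip-for-SPFE} on $(\SPFormula,\SPOutput,(\SPInput,\SPAuxInput))$, and at the end use individual-degree testing \cite{GoldreichS06,GurR15} plus self-correction \cite{RubinfeldS96,AroraS03} to read each $\SPAuxLeaf[v](\vec{\gamma}_v)$. Your write-up is in fact more detailed than the paper's own proof sketch, which omits the explicit far/close case analysis and parameter choices that you spell out.
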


\begin{proof}[Proof sketch]
The prover and verifier receive a SPFS instance $(\SPFormula, \SPOutput, \SPInput)$ as input, and the prover additionally receives an auxiliary input $\SPAuxInput$ for $\SPFormula$ that is a valid witness for $(\SPFormula, \SPOutput, \SPInput)$.
\begin{itemize}

  \item \textbf{Oracle.}
  The prover sends to the verifier the proof string $\Proof \DefineEqual \SPFormula$, where each polynomial $\SPAuxLeaf[v] \colon \Field^{\SPArity{v}} \to \Field$ is represented by its evaluation table over the whole domain.
  
  \item \textbf{Interaction.}
  The prover and verifier engage in an Interactive Proof for the claim ``$(\SPFormula, \SPOutput, (\SPInput, \SPAuxInput)) \in\SPFELanguage$'' using the protocol from the proof of \thmref{thm:ip-for-SPFE} above. The verifier must access $\SPAuxInput$ only at the end of the protocol, and at few locations: for each leaf vertex $v$ of $\Tree$ where $v$ is not in the domain of $\SPInput$, the verifier needs the value of $\SPAuxLeaf[v]$ at a single location $\vec{\gamma}_{v}$. Thus, the verifier tests that each $\SPAuxLeaf[v]$ is close to the evaluation of a polynomial of suitable degree \cite{GoldreichS06,GurR15}, and then uses self-correction to read each $\SPAuxLeaf[v](\vec{\gamma}_{v})$ \cite{RubinfeldS96,AroraS03}.
  
\end{itemize}
Setting parameters for low-degree testing and self-correction appropriately (for the case of individual-degree multi-variate polynomials) yields the parameters claimed in the theorem statement.
\end{proof}

\subsection{Sum-product circuits}
\label{sec:sum-product-circuits}

The purpose of this section is to
\begin{inparaenum}[(i)]
  \item introduce \emph{sum-product circuits}, and
  \item give proof systems for two computational problems about these, \emph{evaluation} and \emph{satisfaction}.
\end{inparaenum}

\subsubsection{Formal definition}
\label{sec:sum-product-circuits-definition}

As with a boolean circuit, the `topology' of a a sum-product circuit is a \emph{directed acyclic multi-graph}: a tuple $\Graph = (\VertexSet, \EdgeSet)$ where $\EdgeSet$ is a multi-set of directed edges in $\VertexSet \times \VertexSet$ with no directed cycles. We assume that there is a \emph{single} vertex $\Root[\Graph] \in \VertexSet$ with in-degree zero, known as the root. The vertices with out-degree zero are known as the leaves, while all other vertices are known as internal vertices. We also assume that, for every vertex $v$, all directed paths from the root $\Root[\Graph]$ to $v$ have the same length, which we denote $\GraphDepth[\Graph]{v}$. The depth of $\Graph$, denoted $\GraphDepth[]{\Graph}$, is the maximum depth of any vertex $v$ in $\VertexSet$. The width of $\Graph$, denoted $\Width{\Graph}$, is the maximum number of vertices at any depth: $\max_{i=1}^{\GraphDepth[]{\Graph}} \SetCardinality{\{ v \in \VertexSet : \GraphDepth[\Graph]{v} = i \}}$. The in-degree and out-degree of a vertex $v$ are denoted by $\GraphInDegree[\Graph]{v}$ and $\GraphOutDegree[\Graph]{v}$; we also define $\GraphMaxInDegree{\Graph} \DefineEqual \max_{v \in \VertexSet} \GraphInDegree[\Graph]{v}$.

\begin{definition}
\label{def:ari-graph}
A tuple $\Graph = (\VertexSet, \EdgeSet, \SPFreeProjection, \SPSumProjection)$ is an \defemph{ari-graph} if $(\VertexSet, \EdgeSet)$ is a directed acyclic multi-graph and both $\SPFreeProjection$ and $\SPSumProjection$ label every edge $e$ in $\EdgeSet$ with finite sets of positive integers $\SPFreeProjection[e]$ and $\SPSumProjection[e]$ that satisfy the following property. For every vertex $v$ in $\VertexSet$, there exists a (unique) non-negative integer $\SPArity{v}$ such that:
\begin{inparaenum}[(1)]
  \item if $v$ is the root then $\SPArity{v} = 0$, otherwise $\SPArity{v} = \SetCardinality{\SPFreeProjection[e_{1}]} + \SetCardinality{\SPSumProjection[e_{1}]} = \cdots = \SetCardinality{\SPFreeProjection[e_{\GraphInDegree[\Graph]{v}}]} + \SetCardinality{\SPSumProjection[e_{\GraphInDegree[\Graph]{v}}]}$ where $e_{1}, \dots, e_{\GraphInDegree[\Graph]{v}}$ are $v$'s incoming edges;
  \item $\SPFreeProjection[e_{1}], \dots, \SPFreeProjection[e_{\GraphOutDegree[\Graph]{v}}] \subseteq \{1, \dots, \SPArity{v}\}$, where $e_{1}, \dots, e_{\GraphOutDegree[\Graph]{v}}$ are $v$'s outgoing edges.
\end{inparaenum}
\end{definition}

For convenience, we denote by $\SPMaxArity(\Graph)$ the maximum of $\SPArity{v}$ across all vertices $v$ in the vertex set $\VertexSet$ of $\Graph$. Moreover, for every vertex $v$, we define $\SPVars[v] \DefineEqual \max(\SPSumProjection[e_{1}] \cup \cdots \cup \SPSumProjection[e_{\GraphOutDegree[\Graph]{v}}])$ so that $\SPSumProjection[e_{1}],\dots,\SPSumProjection[e_{\GraphOutDegree[\Graph]{v}}] \subseteq \{1, \dots, \SPVars[v]\}$.

We are now ready to define a sum-product circuit $\SPCircuit$, an input $\SPInput$ for $\SPCircuit$, and how to evaluate $\SPCircuit$ on $\SPInput$.

\begin{definition}
\label{def:sum-product-circuit}
A \defemph{sum-product circuit} is a tuple $\SPCircuit = \SPCircuitTuple$ where:
$\Field$ is a finite field,
$\SPSubset$ is a subset of $\Field$ (represented as a list of field elements),
$\InternalVertexSetDegree,\SPLeafDegree$ are positive integers (represented in unary) with $\SPLeafDegree \geq \SetCardinality{\SPSubset}$,
$\Graph=(\VertexSet, \EdgeSet, \SPFreeProjection, \SPSumProjection)$ is an ari-graph, and
$\SPPoly{}$ labels each internal vertex $v$ of $\Graph$ with an arithmetic circuit $\SPPoly[v](\vec{\VariableX},\vec{\VariableY},\vec{\VariableZ}) \colon \Field^{\SPArity{v}} \times  \Field^{\SPVars[v]} \times \Field^{\TreeDegree[\Tree]{v}} \to \Field$ of \underline{total} degree at most $\InternalVertexSetDegree$.
An \defemph{input} $\SPInput$ for $\SPCircuit$ labels each leaf vertex $v$ of $\Graph$ with a polynomial $\SPLeaf[v] \colon \Field^{\GraphDepth[\Graph]{v} \cdot \SPVars} \to \Field$ of \underline{individual} degree at most $\SPLeafDegree$.
The \defemph{value} of $\SPCircuit$ on an input $\SPInput$ is denoted $\SPValueL{\SPInput}{\SPCircuit}$ and equals $\SPValueL{\SPInput}{\Root[\Graph]}$, which we define below.

\begin{mdframed}
The value of a vertex $v$ of $\Graph$ on an input $\SPInput$ is denoted $\SPValueL{\SPInput}{v}$ and is recursively defined as follows. If $v$ is a leaf vertex, then $\SPValueL{\SPInput}{v}$ equals the polynomial $\SPLeaf[v]$. If $v$ is an internal vertex, then $\SPValueL{\SPInput}{v}$ is the $\SPArity{v}$-variate polynomial over $\Field$ defined by the following expression:
\begin{equation*}
\sum_{\vec{\beta} \in \SPSubset^{\SPVars[v]}}
\SPPoly[v]
\big(
\vec{\VariableX}, \vec{\beta},
\SPValueL{\SPInput}{u_{1}}
(\Restrict{\vec{\VariableX}}{\SPFreeProjection[e_{1}]},
\Restrict{\vec{\beta}}{\SPSumProjection[e_{1}]}),
\dots,
\SPValueL{\SPInput}{u_{\aodeg}}
(\Restrict{\vec{\VariableX}}{\SPFreeProjection[e_{\aodeg}]},
\Restrict{\vec{\beta}}{\SPSumProjection[e_{\aodeg}]})
\big)
\enspace,
\end{equation*}
where $\aodeg := \GraphOutDegree[\Graph]{v}$, and $e_{1} = (v, u_{1}), \dots, e_{\aodeg} = (v, u_{\aodeg})$ are the outgoing edges of $v$ (with multiplicity). In particular, $\SPValueL{\SPInput}{\SPCircuit} = \SPValueL{\SPInput}{\Root[\Graph]}$ is a constant in $\Field$.
\end{mdframed}
\end{definition}

Given a sum-product circuit we can ask two types of computational problems:
  (\emph{evaluation}) does a given input lead to a given output?
  (\emph{satisfaction}) does there exist an input that leads to a given output?
We now define each of these.

\begin{definition}[SPCE problem]
\label{def:circuit-evaluation}
The \defemph{sum-product circuit evaluation problem} is the following: given a sum-product circuit $\SPCircuit$, value $\SPOutput$, and input $\SPInput$ (given as a mapping from each leaf vertex $v$ of $\SPCircuit$'s ari-graph $\Graph$ to an arithmetic circuit computing the polynomial $\SPLeaf[v]$), determine if $\SPValueL{\SPInput}{\SPCircuit}=\SPOutput$. This problem induces the language
\begin{equation*}
\SPCELanguage
\DefineEqual
\big\{
(\SPCircuit, \SPOutput, \SPInput)
\text{ s.t. }
\SPValueL{\SPInput}{\SPCircuit} = \SPOutput
\big\}
\enspace.
\end{equation*}
(When $\SPInput$ is given as above, $\MaxSpace{\SPInput}$ denotes the maximum space required to evaluate any circuit in $\SPInput$.)
\end{definition}

\begin{definition}[SPCS problem]
\label{def:circuit-satisfaction}
The \defemph{sum-product circuit satisfaction problem} is the following: given a sum-product circuit $\SPCircuit$, partial mapping of leaf vertices to arithmetic circuits $\SPInput$, and value $\SPOutput$, determine if there exists a mapping $\SPAuxInput$ from the leaf vertices not in the domain of $\SPInput$ to polynomials s.t.\ $\SPValueL{\SPInput ,\SPAuxInput}{\SPCircuit}=\SPOutput$. This problem induces the relation
\begin{equation*}
\SPCSRelation
\DefineEqual
\big\{
\big((\SPCircuit,\SPOutput,\SPInput),\SPAuxInput\big)
\text{ s.t. }
\SPValueL{\SPInput,\SPAuxInput}{\SPCircuit} = \SPOutput
\big\}
\enspace.
\end{equation*}
(We refer to $\SPInput$ as the explicit input and $\SPAuxInput$ as the auxiliary input.)
\end{definition}

\subsubsection{Delegating sum-product circuit evaluation problems}
\label{sec:sum-product-circuits-evaluation}

We give an Interactive Proof to delegate sum-product formula \emph{evaluation} problems.

\begin{theorem}[IP for SPCE]
\label{thm:ip-for-SPCE}
There exists a public-coin Interactive Proof for the language $\SPCELanguage$. In more detail:
\begin{equation*}
\SPCELanguage \in
\mathbf{AM}
\left[
\begin{tabular}{rl}
soundness error: & $O(\InternalVertexSetDegree\SPLeafDegree \cdot \SPMaxArity(\Graph) \cdot \SetCardinality{\VertexSet(\Graph)}/\SetCardinality{\Field})$ \\
round complexity: & $O( \GraphDepth[]{\Graph} \cdot \SPMaxArity(\Graph))$ \\[1mm]
prover time: & $\poly(\BitSize{\SPCircuit}, \BitSize{\SPInput}, \SetCardinality{\SPSubset}^{\SPMaxArity(\Graph)})$ \\
verifier time: & $\poly(\BitSize{\SPCircuit}) + O(\GraphMaxInDegree{\Graph} \cdot \BitSize{\SPInput})$ \\[1mm]
verifier space: & $O(\SPMaxArity(\Graph) \cdot \Width{\Graph} \cdot \GraphMaxInDegree{\Graph} \cdot \log \SetCardinality{\Field} + \log \BitSize{\SPCircuit} + \MaxSpace{\SPInput})$
\end{tabular}
\right]
\enspace.
\end{equation*}
\end{theorem}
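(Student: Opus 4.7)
The plan is to extend the Interactive Proof of \thmref{thm:ip-for-SPFE} (for SPFE) to the DAG setting. As before, the prover and verifier process internal vertices in topological order by depth, and for each vertex $v$ they run (a variant of) the sumcheck protocol on the claim ``$\LDSPValueL{\SPInput}{v}(\vec{\gamma}_{v}) = a_{v}$'' using the same recursion \eqnref{eq:lde-of-v} as in the formula case. The only genuinely new phenomenon is that, because the underlying ari-graph may have in-degree greater than $1$, by the time we reach vertex $v$ we will in general have accumulated a \emph{set} of claims $\{(\vec{\gamma}_{v}^{(i)}, a_{v}^{(i)})\}_{i=1}^{\GraphInDegree[\Graph]{v}}$ about $\LDSPValueL{\SPInput}{v}$, one contributed by each parent edge during the sumcheck of its parent. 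Naively recursing on each would give an exponential blowup in the number of claims, so the key step is to collapse these claims into a single claim before running the sumcheck for $v$.

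I will perform this collapsing by a random-linear-combination trick, as sketched in \secref{sec:techniques-sum-product-circuits}. After receiving the $\GraphInDegree[\Graph]{v}$ claims about $v$, the verifier samples random coefficients $r_{1},\dots,r_{\GraphInDegree[\Graph]{v}} \in \Field$ and the two parties engage in a single sumcheck for the combined claim
\begin{equation*}
\sum_{i} r_{i} \LDSPValueL{\SPInput}{v}(\vec{\gamma}_{v}^{(i)}) \;=\; \sum_{i} r_{i} a_{v}^{(i)}\enspace.
\end{equation*}
Using \eqnref{eq:lde-of-v}, the left-hand side expands as $\sum_{\vec{\alpha},\vec{\beta}} \bigl(\sum_{i} r_{i}\Lagrange{\SPSubset^{\SPArity{v}}}(\vec{\gamma}_{v}^{(i)},\vec{\alpha})\bigr)\cdot \SPPoly[v]\bigl(\vec{\alpha},\vec{\beta},\LDSPValueL{\SPInput}{u_{1}}(\cdots),\dots\bigr)$, whose summand remains of individual degree at most $2\InternalVertexSetDegree\SPLeafDegree$ in $(\vec{\alpha},\vec{\beta})$ since the extra factor $\sum_{i} r_{i}\Lagrange{\SPSubset^{\SPArity{v}}}(\vec{\gamma}_{v}^{(i)},\vec{\alpha})$ has degree at most $\SetCardinality{\SPSubset}$ in $\vec{\alpha}$. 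One execution of the sumcheck on this summand therefore replaces the entire batch of claims at $v$ by a \underline{single} new claim $(\vec{\gamma}_{u_{j}},a_{u_{j}})$ at each outgoing edge $e_{j}=(v,u_{j})$, exactly as in the formula case. Leaf claims are still verified directly against $\SPInput$; here a leaf $v$ may be queried at up to $\GraphInDegree[\Graph]{v}$ distinct points, which accounts for the $O(\GraphMaxInDegree{\Graph}\cdot\BitSize{\SPInput})$ term in the verifier's running time.

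Soundness follows the same structure as in \thmref{thm:ip-for-SPFE}: if $\LDSPValueL{\SPInput}{v}(\vec{\gamma}_{v}^{(i)}) \neq a_{v}^{(i)}$ for some $i$, then with probability at least $1-1/\SetCardinality{\Field}$ over the random $r_{i}$'s the combined claim is also false; applying the soundness of sumcheck and a union bound over the (polynomially many) vertices of $\Graph$ yields the stated bound $O(\InternalVertexSetDegree\SPLeafDegree\cdot\SPMaxArity(\Graph)\cdot\SetCardinality{\VertexSet(\Graph)}/\SetCardinality{\Field})$. Round, time, and space complexities are as claimed: sumchecks at vertices of equal depth can be run in parallel (giving $O(\GraphDepth[]{\Graph}\cdot\SPMaxArity(\Graph))$ rounds), and the verifier's working memory at any point needs only to hold the claims at the two active layers, which accounts for the $O(\SPMaxArity(\Graph)\cdot\Width{\Graph}\cdot\GraphMaxInDegree{\Graph}\cdot\log\SetCardinality{\Field})$ term. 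The main obstacle to get right is the first one: carefully verifying that the random linear combination step neither inflates the summand's degree beyond $2\InternalVertexSetDegree\SPLeafDegree$ (so that sumcheck's bound on prover communication still applies) nor breaks soundness when many parents share a child, i.e., that a single random combination (rather than a fresh one per vertex) suffices after union-bounding over $\SetCardinality{\VertexSet(\Graph)}$ vertices.
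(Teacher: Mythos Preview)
Your proposal is correct and follows essentially the same approach as the paper: both maintain a set of claims $\{(\vec{\gamma}_{v}^{(i)},a_{v}^{(i)})\}$ at each vertex, collapse them via a random $\Field$-linear combination before running a single sumcheck on the combined claim, and then propagate one new claim per outgoing edge; the degree, soundness, round, and space analyses you give match the paper's. The only (harmless) wording quirk is your final sentence about ``a single random combination (rather than a fresh one per vertex)'': both your construction and the paper's sample fresh random coefficients at each vertex and then union-bound over $\SetCardinality{\VertexSet(\Graph)}$.
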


\begin{proof}[Proof sketch]
The prover and verifier receive as input a SPCE instance $(\SPCircuit, \SPOutput, \SPInput)$. They both associate with each vertex $v$ of its ari-graph $\Graph$ a \emph{set} of labels $\Labels{v}$; for the root, this set contains only the pair $(\EmptyVector, \SPOutput)$, while for all other vertices this set is initially empty and will be populated with at most $\GraphInDegree[\Graph]{v}$ pairs during the protocol. The prover and verifier then interact as follows.
\begin{enumerate}

  \item For every internal vertex $v$ of $\Graph$ taken in (any) topological order, letting $\aodeg \DefineEqual \GraphOutDegree[\Graph]{v}$:
  \begin{enumerate}[nolistsep]
    
    \item For every $(\vec{\gamma}_j, a_j)$ in $\Labels{v}$, the verifier samples a random $\alpha_j \in \Field$ and sends it to the prover.
    
    \item The prover and verifier invoke the sumcheck protocol on the following claim:
\begin{equation*}
\text{``}\quad
  \sum_{j=1}^{\SetCardinality{\Labels{v}}} \alpha_j \LDSPValueL{\SPInput}{v}(\vec{\gamma}_j)
= \sum_{j=1}^{\SetCardinality{\Labels{v}}} \alpha_j a_j
\quad\text{''.}
\end{equation*}
By the end of this subprotocol, the verifier has chosen $\vec{c}_{1} \in \Field^{\SPArity{v}}$ and $\vec{c}_{2} \in \Field^{\SPVars[v]}$ uniformly at random, and has derived from the prover's messages a value $b \in \Field$ that allegedly satisfies the following equality:
\begin{equation}
\label{eqn:SPCE-claim}
b =
\sum_{j=1}^{\SetCardinality{\Labels{v}}}
\alpha_j
\Big(
  \Lagrange{\SPSubset^{\SPArity{v}}}(\vec{\gamma_j}, \vec{c}_{1})
  \cdot
\SPPoly[v]\big(
\vec{c}_{1}, \vec{c}_{2},
\LDSPValueL{\SPInput}{u_{1}}(\Restrict{\vec{c}_{1}}{\SPFreeProjection[e_{1}]},
\Restrict{\vec{c}_{2}}{\SPSumProjection[e_{1}]}),
\dots,
\LDSPValueL{\SPInput}{u_{\aodeg}}(\Restrict{\vec{c}_{1}}{\SPFreeProjection[e_{\aodeg}]},
\Restrict{\vec{c}_{2}}{\SPSumProjection[e_{\aodeg}]})
\big)
\Big) \enspace,
\end{equation}
where $e_{1} = (v, u_{1}), \dots, e_{\aodeg} = (v, u_{\aodeg})$ are the outgoing edges of $v$ (with multiplicity).

    \item \label{step:sp-verifier-check} The prover sends $h_{1} \DefineEqual \LDSPValueL{\SPInput}{u_{1}}(\Restrict{\vec{c}_{1}}{\SPFreeProjection[e_{1}]},
    \Restrict{\vec{c}_{2}}{\SPSumProjection[e_{1}]}), \dots, h_{\aodeg} \DefineEqual \LDSPValueL{\SPInput}{u_{\aodeg}}(\Restrict{\vec{c}_{1}}{\SPFreeProjection[e_{\aodeg}]},
    \Restrict{\vec{c}_{2}}{\SPSumProjection[e_{\aodeg}]}) \in \Field$, and the verifier checks that
\begin{equation}
\label{eqn:SPCE-check}
b =
\sum_{j=1}^{\SetCardinality{\Labels{v}}}
\alpha_j
\Big(
  \Lagrange{\SPSubset^{\SPArity{v}}}(\vec{\gamma_j}, \vec{c}_{1})
  \cdot
  \SPPoly[v]\big(\vec{c}_{1}, \vec{c}_{2}, h_{1}, \dots, h_{\aodeg}\big)
\Big) \enspace.
\end{equation}

    \item For every $j = 1, \dots, \aodeg$, the verifier adds the label $((\Restrict{\vec{c}_{1}}{\SPFreeProjection[e_{j}]}, \Restrict{\vec{c}_{2}}{\SPSumProjection[e_{j}]}), h_{j})$ to $\Labels{u_{j}}$.

  \end{enumerate}

\item For every leaf vertex $v$ of $\Graph$, and for every $(\vec{\gamma}, a) \in \Labels{v}$, the verifier checks that $\LDSPValueL{\SPInput}{v}(\vec{\gamma}) = a$, i.e., that $\SPLeaf[v](\vec{\gamma}) = a$. (Note that every set $\Labels{v}$ has had $\GraphInDegree[\Graph]{v}$ labels added to it. The size of $\Labels{v}$ is then at most $\GraphInDegree[\Graph]{v}$, with the `strictly less' case occurring if the verifier happens to have added the same label twice.)

\end{enumerate}
While the above description considers sequential invocations of the sumcheck protocol, these can be run in parallel in $\GraphDepth[]{\Graph}$ phases: first the root (which has depth $0$), then all vertices of depth $1$, then all vertices of depth $2$, and so on until all vertices of depth $\GraphDepth[]{\Graph}-1$. Each such phase requires $O(\SPMaxArity(\Graph))$ rounds, so that the number of rounds is now $O(\GraphDepth[]{\Graph} \cdot \SPMaxArity(\Graph))$, as claimed. The claimed running times for the prover and verifier follow immediately from the above description. The claimed space bound can be attained by discarding all labels at previous levels before moving to the next level, and checking leaf vertices at the same time as the internal vertices at the same depth. We are left to argue the claimed soundness error.

Suppose that, when considering some internal vertex $v$ of $\Graph$ in the protocol above, there exists $(\vec{\gamma}, a) \in \Labels{v}$ such that $\LDSPValueL{\SPInput}{v}(\vec{\gamma}) \neq a$. Then, with probability at least $1 - 1/\SetCardinality{\Field}$, it holds that $\sum_{j=1}^{\SetCardinality{\Labels{v}}} \alpha_j \LDSPValueL{\SPInput}{v}(\vec{\gamma}_j) \neq \sum_{j=1}^{\SetCardinality{\Labels{v}}} \alpha_j a_j$, which means that the prover and verifier invoke the sumcheck protocol on a false claim. By the soundness of the sumcheck protocol, either the verifier rejects or \equnref{eqn:SPCE-claim} holds with probability at most $2\InternalVertexSetDegree\SPLeafDegree \cdot \SPArity{v}/\SetCardinality{\Field}$. In this latter case, either \equnref{eqn:SPCE-check} fails to hold and the verifier rejects, or there exists $j \in \{1, \dots, \GraphOutDegree[\Graph]{v}\}$ such that $h_j \neq \LDSPValueL{\SPInput}{u_{j}}(\Restrict{\vec{c}_{1}}{\SPFreeProjection[e_{j}]},
\Restrict{\vec{c}_{2}}{\SPSumProjection[e_{j}]})$, which means that there exists a vertex $u$ in the next layer (in particular, $u = u_{j}$) for which there exists $(\vec{\gamma}', a') \in \Labels{u}$ such that $\LDSPValueL{\SPInput}{u}(\vec{\gamma}') \neq a'$. If $u$ is a leaf vertex then the verifier will reject when considering $u$; otherwise we repeat the above argument. Taking a union bound over the internal vertices of $\Graph$ yields the claimed soundness error.
\end{proof}

\subsubsection{Delegating sum-product circuit satisfaction problems}
\label{sec:sum-product-circuits-satisfaction}

We give an Interactive PCP to delegate sum-product circuit \emph{satisfaction} problems.

\begin{theorem}[IPCP for SPCS]
\label{thm:ipcp-for-SPCS}
There exists a (public-coin and non-adaptive) Interactive PCP for the relation $\SPCSRelation$. In more detail:
\begin{equation*}
\SPCSRelation \in
\mathbf{IPCP}
\left[
\begin{tabular}{rl}
soundness error: & $O(\InternalVertexSetDegree\SPLeafDegree \cdot \SPMaxArity(\Graph) \cdot \SetCardinality{\EdgeSet(\Graph)}/\SetCardinality{\Field})$ \\
round complexity: & $O( \GraphDepth[]{\Graph} \cdot \SPMaxArity(\Graph))$ \\[1mm]
proof length: & $O(\SetCardinality{\VertexSet(\Graph)} \cdot \SetCardinality{\Field}^{\SPMaxArity(\Graph)})$ \\
query complexity: & $\SetCardinality{\VertexSet(\Graph)} \cdot \poly(\log \SetCardinality{\Field}, \SPMaxArity(\Graph), \SPLeafDegree)$ \\[1mm]
prover time: & $\poly(\BitSize{\SPCircuit}, \BitSize{\SPInput}, \BitSize{\SPAuxInput}, \SetCardinality{\SPSubset}^{\SPMaxArity(\Graph)})$ \\
verifier time: & $\poly(\BitSize{\SPCircuit}, \BitSize{\SPInput})$ \\[1mm]
verifier space: & $O(\SPMaxArity(\Graph) \cdot \Width{\Graph} \cdot \GraphMaxInDegree{\Graph} \cdot \log \SetCardinality{\Field} + \log \BitSize{\SPCircuit} + \MaxSpace{\SPInput})$
\end{tabular}
\right]
\enspace.
\end{equation*}
\end{theorem}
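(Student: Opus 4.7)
The plan is to reduce the satisfaction problem to the evaluation problem and use the Interactive Proof of \thmref{thm:ip-for-SPCE} as the interactive part of the IPCP, with the PCP oracle serving as a commitment to the auxiliary input. Concretely, the honest prover first sends, as its proof oracle $\Proof$, the evaluation tables of the polynomials $\{\SPAuxLeaf[v]\}$ in the witness $\SPAuxInput$ over the full domain $\Field^{\SPArity{v}}$, for each leaf vertex $v$ not in the domain of the explicit input $\SPInput$. After that, the prover and verifier simulate the IP from \thmref{thm:ip-for-SPCE} on the instance $(\SPCircuit, \SPOutput, (\SPInput, \SPAuxInput)) \in \SPCELanguage$, with the verifier deferring all queries to $\SPAuxInput$ to the end of the interaction.

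The key observation, inherited from the evaluation protocol, is that the verifier only needs to access each $\SPAuxLeaf[v]$ at a \emph{single} point $\vec{\gamma}_{v}$, and only during the final leaf-vertex check step. Thus, after the interactive phase concludes, the verifier:
\begin{enumerate}
\item runs an individual-degree test on each oracle-encoded $\SPAuxLeaf[v]$ with proximity parameter $\varrho$ and soundness error $\epsilon$ (using e.g.\ \cite{GoldreichS06,GurR15}) to ensure each is $\varrho$-close to a polynomial of individual degree at most $\SPLeafDegree$;
\item uses self-correction \cite{RubinfeldS96,AroraS03} on each oracle to obtain the value $\SPAuxLeaf[v](\vec{\gamma}_{v})$ demanded by the leaf-check of the evaluation protocol.
\end{enumerate}
Completeness and the efficiency parameters (proof length $O(\SetCardinality{\VertexSet(\Graph)} \cdot \SetCardinality{\Field}^{\SPMaxArity(\Graph)})$, query complexity $\SetCardinality{\VertexSet(\Graph)} \cdot \poly(\log \SetCardinality{\Field}, \SPMaxArity(\Graph), \SPLeafDegree)$) follow immediately from the choices of parameters for these subroutines and the bounds of \thmref{thm:ip-for-SPCE}. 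Round complexity, prover/verifier time, and verifier space inherit directly from the evaluation protocol, with only a logarithmic overhead for running low-degree testing and self-correction.

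The only nontrivial point is soundness. The standard argument is a case split on the oracle $\Malicious{\Proof}$ sent by a cheating prover: if any encoded function is $\varrho$-far from low-degree, then low-degree testing catches it with probability at least $1-\epsilon$; otherwise, each encoded function is $\varrho$-close to a unique polynomial $\SPAuxLeaf[v] \in \PolynomialRingIndOne{\Field}{\SPArity{v}}{\VariableX}{\SPLeafDegree}$ (taking $\varrho$ below the unique decoding radius of the relevant Reed--Muller code, which is possible because the soundness error of the evaluation protocol is already small). Either $(\SPCircuit, \SPOutput, (\SPInput, \SPAuxInput)) \notin \SPCELanguage$---in which case the soundness of \thmref{thm:ip-for-SPCE} yields the stated error---or it is in the language, contradicting our assumption that $(\SPCircuit, \SPOutput, \SPInput) \notin \GetLanguage{\SPCSRelation}$. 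Self-correction errors contribute an additive $O(\epsilon \cdot \SetCardinality{\VertexSet(\Graph)})$, and a union bound with suitably small $\epsilon = O(1/\SetCardinality{\Field})$ gives the claimed soundness error of $O(\InternalVertexSetDegree\SPLeafDegree \cdot \SPMaxArity(\Graph) \cdot \SetCardinality{\EdgeSet(\Graph)}/\SetCardinality{\Field})$. There is no genuine obstacle here; the only minor subtlety is checking that the unique decoding radius for multivariate polynomials of individual degree $\SPLeafDegree$ over $\Field^{\SPArity{v}}$ is bounded below by a constant when the sumcheck soundness condition $\SPArity{v} \cdot \SPLeafDegree / \SetCardinality{\Field}$ is small, which is exactly the regime we operate in.
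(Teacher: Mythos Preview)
Your approach follows the paper's outline, but there is a genuine gap: you claim ``the verifier only needs to access each $\SPAuxLeaf[v]$ at a \emph{single} point $\vec{\gamma}_{v}$''. This is true for sum-product \emph{formulas} (where every non-root vertex has in-degree $1$), but not for sum-product \emph{circuits}. In the evaluation protocol of \thmref{thm:ip-for-SPCE}, each leaf vertex $v$ accumulates up to $\GraphInDegree[\Graph]{v}$ labels in $\Labels{v}$, one per incoming edge, and the verifier must check $\SPAuxLeaf[v](\vec{\gamma}) = a$ for \emph{each} such label. Directly self-correcting all of these yields $\sum_{v}\GraphInDegree[\Graph]{v} \le \SetCardinality{\EdgeSet(\Graph)}$ self-corrected reads, i.e.\ query complexity $\SetCardinality{\EdgeSet(\Graph)} \cdot \poly(\log \SetCardinality{\Field}, \SPMaxArity(\Graph), \SPLeafDegree)$ rather than the stated $\SetCardinality{\VertexSet(\Graph)} \cdot \poly(\log \SetCardinality{\Field}, \SPMaxArity(\Graph), \SPLeafDegree)$.

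The paper closes this gap with the standard curve-interpolation trick: for each leaf $v$, pass a curve $\Curve{v} \colon \Field \to \Field^{\SPArity{v}}$ of degree less than $\GraphInDegree[\Graph]{v}$ through the query points $\{\vec{\gamma}_i\}_{(\vec{\gamma}_i,a_i)\in\Labels{v}}$; the prover sends the univariate composition $\ComposedCurve{v} \DefineEqual \SPAuxLeaf[v] \circ \Curve{v}$ (of degree at most $\SPArity{v}\cdot\GraphInDegree[\Graph]{v}\cdot\SPLeafDegree$); the verifier checks $\ComposedCurve{v}(t_i)=a_i$ for all $i$, then verifies $\ComposedCurve{v}$ against the oracle at a single random $t\in\Field$ via one self-corrected read of $\SPAuxLeaf[v](\Curve{v}(t))$. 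This recovers the $\SetCardinality{\VertexSet(\Graph)}$-scaling query complexity at the cost of an additional $\sum_v \SPArity{v}\cdot\GraphInDegree[\Graph]{v}\cdot\SPLeafDegree/\SetCardinality{\Field}$ in soundness error --- which is precisely what produces the $\SetCardinality{\EdgeSet(\Graph)}$ (rather than $\SetCardinality{\VertexSet(\Graph)}$) factor in the theorem's stated soundness bound, a detail your proposal leaves unexplained.
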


\begin{proof}
The protocol is analogous to that in the proof of \thmref{thm:ipcp-for-SPFS}, which considers the relation $\SPFSRelation$ (sum-product formula satisfaction) rather than $\SPCSRelation$ (sum-product circuit satisfaction). Specifically, we only need to replace the Interactive Proof for the language $\SPFELanguage$ (sum-product formula evaluation) with the Interactive Proof for the language $\SPCELanguage$ (sum-product circuit evaluation) that we gave in the proof of \thmref{thm:ip-for-SPCE}. We omit the details, except for one technicality that we now describe.

The strategy described in the above paragraph eventually leads the verifier to read, via self-correction, at most $\GraphInDegree[\Graph]{v}$ values of $\SPAuxLeaf[v]$ for every leaf vertex $v$ of $\Graph$ not in the domain of $\SPInput$. Overall, the verifier reads at most $\sum_{v} \GraphInDegree[\Graph]{v} \leq \SetCardinality{\EdgeSet(\Graph)}$ values via self-correction, which corresponds to $\SetCardinality{\EdgeSet(\Graph)} \cdot \poly(\log \SetCardinality{\Field} + \SPMaxArity(\Graph) + \SPLeafDegree)$ actual queries, and a soundness error of $O(\InternalVertexSetDegree\SPLeafDegree \cdot \SPMaxArity(\Graph) \cdot \SetCardinality{\VertexSet(\Graph)}/\SetCardinality{\Field})$. To obtain the stated query complexity, we reduce the number of values read via self-correction to a single value per leaf vertex, via the following standard trick.

Let $t_1, \dots, t_{\GraphInDegree[\Graph]{v}} \in \Field$ be arbitrary distinct values known to both the prover and verifier, and let $\Curve{v} \colon \Field \to \Field^{\SPArity{v}}$ be the unique polynomial of degree less than $\GraphInDegree[\Graph]{v}$ such that $\Curve{v}(t_{i}) = \vec{\gamma}_{i}$ for every $(\vec{\gamma}_{i}, a_{i}) \in \Labels{v}$. The prover sends $\ComposedCurve{v} \DefineEqual (\SPAuxLeaf[v] \circ \Curve{v}) \colon \Field \to \Field$ to the verifier (as a list of at most $\SPArity{v} \cdot \GraphInDegree[\Graph]{v} \cdot \SPLeafDegree$ coefficients), who checks that $\ComposedCurve{v}(t_{i}) = a_{i}$ for every $i \in \{1, \dots, \GraphInDegree[\Graph]{v}\}$. The verifier then picks $t \in \Field$ uniformly at random and checks that $\SPAuxLeaf[v](\Curve{v}(t)) = \ComposedCurve{v}(t)$; this involves obtaining, via self-correction, the value of $\SPAuxLeaf[v]$ at $\Curve{v}(t)$. Soundness is maintained because if the prover sends $\ComposedCurve{v}' \neq \ComposedCurve{v}$ then the probability that $\SPAuxLeaf[v](\Curve{v}(t)) = \ComposedCurve{v}'(t)$ for uniformly random $t \in \Field$ is at most $\SPArity{v} \cdot \GraphInDegree[\Graph]{v} \cdot \SPLeafDegree/\SetCardinality{\Field}$. The overall soundness error is then, by a union bound, at most $O(\InternalVertexSetDegree\SPLeafDegree \cdot \SPMaxArity(\Graph) \cdot \SetCardinality{\VertexSet(\Graph)}/\SetCardinality{\Field}) + \sum_{v} \SPArity{v} \cdot \GraphInDegree[\Graph]{v} \cdot \SPLeafDegree/\SetCardinality{\Field} = O(\InternalVertexSetDegree\SPLeafDegree \cdot \SPMaxArity(\Graph) \cdot \SetCardinality{\EdgeSet(\Graph)}/\SetCardinality{\Field})$. The additional space required for this test is $O(\SPMaxArity(\Graph) \cdot \GraphMaxInDegree{\Graph} \cdot \log \SetCardinality{\Field} + \log \BitSize{\SPCircuit})$, so the space bound is unaffected.
\end{proof}

\doclearpage
\section{Zero knowledge sum-product protocols}
\label{sec:pzk-for-sum-products}

The protocols for delegating sum-product computations described in \secref{sec:sum-product-protocols} are not zero knowledge. We show how to delegate, in the Interactive PCP model, sum-product circuit evaluation problems (\secref{sec:pzk-sum-product-evaluation}) and satisfaction problems (\secref{sec:pzk-sum-product-satisfaction}). As a special case, we also obtain the same for sum-product formulas.

\subsection{The case of sum-product evaluation}
\label{sec:pzk-sum-product-evaluation}

The purpose of this section is to show that the language $\SPCELanguage$ (consisting of sum-product circuit evaluation problems, see \defref{def:circuit-evaluation}) has perfect zero knowledge Interactive PCPs:

\begin{theorem}[PZK IPCP for $\SPCELanguage$]
\label{thm:pzk-for-SPCE}
For every query bound function $\SCStrength(n)$, the language $\SPCELanguage$ has a (public-coin and non-adaptive) Interactive PCP that is perfect zero knowledge against all $\SCStrength$-query malicious verifiers. In more detail, letting $\alpha \DefineEqual \log \SCStrength / \log \SetCardinality{\SPSubset}$:
{\small
\begin{equation*}
\SPCELanguage \in
\PZKIPCP
\left[
\begin{tabular}{rl}
soundness error: & $O(\InternalVertexSetDegree\SPLeafDegree \cdot \GraphMaxInDegree{\Graph} \cdot (\SPMaxArity(\Graph) + \alpha) \cdot \SetCardinality{\VertexSet(\Graph)}/\SetCardinality{\Field})$ \\
round complexity: & $O( \GraphDepth[]{\Graph} \cdot (\SPMaxArity(\Graph) + \alpha))$ \\[1mm]
proof length: & $O(\SetCardinality{\VertexSet(\Graph)} \cdot \SetCardinality{\Field}^{\SPMaxArity(\Graph) + \alpha})$ \\
query complexity: & $\SetCardinality{\VertexSet(\Graph)} \cdot \poly(\log \SetCardinality{\Field}, \SPMaxArity(\Graph), \alpha, \InternalVertexSetDegree, \SPLeafDegree, \GraphMaxInDegree{\Graph})$ \\[1mm]
prover time: & $\poly(\BitSize{\SPCircuit}, \BitSize{\SPInput}, \SetCardinality{\Field}^{\SPMaxArity(\Graph) + \alpha})$ \\
verifier time: &  $\poly(\BitSize{\SPCircuit}, \alpha) + O(\GraphMaxInDegree{\Graph} \cdot \BitSize{\SPInput})$ \\[1mm]
verifier space: & $O((\SPMaxArity(\Graph) + \alpha) \cdot \Width{\Graph} \cdot \GraphMaxInDegree{\Graph} \cdot \log \SetCardinality{\Field} + \log \BitSize{\SPCircuit} + \MaxSpace{\SPInput})$ \\[1mm]
simulator overhead: & $\poly(\BitSize{\SPCircuit}, \alpha) \cdot (\BitSize{\SPInput} +  \QueryComplexity_{\Malicious{\Verifier}}^{3})$.
\end{tabular}
\right]
\enspace.
\end{equation*}}
\end{theorem}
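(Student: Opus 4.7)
The plan is to adapt the Interactive Proof for $\SPCELanguage$ of \thmref{thm:ip-for-SPCE} along three coordinated modifications sketched in \secref{sec:techniques-sum-product-circuits}. First, at every internal vertex $v$, replace the canonical low-degree extension $\LDSPValueL{\SPInput}{v}$ by a \emph{randomized} extension $\RLDSPValueL{\SPInput}{v} \DefineEqual \LDSPValueL{\SPInput}{v} + \ZeroPoly{\SPSubset^{\SPArity{v}}} \cdot R_{v}$ for a freshly sampled random low-degree $R_{v}$; since $\ZeroPoly{\SPSubset^{\SPArity{v}}}$ vanishes on $\SPSubset^{\SPArity{v}}$ this preserves the SPCE recurrence, while its nonvanishing on $(\Field \setminus \SPSubset)^{\SPArity{v}}$ makes $O(\GraphMaxInDegree{\Graph})$ evaluations of $\RLDSPValueL{\SPInput}{v}$ there jointly uniform in $\Field$ (for an appropriate choice of $\deg R_{v}$). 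Second, each $R_{v}$ is hidden inside an algebraic commitment in the sense of \secref{sec:technniques-algebraic-commitment}: the prover places on the PCP oracle a uniformly random polynomial $\StrongRandPoly_{v}$ over $\alpha \DefineEqual \lceil \log \SCStrength / \log \SetCardinality{\SPSubset} \rceil$ additional variables with $R_{v}(\vec{\VariableX}) = \sum_{\vec{y} \in \SPSubset^{\alpha}} \StrongRandPoly_{v}(\vec{\VariableX}, \vec{y})$, together with the auxiliary polynomial $\AuxRandPoly_{v}$ demanded by \thmref{thm:strong-sumcheck-ipcp}; hiding against $\SCStrength$-query verifiers is then exactly \corref{cor:partial-sum-indep-vars}, since $\SetCardinality{\SPSubset}^{\alpha} \geq \SCStrength$. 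Third, every sumcheck invocation is replaced by the strong zero knowledge sumcheck of \thmref{thm:strong-sumcheck-ipcp}, instantiated with $\SoundnessSet \DefineEqual \Field \setminus \SPSubset$, so that the one-time decommitment of $R_{v}$ required to close the gap between $\RLDSPValueL{\SPInput}{v}$ and $\LDSPValueL{\SPInput}{v}$ is handled `implicitly' by the sumcheck's built-in decommitment step.

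The interactive skeleton then follows \thmref{thm:ip-for-SPCE} verbatim: vertices are processed in topological order; at each $v$, a random linear combination over $\Field$ collapses up to $\GraphInDegree[\Graph]{v}$ labels in $\Labels{v}$ into a single claim (a linear and hence sumcheck-compatible operation); the strong ZK sumcheck then runs on the summand $\sum_{j} \rho_{j} \Lagrange{\SPSubset^{\SPArity{v}}}(\vec{\gamma}_{j}, \vec{\VariableX}) \cdot \SPPoly[v](\vec{\VariableX}, \vec{\VariableY}, \RLDSPValueL{\SPInput}{u_{1}}(\ldots), \dots)$ against the oracles $(\StrongRandPoly_{v}, \AuxRandPoly_{v})$; finally, the single summand-value returned at the end is unpacked into new labels $(h_{j}, \vec{\gamma}'_{j})$ pushed into each $\Labels{u_{j}}$, exactly as in \stepref{step:sp-verifier-check} of \thmref{thm:ip-for-SPCE}. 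At leaves the verifier directly checks the surviving labels against $\SPLeaf[v]$; all oracle accesses are realized via individual-degree testing and self-correction, as in the proof of \thmref{thm:strong-sumcheck-ipcp}.

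Completeness is immediate from $\RLDSPValueL{\SPInput}{v}|_{\SPSubset^{\SPArity{v}}} = \LDSPValueL{\SPInput}{v}|_{\SPSubset^{\SPArity{v}}}$ (which preserves the identity \eqref{eq:lde-of-v}) plus completeness of the strong ZK sumcheck. Soundness follows the inductive argument of \thmref{thm:ip-for-SPCE}: at each of $\SetCardinality{\VertexSet(\Graph)}$ vertices the strong ZK sumcheck contributes error $O(\InternalVertexSetDegree\SPLeafDegree(\SPMaxArity(\Graph)+\alpha)/\SetCardinality{\Field})$ by \thmref{thm:strong-sumcheck-ipcp}, the random-linear-combination step adds $O(\GraphMaxInDegree{\Graph}/\SetCardinality{\Field})$, and low-degree testing/self-correction on each oracle adds $O(1/\SetCardinality{\Field})$; a union bound yields the claimed soundness. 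All remaining efficiency parameters (rounds, proof length $\Theta(\SetCardinality{\VertexSet(\Graph)} \cdot \SetCardinality{\Field}^{\SPMaxArity(\Graph)+\alpha})$, query complexity, prover/verifier time and space) read off by direct composition with \thmref{thm:strong-sumcheck-ipcp}.

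The principal challenge is perfect zero knowledge. The simulator $\Simulator$ processes vertices in topological order and invokes the strong ZK sumcheck subsimulator of \thmref{thm:strong-sumcheck-ipcp} at each $v$; by design this subsimulator makes exactly one summand-query at some $(\vec{c}, \vec{c}') \in (\Field \setminus \SPSubset)^{\SPArity{v}+\SPVars[v]}$, which $\Simulator$ services by sampling a fresh uniform $\Simulated{h}_{j} \in \Field$ for each child $u_{j}$ and pushing the derived label into $\Labels{u_{j}}$. The point lying outside $\SPSubset$ is crucial: there $\RLDSPValueL{\SPInput}{u_{j}}(\vec{c}'_{j})$ equals a known nonzero scalar multiple of $R_{u_{j}}(\vec{c}'_{j})$ plus the fixed quantity $\LDSPValueL{\SPInput}{u_{j}}(\vec{c}'_{j})$, so randomness in $R_{u_{j}}$ transfers losslessly into uniform randomness in $\RLDSPValueL{\SPInput}{u_{j}}$ at that location. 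The main obstacle is to verify that this conditional joint distribution matches the real execution \emph{across the entire graph}: at every $u_{j}$, the point queries that $\Malicious{\Verifier}$ makes to $\StrongRandPoly_{u_{j}}$, together with the `hidden-sum' constraints on $R_{u_{j}}$ induced by the labels already in $\Labels{u_{j}}$, must remain jointly independent of $R_{u_{j}}$'s values at further fresh points. By \corref{cor:partial-sum-indep-vars} this independence holds so long as the total constraint count per vertex remains below $\SetCardinality{\SPSubset}^{\alpha} = \SCStrength$, which is arranged by our choice of $\alpha$ together with the $\SCStrength$-query bound on $\Malicious{\Verifier}$; the conditional sampling it entails is realized in polynomial time by the succinct constraint detection algorithm of \corref{cor:efficient-poly-simulator}, whose cubic cost supplies the factor $\QueryComplexity_{\Malicious{\Verifier}}^{3}$ in the claimed simulator overhead.
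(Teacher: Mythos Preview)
Your high-level strategy---randomized extensions $\RLDSPValueL{\SPInput}{v}$, algebraic commitments to the masks, and the strong ZK sumcheck of \thmref{thm:strong-sumcheck-ipcp} with $\SoundnessSet=\Field\setminus\SPSubset$---is exactly the paper's approach. However, there is a genuine gap in how you integrate the mask $R_{v}$ into the sumcheck at vertex $v$.

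The summand you write, $\sum_{j}\rho_{j}\,\Lagrange{\SPSubset^{\SPArity{v}}}(\vec{\gamma}_{j},\vec{\VariableX})\cdot\SPPoly[v](\vec{\VariableX},\vec{\VariableY},\RLDSPValueL{\SPInput}{u_{1}}(\ldots),\dots)$, sums to $\sum_{j}\rho_{j}\,\LDSPValueL{\SPInput}{v}(\vec{\gamma}_{j})$, not to $\sum_{j}\rho_{j}\,\RLDSPValueL{\SPInput}{v}(\vec{\gamma}_{j})$; the discrepancy is $\sum_{j}\rho_{j}\,\ZeroPoly{\SPSubset^{\SPArity{v}}}(\vec{\gamma}_{j})\,R_{v}(\vec{\gamma}_{j})$. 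You say this is absorbed by the strong ZK sumcheck's ``built-in decommitment step,'' but that step (the $w=R(\vec{c})$ reveal in \conref{con:strong-sumcheck-ip}) decommits the sumcheck's \emph{own} masking polynomial at the single challenge point $\vec{c}$, not your $R_{v}$ at the up-to-$\GraphInDegree[\Graph]{v}$ label points $\vec{\gamma}_{j}$ fixed \emph{before} the sumcheck starts. Relatedly, you list only $(\StrongRandPoly_{v},\AuxRandPoly_{v})$ per vertex and run the strong ZK sumcheck ``against the oracles $(\StrongRandPoly_{v},\AuxRandPoly_{v})$,'' apparently reusing $\StrongRandPoly_{v}$ both as the commitment to $R_{v}$ and as the sumcheck's own $\StrongRandPoly$; these have incompatible arities ($\SPArity{v}$ versus $\SPArity{v}+\SPVars[v]$ free variables) and cannot coincide.

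The paper resolves this by sending \emph{two} oracles per internal vertex: the commitment polynomial $\RandPoly_{v}\in\PolynomialRingIndOneXY{\Field}{\SPArity{v}}{\VariableX}{\SSCVars}{\VariableY}{\GraphInDegree[\Graph]{v}}{2\SubsetSize}$ itself, \emph{and} a separate $\Proof_{v}$ carrying the strong ZK sumcheck's own $(\StrongRandPoly,\AuxRandPoly)$. It then folds the $\alpha$ commitment variables into the sumcheck domain, so the sumcheck at $v$ runs over $\SPSubset^{\SPArity{v}+\SPVars[v]}\times\SSCSubset^{\SSCVars}$ and the summand explicitly carries an additive term $\Lagrange{\SPSubset^{\SPArity{v}+\SPVars[v]}}((\vec{c}_{1},\vec{c}_{2}),\vec{0})\cdot\ZeroPoly{\SPSubset^{\SPArity{v}}}(\vec{\gamma}_{j})\cdot\RandPoly_{v}(\vec{\gamma}_{j},\vec{c}_{3})$ (this is \equnref{eqn:pzk-SPCE-claim}). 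After the sumcheck, the needed values $\RandPoly_{v}(\vec{\gamma}_{j},\vec{c}_{3})$ are obtained by \emph{direct} self-corrected queries to the $\RandPoly_{v}$ oracle---no separate decommitment subprotocol. With this correction your completeness, soundness, and simulation arguments (including the use of \corref{cor:partial-sum-indep-vars} for the $\Simulated{h}^{j}$ and \corref{cor:efficient-poly-simulator} for the cubic simulator overhead) are exactly as in the paper.
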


We introduce some notation before the proof. Given a subset $\SPSubset$ of $\Field$ and a positive integer $\SPVars$, we denote by $\ZeroPoly{\SPSubset^{\SPVars}}$ the $\SPVars$-variate polynomial $\prod_{i=1}^{\SPVars} \prod_{\alpha \in \SPSubset} (\VariableX_{i} - \alpha)$. Note that $\ZeroPoly{\SPSubset^{\SPVars}}$ is zero on $\SPSubset^{\SPVars}$ and nonzero on $(\Field - \SPSubset)^{\SPVars}$, and can be evaluated in $\poly(\SetCardinality{\SPSubset}+\SPVars)$ field operations and space $O(\log \SPVars + \log \SetCardinality{\Field})$.

To guide us to the proof of the above theorem, it is instructive to look at why the protocol of \thmref{thm:ip-for-SPCE} may not be zero knowledge. We identify two potential sources of leakage: the first is the values $\LDSPValueL{\SPInput}{v}(\vec{c})$ for internal vertices $v$ which the prover sends in \stepref{step:sp-verifier-check} of the protocol; the second is the partial sums which are leaked by the sumcheck subprotocol itself.

We resolve the first issue by replacing, for each internal vertex $v$, the low-degree extension $\LDSPValueL{\SPInput}{v}$ in the SPCE protocol with a \emph{randomized} low-degree extension $\RLDSPValueL{\SPInput}{v}$, which is $\GraphInDegree[\Graph]{v}$-wise independent outside of $\SPSubset^{\SPArity{v}}$. More precisely, for any list of distinct query points $\vec{\gamma}_{1}, \ldots, \vec{\gamma}_{\GraphInDegree[\Graph]{v}} \in (\Field - \SPSubset)^{\SPArity{v}}$, $(\RLDSPValueL{\SPInput}{v}(\vec{\gamma}_{i}))_{i=1}^{\GraphInDegree[\Graph]{v}}$ is uniformly random in $\Field^{\ell}$. Given such a low-degree extension, it suffices to ensure that the verifier may only learn its evaluations inside $(\Field - \SPSubset)^{\SPVars}$, and then only on at most $\GraphInDegree[\Graph]{v}$ distinct points; then we can simulate all of these queries with uniformly random field elements.

Given a sum-product circuit $\SPCircuit = \SPCircuitTuple$, we define for every vertex $v$ in its ari-graph $\Graph=(\VertexSet, \EdgeSet, \SPFreeProjection, \SPSumProjection)$ a \emph{random variable} $\RLDSPValueL{\SPInput}{v}$ based on $\SPValueL{\SPInput}{v}$, as follows. Let $\SubsetSize \DefineEqual 2\InternalVertexSetDegree \cdot (\SPLeafDegree + \GraphMaxInDegree{\Graph}) = \Omega(\SetCardinality{\SPSubset})$. If $v$ is a leaf vertex, then $\RLDSPValueL{\SPInput}{v}$ simply equals $\SPLeaf[v]$ with probability $1$. If instead $v$ is an internal vertex, then $\RLDSPValueL{\SPInput}{v}$ is a ``randomized'' low-degree extension of $\SPValueL{\SPInput}{v}$:
\begin{equation*}
\RLDSPValueL{\SPInput}{v}(\vec{\VariableX})
\DefineEqual
\LDSPValueL{\SPInput}{v}(\vec{\VariableX})
+
\ZeroPoly{\SPSubset^{\SPArity{v}}}(\vec{\VariableX})
\sum_{\vec{\gamma} \in \SSCSubset^{\SSCVars}}
  \RandPoly_{v}(\vec{\VariableX}, \vec{\gamma})
\enspace,
\end{equation*}
where $\LDSPValueL{\SPInput}{v}$ is the \emph{fixed} low-degree extension of $\SPValueL{\SPInput}{v}$ that we used in \secref{sec:sum-product-protocols} (see \equnref{eq:lde-of-v}), $k$ is a security parameter, $\SSCSubset$ is an arbitrary subset of $\Field$ of size $\SubsetSize$ with $0 \in \SSCSubset$, $\RandPoly_{v}$ is uniformly random in $\PolynomialRingIndOneXY{\Field}{\SPArity{v}}{\VariableX}{\SSCVars}{\VariableY}{\GraphInDegree[\Graph]{v}}{2\SubsetSize}$, and $\GraphInDegree[\Graph]{v}$ is the in-degree of $v$ in $\Graph$. (In the protocol, the verifier receives in the oracle message the evaluation table of independently drawn $\RandPoly_{v}$, for every internal vertex $v$ of $\Graph$.)

Since $\RLDSPValueL{\SPInput}{v}$ agrees with $\SPValueL{\SPInput}{v}$ on $\SPSubset^{\SPArity{v}}$, we can equivalently write (using the definition of $\LDSPValueL{\SPInput}{v}$):
\begin{align*}
\RLDSPValueL{\SPInput}{v}(\vec{\VariableX})
\DefineEqual
&
\left(
\sum_{\vec{\alpha} \in \SPSubset^{\SPArity{v}}}
\sum_{\vec{\beta} \in \SPSubset^{\SPVars[v]}}
    \Lagrange{\SPSubset^{\SPArity{v}}}(\vec{\VariableX}, \vec{\alpha})
    \cdot
\SPPoly[v]\big(
\vec{\alpha},\vec{\beta},
\RLDSPValueL{\SPInput}{u_{1}}(\Restrict{\vec{\alpha}}{\SPFreeProjection[e_{1}]},
\Restrict{\vec{\beta}}{\SPSumProjection[e_{1}]}),
\dots,
\RLDSPValueL{\SPInput}{u_{\aodeg}}(\Restrict{\vec{\alpha}}{\SPFreeProjection[e_{\aodeg}]},
\Restrict{\vec{\beta}}{\SPSumProjection[e_{\aodeg}]})
\big)
\right) \\
&\hspace{8cm}
+
\ZeroPoly{\SPSubset^{\SPArity{v}}}(\vec{\VariableX})
\sum_{\vec{\gamma} \in \SSCSubset^{\SSCVars}}
  \RandPoly_{v}(\vec{\VariableX}, \vec{\gamma})
  \\
=& 
\sum_{\vec{\alpha} \in \SPSubset^{\SPArity{v}}}
\sum_{\vec{\beta} \in \SPSubset^{\SPVars[v]}}
\sum_{\vec{\gamma} \in \SSCSubset^{\SSCVars}}
\Big(
\Lagrange{\SSCSubset^{\SSCVars}}(\vec{0}, \vec{\gamma})
\cdot
\Lagrange{\SPSubset^{\SPArity{v}}}(\vec{\VariableX}, \vec{\alpha}) 
\cdot
\SPPoly[v]\big(
\vec{\alpha},\vec{\beta},
\RLDSPValueL{\SPInput}{u_{1}}(\Restrict{\vec{\alpha}}{\SPFreeProjection[e_{1}]},
\Restrict{\vec{\beta}}{\SPSumProjection[e_{1}]}),
\dots,
\RLDSPValueL{\SPInput}{u_{\aodeg}}(\Restrict{\vec{\alpha}}{\SPFreeProjection[e_{\aodeg}]},
\Restrict{\vec{\beta}}{\SPSumProjection[e_{\aodeg}]})
\big) \\
&\hspace{7cm}
+ \Lagrange{\SCSubset^{\SPArity{v}+\SPVars[v]}}((\vec{\alpha},\vec{\beta}),\vec{0})
\cdot
\ZeroPoly{\SPSubset^{\SPArity{v}}}(\vec{\VariableX}) 
\cdot
\RandPoly_{v}(\vec{\VariableX}, \vec{\gamma})
\Big)
\enspace.
\end{align*}
Note that the individual degree of $\RLDSPValueL{\SPInput}{v}(\vec{\VariableX})$ is exactly $\SetCardinality{\SPSubset} + \GraphInDegree[\Graph]{v}$. The individual degree of the summand in the last line (in $\vec{\alpha},\vec{\beta},\vec{\gamma}$) is at most $\max\{2\SubsetSize, \SetCardinality{\SPSubset} + \GraphInDegree[\Graph]{v} + \InternalVertexSetDegree \cdot \max\{\SPLeafDegree, \SetCardinality{\SPSubset} + \max_{1 \leq i \leq \GraphOutDegree[\Graph]{v}} \GraphInDegree[\Graph]{u_{i}} \} \} \leq 2\SubsetSize$.

Observe first that $\RandPoly_{v}$ is a perfectly-hiding commitment to the random polynomial $S_{v}(\vec{\VariableX}) := \sum_{\vec{\gamma} \in \SSCSubset^{\SSCVars}} \RandPoly_{v}(\vec{\VariableX}, \vec{\gamma})$, and so $S_{v}(\vec{x})$ itself is uniformly random even conditioned on strictly fewer than $\SubsetSize^{k}$ queries to $\RandPoly_{v}$. Then since $\ZeroPoly{\SPSubset^{\SPArity{v}}}(\vec{\VariableX})$ is non-zero in $(\Field - \SPSubset)^{\SPArity{v}}$ and the individual degree of $S_{v}$ is $\GraphInDegree[\Graph]{v}$, the required independence property holds. If we ensure that the verifier chooses its challenges in the sumcheck protocol from $\Field - \SPSubset$ rather than all of $\Field$ (i.e. the prover aborts otherwise), then the values sent to the verifier in the SPCE protocol will indeed be uniformly random.

We resolve the second issue by replacing the sumcheck subprotocol with the zero knowledge IPCP for sumcheck defined in \secref{sec:strong-zk-sumcheck}. This requires sending $O(\SetCardinality{\VertexSet(\Graph)})$ proofs, which we can concatenate together with the $\RandPoly_{v}$ into a single oracle. Note that here we will require the full strength of the zero knowledge guarantee which we obtain in \thmref{thm:strong-sumcheck-ipcp} (as opposed to the weaker guarantee of \cite{BenSassonCFGRS16}), because the simulator is not able to make an arbitrary polynomial number of queries to $\RLDSPValueL{\SPInput}{v}$ for any internal vertex $v$. Instead the number of queries must be bounded by $\GraphInDegree[\Graph]{v}$ for each $v$, so that these queries can be simulated by choosing uniformly random field elements; the zero knowledge guarantee of \thmref{thm:strong-sumcheck-ipcp} allows us to do exactly that.

We are now ready to put everything together.

\begin{proof}
Fix $k := \lceil \log \SCStrength / \log \SubsetSize \rceil$. The prover and verifier receive as input a SPCE instance $(\SPCircuit, \SPOutput, \SPInput)$. We first describe the oracle message that is first sent to the verifier, and then describe the subsequent interaction between the prover and verifier.
\begin{itemize}

\item \textbf{Oracle.}
The prover sends to the verifier a proof string $\Proof$ that contains, for each internal vertex $v$ of $\Graph$:
\begin{itemize}[nolistsep]
  \item the evaluation table of the polynomial $
\RandPoly_{v}
\in
  \PolynomialRingIndOneXY{\Field}{\SPArity{v}}{\VariableX}{\SSCVars}{\VariableY}{\GraphInDegree[\Graph]{v}}{2\SubsetSize}
$
drawn independently and uniformly at random;
  \item a proof string $\Proof_{v}$ which is the oracle sent in a $\SubsetSize^{k}$-strong zero knowledge sumcheck protocol on input $(\Field, \SPArity{v} + \SPVars[v] + k, \SubsetSize, \SPSubset, \cdot)$. \footnote{Recall that we do not need to specify $\SCSum$ until later on in the protocol, and it will depend on the verifier's random choices.}
\end{itemize}

  \item \textbf{Interaction.}
  The prover and verifier associate, for each vertex $v$ of $\Graph$, a \emph{set} of labels $\Labels{v}$; for the root, this set contains only the pair $(\EmptyVector, \SPOutput)$, while for all other vertices this set is initially empty and will be populated during the protocol. The prover and verifier then interact as follows.
\begin{enumerate}

  \item \label{step:sp-main-loop}
  \mbox{For every internal vertex $v$ of $\Graph$ taken in (any) topological order, letting $\aodeg \DefineEqual \GraphOutDegree[\Graph]{v}$:}
  \begin{enumerate}
  
      \item For every $(\vec{\gamma}_j, a_j)$ in $\Labels{v}$, the verifier samples a random $\alpha_j \in \Field$ and sends it to the prover.
      
      \item The prover and verifier invoke a $\SubsetSize^{\SSCVars}$-strong perfect zero knowledge Interactive Probabilistically Checkable Proof system for sumcheck (see \secref{sec:strong-zk-sumcheck}) on the claim
\begin{equation*}
\text{``}\quad
  \sum_{j=1}^{\SetCardinality{\Labels{v}}} \alpha_j \RLDSPValueL{\SPInput}{v}(\vec{\gamma}_j)
= \sum_{j=1}^{\SetCardinality{\Labels{v}}} \alpha_j a_j
\quad\text{''}
\end{equation*}
using $\Proof_{v}$ as the oracle, and with $\SoundnessSet \DefineEqual \Field \setminus \SCSubset$. By the end of this subprotocol, the verifier has chosen $\vec{c}_{1} \in \Field^{\SPArity{v}}$, $\vec{c}_{2} \in \Field^{\SPVars}$, and $\vec{c}_{3} \in \Field^{\SSCVars}$ uniformly at random, and has derived from the prover's messages a value $b \in \Field$ that allegedly satisfies the following equality:
\begin{align}
\label{eqn:pzk-SPCE-claim}
b =
\sum_{j=1}^{\SetCardinality{\Labels{v}}}
\alpha_j
\Big(
&\Lagrange{\SSCSubset^{\SSCVars}}(\vec{0}, \vec{c}_{3})
\cdot
\Lagrange{\SPSubset^{\SPArity{v}}}(\vec{\gamma}_{j}, \vec{c}_{1}) 
\cdot
\SPPoly[v]\big(
\vec{c}_{1}, \vec{c}_{2},
\RLDSPValueL{\SPInput}{u_{1}}(\Restrict{\vec{c}_{1}}{\SPFreeProjection[e_{1}]},
\Restrict{\vec{c}_{2}}{\SPSumProjection[e_{1}]}),
\dots,
\RLDSPValueL{\SPInput}{u_{\aodeg}}(\Restrict{\vec{c}_{1}}{\SPFreeProjection[e_{\aodeg}]},
\Restrict{\vec{c}_{2}}{\SPSumProjection[e_{\aodeg}]})
\big)\\
&+ \Lagrange{\SCSubset^{\SPArity{v} + \SPVars[v]}}((\vec{c}_{1},\vec{c}_{2}),\vec{0})
\cdot
\ZeroPoly{\SPSubset^{\SPArity{v}}}(\vec{\gamma}_{j}) 
\cdot
\RandPoly_{v}(\vec{\gamma}_{j}, \vec{c}_{3})
\Big) \nonumber
\enspace,
\end{align}
where $e_{1} = (v, u_{1}), \dots, e_{\aodeg} = (v, u_{\aodeg})$ are the outgoing edges of $v$ (with multiplicity).
    
    \item The prover sends $\vec{h}_{v} \DefineEqual (h_{1},\dots,h_{\aodeg})$ where $h_{1} \DefineEqual \RLDSPValueL{\SPInput}{u_{1}}(\Restrict{\vec{c}_{1}}{\SPFreeProjection[e_{1}]},
    \Restrict{\vec{c}_{2}}{\SPSumProjection[e_{1}]}),
    \dots, h_{\aodeg} \DefineEqual \RLDSPValueL{\SPInput}{u_{\aodeg}}(\Restrict{\vec{c}_{1}}{\SPFreeProjection[e_{\aodeg}]},
    \Restrict{\vec{c}_{2}}{\SPSumProjection[e_{\aodeg}]})$. For every $j = 1, \dots, \aodeg$, the verifier adds the label $((\Restrict{\vec{c}_{1}}{\SPFreeProjection[e_{j}]},\Restrict{\vec{c}_{2}}{\SPSumProjection[e_{j}]}), h_{j})$ to $\Labels{u_{j}}$.

\end{enumerate}

  \item For every internal vertex $v$ of $\Graph$, the verifier checks that
  \begin{equation}
  \label{eqn:pzk-SPCE-check}
  b =
  \sum_{j=1}^{\SetCardinality{\Labels{v}}}
  \alpha_{j}
  \Big(
  \Lagrange{\SSCSubset^{\SSCVars}}(\vec{0}, \vec{c}_{3})
  \cdot
  \Lagrange{\SPSubset^{\SPArity{v}}}(\vec{\gamma}_{j}, \vec{c}_{1}) 
  \cdot
  \SPPoly[v]\big(\vec{c}_{1}, \vec{c}_{2}, \vec{h}_{v}\big)
  + \Lagrange{\SCSubset^{\SPArity{v} + \SPVars[v]}}((\vec{c}_{1},\vec{c}_{2}),\vec{0})
  \cdot
  \ZeroPoly{\SPSubset^{\SPArity{v}}}(\vec{\gamma}_{j}) 
  \cdot
  \RandPoly_{v}(\vec{\gamma}_{j}, \vec{c}_{3})
  \Big)
  \enspace.
  \end{equation}
  For this, the prover sends $\big(\RandPoly_{v}(\vec{\gamma}_{j}, \vec{c}_{3})\big)_{j=1}^{\SetCardinality{\Labels{v}}}$, which the verifier uses to compute the above expression. The verifier checks that these values are correct using a standard interpolation trick and a single query to $\RandPoly_{v}$. To make this query, the verifier
  \begin{inparaenum}[(i)]
  	\item tests that $\RandPoly_{v}$ is close to (the evaluation of) a polynomial in $\PolynomialRingIndOneXY{\Field}{\SPArity{v}}{\VariableX}{\SSCVars}{\VariableY}{\GraphInDegree[\Graph]{v}}{2\SubsetSize}$;
  	\item uses self-correction to make the required query.
  \end{inparaenum}

  \item \label{step:sp-final-check}
  For every leaf vertex $v$ of $\Graph$, and for every $(\vec{\gamma}, a) \in \Labels{v}$, the verifier checks that $\RLDSPValueL{\SPInput}{v}(\vec{\gamma}) = a$, i.e., that $\SPLeaf[v](\vec{\gamma}) = a$.
  
\end{enumerate}

\end{itemize}
\parhead{Efficiency}
The protocol runs a $\SubsetSize^{\SSCVars}$-strong zero knowledge sumcheck protocol at most $\SetCardinality{\VertexSet}$ times on polynomials of at most $\SPMaxArity(\Graph) + \SSCVars$ variables. Inspection of the protocol shows that we can execute \stepref{step:sp-main-loop} in parallel for all vertices at the same depth, so the number of rounds is at most $\GraphDepth{\Graph}(\SPMaxArity(\Graph) + 2\SSCVars + 2)$. The running time of the verifier is clearly polynomial in $\BitSize{\SPCircuit}$. The number of queries to the $\RandPoly_{v}$ is at most $\SetCardinality{\VertexSet(\Graph)} \cdot \poly(\log \SetCardinality{\Field} + k + \SubsetSize)$, and the number of queries to the $\Proof_{v}$ is at most $\SetCardinality{\VertexSet(\Graph)} \cdot \poly(\log \SetCardinality{\Field} + \SPMaxArity(\Graph) + \SSCVars + \SubsetSize)$.

\parhead{Completeness}
Perfect completeness is clear from the protocol description and the perfect completeness of the zero-knowledge sumcheck protocol.
	
\parhead{Soundness}
Let $v_{1},v_{2},\dots$ be any topological order of the internal vertices of $\Graph$, and define let $V_{\geq i}$ be the union of $\{v_{j} : j \geq i\}$ and the leaf vertices of $\Graph$. First we argue that if before iteration $i$ of \stepref{step:sp-main-loop} there exists $v \in V_{\geq i}$ such that $\Labels{v}$ contains $(\vec{\gamma}, a)$ with $\RLDSPValueL{\SPInput}{v}(\vec{\gamma}) \neq a$, then after this iteration with high probability either
\begin{inparaenum}[(a)]
  \item the verifier rejects, or
  \item there exists $w \in V_{\geq i+1}$, such that there is $(\vec{\gamma}',b) \in \Labels{w}$ with $\RLDSPValueL{\SPInput}{w}(\vec{\gamma}') \neq b$.
\end{inparaenum}
If $v \in V_{\geq i+1}$ then there is nothing to do, so we may assume that $v = v_{i}$. The probability that $\sum_{j=1}^{\SetCardinality{\Labels{v_{i}}}} \alpha_j \RLDSPValueL{\SPInput}{v_{i}}(\vec{\gamma}_j) = \sum_{j=1}^{\SetCardinality{\Labels{v_{i}}}} \alpha_j a_j$ is $1/\SetCardinality{\Field}$. So with probability $1 - 1/\SetCardinality{\Field}$ we run the zero knowledge sumcheck protocol on a false claim about a polynomial with $(\SPArity{v} + \SPVars[v] + \SSCVars)$ variables and individual degree at most $2\SubsetSize$, which means that with probability at least $1 - O((\SPArity{v} + \SPVars[v] + \SSCVars) \cdot \SubsetSize/(\SetCardinality{\Field}-\SetCardinality{\SPSubset}))$ either the verifier rejects or outputs the \emph{false} claim
\begin{align*}
b =
\sum_{j=1}^{\SetCardinality{\Labels{v}}}
\big(
\alpha_{j}
&\Lagrange{\SSCSubset^{\SSCVars}}(\vec{0}, \vec{c}_{3})
\cdot
\Lagrange{\SPSubset^{\SPArity{v}}}(\vec{\gamma}_{j}, \vec{c}_{1}) 
\cdot
\SPPoly[v]\big(
\vec{c}_{1}, \vec{c}_{2},
\RLDSPValueL{\SPInput}{u_{1}}(\Restrict{\vec{c}_{1}}{\SPFreeProjection[e_{1}]},
\Restrict{\vec{c}_{2}}{\SPSumProjection[e_{1}]}),
\dots,
\RLDSPValueL{\SPInput}{u_{\aodeg}}(\Restrict{\vec{c}_{1}}{\SPFreeProjection[e_{\aodeg}]},
\Restrict{\vec{c}_{2}}{\SPSumProjection[e_{\aodeg}]})
\big)\\
&+ \Lagrange{\SCSubset^{\SPArity{v} + \SPVars[v]}}((\vec{c}_{1},\vec{c}_{2}),\vec{0})
\cdot
\ZeroPoly{\SPSubset^{\SPArity{v}}}(\vec{\gamma}_{j}) 
\cdot
\RandPoly_{v}(\vec{\gamma}_{j}, \vec{c}_{3})
\big)
\enspace.
\end{align*}

The verifier receives values $(r_{j})_{j=1}^{\SetCardinality{\Labels{v}}}$, which it substitutes for $\RandPoly_{v}(\vec{\gamma}_{j}, \vec{c}_{3})$ in \equnref{eqn:pzk-SPCE-check}. If this expression does not evaluate to $b$, then the verifier rejects. If $\RandPoly_{v}$ is far from any polynomial in $\PolynomialRingIndOneXY{\Field}{\SPArity{v}}{\VariableX}{\SSCVars}{\VariableY}{\GraphInDegree[\Graph]{v}}{2}$, then the verifier rejects with high probability. If there exists some $j$ such that $r_{j} \neq \RandPoly_{v}(\vec{\gamma}_{j}, \vec{c}_{3})$, then by the soundness of the polynomial interpolation test, the verifier will reject with probability at least $1-O((\SPArity{v} + k) \cdot \SubsetSize/\SetCardinality{\Field})$. Otherwise, it must be the case that $\RLDSPValueL{\SPInput}{u_{j}}(\vec{c}) \neq h_{j}$ where $u_{j}$ is one of the $\GraphOutDegree[\Graph]{v_{i}}$ children of $v_{i}$. Since $u_{j} \in V_{\geq i+1}$, with high probability after this iteration either the verifier has already rejected or there is $w \in V_{\geq i+1}$ and $(\vec{\gamma}', a') \in \Labels{w}$ with $\RLDSPValueL{\SPInput}{w}(\vec{\gamma}') \neq a'$.

The above implies soundness in a straightforward way: if $\SPValueL{\SPInput}{\SPCircuit} \neq \SPOutput$ then the condition is satisfied before the first iteration, and in each iteration with high probability either the condition is maintained or the verifier rejects. Thus with high probability either the verifier rejects or the invariant holds before the last iteration; after the last iteration, the verifier will reject with high probability because only leaf vertices are left (and claims about them are checked directly). More precisely, by a union bound over all the internal vertices, and setting the parameters of the proximity test appropriately, if $\SPValueL{\SPInput}{\SPCircuit} \neq \SPOutput$ then the verfier accepts with probability $O(\SetCardinality{\VertexSet(\Graph)}(\SPMaxArity(\Graph) + \SSCVars)\cdot \SubsetSize/(\SetCardinality{\Field}-\SetCardinality{\SPSubset}))$.

\parhead{Zero knowledge}
We prove that the protocol has perfect zero knowledge by exhibiting a polynomial-time simulator that perfectly samples the view of any malicious verifier. We will assume that the simulator maintains the label sets $\Labels{v}$ in the same way as the honest verifier, but for clarity we will not state this in its description.
	
\begin{mdframed}
{\small
			
\begin{enumerate}[nolistsep]

\item \label{step:sp-draw-masks}
For every internal vertex $v$ of $\Graph$, sample $\Simulated{\RandPoly}^{v} \in \PolynomialRingIndOneXY{\Field}{\SPArity{v}}{\VariableX}{\SSCVars}{\VariableY}{\GraphInDegree[\Graph]{v}}{2\SubsetSize}$ uniformly at random. Use $\Simulated{\RandPoly}^{v}$ to answer queries to $\RandPoly_{v}$.
				
\item For every internal vertex $v$ of $\Graph$, run the $\SubsetSize^{k}$-strong ZK sumcheck simulator on input $(\Field, \SPArity{v} + \SPVars[v] + k, \SubsetSize, \SPSubset, \cdot)$, and use it to answer queries to $\Proof_{v}$ throughout. Recall that the behavior of each simulator does not depend on the claim being proven until after the first simulated message, so we can choose these later.

\item \mbox{For every internal vertex $v$ of $\Graph$ taken in (any) topological order, letting $\aodeg \DefineEqual \GraphOutDegree[\Graph]{v}$:}
\begin{enumerate}[nolistsep]

\item Receive $\Malicious{\alpha}_{1}, \dots, \Malicious{\alpha}_{\SetCardinality{\Labels{v}}}$ from the verifier.

\item Using the subsimulator for $v$, simulate the strong ZK sumcheck protocol on the claim
\begin{equation*}
\text{``}\quad
  \sum_{j=1}^{\SetCardinality{\Labels{v}}} \Malicious{\alpha}_j \RLDSPValueL{\SPInput}{v}(\vec{\gamma}_j)
= \sum_{j=1}^{\SetCardinality{\Labels{v}}} \Malicious{\alpha}_j a_j
\quad\text{''.}
\end{equation*}
The subsimulator will query the oracle $\SCPoly$ at a single location $\vec{c} = (\vec{c}_{1}, \vec{c}_{2}, \vec{c}_{3})$ with $\vec{c}_{1} \in \Field^{\SPArity{v}}$, $\vec{c}_{2} \in \Field^{\SPVars[v]}$, and $\vec{c}_{3} \in \Field^{\SSCVars}$. Reply with the value
\begin{align*}
\sum_{j=1}^{\SetCardinality{\Labels{v}}}
\alpha_{j}
\Big(
&\Lagrange{\SSCSubset^{\SSCVars}}(\vec{0}, \vec{c}_{3})
\cdot
\Lagrange{\SPSubset^{\SPArity{v}}}(\vec{\gamma}_{j}, \vec{c}_{1}) 
\cdot
\SPPoly[v](\vec{c}_{1}, \vec{c}_{2}, \Simulated{h}^{1}, \dots, \Simulated{h}^{\aodeg}) \\
&+ \Lagrange{\SCSubset^{\SPArity{v} + \SPVars[v]}}((\vec{c}_{1},\vec{c}_{2}),\vec{0})
\cdot
\ZeroPoly{\SPSubset^{\SPArity{v}}}(\vec{\gamma}_{j}) 
\cdot
\RandPoly_{v}(\vec{\gamma}_{j}, \vec{c}_{3})
\Big)
\enspace,
\end{align*}
where $\Simulated{h}^1, \dots, \Simulated{h}^{\aodeg} \in \Field$ are chosen as follows. Let $e_{1} = (v, u_{1}), \dots, e_{\aodeg} = (v, u_{\aodeg})$ be the outgoing edges of $v$ (with multiplicity). For every $k \in \{1, \dots, \aodeg\}$, letting $\vec{c}_{k} \DefineEqual (\Restrict{\vec{c}_{1}}{\SPFreeProjection[e_{k}]},\Restrict{\vec{c}_{2}}{\SPSumProjection[e_{k}]})$:
\begin{enumerate}[nolistsep]
  \item if $u_{k}$ is a leaf vertex, then $\Simulated{h}^{k} \DefineEqual \SPLeaf[u_{k}](\vec{c}_{k})$.
  \item if $u_{k}$ is an internal vertex and $(\vec{c}_{k}, h) \in \Labels{u_{k}}$ for some $h \in \Field$, $\Simulated{h}^{k} \DefineEqual h$.
  \item if $u_{k}$ is an internal vertex and $(\vec{c}_{k}, h) \notin \Labels{u_{k}}$ for all $h \in \Field$, sample $\Simulated{h}^{k}$ at random and add $(\vec{c}_{k}, \Simulated{h}^{k})$ to $\Labels{u_{k}}$.
\end{enumerate}

    \item Send $\Simulated{h}^{1}, \dots, \Simulated{h}^{\aodeg}$ to the verifier.
    
  \end{enumerate}
\end{enumerate}
}%
\end{mdframed}
	
	The view of the verifier in a real execution is composed of the messages from the prover during each sumcheck protocol, the values $\RLDSPValueL{\SPInput}{v}(\vec{\gamma})$ for every internal vertex $v$ in $\VertexSet$ and $(\vec{\gamma}, a) \in \Labels{v}$, and the verifier's queries to the oracles $\RandPoly_{v}$ and $\Proof_{v}$ for every internal vertex $v$ in $\VertexSet$.
	
	Any $\SCStrength$-query malicious verifier $\Malicious{\Verifier}$ may query any $\RandPoly_{v}$ at strictly fewer than $\SCStrength$ points. By \corref{cor:partial-sum-indep-vars}, $S_{v}(\vec{x}) := \sum_{\vec{y} \in \SSCSubset^{\SSCVars}} \RandPoly_{v}(\vec{x}, \vec{y})$ is uniformly random in $\PolynomialRingIndOne{\Field}{\SPArity{v}}{\VariableX}{\GraphInDegree[\Graph]{v}}$, even conditioned on the values of the fewer than $\SubsetSize^{\SSCVars} \leq \SCStrength$ queries made by $\Malicious{\Verifier}$ to $\RandPoly_{v}$. Therefore any string $\big(\RLDSPValueL{\SPInput}{v}(\vec{\gamma}_{1}), \dots, \RLDSPValueL{\SPInput}{v}(\vec{\gamma}_{\ell})\big)$ for $\ell \leq \GraphInDegree[\Graph]{v}$ and distinct $\vec{\gamma}_{1}, \dots, \vec{\gamma}_{\ell} \in (\Field - \SPSubset)^{\SPArity{v}}$ is identically distributed to a uniformly random string in $\Field^{\ell}$. Observe that for every internal vertex $v$ in $\VertexSet$, $\SetCardinality{\Labels{v}} \leq \GraphInDegree[\Graph]{v}$, and the prover sends $\RLDSPValueL{\SPInput}{v}(\vec{\gamma})$ for each $(\vec{\gamma}, a) \in \Labels{v}$, so all of these values are uniformly random in $\Field$, and thus identically distributed to the values $\Simulated{h}^i$ the simulator sends.
	
	Clearly $\Simulated{\RandPoly}^{v}$ and $\RandPoly_{v}$ are identically distributed for every internal vertex $v$ in $\VertexSet$. The perfect zero knowledge property of the strong ZK sumcheck simulator guarantees that the simulation of $\Proof_{v}$ and the messages sent during the sumcheck protocol is perfect given a single query to the oracle $\SCPoly$. We simulate this query by substituting values $\Simulated{h}^{i}$ in place of $\RLDSPValueL{\SPInput}{u_{i}}(\vec{c}_{i})$, which we argue above are identically distributed.
	
	It follows from the description and the efficiency of the subsimulator that the simulator runs in time $\poly(\BitSize{\SPCircuit}) \cdot \BitSize{\SPInput} + \SetCardinality{\VertexSet(\Graph)} (\SPMaxArity(\Graph) + k)(\SubsetSize \QueryComplexity_{\Malicious{\Verifier}} \SetCardinality{\SPSubset} + \SubsetSize^{3} \QueryComplexity_{\Malicious{\Verifier}}^{3}) \cdot \poly(\log \SetCardinality{\Field})$, provided we use the algorithm of \corref{cor:efficient-poly-simulator} for \stepref{step:sp-draw-masks}.
\end{proof}

\subsection{The case of sum-product satisfaction}
\label{sec:pzk-sum-product-satisfaction}

To make the sum-product circuit satisfaction protocol zero knowledge, in addition to the above considerations we must also avoid leaking information about the witness $\SPInput$. This we achieve using similar techniques to those seen previously: for each leaf vertex $w$, rather than directly sending the auxiliary input polynomial $\SPAuxLeaf[w](\vec{\VariableX})$, the prover sends a polynomial $\SPAuxLeaf[w]'(\vec{\VariableX},\vec{\VariableY})$ chosen randomly from the set of low-degree polynomials summing over $\SPSubset^{k}$ to $\SPAuxLeaf[w](\vec{\VariableX})$ for some appropriately-chosen $k$. This acts as a perfectly hiding commitment to the witness.

We show that we can efficiently construct a sum-product circuit $\SPCircuit'$ which outputs $\SPOutput$ on input $(\SPInput,\SPAuxLeaf[w]')$ if and only if the original sum-product circuit outputs $\SPOutput$ on input $(\SPInput,\SPAuxLeaf[w])$. We then obtain our zero-knowledge protocol by running the protocol of \thmref{thm:pzk-for-SPCE} on $\SPCircuit'$, implementing the queries to the input as queries to the proof (with self-correction).

\begin{theorem}[PZK IPCP for $\SPCSRelation$]
	\label{thm:pzk-for-SPCS}
	For every query bound function $\SCStrength(n)$, the relation $\SPCSRelation$ has a (public-coin and non-adaptive) Interactive PCP that is perfect zero knowledge against all $\SCStrength$-query malicious verifiers. In more detail, letting $\alpha \DefineEqual \log \SCStrength / \log \SetCardinality{\SPSubset}$:
	{\small
		\begin{equation*}
		\SPCSRelation \in
		\PZKIPCP
		\left[
		\begin{tabular}{rl}
		soundness error: & $O(\InternalVertexSetDegree\SPLeafDegree \cdot \GraphMaxInDegree{\Graph} \cdot (\SPMaxArity(\Graph) + \alpha) \cdot \SetCardinality{\VertexSet(\Graph)}/\SetCardinality{\Field})$ \\
		round complexity: & $O( \GraphDepth[]{\Graph} \cdot (\SPMaxArity(\Graph) + \alpha))$ \\[1mm]
		proof length: & $O(\SetCardinality{\VertexSet(\Graph)} \cdot \SetCardinality{\Field}^{\SPMaxArity(\Graph) + 2\alpha})$ \\
		query complexity: & $\SetCardinality{\VertexSet(\Graph)} \cdot \poly(\log \SetCardinality{\Field}, \SPMaxArity(\Graph), \alpha, \InternalVertexSetDegree, \SPLeafDegree, \GraphMaxInDegree{\Graph})$ \\[1mm]
		prover time: & $\poly(\BitSize{\SPCircuit}, \BitSize{\SPInput}, \BitSize{\SPAuxInput}, \SetCardinality{\Field}^{\SPMaxArity(\Graph) + \alpha})$ \\
		verifier time: &  $\poly(\BitSize{\SPCircuit}, \BitSize{\SPInput})$ \\[1mm]
		verifier space: & $O((\SPMaxArity(\Graph) + \alpha) \cdot \Width{\Graph} \cdot \GraphMaxInDegree{\Graph} \cdot \log \SetCardinality{\Field} + \log \BitSize{\SPCircuit} + \MaxSpace{\SPInput})$ \\
		simulator overhead: & $\poly(\BitSize{\SPCircuit}, \alpha) \cdot (\BitSize{\SPInput} +  \QueryComplexity_{\Malicious{\Verifier}}^{3})$.
		\end{tabular}
		\right]
		\enspace.
		\end{equation*}}
\end{theorem}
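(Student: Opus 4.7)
The plan is to reduce sum-product circuit satisfaction to sum-product circuit evaluation (Theorem~\ref{thm:pzk-for-SPCE}) by having the prover commit to the witness using an algebraic commitment, and then running the evaluation protocol on a slightly transformed circuit. Fix $\SSCVars \DefineEqual \lceil \log \SCStrength / \log \SetCardinality{\SPSubset} \rceil$ and let $\SSCSubset$ be any subset of $\Field$ of size $\Theta(\SPLeafDegree)$. For each leaf vertex $w$ in the domain of $\SPAuxInput$, the honest prover samples $\SPAuxLeaf[w]'(\vec{\VariableX}, \vec{\VariableY})$ uniformly at random from polynomials of individual degree $\SPLeafDegree$ in $\vec{\VariableX}$ and $2\SetCardinality{\SSCSubset}$ in $\vec{\VariableY}$, subject to $\sum_{\vec{\beta} \in \SSCSubset^{\SSCVars}} \SPAuxLeaf[w]'(\vec{\VariableX}, \vec{\beta}) = \SPAuxLeaf[w](\vec{\VariableX})$. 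By Corollary~\ref{cor:partial-sum-indep-vars}, this is perfectly hiding against strictly fewer than $\SetCardinality{\SSCSubset}^{\SSCVars}$ queries.

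I would then construct a modified sum-product circuit $\SPCircuit'$ obtained from $\SPCircuit$ by replacing each witness leaf $w$ with a new internal vertex whose combiner polynomial is the ``identity sum'' $\SPPoly(\vec{\VariableX}, \vec{\VariableY}, z) \DefineEqual z$ with $\SSCVars$ summation variables $\vec{\VariableY}$, and whose single child is a new leaf labeled with the commitment polynomial $\SPAuxLeaf[w]'$. The identity $\SPValueL{\SPInput, (\SPAuxLeaf[w]')_{w}}{\SPCircuit'} = \SPValueL{\SPInput, \SPAuxInput}{\SPCircuit}$ follows directly from the commitment constraint; the graph depth, arity, and combiner degree of $\SPCircuit'$ increase only by additive constants depending on $\SSCVars$.

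The protocol then proceeds as follows. The prover sends the evaluation tables of all $\SPAuxLeaf[w]'$ as part of the IPCP oracle, concatenated with the oracle message of the PZK IPCP for $\SPCELanguage$ (Theorem~\ref{thm:pzk-for-SPCE}) applied to the instance $(\SPCircuit', \SPOutput, \SPInput)$. The verifier runs the corresponding PZK IPCP verifier; whenever that subverifier needs to evaluate a new leaf $\SPAuxLeaf[w]'$, the verifier first low-degree tests $\SPAuxLeaf[w]'$ and then answers the query via self-correction. Completeness is immediate. For soundness, low-degree testing forces each $\Malicious{\SPAuxLeaf[w]'}$ to be close to a unique polynomial of the prescribed degree, so self-correction returns this polynomial with high probability; since no witness satisfies $\SPCircuit$, we must have $\SPValueL{\SPInput, (\SPAuxLeaf[w]')_{w}}{\SPCircuit'} \neq \SPOutput$, and the soundness of Theorem~\ref{thm:pzk-for-SPCE} applies after a union bound with the low-degree testing error.

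Zero knowledge is witnessed by a simulator that, for each committed leaf, samples a uniformly random polynomial $\Simulated{\SPAuxLeaf[w]'}$ of the appropriate degree (using the succinct constraint detection algorithm of Corollary~\ref{cor:efficient-poly-simulator} to answer queries on the fly), and invokes the simulator of Theorem~\ref{thm:pzk-for-SPCE} for the rest of the view, using $\Simulated{\SPAuxLeaf[w]'}$ to answer queries at the corresponding new leaves of $\SPCircuit'$. The main obstacle is ensuring that the total number of queries to each $\SPAuxLeaf[w]'$, including both direct queries by $\Malicious{\Verifier}$ and implicit queries induced by self-correction within the underlying PZK IPCP verifier, remains strictly below $\SetCardinality{\SSCSubset}^{\SSCVars}$, so that Corollary~\ref{cor:partial-sum-indep-vars} guarantees that these answers are jointly independent of $\SPAuxLeaf[w]$ and hence are distributed identically in the real and simulated executions. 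This is achieved by choosing $\SSCVars$ slightly larger than $\alpha$ to absorb the multiplicative query overhead of low-degree testing and self-correction, which accounts for the extra $\SetCardinality{\Field}^{\alpha}$ factor in the proof length compared to Theorem~\ref{thm:pzk-for-SPCE}.
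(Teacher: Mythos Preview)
Your approach is essentially the paper's: extend $\SPCircuit$ to $\SPCircuit'$ by inserting, below each witness leaf $w$, a new internal vertex with the identity combiner and a fresh leaf carrying the commitment polynomial $\SPAuxLeaf[w]'$, then run the PZK IPCP for $\SPCELanguage$ (Theorem~\ref{thm:pzk-for-SPCE}) on $\SPCircuit'$ with the commitments appended to the oracle, low-degree tested, and read via self-correction.

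Two points need correcting. First, the commitment summation set must be $\SPSubset$, not a fresh set $\SSCSubset$ of size $\Theta(\SPLeafDegree)$: by Definition~\ref{def:sum-product-circuit} every internal vertex of a sum-product circuit sums over powers of the \emph{same} $\SPSubset$, so your new vertex computes $\sum_{\vec{\beta}\in \SPSubset^{\SSCVars}} \SPAuxLeaf[w]'(\vec{\VariableX},\vec{\beta})$, and this equals $\SPAuxLeaf[w](\vec{\VariableX})$ only if the commitment constraint is taken over $\SPSubset^{\SSCVars}$. The paper accordingly sets the $\vec{\VariableY}$-degree of $\SPAuxLeaf[w]'$ to $2\SetCardinality{\SPSubset}$. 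Second, your explanation of the extra $\SetCardinality{\Field}^{\alpha}$ in the proof length is off. It does not come from inflating $\SSCVars$ beyond $\alpha$ to absorb self-correction queries; the zero-knowledge bound is on the \emph{malicious} verifier's queries, and the only additional evaluations of $\SPAuxLeaf[w]'$ that the simulator must account for are the (at most one, since $\GraphInDegree[\Graph']{v_{w}}=1$) leaf evaluations made internally by the SPCE simulator. Rather, the extra factor arises because $\SPMaxArity(\Graph') = \SPMaxArity(\Graph) + \SSCVars$, so Theorem~\ref{thm:pzk-for-SPCE} applied to $\SPCircuit'$ yields proof length $\SetCardinality{\Field}^{\SPMaxArity(\Graph')+\alpha} = \SetCardinality{\Field}^{\SPMaxArity(\Graph)+2\alpha}$.
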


\begin{proof}
Fix $k \DefineEqual \log \SCStrength / \log \SetCardinality{\SPSubset}$. Let $\SPCircuit = \SPCircuitTuple$ be a sum-product circuit and let $\Graph = (\VertexSet, \EdgeSet, \SPFreeProjection, \SPSumProjection)$ be its ari-graph. Let $\SPInput$ be a partial mapping of the leaf vertices of $\Graph$ to arithmetic circuits. We construct the new sum-product circuit $\SPCircuit' = (\Field, \SPSubset, \InternalVertexSetDegree, \max\{\SPLeafDegree,2\SetCardinality{\SPSubset}\}, \Graph', \SPPoly')$ where:
\begin{itemize}

  \item the new ari-graph $\Graph' = (\VertexSet', \EdgeSet', \SPFreeProjection', \SPSumProjection')$ extends $\Graph$ by adding a new vertex $v_{w}$ and a new edge $(w, v_{w})$ for each leaf vertex $w$ not in the domain of $\SPInput$ as follows
\begin{align*}
\VertexSet' &\DefineEqual \VertexSet \cup \{ v_{w} : w \in \VertexSet \text{ is a leaf} \} \enspace, \\
\EdgeSet' &\DefineEqual \EdgeSet \cup \{ (w, v_{w}) : w \in \VertexSet \text{ is a leaf} \} \enspace, \\
\SPFreeProjection[e]' &\DefineEqual
  \begin{cases}
    \SPFreeProjection[e] & \text{ if $e \in \EdgeSet$} \\
    \{1, \dots, \SPArity{w}\} & \text{ if $e = (w, v_{w})$ and $w$ is a leaf in $\VertexSet$}
  \end{cases} \enspace, \\
\SPSumProjection[e]' &\DefineEqual
  \begin{cases}
    \SPSumProjection[e] & \text{ if $e \in \EdgeSet$} \\
    \{1, \dots, k\} & \text{ if $e = (w, v_{w})$ and $w$ is a leaf in $\VertexSet$}
  \end{cases} \enspace.
\end{align*}
Note that the vertex and edge sets at most double in size, $\GraphDepth{\Graph'} = \GraphDepth{\Graph} + 1$, $\SPArity{v_{w}} = \SPArity{w} + k$ and $\SPVars[v_{w}] = k$.

  \item $\SPPoly'$ is such that $\SPPoly[v]' := \SPPoly[v]$ for every internal vertex $v$ in $\VertexSet$, and $\SPPoly[v]'$ is the univariate polynomial $\VariableX$ for every leaf vertex $v$ in $\VertexSet$.

\end{itemize}
Before describing the protocol, we prove a claim that relates the two sum-product circuits $\SPCircuit$ and $\SPCircuit'$.
	
\begin{uclaim}
For every $\SPOutput \in \Field$, $(\SPCircuit, \SPOutput, \SPInput) \in \Language(\SPCSRelation)$ if and only if $(\SPCircuit', \SPOutput, (\SPInput, \SPAuxInput')) \in \SPCELanguage$ for some input $\SPAuxInput'$ for $\SPCircuit'$ such that $\SPAuxLeaf[v_{w}]' \in \PolynomialRingIndOneXY{\Field}{\SPArity{w}}{\VariableX}{k}{\VariableY}{\SPLeafDegree}{2\SetCardinality{\SPSubset}}$ for every leaf $w$ in $\VertexSet$.
\end{uclaim}

\begin{proof}[Proof of claim]
First suppose that $(\SPCircuit, \SPOutput, \SPInput) \in \Language(\SPCSRelation)$, so there exists an auxiliary input $\SPAuxInput$ for $\SPCircuit$ such that $\SPValueL{\SPInput,\SPAuxInput}{\SPCircuit} = \SPOutput$. Define $\SPAuxInput'$ to be the auxiliary input for $\SPCircuit'$ such that $\SPAuxLeaf[v_{w}]'(\vec{\VariableX}, \vec{\VariableY}) \DefineEqual \Lagrange{\SPSubset^{k}}(\vec{\VariableY}, \vec{0}) \cdot \SPLeaf[w](\vec{\VariableX})$ for every leaf $w$ in $\VertexSet$ in the domain of $\SPAuxInput$. Note that for each such leaf $w$ in $\VertexSet$, $\SPAuxLeaf[v_{w}]'$ respects the desired degree bounds and, moreover,
$
  \SPValueL{\SPInput,\SPAuxInput'}{w}(\vec{\VariableX})
= \sum_{\vec{\beta} \in \SPSubset^{k}} \Lagrange{\SPSubset^{k}}(\vec{\beta}, \vec{0}) \cdot \SPAuxLeaf[w](\vec{\VariableX})
= \SPAuxLeaf[w](\vec{\VariableX})
= \SPValueL{\SPInput,\SPAuxInput}{w}(\vec{\VariableX})
$.
By construction of $\Graph'$, we deduce that $\SPValueL{\SPInput, \SPAuxInput'}{\SPCircuit'} = \SPOutput$, so that $(\SPCircuit', \SPOutput, (\SPInput, \SPAuxInput')) \in \SPCELanguage$.

Next suppose that $(\SPCircuit', \SPOutput, (\SPInput,\SPAuxInput')) \in \SPCELanguage$ for some input $\SPAuxInput'$ for $\SPCircuit'$ such that $\SPAuxLeaf[v_{w}]' \in \PolynomialRingIndOneXY{\Field}{\SPArity{w}}{\VariableX}{k}{\VariableY}{\SPLeafDegree}{2\SetCardinality{\SPSubset}}$ for every leaf $w$ in $\VertexSet$ not in the domain of $\SPInput$. Define $\SPAuxInput$ to be the input for $\SPCircuit$ such that $\SPAuxLeaf[w](\vec{\VariableX}) := \sum_{\beta \in \SPSubset^{k}} \SPAuxLeaf[v_{w}]'(\vec{\VariableX}, \vec{\beta})$ for each leaf $w$ in $\VertexSet$. Note that for each leaf $w$ in $\VertexSet$ not in the domain of $\SPInput$, $\SPAuxLeaf[w]$ has individual degree at most $\SPLeafDegree$ and, moreover,
$
  \SPValueL{\SPInput,\SPAuxInput}{w}(\vec{\VariableX})
= \sum_{\beta \in \SPSubset^{k}} \SPAuxLeaf[v_{w}]'(\vec{\VariableX}, \vec{\beta})
= \SPValueL{\SPInput,\SPAuxInput'}{w}(\vec{\VariableX})
$.
By construction of $\Graph'$, this implies that $\SPValueL{\SPInput,\SPAuxInput'}{\SPCircuit} = \SPOutput$, so that $(\SPCircuit, \SPOutput, \SPInput) \in \Language(\SPCSRelation)$.
\end{proof}
	
The protocol proceeds as follows. The prover and verifier receive a SPCS instance $(\SPCircuit, \SPOutput, \SPInput)$ as input, and the prover additionally receives an auxiliary input $\SPAuxInput$ for $\SPCircuit$ that is a valid witness for $(\SPCircuit, \SPOutput, \SPInput)$. Both use $\SPCircuit$ to construct the new sum-product circuit $\SPCircuit'$ from the above claim; in addition, the prover uses $\SPAuxInput$ to sample an input $\SPAuxInput'$ for $\SPCircuit'$ by choosing, for each leaf $w$ in the domain of $\SPAuxInput$, a polynomial $\SPAuxLeaf[v_{w}]' \in \PolynomialRingIndOneXY{\Field}{\SPArity{w}}{\VariableX}{k}{\VariableY}{\SPLeafDegree}{2\SetCardinality{\SPSubset}}$ uniformly at random conditioned on $\sum_{\vec{\beta} \in \SPSubset^{k}} \SPAuxLeaf[v_{w}]'(\vec{\alpha}, \vec{\beta}) = \SPAuxLeaf[w](\vec{\alpha})$ for all $\vec{\alpha} \in \Field^{\SPArity{w}}$. The prover and verifier then engage in the zero knowledge Interactive Probabilistically Checkable Proof for the language $\SPCELanguage$ (see \thmref{thm:pzk-for-SPCE}) on input $(\SPCircuit', \SPOutput, (\SPInput,\SPAuxInput'))$, with the prover appending the auxiliary input $\SPAuxInput'$ to the proof oracle. At the end of the protocol the verifier needs to make a single query $\vec{\gamma}_{v_{w}}$ to $\SPAuxLeaf[v_{w}]'$ for each leaf vertex $w$ in the domain of $\SPLeaf$, so the verifier tests that $\SPAuxLeaf[v_{w}]'$ is close to the evaluation of a polynomial in $\PolynomialRingIndOneXY{\Field}{\SPArity{w}}{\VariableX}{k}{\VariableY}{\SPLeafDegree}{2\SetCardinality{\SPSubset}}$ and then uses self-correction to read $\SPAuxLeaf[v_{w}]'(\vec{\gamma}_{v_{w}})$.

Completeness is straightforward to argue; we only discuss soundness and then zero knowledge.
	
Suppose that $(\SPCircuit, \SPOutput, \SPInput) \notin \Language(\SPCSRelation)$, and let $\SPAuxLeafM$ be the input for $\SPCircuit'$ that the prover has appended to the proof oracle. If there exists a leaf vertex $w$ in the domain of $\SPAuxLeafM$ such that $\SPAuxLeafM[v_{w}]$ is more than $\delta$-far from a polynomial in $\PolynomialRingIndOneXY{\Field}{\SPArity{w}}{\VariableX}{k}{\VariableY}{\SPLeafDegree}{2\SetCardinality{\SPSubset}}$, then the verifier accepts with probability at most $\epsilon$. So suppose that this is not the case, and let $\SPAuxLeafM[v_{w}]'$ be the unique polynomial in $\PolynomialRingIndOneXY{\Field}{\SPArity{w}}{\VariableX}{k}{\VariableY}{\SPLeafDegree}{2\SetCardinality{\SPSubset}}$ that is $\delta$-close to $\SPAuxLeafM[v_{w}]$. Using self-correction, the verifier obtains $\SPAuxLeafM[v_{w}]'(\vec{\gamma}_{v_{w}})$ with probability at least $1 - \epsilon$. By a union bound, the probability that the verifier learns all the values correctly is at least $1-\epsilon \SetCardinality{\VertexSet(\Graph)}$. By the claim, $(\SPCircuit, \SPOutput, \SPInput) \notin \Language(\SPCSRelation)$ implies that $(\SPCircuit', \SPOutput, (\SPInput, \SPAuxLeafM')) \notin \SPCELanguage$, so if all of the verifier's queries are answered according to $\SPAuxLeafM'$, the soundness of the protocol for $\SPCELanguage$ implies that the verifier accepts with probability at most $O(\InternalVertexSetDegree\SPLeafDegree \cdot \GraphMaxInDegree{\Graph} \cdot (\SPMaxArity(\Graph) + \log \SCStrength) \cdot \SetCardinality{\VertexSet(\Graph)}/\SetCardinality{\Field})$. Setting $\epsilon$ and $\delta$ appropriately yields the claimed soundness error.
	
Perfect zero knowledge follows from \corref{cor:partial-sum-indep-vars}: if a verifier makes fewer than $\SetCardinality{\SPSubset}^{k}$ queries to the proof, then each $\SPAuxLeaf[v]'$ is indistinguishable from a random polynomial. The simulator for this protocol simply runs the simulator from \thmref{thm:pzk-for-SPCE} on $(\SPCircuit', \SPOutput, (\SPInput, \SPAuxInput'))$, and uses the algorithm of \corref{cor:efficient-poly-simulator} to simulate its queries to each $\SPAuxLeaf[v]'$.
\end{proof}

\doclearpage
\section{Zero knowledge for polynomial space}
\label{sec:zk-pspace}

We present an efficient reduction from problems decidable in polynomial space ($\PSPACE$) to sum-product circuit evaluation problems ($\SPCELanguage$, \defref{def:circuit-evaluation}). The reduction yields perfect zero knowledge IPCPs for $\PSPACE$, via our construction of perfect zero knowledge IPCPs for $\SPCELanguage$ (\thmref{thm:pzk-for-SPCE}), resolving an open problem of \cite{BenSassonCFGRS16}.

\medskip
Our starting point is the language of true quantified boolean formulas (TQBFs), which is $\PSPACE$-complete:

\begin{definition}
\label{def:tqbf}
Let $\TQBFLanguage$ be the language of quantified boolean formulas $\QFormula = \Quantifier_{1} x_{1}  \cdots \Quantifier_{\NumVars} x_{\NumVars} \Formula(x_{1},\dots,x_{\NumVars})$, with $\Quantifier_{i} \in \{\forall,\exists\}$, that evaluate to true. We denote by $\NumVars$ the number of variables and by $\NumClauses$ the number of clauses in $\Formula$.
\end{definition}

The theorem below is a `zero knowledge analogue' of Shamir's protocol for $\TQBFLanguage$ \cite{Shamir92}, sharing all its key features except that it is an Interactive PCP rather than an Interactive Proof. We prove that $\TQBFLanguage$ has a public-coin Interactive PCP that is perfect zero knowledge, with exponential proof length and polynomial query complexity; also, like Shamir's protocol, the number of rounds is $O(\NumVars^{2})$, the prover runs in space $\poly(\NumClauses)$ and the verifier in time $\poly(\NumClauses)$.

\begin{theorem}[PZK IPCP for $\PSPACE$]
\label{thm:pzk-for-TQBF}
For every query bound function $\QueryBound(n)$, the $\PSPACE$-complete language $\TQBFLanguage$ has a (public-coin and non-adaptive) Interactive PCP that is perfect zero knowledge against all $\QueryBound$-query malicious verifiers. In more detail:
{\small
\begin{equation*}
\TQBFLanguage \in
\PZKIPCP
\left[
\begin{tabular}{rl}
soundness error: & $1/2$ \\
round complexity: & $O(\NumVars \cdot (\NumVars + \log \QueryBound))$ \\[1mm]
proof length: & $(\BitSize{\QFormula} + \log \QueryBound)^{O(\NumVars + \log \QueryBound)}$ \\
query complexity: & $\poly(\BitSize{\QFormula}, \log \QueryBound)$ \\[1mm]
prover time: & $(\BitSize{\QFormula} + \log \QueryBound)^{O(\NumVars + \log \QueryBound)}$ \\
verifier time: &  $\poly(\NumVars, \log \QueryBound) + O(\NumClauses)$ \\[1mm]
verifier space: & $O((\NumVars + \log \QueryBound) \cdot \log \BitSize{\QFormula})$ \\[1mm]
simulator overhead: & $\poly(\NumVars + \log \QueryBound) \cdot \QueryComplexity_{\Malicious{\Verifier}}^{3}$.
\end{tabular}
\right]
\enspace.
\end{equation*}}
\end{theorem}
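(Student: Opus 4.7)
The plan is to reduce $\TQBFLanguage$ to $\SPCELanguage$ and invoke \thmref{thm:pzk-for-SPCE}. Starting from an instance $\QFormula = Q_{1} x_{1} \cdots Q_{\NumVars} x_{\NumVars} \Formula(x_{1}, \dots, x_{\NumVars})$, I would first arithmetize following Shamir: let $\LD{\Formula}$ be the standard clause-product arithmetization of $\Formula$ over a finite field $\Field$ with $\SetCardinality{\Field} = \poly(\BitSize{\QFormula}, \log \QueryBound)$ (chosen large enough for soundness), and form the expression
\begin{equation*}
A_{0} \DefineEqual Q_{1}^{*} x_{1} \, M_{1} \, Q_{2}^{*} x_{2} \, M_{2} \cdots Q_{\NumVars}^{*} x_{\NumVars} \, M_{\NumVars} \, \LD{\Formula} \enspace,
\end{equation*}
where the arithmetized quantifiers are $\forall^{*} x_{i} P \DefineEqual P|_{x_{i} = 0} \cdot P|_{x_{i} = 1}$ and $\exists^{*} x_{i} P \DefineEqual P|_{x_{i} = 0} + P|_{x_{i} = 1} - P|_{x_{i} = 0} P|_{x_{i} = 1}$, and $M_{i} \DefineEqual L_{1} L_{2} \cdots L_{i}$ is the multilinearization operator in the first $i$ variables (inserted to keep the intermediate individual degrees bounded by $1$ throughout, as in Shamir's protocol). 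Standard Shamir analysis gives $A_{0} = 1$ iff $\QFormula \in \TQBFLanguage$.

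The next step is to realize $A_{0}$ as a sum-product circuit $\SPCircuit$ over $\Field$ with $\SPSubset = \{0,1\}$. Each multilinearization block $M_{i}$ becomes a single internal vertex of arity $i$ with $i$ summation variables and combiner $\Lagrange{\{0,1\}^{i}}(\vec{X}, \vec{\beta}) \cdot Y$, so the vertex evaluates to $\sum_{\vec{\beta} \in \{0,1\}^{i}} \Lagrange{\{0,1\}^{i}}(\vec{X}, \vec{\beta}) \cdot P(\vec{\beta})$, which is precisely the multilinear extension of $P|_{\{0,1\}^{i}}$. Each quantifier step $Q_{i}^{*} x_{i}$ becomes a single internal vertex of arity $i-1$ with two summation variables $(\beta_{1}, \beta_{2}) \in \{0,1\}^{2}$ and two outgoing edges $e_{1}, e_{2}$ to the corresponding $M_{i}$ vertex, with $\rho[e_{j}] = \{1, \dots, i-1\}$ and $\sigma[e_{j}] = \{j\}$, so that edge $e_{j}$ evaluates $M_{i}$ at $(\vec{X}, \beta_{j})$. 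The combiner $(1-\beta_{1})\beta_{2} Y_{1} Y_{2}$ then yields $\sum_{\vec{\beta}} (1-\beta_{1})\beta_{2} P|_{x_{i}=\beta_{1}} P|_{x_{i}=\beta_{2}} = P(0) \cdot P(1)$ for $\forall^{*}$, while $\beta_{2} Y_{1} - (1-\beta_{1})\beta_{2} Y_{1} Y_{2}$ similarly yields $P(0) + P(1) - P(0) P(1)$ for $\exists^{*}$. The resulting ari-graph is a chain of $2\NumVars$ internal vertices ending in a single leaf labeled with a $\poly(\NumClauses)$-size arithmetic circuit for $\LD{\Formula}$, giving $\SetCardinality{\VertexSet(\Graph)} = O(\NumVars)$, $\GraphDepth[]{\Graph} = O(\NumVars)$, $\SPMaxArity(\Graph) = O(\NumVars)$, $\Width{\Graph}, \GraphMaxInDegree{\Graph} = O(1)$, $\InternalVertexSetDegree = O(\NumVars)$, and $\SPLeafDegree = O(\NumClauses)$.

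Invoking \thmref{thm:pzk-for-SPCE} on $(\SPCircuit, 1, \SPInput)$, where $\SPInput$ supplies the arithmetic circuit for $\LD{\Formula}$, then yields the claimed PZK IPCP: with $\alpha = \log \QueryBound$ and $\SetCardinality{\Field} = \poly(\BitSize{\QFormula}, \log \QueryBound)$ taken large enough that the induced soundness error is at most $1/2$, the bounds in the theorem statement follow by direct substitution (round complexity $O(\NumVars (\NumVars + \log \QueryBound))$, proof length $(\BitSize{\QFormula} + \log \QueryBound)^{O(\NumVars + \log \QueryBound)}$, query complexity $\poly(\BitSize{\QFormula}, \log \QueryBound)$, verifier time/space, and simulator overhead all as stated).

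The main technical obstacle is encoding the product operator $\forall^{*}$ inside a sum-product circuit, since \defref{def:sum-product-circuit} provides only a single summation $\sum_{\vec{\beta} \in \SPSubset^{m}} C(\vec{X}, \vec{\beta}, \dots)$ and no native product. The indicator-polynomial trick above is the key workaround, but it requires carefully routing the two evaluations of the shared $M_{i}$ child through distinct summation variables via the edge projections $\sigma[e_{1}] = \{1\}$, $\sigma[e_{2}] = \{2\}$, and verifying that the resulting combiner stays of constant total degree. A secondary care point is checking that bundling $L_{1}, \dots, L_{i}$ into a single vertex $M_{i}$ computes the same polynomial as sequentially applying them (which follows from uniqueness of the multilinear extension), and that the individual degree of the polynomial flowing into each vertex never exceeds $\SPLeafDegree$, which is precisely why the $M_{i}$ multilinearization is inserted at every level.
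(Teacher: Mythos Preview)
Your proposal is correct and follows the same overall strategy as the paper: reduce $\TQBFLanguage$ to $\SPCELanguage$ by encoding each quantifier as a sum-product vertex via the indicator-polynomial trick (two summation variables $\beta_{1},\beta_{2}$ routed through separate edges to the same child, with a factor like $(1-\beta_{1})\beta_{2}$ isolating the term $(\beta_{1},\beta_{2})=(0,1)$ so that $ZZ'$ becomes $P(0)P(1)$), and then invoke \thmref{thm:pzk-for-SPCE}. Your $\exists^{*}$ combiner expands to $1-(1-P(0))(1-P(1))$, which is exactly Shen's arithmetization used in the paper.

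The one substantive difference is your explicit $M_{i}$ multilinearization vertices. These are redundant: the protocol underlying \thmref{thm:pzk-for-SPCE} already replaces each internal-vertex value $\SPValueL{\SPInput}{v}$ by its low-degree extension $\LDSPValueL{\SPInput}{v}$ over $\SPSubset^{\SPArity{v}}$ (see \equnref{eq:lde-of-v} and the surrounding discussion), which for $\SPSubset=\{0,1\}$ is precisely the multilinear extension. So the degree blow-up you guard against in your final paragraph is handled automatically by the framework, not by the circuit structure. Accordingly, the paper uses only $\NumVars+1$ vertices (one per quantifier plus the leaf $\LD{\Formula}$) with constant-degree combiners, achieving $\InternalVertexSetDegree=4$ rather than your $\InternalVertexSetDegree=O(\NumVars)$. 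Your route still yields the theorem (the extra $M_{i}$ vertices merely recompute an LDE the protocol would apply anyway, and the larger $\InternalVertexSetDegree$ is absorbed by choosing $\SetCardinality{\Field}$ a polynomial factor larger), but once you notice the built-in degree reduction you can drop all the $M_{i}$ vertices and the construction collapses to the paper's.
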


We first state and prove the reduction from $\TQBFLanguage$ to $\SPCELanguage$ (\lemref{lem:qbf-to-SPCE} below), and then prove the theorem. The reduction is based on Shen's arithmetization of QBFs \cite{Shen92} and (at least implicitly) his idea of degree reduction. (In particular, we do not rely on Shamir's restriction to `simple' QBFs \cite{Shamir92}.) We use Shen's arithmetization in order to ensure that the arithmetized expression only takes boolean values.

\begin{lemma}[$\TQBFLanguage \to \SPCELanguage$]
\label{lem:qbf-to-SPCE}
There exists a polynomial-time function $f$ such that, for every quantified boolean formula $\QFormula$ and prime $p$, $f(\QFormula, p) \in \SPCELanguage$ if and only if $\QFormula$ is true. Moreover, if $\QFormula$ has $\NumVars$ variables and $\NumClauses$ clauses then $f(\QFormula, p)$ is a sum-product circuit instance $\SPCircuit = \SPCircuitTuple$ with $\SetCardinality{\Field} = p$, $\SetCardinality{\SPSubset}=\Theta(1)$, $\InternalVertexSetDegree=\Theta(1)$, $\SPLeafDegree = \Theta(c)$, $\SetCardinality{\VertexSet(\Graph)}=\Theta(\NumVars)$, $\SPMaxArity(\Graph) = \Theta(\NumVars)$, $\GraphMaxInDegree{\Graph} = \Theta(1)$, $\Width{\Graph} = \Theta(1)$, $\MaxSpace{\SPInput} = O(\log \BitSize{\QFormula})$.
\end{lemma}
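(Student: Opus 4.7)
The plan is to follow the standard Shen-style arithmetization of TQBF \cite{Shen92}, but package it as a sum-product circuit so that the low-degree extensions occurring between quantifier applications are handled automatically by the $\SPValueL{\SPInput}{v} \mapsto \LDSPValueL{\SPInput}{v}$ passage built into the SPC framework (thereby avoiding Shamir's explicit degree-reduction operators). Given $\QFormula = \Quantifier_1 x_1 \cdots \Quantifier_n x_n \Formula(x_1,\ldots,x_n)$ and a prime $p$, I work over $\Field = \Field_p$ and set $\SPSubset = \{0,1\} \subseteq \Field$. For $i \in \{0,1,\ldots,n\}$ let $G_i \colon \{0,1\}^i \to \{0,1\}$ record the truth value of $\Quantifier_{i+1} x_{i+1} \cdots \Quantifier_n x_n \Formula(x_1,\ldots,x_n)$, so $G_n = \Formula$, $G_0$ is the truth value of $\QFormula$, and
\begin{equation*}
G_{i-1}(x_1,\ldots,x_{i-1}) = \begin{cases} G_i(x_1,\ldots,x_{i-1},0)\cdot G_i(x_1,\ldots,x_{i-1},1), & \Quantifier_i = \forall,\\ 1 - (1-G_i(x_1,\ldots,x_{i-1},0))(1-G_i(x_1,\ldots,x_{i-1},1)), & \Quantifier_i = \exists.\end{cases}
\end{equation*}

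Next I build the sum-product circuit. The ari-graph $\Graph$ is a path $v_0 \to v_1 \to \cdots \to v_{n-1} \to w$, where $v_i$ is internal with $\SPArity{v_i} = i$ and $w$ is a single leaf with $\SPArity{w}=n$. The leaf is labeled by an arithmetic circuit $\SPLeaf[w]$ computing the direct arithmetization of $\Formula$ (a product of $\NumClauses$ constant-degree clause polynomials), yielding individual degree $\Theta(\NumClauses)$ and space $O(\log \BitSize{\QFormula})$ for evaluation. For each internal vertex $v_{i-1}$ I take $\SPVars[v_{i-1}] = 2$ and include \emph{two} outgoing edges $e_1,e_2$ to $v_i$ (recall $\EdgeSet$ is a multi-set), with $\SPFreeProjection[e_1] = \SPFreeProjection[e_2] = \{1,\ldots,i-1\}$, $\SPSumProjection[e_1]=\{1\}$, and $\SPSumProjection[e_2]=\{2\}$; this causes the child to be queried at $(x_1,\ldots,x_{i-1},\beta_1)$ and $(x_1,\ldots,x_{i-1},\beta_2)$. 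The combiner $\SPPoly[v_{i-1}]$ is
\begin{equation*}
\SPPoly[v_{i-1}](\vec{X},\beta_1,\beta_2,z_1,z_2) = (1-\beta_1)\,\beta_2 \cdot \Phi_{\Quantifier_i}(z_1,z_2),
\end{equation*}
where $\Phi_\forall(z_1,z_2) = z_1 z_2$ and $\Phi_\exists(z_1,z_2) = z_1 + z_2 - z_1 z_2$; both have total degree at most $4$, so $\InternalVertexSetDegree = \Theta(1)$. A direct computation using $L_0(\beta) = 1-\beta$, $L_1(\beta) = \beta$ shows that, for every $\vec{x}\in\{0,1\}^{i-1}$,
\begin{equation*}
\sum_{\beta_1,\beta_2 \in \{0,1\}} (1-\beta_1)\beta_2 \Phi_{\Quantifier_i}\!\big(z(\vec{x},\beta_1), z(\vec{x},\beta_2)\big) = \Quantifier_i x_i\, z(\vec{x}, x_i),
\end{equation*}
so by induction on the depth $\SPValueL{\SPInput}{v_i}$ agrees with $G_i$ on $\{0,1\}^i$; note that the SPC framework uses $\LDSPValueL{\SPInput}{v_i}$ (the LDE of the evaluations on $\SPSubset^i = \{0,1\}^i$) to bound degrees in its recursive expansion, which is exactly $\LD{G_i}$, keeping the construction well-defined regardless of the high total degree of $\SPValueL{\SPInput}{v_i}$ as a polynomial.

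Setting $\SPOutput = 1$ and $\SPInput$ to label $w$ with the above arithmetic circuit, I have $(\SPCircuit, 1, \SPInput) \in \SPCELanguage$ iff $\SPValueL{\SPInput}{v_0} = G_0(\EmptyVector) = 1$ iff $\QFormula$ is true. The map $f$ outputs this circuit in polynomial time; tallying parameters gives $\SetCardinality{\SPSubset}=2$, $\InternalVertexSetDegree=4$, $\SPLeafDegree = \Theta(\NumClauses)$, $\SetCardinality{\VertexSet(\Graph)} = n+1$, $\SPMaxArity(\Graph) = n$, $\GraphMaxInDegree{\Graph}=1$ (the graph is a path, so every vertex has in-degree at most one), $\Width{\Graph}=1$, and $\MaxSpace{\SPInput} = O(\log \BitSize{\QFormula})$, matching the statement. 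The main conceptual step is the choice of combiner with two outgoing edges and Lagrange-style selectors, which is what allows us to realize the \emph{multiplicative} quantifier $\forall$ (and its $\exists$ dual) through an \emph{additive} hypercube sum while keeping the combiner of constant degree; after that, deriving \thmref{thm:pzk-for-TQBF} is just a matter of invoking \thmref{thm:pzk-for-SPCE} on $f(\QFormula, p)$ for a prime $p$ chosen large enough to drive the soundness error below $1/2$, and propagating the efficiency parameters through the statement of \thmref{thm:pzk-for-SPCE}.
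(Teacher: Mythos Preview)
Your proposal is correct and follows essentially the same approach as the paper: a chain of vertices $v_0,\dots,v_n$ with two parallel edges between consecutive vertices, a selector $(1-\beta_1)\beta_2$ in the combiner to pick out the pair $(\beta_1,\beta_2)=(0,1)$, and Shen's $\coprod$ for $\exists$. The paper additionally normalizes $\QFormula$ to alternating $\forall\exists$ form (so the combiner depends only on the parity of $i$), but your per-quantifier $\Phi_{\Quantifier_i}$ is an equivalent way to handle arbitrary quantifier patterns. One small slip: because each $v_i$ has \emph{two} incoming multi-edges from $v_{i-1}$, the in-degree is $2$, not $1$; this does not affect the claimed $\GraphMaxInDegree{\Graph}=\Theta(1)$.
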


\begin{proof}
Let $\QFormula$ be a quantified boolean formula that, without loss of generality, has the following `regular' form:
\begin{equation*}
\QFormula =
\forall x_{1} \exists x_{2}
\cdots
\forall x_{\NumVars-1} \exists x_{\NumVars}\,
  \Formula(x_{1}, x_{2}, \dots, x_{\NumVars-1}, x_{\NumVars})
\enspace,
\end{equation*}
where $\Formula$ is a 3-CNF formula with $\NumClauses$ clauses, and $\NumVars$ is even. This regular form is achievable with only constant multiplicative overheads in the number of variables and clauses. We arithmetize the formula and quantifiers.
\begin{itemize}

  \item The arithmetization of a 3-CNF formula $\Formula$ with variables $z_{1},\dots,z_{\NumVars}$ and clauses $\Clause_{1},\dots,\Clause_{\NumClauses}$ is the polynomial $\LD{\Formula}$ of total degree at most $3\NumClauses$ given by:
$
\LD{\Formula}(\VariableZ_{1},\dots,\VariableZ_{\NumVars})
\DefineEqual
  \prod_{\Clause \in \Formula}
  (1-
      \prod_{\{i : z_{i} \in \Clause \}} (1 - \VariableZ_{i})
      \cdot
      \prod_{\{i : \bar z_{i} \in \Clause \}} \VariableZ_{i}
   )
$. Note that $\LD{\Formula}(x_{1},\dots,x_{\NumVars}) = \Formula(x_{1},\dots,x_{\NumVars})$ for all boolean $x_{1},\dots,x_{\NumVars}$ (i.e., they agree on the boolean hypercube).

  \item The arithmetization of the two quantifiers is as follows: $\forall x \, \Formula(x)$ maps to $\prod_{x \in \Bits} \LD{\Formula}(x)$ and $\exists x \, \Formula(x)$ maps to $\coprod_{x \in \Bits} \LD{\Formula}(x) \DefineEqual \big(1 - (1 - \LD{\Formula}(0))(1 - \LD{\Formula}(1))\big)$. These arithmetic expressions have the same value as the boolean expressions (over any field); in particular they are $0$ or $1$.

\end{itemize}
For any prime $p$, consider the following construction.
\begin{itemize}

  \item Use $\NumVars,\NumClauses,p$ to construct a sum-product circuit instance $\SPCircuit = \SPCircuitTuple$ where
\begin{equation*}
\Field \DefineEqual \Field_{p}\,;\;
\SPSubset \DefineEqual \Bits\,;\;
\InternalVertexSetDegree \DefineEqual 4\,;\;
\SPLeafDegree \DefineEqual 3 \NumClauses\,;\;
\end{equation*}
the ari-graph $\Graph \DefineEqual (\VertexSet, \EdgeSet)$ is defined as follows
\begin{align*}
\VertexSet &\DefineEqual
	 \{ v_{i} \}_{i \in \{0,\dots,\NumVars\}} \\
\EdgeSet &\DefineEqual
	 \{ e_{i}, e_{i}' = (v_{i}, v_{i+1}) \}_{i \in \{0,\dots,\NumVars-1\}} \\
\SPFreeProjection[e] &\DefineEqual
\{1, \dots, i\} \quad \text{ if $e = e_{i}$ or $e=e_{i}'$ for some $i \in \{0,\dots,\NumVars-1\}$,} \\
\SPSumProjection[e] &\DefineEqual \begin{cases}
	\{1\}   & \text{ if $e = e_{i}$ for some $i \in \{0,\dots,\NumVars-1\}$ or} \\
	\{2\}   & \text{ if $e = e'_{i}$ for some $i \in \{0,\dots,\NumVars-1\},$} \\
\end{cases} \\
\SPPoly[v_{i}] &\DefineEqual \begin{cases}
	(1 - \VariableX_{i}) \VariableX_{i} \cdot \VariableZ \VariableZ' & \text{ if $i \in \{0,\dots,\NumVars-1\}$ is even or} \\
	(1 - \VariableX_{i}) \VariableX_{i} \cdot \big(1-(1-\VariableZ)(1-\VariableZ')\big) & \text{ if $i \in \{0,\dots,\NumVars-1\}$ is odd,} \\
\end{cases}
\end{align*}
where $\VariableZ$, $\VariableZ'$ in both cases correspond to the edges $e_{i}, e'_{i}$.

  \item Construct an input $\SPInput$ for $\SPCircuit$ that maps $v_{\NumVars}$ to the polynomial $\LD{\Formula}$ in $\PolynomialRingIndOne{\Field_{p}}{\NumVars}{\VariableX}{3\NumClauses}$ that equals the arithmetization of $\Formula$.

\end{itemize}
See \figref{fig:shamir-2}, \figref{fig:shamir-4}, \figref{fig:shamir-n} for diagrams of this sum-product circuit and input for $2$, $4$, $\NumVars$ variables respectively.

We claim that for every even $i$ in $\{0,\dots,\NumVars\}$ and $x_{1}, \ldots, x_{i} \in \Bits$:
\begin{equation}
\label{eqn:tqbf-sum-product}
  \SPValueL{\SPInput}{v_{i}}(x_{1}, x_{2} \dots, x_{i-1}, x_{i})
= \prod_{x_{i+1} \in \Bits} \coprod_{x_{i+2} \in \Bits}
  \cdots
  \prod_{x_{\NumVars-1} \in \Bits} \coprod_{x_{\NumVars} \in \Bits}
    \LD{\Formula}(x_{1}, x_{2}, \dots, x_{\NumVars-1}, x_{\NumVars})
\enspace.
\end{equation}
We argue the equality by induction on $i$. When $i = \NumVars$, \equnref{eqn:tqbf-sum-product} is $\SPValueL{\SPInput}{v_{n}} = \LD{\Formula}$, which holds by definition; next, we assume the equality for $i+2$ and prove it for $i$. By construction, for every $x_{1}, \ldots, x_{i} \in \Bits$,
\begin{align*}
  \SPValueL{\SPInput}{v_{i+1}}(x_{1}, \dots, x_{i})
&= \quad \sum_{\mathclap{x_{i+1},x_{i+1}' \in \Bits}} \;
    (1 - x_{i+1}) x_{i+1}'
    \cdot
    (1 - (1 - \SPValueL{\SPInput}{v_{x_{i+2}}}(x_{1}, \dots, x_{i}, x_{i+1}))(1 - \SPValueL{\SPInput}{v_{x_{i+2}}}(x_{1}, \dots, x_{i}, x_{i+1}'))) \\
&= (1 - (1 - \SPValueL{\SPInput}{v_{x_{i+2}}}(x_{1}, \dots, x_{i}, 0))(1 - \SPValueL{\SPInput}{v_{x_{i+2}}}(x_{1}, \dots, x_{i}, 1))) \\
&= \coprod_{x_{i+2} \in \Bits}
\cdots
\prod_{x_{\NumVars-1} \in \Bits} \coprod_{x_{\NumVars} \in \Bits}
\LD{\Formula}(x_{1}, x_{2}, \dots, x_{\NumVars-1}, x_{\NumVars})
\enspace,
\end{align*}
where the second equality follows by the inductive assumption. Then, by a similar argument,
\begin{align*}
   \SPValueL{\SPInput}{v_{i}}(x_{1}, \dots, x_{i})
&= \SPValueL{\SPInput}{v_{i+1}}(x_{1}, \dots, x_{i}, 0)
     \cdot
   \SPValueL{\SPInput}{v_{i}}(x_{1}, \dots, x_{i}, 1) \\
&= \prod_{x_{i+1} \in \Bits} \coprod_{x_{i+2} \in \Bits}
\cdots
\prod_{x_{\NumVars-1} \in \Bits} \coprod_{x_{\NumVars} \in \Bits}
\LD{\Formula}(x_{1}, x_{2}, \dots, x_{\NumVars-1}, x_{\NumVars})
\enspace,
\end{align*}
which completes the induction.

We now turn to the reduction's completeness and soundness. The key fact, which immediately follows from \equnref{eqn:tqbf-sum-product} and the arithmetization's properties, is that the quantified boolean formula $\QFormula$ is true if and only if $\SPValueL{\SPInput}{\SPCircuit} = \SPValueL{\SPInput}{v_{x_{1}}}$ is $1$ (over any field $\Field$). Thus, regardless of the choice of prime $p$, $\QFormula$ is true if and only if $(\SPCircuit, 1, \SPInput) \in \SPCELanguage$.
\end{proof}

\begin{proof}[Proof of \thmref{thm:pzk-for-TQBF}]
The prover and verifier receive as input a quantified boolean formula $\QFormula$. They agree on a deterministic procedure to find a prime in the range $[\NumClauses \NumVars^{3} \log \QueryBound, 2\NumClauses \NumVars^{3} \log \QueryBound]$; this can easily be done in polynomial time by exhaustive search and brute-force primality testing. The prover and the verifier then engage in the perfect zero knowledge Interactive PCP for $\SPCELanguage$ (\thmref{thm:pzk-for-SPCE}) on the input $f(\QFormula, p)$. The soundness error of the protocol is $O(\InternalVertexSetDegree\SPLeafDegree \cdot \GraphMaxInDegree{\Graph} \cdot (\SPMaxArity(\Graph) + \log \QueryBound) \cdot \SetCardinality{\VertexSet(\Graph)}/\SetCardinality{\Field})=O(\NumClauses \NumVars^{2}\log \QueryBound / p)$; the size of $p$ ensures that the soundness error is $O(1/\NumVars) < 1/2$ for sufficiently large $\NumVars$.
\end{proof}

\begin{figure}[t!]
\centering
\begin{minipage}[b]{0.33\textwidth}
  \centering
\includegraphics[width=0.45\textwidth]{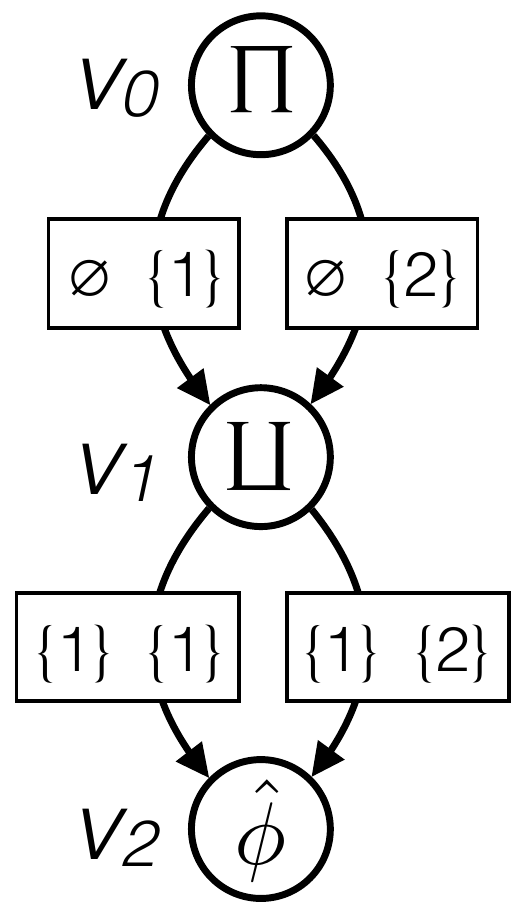}
\caption{Sum-product circuit and its input for QBF with $2$ variables.}
\label{fig:shamir-2}
\end{minipage}%
\begin{minipage}[b]{0.33\textwidth}
  \centering
\includegraphics[width=0.5\textwidth]{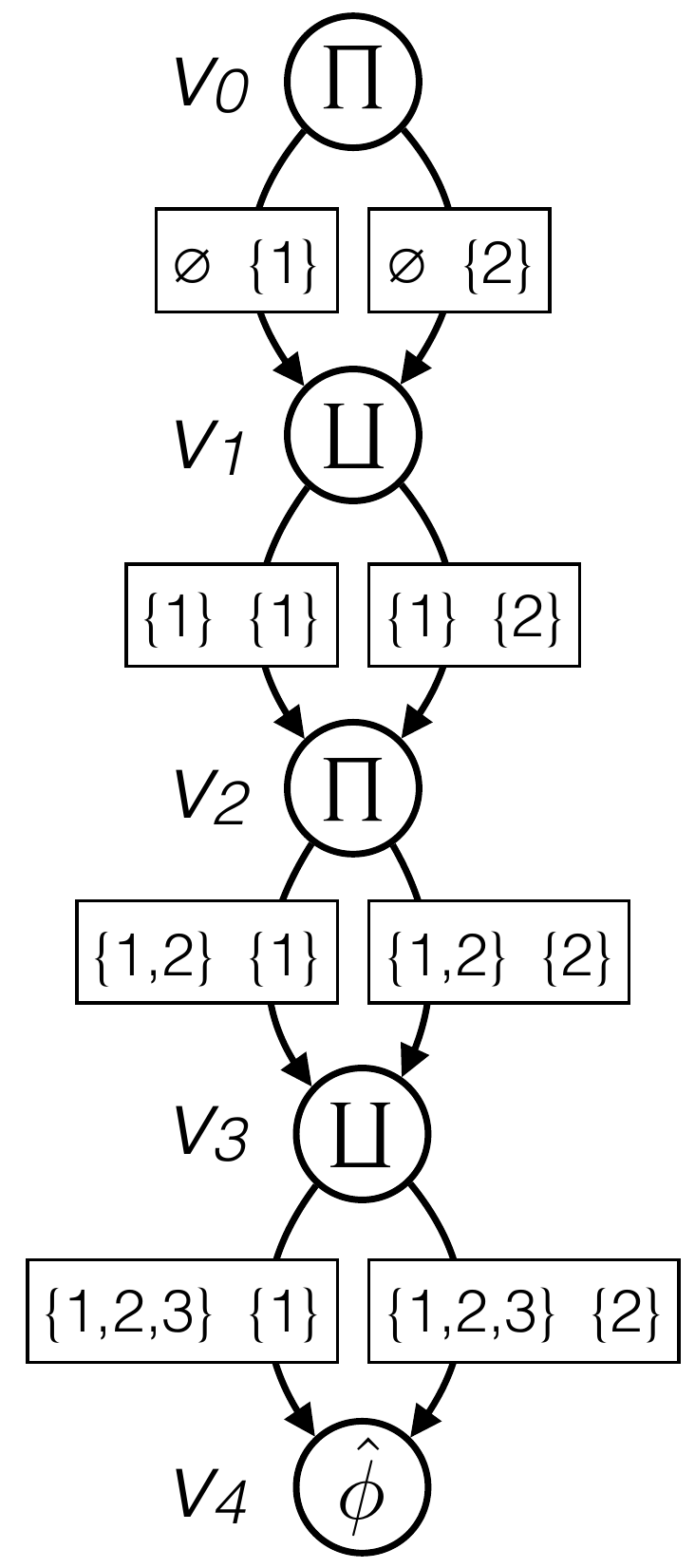}
\caption{Sum-product circuit and its input for QBF with $4$ variables.}
\label{fig:shamir-4}
\end{minipage}%
\begin{minipage}[b]{0.33\textwidth}
  \centering
\includegraphics[width=0.5\textwidth]{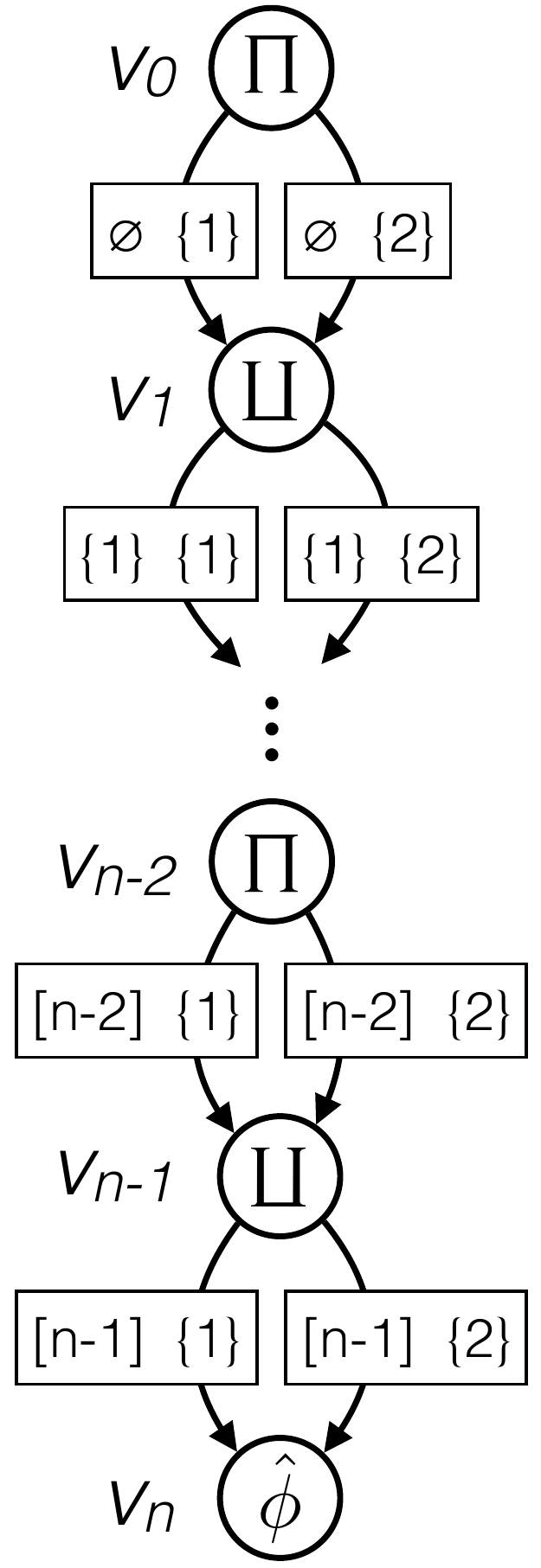}
\caption{Sum-product circuit and its input for QBF with $\NumVars$ variables.}
\label{fig:shamir-n}
\end{minipage}
\end{figure}

\doclearpage
\section{Zero knowledge for the evaluation of low-depth circuits}
\label{sec:zk-gkr}

We present a `zero knowledge analogue' of GKR's protocol for circuit evaluation \cite{GoldwasserKR15}, except that the protocol is an Interactive PCP rather than an Interactive Proof. We achieve this by exhibiting an efficient reduction from circuit evaluation (for certain circuits) to sum-product circuit evaluation problems ($\SPCELanguage$, \defref{def:circuit-evaluation}). The reduction then yields the desired zero knowledge protocol (\thmref{thm:gkr-space-uniform} below), via our construction of perfect zero knowledge Interactive PCPs for $\SPCELanguage$ (\thmref{thm:pzk-for-SPCE}). We now provide context, state the theorem, and then describe its proof.

Goldwasser, Kalai, and Rothblum \cite{GoldwasserKR15} give \emph{interactive proofs for muggles} (also known as \emph{doubly-efficient interactive proofs} \cite{ReingoldRR16}) for low-depth circuits that are sufficiently uniform.

Namely, given a language $\Language$ decidable by a family of $O(\log \Size(n))$-space uniform boolean circuits of size $\Size(n)$ and depth $\Depth(n)$, \cite{GoldwasserKR15} gives a public-coin Interactive Proof for $\Language$ where the prover runs in time $\poly(\Size(n))$ and the verifier runs in time $(n + \Depth(n)) \cdot \poly(\log \Size(n))$ and space $O(\log \Size(n))$; the number of rounds and communication complexity are $\Depth(n) \cdot \poly(\log \Size(n))$. Moreover, if the verifier is given oracle access to the low-degree extension of the circuit's input, the verifier runs only in time $\Depth(n) \cdot \poly(\log \Size(n))$; this \emph{sub-linear} running time enables applications to Interactive Proofs \emph{of Proximity} \cite{RothblumVW13}.

The theorem below provides similar features, but in addition also provides perfect zero knowledge, via an Interactive PCP instead of an Interactive Proof.

\begin{theorem}[PZK IPCP for low-depth uniform boolean circuits]
\label{thm:gkr-space-uniform}
Let $\Language$ be a language decidable by a family of $O(\log \Size(n))$-space uniform boolean circuits of size $\Size(n)$ and depth $\Depth(n)$, and let $\SCStrength(n)$ be a query bound function. Then $\Language$ has a (public-coin and non-adaptive) Interactive PCP that is perfect zero knowledge against all $\SCStrength$-query malicious verifiers. In more detail:
{\small
\begin{equation*}
\Language \in
\PZKIPCP
\left[
\begin{tabular}{rl}
soundness error: & $1/2$ \\
round complexity: & $\Depth(n) \cdot \log \SCStrength(n) \cdot \poly(\log \Size(n))$ \\[1mm]
proof length: & $\Depth(n) \cdot \poly(\Size(n), \SCStrength(n))$ \\
query complexity: & $\Depth(n) \cdot \poly(\log \Size(n), \log \SCStrength(n))$ \\[1mm]
prover time: & $\Depth(n) \cdot \poly(\Size(n), \SCStrength(n))$ \\
verifier time: &  $n \cdot \poly(\Depth(n), \log \Size(n), \log \SCStrength(n))$ \\[1mm]
verifier space: & $O(\log \Size(n) + \log \SCStrength(n)/\log\log\Size(n))$ \\[1mm]
simulator overhead: & $\poly(\Depth(n), \log \Size(n), \log \SCStrength(n)) \cdot (n +  \QueryComplexity_{\Malicious{\Verifier}}^{3})$.
\end{tabular}
\right]
\enspace.
\end{equation*}}
Moreover, if the verifier (resp. simulator) is given oracle access to the low-degree extension of the circuit's input, the verifier runs in time $\Depth(n) \cdot \poly(\log \Size(n), \log \SCStrength(n))$, and the simulator overhead is $\poly(\Depth(n), \log \Size(n), \log \SCStrength(n)) \cdot \QueryComplexity_{\Malicious{\Verifier}}^{3}$.
\end{theorem}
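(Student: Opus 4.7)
The plan is to construct an efficient reduction from the evaluation problem for $O(\log \Size)$-space uniform low-depth boolean circuits to the sum-product circuit evaluation problem $\SPCELanguage$ (\defref{def:circuit-evaluation}), in analogy with \lemref{lem:qbf-to-SPCE} for $\PSPACE$, and then invoke \thmref{thm:pzk-for-SPCE} on that reduction to derive the claimed perfect zero knowledge IPCP.

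First, I would arithmetize the layered circuit in the style of \cite{GoldwasserKR15}: writing $\Layer{i}$ for the arithmetization of the values at layer $i$, the GKR recurrence \eqnref{eqn:gkr-recurrence} expresses $\Layer{i-1}(\vec{z})$ as a sum over $\SPSubset^{2\GKRVars}$ of a polynomial combining $\Add{i}$, $\Mult{i}$, and two evaluations of $\Layer{i}$, where $\GKRVars = \Ceiling{\log \Size / \log \SetCardinality{\SPSubset}}$ and $\SPSubset \subseteq \Field$ is chosen of size $\polylog \Size$. This matches exactly the sum-product vertex template \eqnref{eqn:vertex-value-informal}: I build an ari-graph whose internal vertex $v_i$ represents layer $i$, with two outgoing edges to $v_{i+1}$ carrying the projections to $\vec{\omega}_1$ and $\vec{\omega}_2$ respectively, and whose combiner $\SPPoly[v_{i}]$ encodes the recurrence (with the wiring predicates incorporated as described below). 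The unique leaf represents the input layer, and $\SPLeaf[\cdot]$ is set to the low-degree extension $\LD{x}$ of the circuit input.

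Second, I would handle the wiring predicates $\Add{i},\Mult{i}$. The bare GKR template assumes the verifier can evaluate the LDEs of these predicates, and for very structured circuits these can be baked directly into the combiners as small arithmetic circuits. For general $O(\log \Size)$-space uniform circuits this fails, and (as in \cite{GoldwasserKR15}) we must delegate evaluations of $\LDAdd{i}, \LDMult{i}$ via an additional protocol. Within our framework this corresponds to \emph{composing} the above sum-product circuit with auxiliary sum-product sub-circuits that express the wiring predicates themselves; the composition of sum-product circuits is straightforward (merging outputs of sub-circuits into leaves of the main circuit) and yields a single sum-product circuit to which \thmref{thm:pzk-for-SPCE} can be applied uniformly. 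Crucially, log-space uniformity lets these wiring sub-circuits be constructed on the fly by the verifier in space $O(\log \Size)$, and they contribute only polylog-factor overhead to the arity, combiner/leaf degrees, depth, and vertex count of the composed circuit.

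Third, I would instantiate \thmref{thm:pzk-for-SPCE} on the composed circuit: choosing $\Field$ of size $\poly(\Depth,\log \Size, \log \SCStrength)$ large enough to drive the soundness error below $1/2$, the parameter tally in \thmref{thm:pzk-for-SPCE} translates directly into the round, proof-length, query, prover-time, and verifier-time bounds of the theorem. The sub-linear verifier variant is automatic: the circuit input appears only as the single leaf polynomial $\LD{x}$, so granting the verifier (and the simulator) oracle access to $\LD{x}$ removes the $O(n)$ input-reading term both in verification time and simulator overhead. The principal obstacle will be the parameter bookkeeping across the composition step: I must ensure that inserting the wiring-predicate sub-circuits preserves $O(1)$ in-degree, $O(1)$ width, and $O(\log \Size / \log\log \Size)$ arity, so that the verifier's space bound in \thmref{thm:pzk-for-SPCE} collapses to $O(\log \Size + \log \SCStrength/\log\log \Size)$ rather than a larger quantity. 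This amounts to carefully matching $\SetCardinality{\SPSubset}$, $\SetCardinality{\Field}$, and the security parameter $\alpha = \log \SCStrength / \log \SetCardinality{\SPSubset}$ so that no step introduces an unwanted $\log \Size$ factor; once that alignment is achieved, perfect zero knowledge, perfect completeness, and the cubic simulator overhead follow immediately from the corresponding guarantees of \thmref{thm:pzk-for-SPCE}.
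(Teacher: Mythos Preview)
Your proposal is correct and follows essentially the same approach as the paper: reduce to $\SPCELanguage$ via a GKR-style arithmetization (\lemref{lem:gkr-barebones}), compose with sum-product subcircuits for the wiring predicates obtained from their small-space computability (\lemref{lem:gkr-subcircuit}), and invoke \thmref{thm:pzk-for-SPCE}.

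One correction to your bookkeeping expectation: after composition the width of the resulting ari-graph is $\poly(\log \Size(n))$, not $O(1)$ as you anticipate, because the wiring-predicate subcircuits attached at successive layers overlap in depth. A direct application of \thmref{thm:pzk-for-SPCE} then gives verifier space $\poly(\log \Size(n))$ rather than $O(\log \Size(n))$. The paper recovers the claimed space bound by observing that the composed circuit is ``treelike'': the wiring-predicate subcircuits hang off distinct leaves and can be processed sequentially (at the cost of a $\poly(\log \Size(n))$ factor in round complexity, which is already absorbed by the stated bound), so the verifier need hold only a constant number of vertex labels at any time.
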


Our proof of the theorem follows the structure in \cite{GoldwasserKR15}. After recalling some notions for circuits (\secref{sec:layered-arithmetic-circuits}) and extending some of our definitions (\secref{sec:extended-notions}), we proceed in three steps:
\begin{itemize}

  \item \emph{Step 1} (\secref{sec:sumproduct-barebones}).
  We reduce the evaluation of a given layered arithmetic circuit to the evaluation of a sum-product circuit, when given oracles for low-degree extensions for the circuit's wiring predicates. One can view this step as casting the \emph{barebones protocol} of \cite[Section 3]{GoldwasserKR15} in the framework of sum-product circuits.
  
  \item \emph{Step 2} (\secref{sec:sumproduct-space-uniform}).
  We reduce small-space Turing machine computations to sum-product circuit evaluations.
  
  \item \emph{Step 3} (\secref{sec:proof-of-gkr-space-uniform}). We combine the previous two steps to prove the theorem, using the fact that small-space Turing machines can evaluate low-degree extensions of the wiring predicates for sufficiently uniform boolean circuits.

\end{itemize} 
Before discussing each of the above steps, we discuss consequences for Turing machine computations.

As in \cite{GoldwasserKR15}, we can derive from the above theorem a useful corollary for Turing machine computations, using their reduction from Turing machines to boolean circuits.

\begin{corollary}[PZK IPCP for Turing machines]
\label{cor:gkr-uniform}
Let $\Language$ be a language decidable by a (deterministic) Turing machine in space $s(n) = \Omega(\log n)$, and let $\SCStrength(n)$ be a query bound function. Then $\Language$ has a (public-coin and non-adaptive) Interactive PCP that is perfect zero knowledge against all $\SCStrength$-query malicious verifiers. In more detail:
{\small
\begin{equation*}
\Language \in
\PZKIPCP
\left[
\begin{tabular}{rl}
soundness error: & $1/2$ \\
round complexity: & $\poly(s(n)) \cdot \log \SCStrength(n)$ \\[1mm]
proof length: & $\poly(2^{s(n)}, \SCStrength(n))$ \\
query complexity: & $\poly(s(n), \SCStrength(n))$ \\[1mm]
prover time: & $\poly(2^{s(n)}, \SCStrength(n))$ \\
verifier time: &  $n \cdot \poly(s(n), \log \SCStrength(n))$ \\[1mm]
verifier space: & $O(s(n) + \log \SCStrength(n)/\log s(n))$
\end{tabular}
\right]
\enspace.
\end{equation*}}
\end{corollary}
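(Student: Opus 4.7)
The plan is to reduce the $s(n)$-space Turing machine computation to the evaluation of a layered, $O(\log\Size(n))$-space uniform boolean circuit of small depth, and then invoke \thmref{thm:gkr-space-uniform} as a black box, as in \cite{GoldwasserKR15}.

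First I would carry out the standard reduction from space-bounded computation to low-depth circuits. Let $M$ be a deterministic Turing machine deciding $\Language$ in space $s(n) = \Omega(\log n)$. Its configuration graph on input $x \in \{0,1\}^{n}$ has $N = 2^{O(s(n))}$ vertices, and acceptance of $x$ is equivalent to reachability from the start configuration to a (unique, w.l.o.g.) accepting configuration. Using Savitch-style iterated squaring of the transitive closure relation, this reachability problem is computed by a layered, fan-in-$2$ boolean circuit $C_{n}$ of size $\Size(n) = 2^{O(s(n))}$ and depth $\Depth(n) = O(s(n)^{2})$, whose inputs are the $n$ bits of $x$. The wiring predicates $\Add{\ell}, \Mult{\ell}$ of $C_{n}$ at layer $\ell$ encode ``is configuration $c'$ reachable from $c$ in $2^{\ell}$ steps of $M$?'', which decomposes into two half-step reachability tests; each such predicate can be evaluated in $O(s(n)) = O(\log \Size(n))$ space, since it only requires comparing configurations and consulting the transition function of $M$. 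Hence $\{C_{n}\}$ is $O(\log\Size(n))$-space uniform in the sense required by \thmref{thm:gkr-space-uniform}.

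Next I would apply \thmref{thm:gkr-space-uniform} to this circuit family and simplify the parameters. Substituting $\log\Size(n) = O(s(n))$, $\log\log\Size(n) = \Theta(\log s(n))$, and $\Depth(n) = \poly(s(n))$ into the parameter list of \thmref{thm:gkr-space-uniform} gives, after collecting terms:
\begin{itemize}
\item round complexity $\Depth(n) \cdot \log\SCStrength(n) \cdot \poly(\log\Size(n)) = \poly(s(n)) \cdot \log\SCStrength(n)$;
\item proof length and prover time $\Depth(n) \cdot \poly(\Size(n),\SCStrength(n)) = \poly(2^{s(n)},\SCStrength(n))$;
\item query complexity $\Depth(n) \cdot \poly(\log\Size(n),\log\SCStrength(n)) = \poly(s(n),\log\SCStrength(n))$;
\item verifier time $n \cdot \poly(\Depth(n),\log\Size(n),\log\SCStrength(n)) = n \cdot \poly(s(n),\log\SCStrength(n))$;
\item verifier space $O(\log\Size(n) + \log\SCStrength(n)/\log\log\Size(n)) = O(s(n) + \log\SCStrength(n)/\log s(n))$.
\end{itemize}
These are exactly the parameters claimed in the corollary, and perfect zero knowledge and soundness $1/2$ are inherited directly from \thmref{thm:gkr-space-uniform}. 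The verifier reads its $n$-bit input once (contributing the $n$ factor to its running time) in order to evaluate the low-degree extension of the input at the single point required by the GKR-style protocol; the $\Omega(\log n)$ assumption on $s(n)$ ensures the verifier can address this input within its space bound.

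The main obstacle is purely the bookkeeping of Step 1 — checking that the Savitch-style circuit can simultaneously be made layered with constant fan-in, have depth $O(s(n)^{2})$, and admit wiring predicates computable in $O(s(n))$ space in the format expected by \thmref{thm:gkr-space-uniform}. No new ideas are needed beyond the construction already used in \cite{GoldwasserKR15}; the reduction we invoke is exactly their reduction from space-bounded Turing machines to log-space uniform low-depth circuits, applied here to the zero knowledge IPCP of \thmref{thm:gkr-space-uniform} instead of their (plain) Interactive Proof.
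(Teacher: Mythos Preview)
Your proposal is correct and follows essentially the same approach as the paper: reduce the space-$s(n)$ Turing machine to an $O(\log \Size(n))$-space uniform circuit family of size $\poly(2^{s(n)})$ and depth $\poly(s(n))$, then apply \thmref{thm:gkr-space-uniform} and simplify parameters. The paper's proof is even terser---it simply cites \cite[Lemma 4.1]{GoldwasserKR15} for the reduction rather than sketching the Savitch-style construction---but the content is the same; your parameter bookkeeping is also correct (indeed your query-complexity bound $\poly(s(n),\log\SCStrength(n))$ is tighter than the $\poly(s(n),\SCStrength(n))$ recorded in the corollary statement).
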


\begin{proof}
By \cite[Lemma 4.1]{GoldwasserKR15}, a language $\Language$ decidable by a Turing machine in time $t(n)$ and space $s(n)$ is also decidable by a $O(s(n))$-space uniform circuit family of size $\poly(t(n) 2^{s(n)}) = \poly(2^{s(n)})$ and depth $\poly(s(n))$. Applying \thmref{thm:gkr-space-uniform} to this circuit family yields the corollary.
\end{proof}

In each of the results above, one can set the query bound $\SCStrength(n)$ to be superpolynomial in $n$ (say, $\SCStrength(n) \DefineEqual n^{O(\log \log n)}$) to obtain Interactive PCPs that are perfect zero knowledge against all polynomial-time malicious verifiers, without affecting the efficiency of the honest verifier.

\subsection{Notations for layered arithmetic circuits}
\label{sec:layered-arithmetic-circuits}

We briefly recall some definitions and observations from \cite{GoldwasserKR15}.

\begin{definition}
\label{def:layered-arithmetic-circuits}
A \emph{layered arithmetic circuit} $\Circuit \colon \Field^{n} \to \Field$ of depth $\Depth$ and size $\Size$ (with $n \leq \Size$) is an arithmetic circuit, with fan-in $2$, arranged into $\Depth+1$ layers: the output layer (layer $0$) has a single gate; layers $1,\dots,\Depth-1$ have $\Size$ gates each; and the input layer (layer $\Depth$) has $n$ gates. For $i \in \{0,\dots,\Depth-1\}$, each gate in layer $i$ has two inputs, which are gates in layer $i+1$; the $n$ gates in layer $\Depth$ are $\Circuit$'s inputs; the single gate in layer $0$ is $\Circuit$'s output.
\end{definition}

For $i \in \{1,\dots,\Depth - 1\}$, we denote by $\Layer{i} \colon \Field^{n} \times [\Size] \to \Field$ the function such that $\Layer{i}(\vec{x}, j) = v$ if and only if the $j$-th gate of the $i$-th layer has value $v$ when $\Circuit$'s input is $\vec{x} \in \Field^{n}$. Moreover, we denote by $\Layer{\Depth} \colon \Field^{n} \times [n] \to \Field$ the function $\Layer{\Depth}(\vec{x}, j) \DefineEqual x_{j}$, and by $\Layer{\Depth} \colon \Field^{n} \to \Field$ the function $\Layer{0}(\vec{x}) \DefineEqual \Circuit(x)$.

Let $\SPSubset \subseteq \Field$ and $\GKRVars,\GKRVarsInput \in \Naturals$ be such that $\SetCardinality{\SPSubset} \geq 2$, $\GKRVars \geq \lceil \log \Size / \log \SetCardinality{\SPSubset} \rceil$, and $\GKRVarsInput \geq \lceil \log n / \log \SetCardinality{\SPSubset} \rceil$.

For $i \in \{1,\dots,\Depth - 1\}$, we can equivalently view $\Layer{i}$ as a function from $\Field^{n} \times \SPSubset^{\GKRVars}$ to $\Field$, by taking an arbitrary order $\alpha$ on $\SPSubset^{\GKRVars}$, and letting $\Layer{i}(\vec{x}, \vec{z}) = 0$ for every $\vec{z} \in \SPSubset^{\GKRVars}$ with $\alpha(z) > \Size$. The following equation then relates $\Layer{i-1}$ to $\Layer{i}$:
\begin{equation*}
\Layer{i-1}(\vec{x}, \vec{z})
  = \sum_{\vec{\omega}_{1},\vec{\omega}_{2} \in \SPSubset^{\GKRVars}}
      \Add{i}(\vec{z}, \vec{\omega}_{1}, \vec{\omega}_{2})
         \cdot \big(\Layer{i}(\vec{x}, \vec{\omega}_{1})
                  + \Layer{i}(\vec{x}, \vec{\omega}_{2})\big)
	+ \Mult{i}(\vec{z}, \vec{\omega}_{1}, \vec{\omega}_{2})
	     \cdot \big(\Layer{i}(\vec{x}, \vec{\omega}_{1})
	          \cdot \Layer{i}(\vec{x}, \vec{\omega}_{2})\big)
\end{equation*}
where $\Add{i} \colon \SPSubset^{3\GKRVars} \to \Field$ (resp., $\Mult{i} \colon \SPSubset^{3\GKRVars} \to \Field$) is the predicate such that, for every $(\vec{a}, \vec{b}, \vec{c}) \in \SPSubset^{3\GKRVars}$, $\Add{i}(\vec{a}, \vec{b}, \vec{c})$ (resp., $\Mult{i}(\vec{a}, \vec{b}, \vec{c})$) equals $1$ if the $\alpha(\vec{a})$-th gate of layer $i-1$ is an addition (resp., multiplication) gate whose inputs are gates $\alpha(\vec{b}) \leq \alpha(\vec{c})$ of layer $i$, or $0$ otherwise. The situation for the input layer and output layer is somewhat different.
\begin{itemize}

  \item The input layer (layer $\Depth$) has $n$ gates (rather than $\Size$), so we can equivalently view $\Layer{\Depth}$ as a function from $\Field^{n} \times \SPSubset^{\GKRVarsInput}$ to $\Field$, by taking an arbitrary order $\alpha'$ on $\SPSubset^{\GKRVarsInput}$, and letting $\Layer{\Depth}(\vec{x}, \vec{z}) = 0$ for every $\vec{z} \in \SPSubset^{\GKRVarsInput}$ with $\alpha'(z) > n$. Naturally, we need to adjust the expression for $\Layer{\Depth-1}$ and the definitions of $\Add{\Depth}, \Mult{\Depth}$ appropriately.

  \item The output layer (layer $0$) has a single gate (rather than $\Size$), so and we can write
\begin{equation*}
\Layer{0}(\vec{x}) = \sum_{\vec{\omega}_{1},\vec{\omega}_{2} \in \SPSubset^{\GKRVars}} \Add{1}(\vec{0}, \vec{\omega}_{1}, \vec{\omega}_{2}) \cdot \big(\Layer{1}(\vec{x}, \vec{\omega}_{1}) + \Layer{1}(\vec{x}, \vec{\omega}_{2})\big)
+ \Mult{1}(\vec{0}, \vec{\omega}_{1}, \vec{\omega}_{2}) \cdot \big(\Layer{1}(\vec{x}, \vec{\omega}_{1}) \cdot \Layer{1}(\vec{x}, \vec{\omega}_{2})\big) \enspace.
\end{equation*}
\end{itemize}

\begin{remark}
For $i \in \{1,\dots,\Depth\}$, we view $\Layer{i}$ as a function not only of the gate number $j$ (represented as $\vec{z} \in \SPSubset^{\GKRVars}$) but also of the circuit's input $\vec{x}$, and we view $\Layer{0}$ as a function of the input only. In contrast, \cite{GoldwasserKR15} defines $\Layer{0},\dots,\Layer{\Depth}$ with $\vec{x}$ `hard-coded'. We require the additional flexibility ($\vec{x}$ is an input) to compose sum-product circuits below.
\end{remark}

\subsection{Sum-product subcircuits and oracle inputs}
\label{sec:extended-notions}

Later on we will need to assemble different sum-product circuits into one such circuit, which requires relaxing the definition of an ari-graph to allow the root to have positive arity. This is stated formally below (difference highlighted).

\begin{definition}[extends \defref{def:ari-graph}]
\label{def:ari-graph-2}
A tuple $\Graph = (\VertexSet, \EdgeSet, \SPFreeProjection, \SPSumProjection)$ is an \defemph{ari-graph} if $(\VertexSet, \EdgeSet)$ is a directed acyclic multi-graph and both $\SPFreeProjection$ and $\SPSumProjection$ label every edge $e$ in $\EdgeSet$ with finite sets of positive integers $\SPFreeProjection[e]$ and $\SPSumProjection[e]$ that satisfy the following property. For every vertex $v$ in $\VertexSet$, there exists a (unique) non-negative integer $\SPArity{v}$ such that, if $e_{1}, \dots, e_{t}$ are $v$'s outgoing edges:
\begin{inparaenum}[(1)]
  \item \hl{if $v$ is the root then $\SPArity{v} = \max(\SPSumProjection[e_{1}] \cup \ldots \cup \SPSumProjection[e_{t}])$}, otherwise $\SPArity{v} = \SetCardinality{\SPFreeProjection[e_{1}]} + \SetCardinality{\SPSumProjection[e_{1}]} = \cdots = \SetCardinality{\SPFreeProjection[e_{t}]} + \SetCardinality{\SPSumProjection[e_{t}]}$ where $e_{1}, \dots, e_{\GraphInDegree[\Graph]{v}}$ are $v$'s incoming edges;
  \item $\SPFreeProjection[e_{1}], \dots, \SPFreeProjection[e_{t}] \subseteq \{1, \dots, \SPArity{v}\}$.
\end{inparaenum}
\end{definition}

For consistency, we retain the original definition of sum-product circuits, where the root must have arity zero, and so its value is a constant in $\Field$. Instead, we now define sum-product \emph{sub}circuits, where the root may have positive arity.

\begin{definition}
\label{def:sum-product-subcircuit}
A \defemph{sum-product subcircuit} $\SPCircuit$ is a tuple $\SPCircuitTuple$ whose definition is identical to that of a sum-product circuit (see \defref{def:sum-product-circuit}), except that the root of $\Graph$ may have positive arity. In particular, $\SPValueL{\SPInput}{\SPCircuit}$ may be a non-constant polynomial (having the same arity as the root of $\Graph$).
\end{definition}

It is not difficult to see that, provided the arities match, we can replace a leaf of a sum-product circuit $\SPCircuit$ with a sum-product \emph{subcircuit} $\SPCircuit_{0}$ computing the same function (over the appropriate subdomain) without affecting $\SPCircuit$'s output.

Finally, in the discussions below, we also need a way to talk about oracles in the context of a sum-product (sub)circuit. The structure of the protocol is such that the calls to the oracle will be made at the leaves of $\SPCircuit$, and so we associate the oracle with the circuit's input.

\begin{definition}
\label{def:oracle-input}
Given a sub-product (sub)circuit $\SPCircuit$ and a list of oracles $O = \{O_{i}\}_{i \in [\ell]}$ with $O_{i} \colon \Field^{k} \to \Field$ for each $i \in [\ell]$, we say that $\SPInput^{O}$ is an \defemph{oracle input} for $\SPCircuit$ if it labels the leaves of $\SPCircuit$ with oracle circuits: arithmetic circuits that may include oracle gates. An oracle gate is a gate labeled with the name of an oracle $O_{i}$; it has $k$ inputs and a single output defined as the evaluation of $O_{i}$ on its inputs.
\end{definition}

\subsection{Sum-product subcircuits for layered arithmetic circuits}
\label{sec:sumproduct-barebones}

We show that the evaluation problem for a given layered arithmetic circuit can be reduced to a sum-product subcircuit, when given oracles for low-degree extensions of the wiring predicates.
The reduction essentially consists of casting the \emph{barebones protocol} of \cite[Section 3]{GoldwasserKR15} as (the evaluation of) a sum-product subcircuit on an oracle input. The barebones protocol is an Interactive Proof that enables a verifier to check a statement of the form ``$\Circuit(\vec{x}) = y$'', where $\Circuit$ is a layered arithmetic circuit of size $\Size$ and depth $\Depth$, in time $n \cdot \poly(\Depth, \log \Size)$ and space $O(\log \Size)$, provided that the verifier has oracle access to low-degree extensions of the wiring predicates $\{\Add{i}, \Mult{i}\}_{i \in \{1,\dots,\Depth\}}$ for the circuit $\Circuit$. The arithmetization of $\Circuit$ underlying that protocol provides the basic intuition for how to `program' a sum-product subcircuit to encode this computation.

\begin{lemma}[sum-product subcircuits for layered arithmetic circuits]
\label{lem:gkr-barebones}
Let $\Field$ be a finite field, $\Circuit \colon \Field^{n} \to \Field$ a layered arithmetic circuit of depth $\Depth$ and size $\Size$, $\SPSubset \subseteq \Field$, $\GKRVars, \GKRVarsInput \in \Naturals$ and $\WireOracles = (\LDAdd{i}, \LDMult{i})_{i \in \{1,\dots,\Depth\}}$ any degree-$\delta$ extensions of the wiring predicates $\{\Add{i}, \Mult{i}\}_{i \in \{1,\dots,\Depth\}}$, with $\delta \leq \polylog(\Size)$. Then there exists a sum-product subcircuit $\SPCircuit$, constructible in time $n \cdot \poly(\Depth, \log \Size)$ and space $O(\log \Size)$, and oracle input $\SPInput^{\WireOracles}$ constructible in time $\poly(\Depth, \log \Size)$ and space $O(\log \Size)$, such that $\SPValueL{\SPInput^{\WireOracles}}{\SPCircuit}(\vec{x}) = \Circuit(\vec{x})$, for all $\vec{x} \in \Field^{n}$.
	
	Moreover, $\SPCircuit = \SPCircuitTuple$ where $\SetCardinality{\SPSubset} = \poly(\Depth, \log \Size)$, $\InternalVertexSetDegree = \Theta(1)$, $\SPLeafDegree = \poly(\Depth, \log \Size)$, $\SetCardinality{\VertexSet(\Graph)} = \Theta(\Depth)$, $\SPMaxArity(\Graph) = O(n + \log \Size/\log \SetCardinality{\Field})$, $\GraphMaxInDegree{\Graph} = \Theta(1)$, $\Width{\Graph} = \Theta(1)$; and $\MaxSpace{\SPInput} = O(\log \Size)$.
\end{lemma}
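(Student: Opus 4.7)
The plan is to mirror the GKR layer recurrence
$$\Layer{i-1}(\vec{x}, \vec{z}) = \sum_{\vec{\omega}_{1}, \vec{\omega}_{2} \in \SPSubset^{\GKRVars}} \LDAdd{i}(\vec{z}, \vec{\omega}_{1}, \vec{\omega}_{2}) \cdot \big(\Layer{i}(\vec{x}, \vec{\omega}_{1}) + \Layer{i}(\vec{x}, \vec{\omega}_{2})\big) + \LDMult{i}(\vec{z}, \vec{\omega}_{1}, \vec{\omega}_{2}) \cdot \big(\Layer{i}(\vec{x}, \vec{\omega}_{1}) \cdot \Layer{i}(\vec{x}, \vec{\omega}_{2})\big)$$
directly as a sum-product subcircuit. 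I will introduce one ``main'' vertex $v_i$ per layer $i \in \{0, 1, \ldots, \Depth\}$: the root $v_0$ has $\SPArity{v_0} = n$ (free variables $\vec{x}$), each internal $v_i$ with $0 < i < \Depth$ has $\SPArity{v_i} = n + \GKRVars$ (free variables $\vec{x}, \vec{z}$), and $v_\Depth$ is a leaf of arity $n + \GKRVarsInput$. For each $i \in \{1, \ldots, \Depth\}$ I also add two oracle leaves $w_i^{\mathrm{add}}, w_i^{\mathrm{mul}}$ through which $v_{i-1}$ obtains values of $\LDAdd{i}$ and $\LDMult{i}$. Each $v_{i-1}$ has four outgoing (multi-)edges: two to $v_i$, with free-projection selecting the $\vec{x}$-coordinates and sum-projection selecting the $\vec{\omega}_1$- and $\vec{\omega}_2$-coordinates respectively, and one each to $w_i^{\mathrm{add}}, w_i^{\mathrm{mul}}$, with projections selecting $(\vec{z}, \vec{\omega}_1, \vec{\omega}_2)$; for $i=1$ the absent $\vec{z}$ is handled by hard-wiring the constant $\vec{0}$ inside the labelings of $w_1^{\mathrm{add}}, w_1^{\mathrm{mul}}$, which keeps $\SPArity{v_0} = n$.

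I will label each internal vertex with the total-degree-$3$ combiner
$$\SPPoly[v_{i-1}](\,\cdot\,,\,\cdot\,,\,\cdot\,,\,\cdot\,,z_{a}, z_{m}, z_{1}, z_{2}) \DefineEqual z_{a} \cdot (z_{1} + z_{2}) + z_{m} \cdot (z_{1} \cdot z_{2}),$$
so that $\InternalVertexSetDegree = 3 = \Theta(1)$. In the oracle input, each $w_i^{\mathrm{add}}$ (resp.\ $w_i^{\mathrm{mul}}$) is labeled with the trivial oracle circuit that forwards its inputs to $\LDAdd{i}$ (resp.\ $\LDMult{i}$), while the input leaf $v_\Depth$ is labeled with an arithmetic circuit computing $\LD{\Layer{\Depth}}(\vec{x}, \vec{z}) = \sum_{j=1}^{n} \chi_{j}(\vec{z}) \cdot x_{j}$, where $\chi_{j}$ is the Lagrange basis polynomial at the $j$-th element of $\SPSubset^{\GKRVarsInput}$. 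All leaf labelings have individual degree at most $\max(\delta, \SetCardinality{\SPSubset}-1) = \poly(\Depth, \log \Size)$, so $\SPLeafDegree = \poly(\Depth, \log \Size)$. Correctness follows by downward induction on $i$, establishing the invariant $\SPValueL{\SPInput^{\WireOracles}}{v_i}(\vec{x}, \vec{z}) = \Layer{i}(\vec{x}, \vec{z})$ for all $\vec{x} \in \Field^n$ and $\vec{z} \in \SPSubset^{\GKRVars}$: the base case at $v_\Depth$ is the definition of $\SPLeaf[v_\Depth]$, and the inductive step is a term-by-term match between the combiner's instantiation at $v_{i-1}$ (using the inductive hypothesis on $v_i$ and the responses from $w_i^{\mathrm{add}}, w_i^{\mathrm{mul}}$) and the GKR recurrence displayed above. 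Taking $i=0$ yields $\SPValueL{\SPInput^{\WireOracles}}{\SPCircuit}(\vec{x}) = \Circuit(\vec{x})$.

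The efficiency parameters fall out of the construction: the graph has $3\Depth + 1 = \Theta(\Depth)$ vertices arranged in $\Theta(\Depth)$ DAG-levels with at most three vertices apiece, giving $\Width{\Graph} = \Theta(1)$; every vertex has at most two incoming multi-edges, giving $\GraphMaxInDegree{\Graph} = \Theta(1)$; and the maximum arity is at $v_\Depth$, equal to $n + \GKRVarsInput = O(n + \log \Size / \log \SetCardinality{\SPSubset})$. The $\poly(\Depth, \log \Size)$-time and $O(\log \Size)$-space budgets for $\SPInput^{\WireOracles}$ follow from the triviality of the oracle-leaf labelings; the $n$-dependent cost of writing the Lagrange-based description of $\SPLeaf[v_\Depth]$ is absorbed into the $n \cdot \poly(\Depth, \log \Size)$ budget for constructing $\SPCircuit$, and that circuit can be streamed with $O(\log \Size)$ working space by iterating over the $n$ Lagrange basis indices. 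The main subtlety in the argument is the clean handling of the output gate and input layer --- keeping $\SPArity{v_0} = n$ by hard-wiring $\vec{z} = \vec{0}$ into $w_1^{\mathrm{add}}, w_1^{\mathrm{mul}}$, and making sure the input-leaf labeling is expressible within $\SPLeafDegree = \poly(\Depth, \log \Size)$ --- but both are routine once the rest of the recursion is in place.
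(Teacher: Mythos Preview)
Your construction is correct and essentially identical to the paper's: the same chain $v_0,\ldots,v_{\Depth}$ with two parallel edges to $v_{i+1}$ and two oracle leaves per layer, the same degree-$3$ combiner $z_a(z_1+z_2)+z_m z_1 z_2$, the same hard-wiring of $\vec{z}=\vec{0}$ into the layer-$1$ oracle leaves, and the same Lagrange-based labeling of $v_{\Depth}$. One small slip: the maximum arity is actually achieved at the internal $v_i$ (arity $n+\GKRVars$) or the oracle leaves (arity up to $3\GKRVars$), not at $v_{\Depth}$ (arity $n+\GKRVarsInput \le n+\GKRVars$), though your stated bound $O(n + \log \Size/\log |\SPSubset|)$ remains correct.
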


\begin{proof}
Let $\SPSubset \subseteq \Field$ be such that $\SetCardinality{\Field}^{\Omega(1)} \leq \SetCardinality{\SPSubset} \leq \poly(\Depth, \log \Size)$; $\GKRVars \in \Naturals$ be such that $\Size \leq \SetCardinality{\SPSubset}^{\GKRVars} \leq \poly(\Size)$; $\GKRVarsInput \in \Naturals$ be such that $n \leq \SetCardinality{\SPSubset}^{\GKRVarsInput} \leq n \cdot \poly(\Depth, \log \Size)$.

First, we use $\Depth, \GKRVars, \GKRVarsInput$ to construct an ari-graph $\Graph = (\VertexSet, \EdgeSet, \SPFreeProjection, \SPSumProjection)$ as follows:
\begin{align*}
	\VertexSet &\DefineEqual \{ v_{i} \}_{i \in \{0,\dots,\Depth\}} \cup \{ u_{\Add{i}}, u_{\Mult{i}} \}_{i \in \{1,\dots,\Depth\}} \\
	\EdgeSet &\DefineEqual \{ e_{i}, e'_{i} = (v_{i}, v_{i+1}) \}_{i \in \{0,\dots,\Depth-1\}} \cup \{ (v_{i}, u_{f_{i+1}}) : f \in \{\Add{}, \Mult{}\} \}_{i \in \{0,\dots,\Depth-1\}} \\
	\SPFreeProjection[e] &\DefineEqual \begin{cases}
		\varnothing & \text{ if $e = (v_{0}, u_{f_{1}})$ for $f \in \{\Add{}, \Mult{}\}$, or} \\
		n + [\GKRVars] & \text{ if $e = (v_{i}, u_{f_{i+1}})$ for $f \in \{\Add{}, \Mult{}\}$ and $i \in \{1,\dots,\Depth-1\}$,} \\
		[n] & \text{ otherwise;}
	\end{cases} \\
	\SPSumProjection[e] &\DefineEqual \begin{cases}
		[2\GKRVars] & \text{ if $e = (v_{i}, u_{f_{i+1}})$ for $f \in \{\Add{}, \Mult{}\}$ and $i \in \{0,\dots,\Depth-2\}$,} \\
		[\GKRVars] & \text{ if $e = e_{i}$ for some $i \in \{0,\dots,\Depth-2\}$,} \\
		\GKRVars+[\GKRVars] & \text{ if $e = e'_{i}$ for some $i \in \{0,\dots,\Depth-2\}$,} \\
		[2\GKRVarsInput] & \text{ if $e = (v_{\Depth-1}, u_{f_{\Depth}})$ for $f \in \{\Add{}, \Mult{}\}$,} \\
		[\GKRVarsInput] & \text{ if $e = e_{\Depth-1}$, or} \\
		\GKRVarsInput+[\GKRVarsInput] & \text{ if $e = e'_{\Depth-1}$.}
	\end{cases}
\end{align*}
Note that $\EdgeSet$ is a multiset and contains, for every $i \in \{0, \dots, \Depth-1\}$, two distinct edges, $e_{i}$ and $e'_{i}$, from $v_{i}$ to $v_{i+1}$ (with different projection labels). The root of $\Graph$ is $v_{0}$, and its leaves are $v_{\Depth}$ and $\{u_{\Add{i}}, u_{\Mult{i}}\}_{i \in \{1,\dots,\Depth\}}$.

Next, we construct the sum-product circuit $\SPCircuit = \SPCircuitTuple$ where $\InternalVertexSetDegree \DefineEqual 3$, $\SPLeafDegree \DefineEqual \max(\delta, \SetCardinality{\SPSubset})$, and $\SPPoly$ labels internal vertices of $\Graph$ as follows. For every $i \in \{0,\dots,\Depth-1\}$, $\SPPoly[v_{i}] \DefineEqual \VariableX_{1} \cdot (\VariableY_{1} + \VariableY_{2}) + \VariableX_{2} \cdot \VariableY_{1} \cdot \VariableY_{2}$, where $\VariableY_{1}, \VariableY_{2}$ correspond to $e_{i}, e_{i}'$ respectively, and $\VariableX_{1}, \VariableX_{2}$ correspond to $(v_{i}, u_{\Add{i}}), (v_{i}, u_{\Mult{i}})$ respectively.

Finally, we construct the input $\SPInput$ for $\SPCircuit$ as follows:
\begin{itemize}[nolistsep]
  \item for all $f \in \{\Add{}, \Mult{}\}$, $\SPLeaf[u_{f_{1}}]$ is the oracle circuit that outputs $\LD{f}_{1}(\vec{0}, \vec{\omega}_{1}, \vec{\omega}_{2})$ on input $(\vec{\omega}_{1}, \vec{\omega}_{2})$;
  \item for all $i \in \{2,\dots,\Depth\}$ and $f \in \{\Add{}, \Mult{}\}$, $\SPLeaf[u_{f_{i}}]$ is the oracle circuit that outputs $\LD{f}_{i}(\vec{z}, \vec{\omega}_{1}, \vec{\omega}_{2})$ on input $(\vec{z}, \vec{\omega}_{1}, \vec{\omega}_{2})$;
  \item $\SPLeaf[v_{\Depth}]$ is the polynomial $\Layer{\Depth}(\vec{x}, \vec{z}) \DefineEqual \sum_{\vec{\beta} \in \SPSubset^{\GKRVarsInput}} \Lagrange{\SPSubset^{\GKRVarsInput}}(\vec{z},\vec{\beta}) x_{\alpha'(\vec{\beta})}$ (with $x_{i} \DefineEqual 0$ for $i > n$), which has individual degree less than $\SetCardinality{\SPSubset}$, that is the low-degree extension of $\Layer{\Depth}$ and can be computed in time $\SetCardinality{\SPSubset}^{\GKRVarsInput} \cdot \poly(\SetCardinality{\SPSubset}, \GKRVarsInput) \leq n \cdot \poly(\Depth, \log \Size)$ and space $O(\GKRVarsInput \cdot \log \SetCardinality{\SPSubset}) \leq \log n + \polylog(\Depth, \log \Size) \leq O(\log \Size)$.
\end{itemize}
See \figref{fig:gkr-1}, \figref{fig:gkr-2}, \figref{fig:gkr-d} for diagrams of this sum-product circuit and input for depth $1$, $2$, $\Depth$ respectively.

We are left to argue correctness of the reduction: it is easy to see that $\SPValueL{\SPInput^{\WireOracles}}{v_{\Depth}}(\vec{x},\vec{z}) = \Layer{D}(\vec{x},\vec{z})$ for every $\vec{x} \in \Field^{n}$ and $\vec{z} \in \SPSubset^{\GKRVarsInput}$; hence, for every $i \in \{1,\dots,\Depth-1\}$, $\SPValueL{\SPInput^{\WireOracles}}{v_{i}}(\vec{x}, \vec{z}) = \Layer{i}(\vec{x}, \vec{z})$ for every $\vec{x} \in \Field^{n}$ and $\vec{z} \in \SPSubset^{\GKRVars}$. We conclude that $\SPValueL{\SPInput^{\WireOracles}}{\SPCircuit} = \SPValueL{\SPInput^{\WireOracles}}{v_{0}} = \Layer{0}$ (as functions in $\vec{x}$), as claimed.
\end{proof}

\begin{figure}[t!]
\centering
\begin{minipage}[b]{0.33\textwidth}
  \centering
\includegraphics[width=0.74\textwidth]{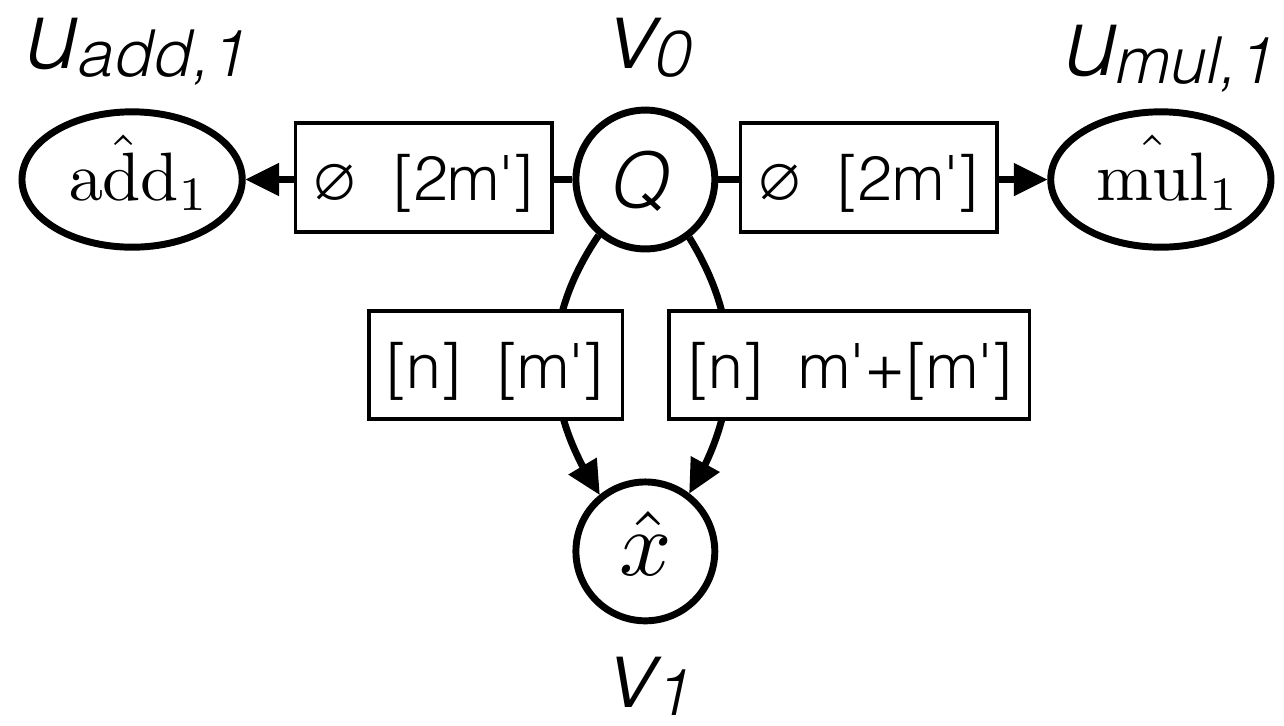}
\caption{Sum-product circuit and its input for layered circuit of depth $1$.}
\label{fig:gkr-1}
\end{minipage}%
\begin{minipage}[b]{0.33\textwidth}
  \centering
\includegraphics[width=0.74\textwidth]{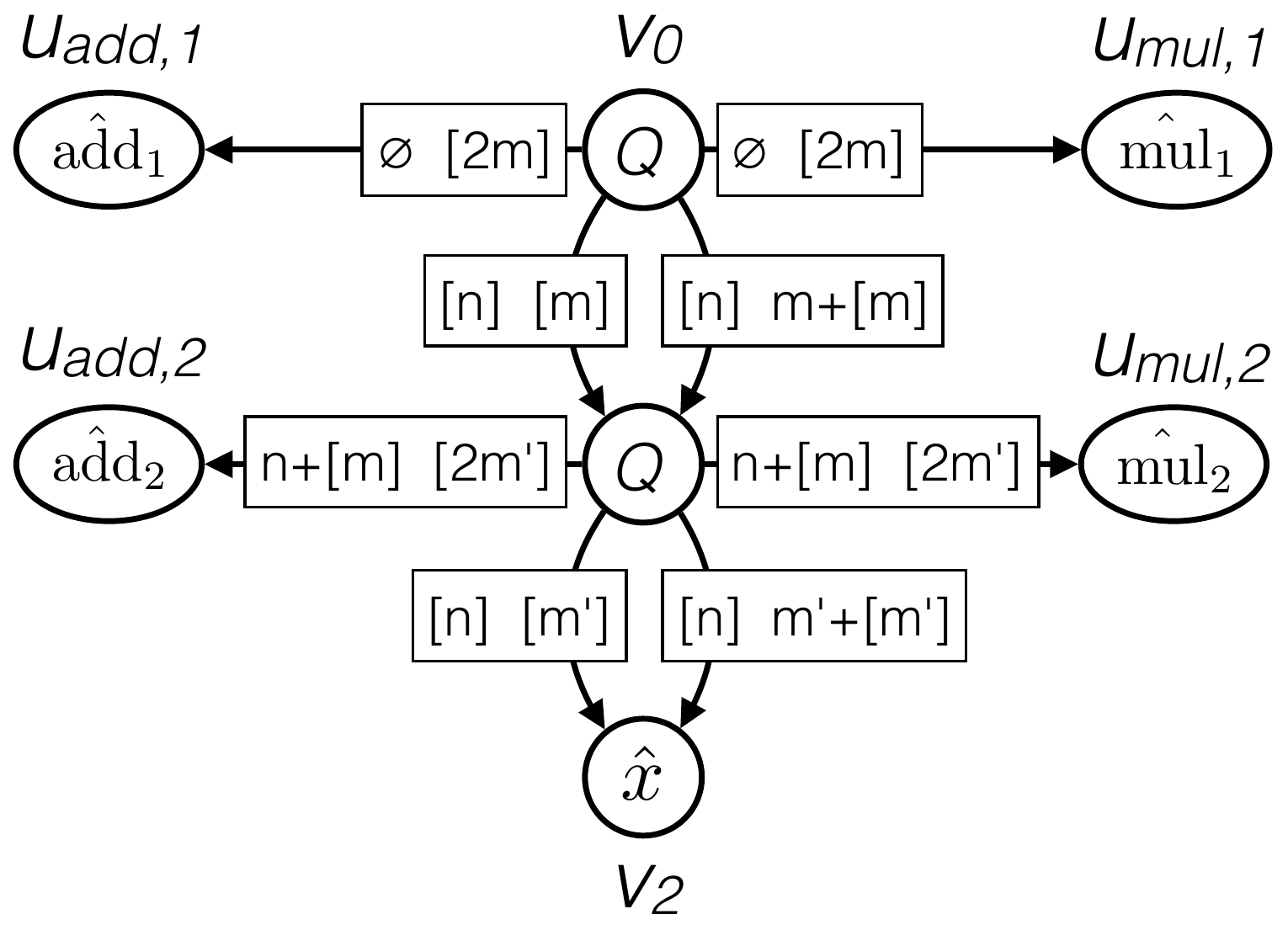}
\caption{Sum-product circuit and its input for layered circuit of depth $2$.}
\label{fig:gkr-2}
\end{minipage}%
\begin{minipage}[b]{0.33\textwidth}
  \centering
\includegraphics[width=0.74\textwidth]{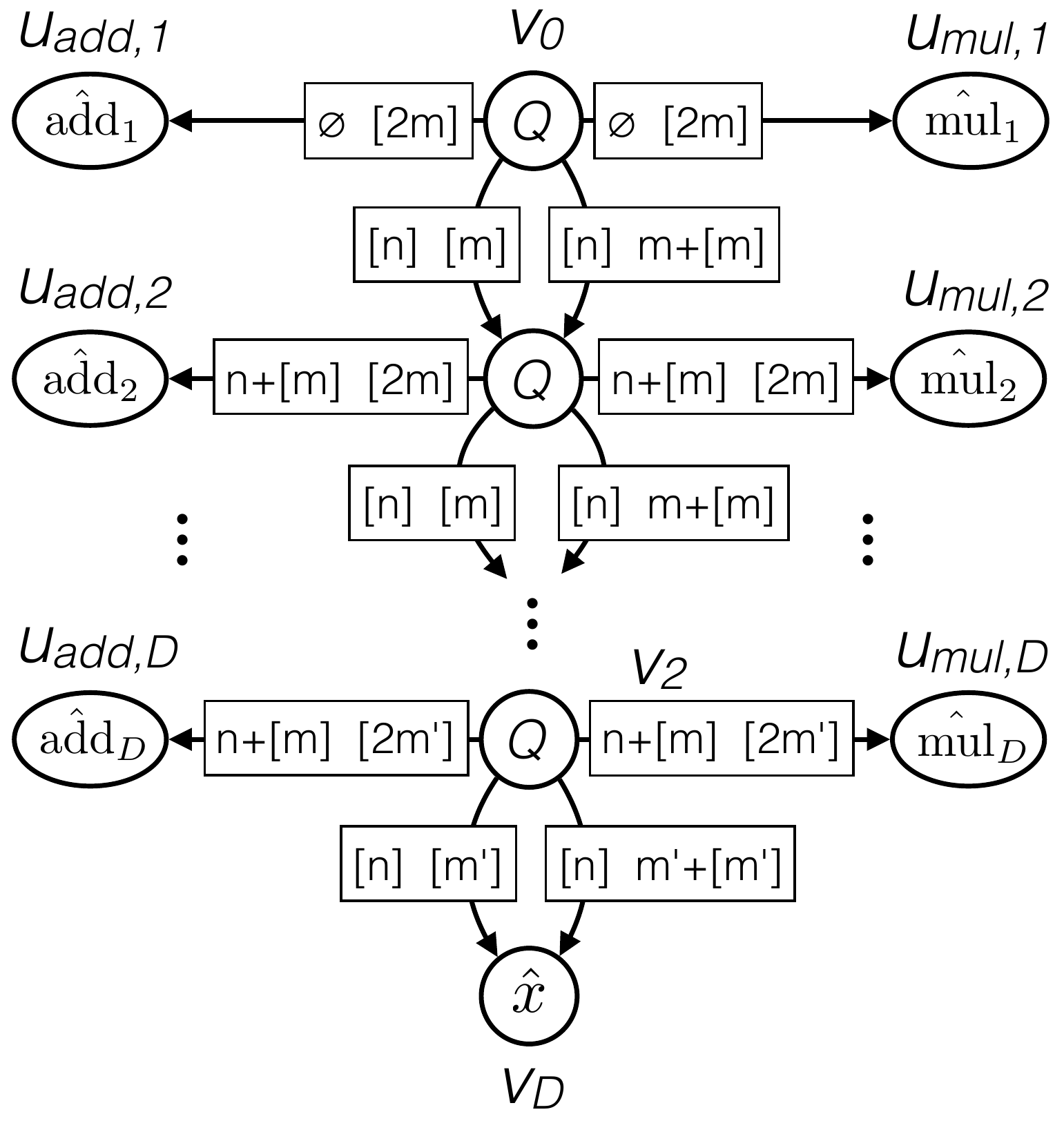}
\caption{Sum-product circuit and its input for layered circuit of depth $\Depth$.}
\label{fig:gkr-d}
\end{minipage}
\end{figure}

\subsection{Sum-product subcircuits for small-space Turing machines}
\label{sec:sumproduct-space-uniform}

We show that small-space Turing machine computations can be reduced to the evaluation of sum-product subcircuits whose size is polynomially related to the space bound. This theorem is the analogue of \cite[Theorem 4.4]{GoldwasserKR15} for our framework: we use \lemref{lem:gkr-barebones} and then instantiate the oracle input with explicit arithmetic circuits.

To prove \thmref{thm:gkr-space-uniform} we need to construct a sum-product subcircuit whose value is a function $f \colon \Field^{n} \to \Field$ computable by a Turing machine in small space. However, if we simply apply \lemref{lem:gkr-barebones} to a circuit given by \cite[Lemma 4.3]{GoldwasserKR15}, then we only obtain sum-product subcircuits for \emph{boolean} functions computable in small space. We therefore apply \cite[Lemma 4.3]{GoldwasserKR15} to the language $\Language_{f} \DefineEqual \{ (\vec{x}, a) : f(\vec{x}) = a \}$, which by \lemref{lem:gkr-barebones} yields a subcircuit whose value is a function $f' \colon \Field^{n} \times \Field \to \Bits$, the indicator function of $\Language_{f}$. Then it holds that $\sum_{a \in \Field} a \cdot f'(x, a) = f(x)$, and this summation can be implemented by adding an extra vertex to the subcircuit.

\begin{lemma}[sum-product subcircuits for small-space Turing machines]
\label{lem:gkr-subcircuit}
Let $s \colon \Naturals \to \Naturals$ be a space function with $s(n) = \Omega(\log n)$, $\SPSubset$ an extension field of $\Field_{2}$, and $\Field$ an extension field of $\SPSubset$ with $\SetCardinality{\SPSubset} = \poly(s(n))$ and $\SetCardinality{\Field} = \poly(\SetCardinality{\SPSubset})$. Let $f \colon \Field^{n} \to \Field$ be a function computable by a non-deterministic Turing machine $\Machine$ in space $s(n)$ in the following sense: $\Machine$ accepts $(\vec{x}, a) \in \Field^{n} \times \Field$ if and only if $f(\vec{x}) = a$. Then there exist a sum-product subcircuit $\SPCircuit$ and input $\SPInput$ for $\SPCircuit$ such that $\SPValueL{\SPInput}{\SPCircuit}(\vec{x}) = f(\vec{x})$ for all $x \in \Field^{n}$; $\SPCircuit$ is constructible in time $n \cdot \poly(s(n))$ and space $O(s(n))$, and $\SPInput$ is constructible in time $n \cdot \poly(s(n))$ and space $O(s(n))$.
	
Moreover, $\SPCircuit = \SPCircuitTuple$ where $\InternalVertexSetDegree = \Theta(1)$, $\SPLeafDegree = \poly(s(n))$, $\SetCardinality{\VertexSet(\Graph)} = O(s^{2}(n))$, $\SPMaxArity(\Graph) = O(n + s(n)/\log \SetCardinality{\Field})$, $\GraphMaxInDegree{\Graph} = \Theta(1)$, $\Width{\Graph} = \Theta(1)$; and $\MaxSpace{\SPInput} = O(s(n))$.
\end{lemma}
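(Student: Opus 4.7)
The plan is to reduce to \lemref{lem:gkr-barebones} via two additional moves: first, invoking GKR's Turing-machine-to-layered-arithmetic-circuit reduction \cite[Lemma 4.3]{GoldwasserKR15} to realize the indicator $f' \colon \Field^{n} \times \Field \to \Bits$ of $f$ as a small-depth, space-uniform layered arithmetic circuit; and second, prepending a single new root vertex on top of the resulting sum-product subcircuit to convert the indicator back into the value of $f$ itself via the identity $\sum_{a \in \Field} a \cdot f'(\vec{x}, a) = f(\vec{x})$, as hinted in the text preceding the lemma statement.

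First I would apply \cite[Lemma 4.3]{GoldwasserKR15} to the language $\Language_{f} = \{(\vec{x}, a) \in \Field^{n} \times \Field : f(\vec{x}) = a\}$ decided by $\Machine$ in space $O(s(n))$. This yields an $O(s(n))$-space uniform layered arithmetic circuit $\Circuit$ of depth $\Depth = O(s^{2}(n))$ and size $\Size = \poly(2^{s(n)})$ that computes $f'$. Crucially, as in GKR, the wiring predicates $\{\Add{i}, \Mult{i}\}_{i=1}^{\Depth}$ of $\Circuit$ admit low-degree extensions $\{\LDAdd{i}, \LDMult{i}\}_{i=1}^{\Depth}$ of degree $\delta = \polylog(\Size) = \poly(s(n))$ that are themselves computable by explicit arithmetic circuits of size $\poly(s(n))$ constructible in space $O(s(n))$. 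Feeding $\Circuit$ together with these wire oracles into \lemref{lem:gkr-barebones} then produces a sum-product subcircuit $\SPCircuit_{0}$ and an oracle input $\SPInput_{0}^{\WireOracles}$ such that $\SPValueL{\SPInput_{0}^{\WireOracles}}{\SPCircuit_{0}}(\vec{x}, a) = f'(\vec{x}, a)$ for all $(\vec{x}, a) \in \Field^{n} \times \Field$, with $\SetCardinality{\VertexSet(\Graph_{\SPCircuit_{0}})} = \Theta(\Depth) = O(s^{2}(n))$, constant in-degree and width, leaf degree $\poly(s(n))$, and root arity $O(n + s(n)/\log \SetCardinality{\Field})$. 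Because by the previous sentence the wire oracles are themselves evaluable by $\poly(s(n))$-size arithmetic circuits in space $O(s(n))$, I would then inline these into the leaf labels of $\SPInput_{0}^{\WireOracles}$, replacing it by a concrete (non-oracle) input $\SPInput_{0}$ with $\MaxSpace{\SPInput_{0}} = O(s(n))$.

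Second, I would prepend one new root vertex $v^{\star}$ with a single outgoing edge $e^{\star}$ to the old root $v_{0}$ of $\SPCircuit_{0}$. Setting $\ell \DefineEqual \lceil \log \SetCardinality{\Field}/\log \SetCardinality{\SPSubset} \rceil = O(1)$, I label $e^{\star}$ with free indices covering the $\vec{x}$-coordinates of $v_{0}$ and with summation indices $\{1,\dots,\ell\}$ covering the $a$-coordinate of $v_{0}$, and I label $v^{\star}$ with the combiner polynomial $\SPPoly[v^{\star}](\vec{\VariableX},\vec{\beta},z) \DefineEqual \phi(\vec{\beta}) \cdot z$, where $\phi \colon \SPSubset^{\ell} \to \Field$ is a fixed $\SPSubset$-linear isomorphism (computable by a constant-size arithmetic circuit since $\ell = O(1)$). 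The resulting sum-product circuit $\SPCircuit$ has root arity zero, and its value satisfies
\begin{equation*}
  \SPValueL{\SPInput_{0}}{\SPCircuit}(\vec{x})
  = \sum_{\vec{\beta} \in \SPSubset^{\ell}} \phi(\vec{\beta}) \cdot f'(\vec{x}, \phi(\vec{\beta}))
  = \sum_{a \in \Field} a \cdot f'(\vec{x}, a) = f(\vec{x}),
\end{equation*}
since exactly one $a \in \Field$ satisfies $f'(\vec{x}, a) = 1$. All quantitative bounds claimed in the lemma then follow from those of $\SPCircuit_{0}$ plus one additional vertex, one additional edge, and a constant-size combiner polynomial.

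The main obstacle is the first bullet of the second paragraph: namely, showing that the reduction of \cite[Lemma 4.3]{GoldwasserKR15} delivers not only a space-uniform layered arithmetic circuit but one whose wiring-predicate low-degree extensions are themselves evaluable by explicit \emph{small-space} arithmetic circuits of modest degree. This is not a new contribution --- it is precisely the technical content at the heart of GKR's space-uniform Interactive Proof, and in particular of their analogue of this lemma --- but carefully stating the guarantees of \cite[Lemma 4.3]{GoldwasserKR15} in a form that plugs into the oracle interface of \lemref{lem:gkr-barebones} is the bookkeeping-heavy step. Once that is in place, everything else is direct composition plus a single output-conversion vertex, and the stated parameter bounds follow by arithmetic.
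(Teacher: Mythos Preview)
Your overall strategy matches the paper's proof: reduce to \lemref{lem:gkr-barebones} via \cite[Lemma 4.3]{GoldwasserKR15}, inline the explicit wiring-predicate circuits into the leaves to remove the oracle dependence, and then add a single new root vertex implementing $f(\vec{x}) = \sum_{a} a \cdot f'(\vec{x}, a)$. There is, however, one small but genuine arity mismatch in your second step. You apply GKR's reduction to $\Language_{f} \subseteq \Field^{n} \times \Field$, so the resulting circuit $\Circuit$ has $n+1$ inputs over $\Field$, and hence the root $v_{0}$ of $\SPCircuit_{0}$ produced by \lemref{lem:gkr-barebones} has arity exactly $n+1$. You then want the edge $e^{\star}$ from $v^{\star}$ to $v_{0}$ to carry $\SetCardinality{\SPFreeProjection[e^{\star}]}=n$ free indices and $\SetCardinality{\SPSumProjection[e^{\star}]}=\ell$ summation indices so that $\vec{\beta}$ ranges over $\SPSubset^{\ell}\cong\Field$. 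But the ari-graph constraints force $\SPArity{v_{0}}=\SetCardinality{\SPFreeProjection[e^{\star}]}+\SetCardinality{\SPSumProjection[e^{\star}]}=n+\ell$, which equals $n+1$ only when $\ell=1$ (i.e.\ $\Field=\SPSubset$). In general $\ell>1$, and summing the single $a$-coordinate over $\SPSubset$ does not exhaust $\Field$.

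The paper's fix is to push the encoding of $a$ into $\SPSubset$-coordinates \emph{before} invoking \cite[Lemma 4.3]{GoldwasserKR15}: it first fixes the trivial isomorphism $\gamma\colon\SPSubset^{k}\to\Field$ (with $k=\log\SetCardinality{\Field}/\log\SetCardinality{\SPSubset}$), observes that $\{(\vec{x},\vec{z}):f(\vec{x})=\gamma(\vec{z})\}$ is still decidable in space $O(s(n))$, and applies GKR's lemma to \emph{that} language, obtaining a circuit $\Circuit\colon\Field^{n+k}\to\Field$. Now the root of the barebones subcircuit genuinely has $k$ separate coordinates for the $a$-part, and the new root can legitimately sum each of them over $\SPSubset$. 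A second, purely cosmetic difference: the paper places the linear map $\hat{\gamma}$ on a fresh leaf $u_{\gamma}$ and uses the combiner $\VariableX\cdot\VariableY$ at the new root, rather than baking $\phi(\vec{\beta})$ directly into the combiner as you do; either works since the map is $\Field$-linear and hence of degree $1$.
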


\begin{proof}
Fix an input length $n$, and let $k \DefineEqual \log \SetCardinality{\Field}/\log \SetCardinality{\SPSubset}$ ($k$ is an integer because $\Field$ is an extension field of $\SPSubset$). Let $\gamma \colon \SPSubset^{k} \to \Field$ be the trivial isomorphism, and let $\Language \DefineEqual \{(\vec{x}, \vec{z}) : f(\vec{x}) = \gamma(\vec{z}) \}$. Note that $\Language$ is computable by a non-deterministic Turing machine in space $O(s(n))$ because
\begin{inparaenum}[(i)]
  \item $f$ is and
  \item $\gamma(\vec{z})$ can be implemented at the bit level as the identity function.
\end{inparaenum}
 By \cite[Lemma 4.3]{GoldwasserKR15}, $\Language$ can be computed by a layered arithmetic circuit $\Circuit$ over $\Field$ of size $\Size(n) = \poly(2^{s(n)})$ and depth $\Depth(n) = O(s^{2}(n))$; moreover, one can generate arithmetic circuits for and evaluate some low-degree extensions $\LDAdd{i}, \LDMult{i}$ of the wiring predicates in time $\poly(s(n))$ and space $O(\log s(n))$. The degree of these extensions is $\poly(s(n)) = \polylog(\Size(n))$, independent of $\Field$ (which is crucial for soundness).

Note that $\SPSubset$ is of size $\poly(s(n)) = \poly(\Depth(n), \log \Size(n))$. By \lemref{lem:gkr-barebones}, we can construct a sum-product subcircuit $\SPCircuit' = \SPCircuitTuple$ and oracle input $\SPInput^{\WireOracles}$ such that $\SPValueL{\SPInput^{\WireOracles}}{\SPCircuit'}(\vec{x}, \vec{z}) = \Circuit(\vec{x},\vec{z})$ for every $\vec{x} \in \Field^{n}$ and $\vec{z} \in \SPSubset^{k}$. We replace the oracle gates in $\SPInput^{F}$ with the explicit circuits for $\LDAdd{i}, \LDMult{i}$ to obtain a non-oracle input $\SPInput'$.

Now observe that
\begin{equation*}
f(\vec{x})
= \sum_{\vec{z} \in \SPSubset^{k}} \Circuit(\vec{x}, \vec{z}) \cdot \gamma(\vec{z})
= \sum_{\vec{z} \in \SPSubset^{k}} \SPValueL{\SPInput}{\SPCircuit'}(\vec{x}, \vec{z}) \cdot \LD{\gamma}(\vec{z})
\end{equation*}
where $\LD{\gamma} \colon \Field^{k} \to \Field$ is the unique $\Field$-linear function such that $\LD{\gamma}(\vec{z}) = \gamma(\vec{z})$ for $\vec{z} \in \SPSubset^{k}$.  Note that $\LD{\gamma}$ is computable by an arithmetic circuit of size $O(k)$, and can be evaluated in space $O(k + \log \SetCardinality{\Field}) = O(\log \SetCardinality{\Field})$.

The sum-product subcircuit $\SPCircuit$ is obtained from $\SPCircuit'$ by adding a vertex $v$ to $\Graph$ labeled with $\VariableX \cdot \VariableY$ and a leaf vertex $u_{\gamma}$, and two edges $e \DefineEqual (v, \Root[\Graph])$ and $e' \DefineEqual (v, u_{\gamma})$ with labels $(\SPFreeProjection[e],\SPSumProjection[e]) \DefineEqual ([n],[k])$ and $(\SPFreeProjection[e'],\SPSumProjection[e']) \DefineEqual (\varnothing,[k])$. The input $\SPInput$ is obtained from $\SPInput'$ by mapping $u_{\gamma}$ to the arithmetic circuit computing $\LD{\gamma}$.
\end{proof}

\subsection{Proof of \thmref{thm:gkr-space-uniform}}
\label{sec:proof-of-gkr-space-uniform}

Let $\Language$ be a language decidable by a family of $O(\log \Size(n))$-space uniform boolean circuits of size $\Size(n)$ and depth $\Depth(n)$, and let $\SCStrength(n)$ be a query bound function. The prover and verifier receive an input $\vec{x} \in \Bits^{n}$. Let $\Circuit$ be the boolean circuit (i.e., arithmetic circuit over $\Field_{2}$) for this input size; we assume that $\Circuit$ is layered with fan-in $2$, as in \defref{def:layered-arithmetic-circuits}. (Any circuit can be efficiently converted into this form, with only a quadratic increase in size and a $\log \Size(n)$-factor increase in depth, which does not affect the theorem's statement.)
	
Let $\SPSubset$ be an extension field of $\Field_{2}$ and $\Field$ an extension field of $\SPSubset$ such that $\SetCardinality{\Field} = \poly(\Depth(n), \log \Size(n))$ and $\SetCardinality{\SPSubset} = \SetCardinality{\Field}^{\Omega(1)}$. Define $\GKRVars \DefineEqual \lceil \log \Size(n) / \log \SetCardinality{\SPSubset} \rceil$ and $\GKRVarsInput \DefineEqual \lceil \log n / \log \SetCardinality{\SPSubset} \rceil$. Throughout we fix $\InternalVertexSetDegree = \Theta(1)$ and $\SPLeafDegree = \poly(\Depth(n), \log \Size(n))$, as these settings suffice for all the (sub)circuits that we construct.

\parhead{Part A: efficiently computable sum-product subcircuits for low-degree extensions of the wiring predicates}
By \cite[Claim 4.6]{GoldwasserKR15}, for every $i \in \{1,\dots,\Depth(n)\}$, the low-degree extensions $\LDAdd{i}, \LDMult{i} \colon \Field^{\GKRVars} \to \Field$ of $\Circuit$'s wiring predicates can be computed by a Turing machine in space $O(\log \Size(n))$. The machine takes as input $1^{n}$, $\SPSubset$, $\Field$, $\GKRVars$, $\GKRVarsInput$, $i$, $f \in \{\Add{},\Mult{}\}$, and $(\vec{a},\vec{b},\vec{c}) \in \Field^{3\GKRVars}$, and outputs $\LD{f}_{i}(\vec{a}, \vec{b}, \vec{c})$. Here we consider all inputs besides $(\vec{a},\vec{b},\vec{c})$ as hard-coded in the machine. By \lemref{lem:gkr-subcircuit}, there exist sum-product subcircuits $\SPCircuit_{f_{i}} = (\Field, \SPSubset, \InternalVertexSetDegree, \SPLeafDegree, \Graph_{f_{i}}, \SPPoly[f_{i}])$ and inputs $\SPInput_{f_{i}}$ (for every $f \in \Add{}, \Mult{}$ and $i \in \{1,\dots,\Depth\}$) such that $\SPValueL{\SPInput_{f_{i}}}{\SPCircuit_{f_{i}}}(\vec{a}, \vec{b}, \vec{c}) = \LD{f}_{i}(\vec{a}, \vec{b}, \vec{c})$ for all $(\vec{a},\vec{b},\vec{c}) \in \Field^{3\GKRVars}$.

For every $f \in \Add{}, \Mult{}$ and $i \in \{1,\dots,\Depth\}$, we have $\SetCardinality{\VertexSet(\Graph_{f_{i}})} = \poly(\log \Size(n))$, $\SPMaxArity(\Graph_{f_{i}}) = O(\log \Size(n)/\log \SetCardinality{\Field})$, $\GraphMaxInDegree{\Graph_{f_{i}}} = \Theta(1)$, $\Width{\Graph_{f_{i}}} = \Theta(1)$, and $\MaxSpace{\SPInput_{f_{i}}} = O(\log \Size(n))$.

\parhead{Part B: sum-product circuit with oracle input for $\Circuit(\vec{x})$}
We now invoke \lemref{lem:gkr-barebones} on $\Circuit$ to obtain a sum-product subcircuit $\SPCircuit' = (\Field, \SPSubset, \InternalVertexSetDegree, \SPLeafDegree, \Graph', \SPPoly')$ and oracle input $\SPInput^{\WireOracles}$. For the theorem, we need to transform this into a sum-product \emph{circuit} (i.e., where the root has arity $0$). To do this, we will modify the projections so that the input $\vec{x}$ is no longer `carried down', and then `hard-code' it into $\SPInput^{\WireOracles}$.

Let $\SPFreeProjection'$ be given by defining, for every $e \in \EdgeSet'$,
\begin{equation*}
\SPFreeProjection[e]' \DefineEqual
\begin{cases}
  \{1, \dots, \GKRVars\} & \text{ if $e = (v_{i}, u_{f_{i+1}})$ for $f \in \{\Add{}, \Mult{}\}$ and $i \in \{1,\dots,\Depth-1\}$ or} \\
  \varnothing & \text{ otherwise.}
\end{cases}
\end{equation*}
Observe that in $\Graph'' \DefineEqual (\VertexSet(\Graph'), \EdgeSet(\Graph'), \SPFreeProjection', \SPSumProjection(\Graph'))$, the root $v_{0}$ has $\SPArity{v_{0}} = 0$, and so $\Graph''$ is an ari-graph. Let the input $\SPInput'^{\WireOracles}$ be identical to $\SPInput^{\WireOracles}$ except that $\SPLeaf[v_{\Depth}]'^{\WireOracles}(\vec{\VariableZ}) \DefineEqual \SPLeaf[v_{\Depth}]^{\WireOracles}(\vec{x}, \vec{\VariableZ})$; that is, we `hard-code' the input $\vec{x}$ to the circuit $\Circuit$ into the input $\SPInput'^{\WireOracles}$ of the sum-product circuit (as in the original GKR protocol). This can only reduce the degree of $\SPLeaf[v_{\Depth}]'^{\WireOracles}$.
Now let $\SPCircuit'' \DefineEqual (\Field, \SPSubset, \InternalVertexSetDegree, \SPLeafDegree, \Graph'', \SPPoly')$; it holds that that $\SPCircuit''$ is a sum-product circuit and $\SPValueL{\SPInput'^{\WireOracles}}{\SPCircuit''} = \SPValueL{\SPInput^{\WireOracles}}{\SPCircuit'}(\vec{x})$.

We obtain the following parameters: $\SetCardinality{\VertexSet(\Graph'')} = \Theta(\Depth(n))$, $\SPMaxArity(\Graph'') = O(\log \Size(n)/\log \SetCardinality{\Field})$, $\GraphMaxInDegree{\Graph''} = \Theta(1)$, $\Width{\Graph''} = \Theta(1)$; and $\MaxSpace{\SPInput'^{\WireOracles}} = O(\log \Size(n))$.

\parhead{Part C: composing sum-product subcircuits}
The final sum-product circuit $\SPCircuit = \SPCircuitTuple$ is constructed as follows. The ari-graph $\Graph$ is obtained from $\Graph''$ by replacing each leaf node $u_{f_{i}}$ with the ari-graph $\Graph_{f_{i}}$, every $f \in \Add{}, \Mult{}$ and $i \in \{1,\dots,\Depth\}$. The vertex label $\SPPoly$ is the union of the vertex labels $\SPPoly'$ and $\{\SPPoly[f_{i}]\}_{f \in \{\Add{}, \Mult{}\}, i \in \{1,\dots,\Depth\}}$.

The input $\SPInput$ for $\SPCircuit$ is equals the union of $\{\SPInput_{f_{i}}\}_{f \in \{\Add{}, \Mult{}\}, i \in \{1,\dots,\Depth\}}$ and the mapping $\SPLeaf[v_{\Depth}] \DefineEqual \SPLeaf[v_{\Depth}]'^{\WireOracles}$ (for $v_{\Depth} \in \Graph''$).

Given the above definitions, one can verify that $\SPValueL{\SPInput}{\SPCircuit} = \Circuit(\vec{x})$.

Moreover, by inspection: $\SetCardinality{\VertexSet(\Graph)} = \Depth(n) \cdot \poly(\log \Size(n))$, $\SPMaxArity(\Graph) = O(\log \Size(n) / \log \SetCardinality{\Field})$, $\GraphMaxInDegree{\Graph} = \Theta(1)$, $\Width{\Graph} = \poly(\log \Size(n))$, and $\MaxSpace{\SPInput} = O(\log \Size(n))$.

\parhead{Part D: invoke PZK IPCP for sum-product circuit evaluation}
All oracles have now been instantiated, we can now rely on our construction of perfect zero knowledge Interactive PCPs for sum-product circuit evaluation problems. More precisely, the prover and verifier construct the sum-product circuit $\SPCircuit$ and its input $\SPInput$ as above, in time $n \cdot \poly(\Depth(n), \log \Size(n))$ and space $O(\log \Size(n))$. (While this amount of space is not enough to store the whole circuit at once, the verifier can construct each part as needed.) The prover and verifier then engage in the protocol of \thmref{thm:pzk-for-SPCE} on the input $(\SPCircuit, 1, \SPInput)$. Correctness comes from the fact that $(\SPCircuit, 1, \SPInput) \in \SPCELanguage$ if and only if $\Circuit(\vec{x}) = 1$.

Plugging the parameters above into \thmref{thm:pzk-for-SPCE} yields the parameters claimed by \thmref{thm:gkr-space-uniform}, with the exception of the verifier space bound. A direct computation only shows that the verifier runs in space $\poly(\log \Size(n))$, because $\Width{\Graph} = \poly(\log \Size(n))$. Yet, since $\SPCircuit$ is `treelike' in the sense that the subcircuits can be handled independently, the evaluation protocol need only consider a constant number of vertices at any one time, at the expense of a $\poly(\log \Size(n))$-factor increase in the round complexity. Doing so yields the claimed $O(\log \Size(n))$ space complexity of the verifier.

\dtignore{
\doclearpage
\section{Zero knowledge for the satisfaction of low-depth circuits}
\label{sec:zk-gkr-np}

\subsection{Intro text}
In fact we obtain two kinds of zero knowledge guarantee, depending on whether we consider the circuit evaluation or circuit satisfaction problems. We provide results for both.

\parhead{Circuit satisfaction with standard zero knowledge}
One can also view the protocol of \cite{GoldwasserKR15} as an IPCP for delegating certain \emph{non-deterministic} computations: languages in $\NP$ whose witnesses can be checked in log-space uniform $\mathbf{NC}$. Namely, the prover first sends to the verifier the witness, and then the prover and verifier engage in the delegation protocol for the computation that checks the witness. If the witness is smaller than the circuit, then the verifier has saved in efficiency compared to directly checking the witness itself.

For this setting the standard (non-fine-grained) definition of zero knowledge suffices, and we prove the following.

\begin{theorem}[Informal version of \thmref{thm:gkr-space-uniform-np}]
	Languages in $\NP$ whose witness are checkable in log-space uniform $\mathbf{NC}$ have perfect zero knowledge Interactive PCPs, whose verifier runs in quasilinear time and logarithmic space.
\end{theorem}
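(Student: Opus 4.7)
The plan is to reduce the $\NP$ witness-checking problem to a sum-product \emph{circuit satisfaction} instance and then apply \thmref{thm:pzk-for-SPCS}, mirroring how \thmref{thm:gkr-space-uniform} used the evaluation reduction together with \thmref{thm:pzk-for-SPCE}. Concretely, let $\Language \in \NP$ be decided by a log-space uniform $\mathbf{NC}$ family $\{C_{n}\}$ with $C_{n}\colon \Bits^{n}\times\Bits^{m(n)}\to\Bits$, i.e.\ $\vec{x}\in\Language$ iff there exists $\vec{w}\in\Bits^{m(n)}$ with $C_{n}(\vec{x},\vec{w})=1$. Fixing $n$, I would run the reduction of \secref{sec:sumproduct-barebones} and \secref{sec:sumproduct-space-uniform} on $C_{n}$, except that the input layer of the layered arithmetic circuit is split into two disjoint address blocks, one for $\vec{x}$ (indexed by some $\SPSubset^{\GKRVarsInput}$) and one for $\vec{w}$ (indexed by some $\SPSubset^{\GKRVars^{\prime}}$). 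This splitting is purely a relabelling of input gates and only changes the bottom wiring predicates $\LDAdd{\Depth},\LDMult{\Depth}$ in a way that remains computable by a small-space Turing machine, so \lemref{lem:gkr-subcircuit} still produces sum-product subcircuits for them with the efficiency stated there.

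Next, in the sum-product subcircuit obtained from \lemref{lem:gkr-barebones} I would replace the single input-layer leaf $v_{\Depth}$ with two leaves $v_{\Depth}^{\vec{x}}$ and $v_{\Depth}^{\vec{w}}$: the first labelled with the explicit polynomial $\LD{\vec{x}}(\vec{\VariableZ})$ (to be evaluated by the verifier) and the second left as a free leaf whose label $\LD{\vec{w}}(\vec{\VariableZ}^{\prime})$ is an auxiliary input. Routing in the bottom layer is adjusted so that the combiner at depth $\Depth-1$ picks values from the correct leaf according to the original input index. After composing with the subcircuits for the wiring predicates as in \secref{sec:proof-of-gkr-space-uniform}, I obtain a sum-product circuit $\SPCircuit$, an explicit input $\SPInput$ (containing $\LD{\vec{x}}$ and the wiring-predicate leaves), and an auxiliary input $\SPAuxInput$ (containing only $\LD{\vec{w}}$) such that $((\SPCircuit,1,\SPInput),\SPAuxInput)\in\SPCSRelation$ iff $C_{n}(\vec{x},\vec{w})=1$. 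Correctness follows vertex-by-vertex exactly as in the proof of \thmref{thm:gkr-space-uniform}.

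The zero knowledge protocol is then the instantiation of \thmref{thm:pzk-for-SPCS} on $((\SPCircuit,1,\SPInput),\SPAuxInput)$. Choosing the query bound $\SCStrength(n)=n^{\omega(1)}$ (e.g., $n^{\log\log n}$) makes the resulting PZK IPCP zero knowledge against every polynomial-time malicious verifier, which is the ``standard'' (non-query-bounded) notion of ZK. Plugging the parameters $\SetCardinality{\VertexSet(\Graph)}=\Depth(n)\cdot\poly(\log\Size(n))$, $\SPMaxArity(\Graph)=O(\log\Size(n)/\log\SetCardinality{\Field})$, $\GraphMaxInDegree{\Graph}=\Theta(1)$, $\MaxSpace{\SPInput}=O(\log n)$ into \thmref{thm:pzk-for-SPCS} (after parallel repetition for soundness boosting), and noting that $\Depth(n)=\polylog(n)$ and $\Size(n)=\poly(n)$, yields verifier time $n\cdot\poly(\log n)$ and verifier space $O(\log n)$, as required; soundness boosts to $1/2$ with only polylogarithmic overhead.

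The main obstacle is purely administrative rather than conceptual: one must verify that the input-layer split preserves \emph{log-space uniformity} of the wiring predicates, so that \lemref{lem:gkr-subcircuit} still applies to produce efficient oracle subcircuits for $\LDAdd{\Depth},\LDMult{\Depth}$, and that the only new leaf $v_{\Depth}^{\vec{x}}$ indeed admits a low-degree extension computable in time $n\cdot\poly(\log n)$ and space $O(\log n)$ (which it does, using the standard formula for Lagrange interpolation over $\SPSubset^{\GKRVarsInput}$ as in \lemref{lem:gkr-barebones}). Once that routine check is done, the verifier's quasilinear runtime is inherited from the $\LD{\vec{x}}$ evaluation, zero knowledge is inherited from \thmref{thm:pzk-for-SPCS} (the witness is hidden since $\LD{\vec{w}}$ only appears through the algebraic commitment of the SPCS protocol), and the proof is complete.
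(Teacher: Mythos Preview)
Your approach matches the paper's: split the input layer of the GKR reduction into an explicit part (for $\vec{x}$) and an auxiliary part (for the witness $\vec{w}$), then invoke \thmref{thm:pzk-for-SPCS} on the resulting sum-product circuit satisfaction instance. However, there is a genuine soundness gap.

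You assert that $((\SPCircuit,1,\SPInput),\SPAuxInput)\in\SPCSRelation$ iff $C_{n}(\vec{x},\vec{w})=1$, but the ``only if'' direction fails as stated. In $\SPCSRelation$, the auxiliary input labels the free leaf $v_{\Depth}^{\vec{w}}$ with an \emph{arbitrary} polynomial of individual degree at most $\SPLeafDegree$ over $\Field$; nothing forces its restriction to $\SPSubset^{\GKRVars^{\prime}}$ to be $\Bits$-valued. The arithmetization of $C_{n}$ (e.g., $\mathrm{AND}\mapsto xy$, $\mathrm{NOT}\mapsto 1-x$) agrees with the boolean semantics only when its inputs lie in $\Bits$, so a malicious prover could supply a non-boolean ``witness'' polynomial that drives the arithmetic circuit to output $1$ even when $C_{n}(\vec{x},\cdot)$ is unsatisfiable over $\Bits$. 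The paper flags exactly this point: ``what we actually obtain is a reduction from the satisfiability problem for the corresponding arithmetic circuit over $\Field$, which may be satisfiable even if the original boolean circuit was not,'' and remedies it by adding gates that enforce $z(1-z)=0$ on each auxiliary input value. Concretely, one augments the circuit (or the sum-product combiner at the bottom layer) so that satisfaction additionally requires $\SPAuxInput_{v_{\Depth}^{\vec{w}}}(\vec{\beta})\cdot\bigl(1-\SPAuxInput_{v_{\Depth}^{\vec{w}}}(\vec{\beta})\bigr)=0$ for all $\vec{\beta}\in\SPSubset^{\GKRVars^{\prime}}$. This booleanity check preserves log-space uniformity and the stated efficiency bounds; with it in place, your argument goes through.
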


\subsubsection{Zero knowledge from black-box cryptography}

All of our information-theoretic zero knowledge protocols in the IPCP model can be compiled into corresponding cryptographic proofs via transformations that preserve zero knowledge, while only making a \emph{black-box} use of cryptography. We already mentioned some compilation options in \secref{sec:our-focus}, and here we mention yet another one. As an example, from \thmref{thm:gkr-space-uniform-np} we can derive an analogue of the succinct zero knowledge proofs in \cite[Theorem 1.7]{GoldwasserKR15} by only making a black-box use of one-way functions.

\begin{theorem}
	Assume the existence of one-way functions. Let $\Language$ be an $\NP$ language whose witnesses are checkable in log-space uniform $\mathbf{NC}$. Then $\Language$ has a zero knowledge proof, which only makes black-box use of a one-way function, in which the verifier runs in quasilinear time and logarithmic space.
\end{theorem}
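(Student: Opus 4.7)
The theorem follows by a standard black-box compilation of the perfect zero knowledge Interactive PCP for $\NP$ languages with witnesses checkable in log-space uniform $\mathbf{NC}$ (guaranteed by the preceding result) into a computational zero knowledge interactive proof using any one-way function as a black box. The idea is to instantiate a statistically-binding, computationally-hiding bit commitment --- concretely, Naor's two-round commitment scheme, which is known to be constructible black-box from any one-way function --- and use it to hide the PCP oracle from the verifier while preserving binding for soundness.

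In more detail, the compilation proceeds as follows: the verifier sends the first Naor message $r$; the prover then commits, using $r$, to each bit of its PCP oracle; the two parties execute the interactive portion of the IPCP in the clear; and each verifier query to the PCP oracle is answered by the prover sending the corresponding decommitment. Statistical binding combined with the statistical soundness of the underlying PZK IPCP yields soundness as a \emph{proof} (not merely an argument); computational hiding combined with the PZK simulator of the IPCP yields computational zero knowledge, where the simulator relies on a trapdoor variant of Naor's commitment (also achievable black-box from OWFs) to equivocate openings to whatever values the IPCP's straightline simulator produces in response to the verifier's queries.

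The main obstacle is preserving the honest verifier's quasilinear time and logarithmic space in the face of a commit-phase message whose length is polynomial in $n$. I plan to resolve this by exploiting the algebraic structure of our IPCPs: the PCP oracle consists of the evaluations of a small number of random low-degree polynomials, so the prover commits to each polynomial's (polynomial-size but much shorter) coefficient vector rather than to its full evaluation table. Each oracle-query answer then becomes a decommitment paired with a short linear consistency certificate, so the verifier's overall work remains quasilinear in $n$ (dominated by the input-processing of the underlying GKR-style subprotocol) plus a fixed polynomial in the security parameter, and the verifier's space remains logarithmic. The zero knowledge analysis requires that the PZK simulator's oracle-query transcript be simulatable under the trapdoor commitments; fortunately, the straightline and public-coin structure of the IPCP's verifier make this compatible with per-position equivocation in a straightforward manner.
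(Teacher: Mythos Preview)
The paper does not provide a detailed proof of this statement; it appears (in a section the authors ultimately excised) only as a stated consequence of the PZK IPCP for $\NP$ relations checkable by low-depth uniform circuits, with a one-sentence remark that an analogue of \cite[Theorem 1.7]{GoldwasserKR15} is ``derived'' from that IPCP via black-box use of one-way functions. Your high-level plan---commit to the PCP oracle with Naor commitments, run the interactive part, and open the queried positions---is precisely the kind of compilation the paper gestures at, so at that level you are aligned with the paper.

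Two of your technical claims, however, do not stand as written. First, the appeal to a ``trapdoor variant of Naor's commitment (also achievable black-box from OWFs)'' for equivocation is not justified: in Naor's scheme the \emph{receiver} sends the first-round message, so the simulator---who plays the sender here---cannot force the equivocation trapdoor. Black-box equivocable commitments from OWFs do exist but require additional rounds and a different structure, and you do not say how to integrate them while preserving the verifier's logarithmic space. (An alternative that avoids equivocation---observing that the simulated oracle in these IPCPs is itself distributed as random low-degree polynomials, so the simulator could commit to freshly sampled polynomials upfront---runs into the difficulty that the IPCP simulator's sampled polynomial is conditioned on verifier messages that arrive only \emph{after} the commitment phase.)

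Second, your efficiency fix---committing to coefficient vectors and opening single evaluations via a ``linear consistency certificate''---is underspecified in a way that matters. If opening an evaluation means opening all coefficient commitments, you reveal the entire polynomial and destroy the bounded-query zero knowledge the IPCP relies on; if it means anything less, you need a homomorphic or succinct-opening commitment, which is not known black-box from OWFs. Moreover, even with coefficient representation the oracle is $\poly(\Size(n))$ rather than quasilinear, so a verifier that must receive all commitments cannot run in quasilinear time. These gaps are real obstacles to the quasilinear-time, log-space claim, and the paper (which does not work out the proof) offers no resolution to compare against.
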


\subsection{Body text}

We can also consider the GKR protocol as a way to delegate certain \emph{nondeterministic} languages. This is considered in \cite[Section 5]{GoldwasserKR15}, in the context of zero knowledge proofs whose communication complexity is polynomial in the witness size (rather than the instance size).

The above result for low-depth circuits can be extended to nondeterministic computation by dividing the input to the circuit into an explicit input $\vec{x}$ and auxiliary input (witness) $\vec{z}$, which become $\SPInput$ and $\SPAuxInput$ respectively after transforming the circuit into a sum-product circuit. Note that what we actually obtain is a reduction from the satisfiability problem for the corresponding arithmetic circuit over $\Field$, which may be satisfiable even if the original boolean circuit was not. To remedy this, we also add gates which ensure that the auxiliary variables have values in $\Bits$. Applying \thmref{thm:pzk-for-SPCS} then yields the following theorem.

\begin{theorem}[PZK IPCP for $\NP$ relations checkable by low-depth uniform boolean circuits]
	\label{thm:gkr-space-uniform-np}
	Let $\Language$ be a language in $\NP$ checkable by a family of $O(\log \Size(n))$-space uniform boolean circuits of size $\Size(n)$ and depth $\Depth(n)$, and let $\SCStrength(n)$ be a query bound function. Then $\Language$ has a (public-coin and non-adaptive) Interactive PCP that is perfect zero knowledge against all $\SCStrength$-query malicious verifiers. In more detail:
	{\small
		\begin{equation*}
		\Language \in
		\PZKIPCP
		\left[
		\begin{tabular}{rl}
		soundness error: & $1/2$ \\
		round complexity: & $\Depth(n) \cdot \log \SCStrength(n) \cdot \poly(\log \Size(n))$ \\[1mm]
		proof length: & $\Depth(n) \cdot \poly(\Size(n), \SCStrength(n))$ \\
		query complexity: & $\Depth(n) \cdot \poly(\log \Size(n), \log \SCStrength(n))$ \\[1mm]
		prover time: & $\Depth(n) \cdot \poly(\Size(n), \SCStrength(n))$ \\
		verifier time: &  $n \cdot \poly(\Depth(n), \log \Size(n), \log \SCStrength(n))$ \\[1mm]
		verifier space: & $O(\log \Size(n) + \log \SCStrength(n)/\log\log\Size(n))$ \\[1mm]
		simulator overhead: & $\poly(\Depth(n), \log \Size(n), \log \SCStrength(n)) \cdot (n +  \QueryComplexity_{\Malicious{\Verifier}}^{3})$.
		\end{tabular}
		\right]
		\enspace.
		\end{equation*}}
	Moreover, if the verifier (resp. simulator) is given oracle access to the low-degree extension of the circuit's input, the verifier runs in time $\Depth(n) \cdot \poly(\log \Size(n), \log \SCStrength(n))$, and the simulator overhead is $\poly(\Depth(n), \log \Size(n), \log \SCStrength(n)) \cdot \QueryComplexity_{\Malicious{\Verifier}}^{3}$.
\end{theorem}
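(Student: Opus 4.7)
The plan is to mimic the proof of \thmref{thm:gkr-space-uniform}, replacing the reduction to sum-product circuit \emph{evaluation} with a reduction to sum-product circuit \emph{satisfaction}, and then invoking our zero knowledge IPCP for $\SPCSRelation$ (\thmref{thm:pzk-for-SPCS}) in place of the one for $\SPCELanguage$. Concretely, given $\Language \in \NP$ with witness checker $\Circuit_{n} \colon \Bits^{n} \times \Bits^{m(n)} \to \Bits$ that is a log-space uniform boolean circuit of size $\Size(n)$ and depth $\Depth(n)$, the prover and verifier arithmetize $\Circuit_{n}$ over an appropriate extension field $\Field$ of $\Field_{2}$ and split its input gates into two disjoint groups: those corresponding to the explicit input $\vec{x}$ (to be placed in $\SPInput$) and those corresponding to the auxiliary witness $\vec{z}$ (to be placed in $\SPAuxInput$). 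The constructions of Parts A--C in the proof of \thmref{thm:gkr-space-uniform} apply essentially verbatim, yielding a sum-product circuit $\SPCircuit$ of the same size and shape together with a partial input $\SPInput$ labelling the $\vec{x}$-leaves.

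The one subtlety is that the arithmetization of a boolean circuit over $\Field$ may be satisfied by non-boolean assignments to $\vec{z}$, whereas we need satisfaction in $\SPCSRelation$ to imply membership in $\Language$. I will address this exactly as in \secref{sec:zk-nexp}: augment $\Circuit_{n}$ by conjoining, for each auxiliary input bit $z_{i}$, the equation $z_{i}(1-z_{i})=0$, and bundle all of these constraints together with the original output into a single polynomial (via a random challenge $\vec{y} \in \Field^{m+1}$ supplied by the verifier, as in \secref{sec:zk-nexp}) so that the combined circuit outputs $0$ on $\Field$ iff $\Circuit_{n}(\vec{x},\vec{z})=1$ and $\vec{z}\in\Bits^{m}$. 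This adds $O(m)$ gates and a constant number of extra layers, and leaves all parameters in Parts A--C unchanged up to constants. The verifier's challenge $\vec{y}$ is sent as the first message of the protocol, before the IPCP for $\SPCSRelation$ begins.

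With the reduction in hand, the prover sends the auxiliary input $\SPAuxInput$ (i.e.\ the leaves labelling the $\vec{z}$-inputs) as part of the proof oracle exactly as in \thmref{thm:pzk-for-SPCS}, and then the prover and verifier engage in the PZK IPCP for $\SPCSRelation$ on $(\SPCircuit, 0, \SPInput)$. Completeness follows from the correctness of the arithmetization; soundness follows from soundness of \thmref{thm:pzk-for-SPCS} combined with the $O(m/\SetCardinality{\Field})$ error introduced by the random bundling of the boolean constraints; perfect zero knowledge follows directly from \thmref{thm:pzk-for-SPCS} (the additional structure is public). Plugging in the same parameters as in the proof of \thmref{thm:gkr-space-uniform} yields the stated bounds, with the same sub-linear verifier variant available when the verifier has oracle access to the low-degree extension of $(\vec{x},\vec{z})$.

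The main obstacle I anticipate is the bookkeeping around the boolean-constraint gadget: one must verify that adding the $O(m)$ extra constraints, bundled via a single fresh challenge, preserves the log-space uniformity of the wiring predicates (so that Part A still goes through with only $\polylog(\Size)$-degree extensions computable in space $O(\log \Size)$) and does not blow up $\SPMaxArity(\Graph)$, $\GraphMaxInDegree{\Graph}$, or $\Width{\Graph}$ beyond what the statement permits. This is straightforward but tedious: the constraint circuit is depth-$O(1)$ and has a trivial highly-regular wiring pattern whose low-degree extension admits a closed form, so it can be spliced onto the existing circuit at the input layer without disturbing any of the global parameters, and the final invocation of \thmref{thm:pzk-for-SPCS} then delivers the claimed complexities.
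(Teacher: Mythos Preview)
Your proposal is correct and takes essentially the same approach as the paper, which likewise splits the circuit input into an explicit part $\vec{x}$ and an auxiliary part $\vec{z}$, adds gates to enforce that the auxiliary variables take boolean values, and then invokes \thmref{thm:pzk-for-SPCS}. The paper's own treatment is a single terse paragraph and does not spell out how the booleanity constraints are combined with the circuit output; your choice to bundle them via a verifier-supplied challenge as in \secref{sec:zk-nexp} is the natural mechanism---just note (cf.\ \appref{sec:zk-nexp-spc}) that the oracle in the $\SPCSRelation$ protocol is independent of the combiner polynomials, so it can be sent before $\vec{y}$ and the result remains a valid IPCP.
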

}

\doclearpage
\section*{Acknowledgments}

The authors would like to thank Eli Ben-Sasson for numerous enlightening conversations in early stages of this work. The authors also thank Thomas Vidick for suggesting that we use our techniques to obtain a zero knowledge analogue of \cite{BabaiFL91,BabaiFLS91}, as captured by \thmref{thm:pzk-for-nexp}. The authors thank Tom Gur for comments on the writeup.

\appendix

\doclearpage
\section{Algebraic query complexity of polynomial summation: details}
\label{sec:query-complexity-appendix}

In this section we prove the results whose statements appear in \secref{sec:algebraic-query-complexity}.

\subsection{Proof of \thmref{thm:sum-query-lower-bound}}

	First, since $\StrongRandPoly$ has individual degree at most $\SCDegree$ in $\vec{\VariableX}$, we can rewrite any such linear combination in the following way:
	\begin{equation*}
	\sum_{\vec{\alpha} \in L^{\SCVars}} C_{\vec{\alpha},i} \sum_{\vec{y} \in \SSCSubset^{\SSCVars}} \StrongRandPoly(\vec{\alpha}, \vec{y})
	= \sum_{\vec{\alpha} \in L^{\SCVars}} C_{\vec{\alpha},i} \sum_{\vec{\beta} \in K^{\SCVars}} b_{\vec{\beta},\vec{\alpha}} \sum_{\vec{y} \in \SSCSubset^{\SSCVars}} \StrongRandPoly(\vec{\alpha}, \vec{y})
	= \sum_{\vec{\alpha} \in K^{\SCVars}} C'_{\vec{\alpha},i} \sum_{\vec{y} \in \SSCSubset^{\SSCVars}} \StrongRandPoly(\vec{\alpha}, \vec{y})
	= \sum_{\vec{q} \in S} D_{\vec{q},i} \StrongRandPoly(\vec{q}) \enspace,
	\end{equation*}
	where $C' \DefineEqual BC$. If $\SCDegree' = \SetCardinality{\SSCSubset} - 2$ then the bound is trivial. Otherwise, let $H$ be some arbitrary subset of $G$ of size $\min\{\SCDegree' - \SetCardinality{\SSCSubset} + 2, \SetCardinality{\SSCSubset}\}$. Let $P_{0} \subseteq \PolynomialRingIndOneXY{\Field}{\SCVars}{\VariableX}{\SSCVars}{\VariableY}{\SCDegree}{\SetCardinality{\SCSubset}-1}$ be such that for all $p \in P_{0}$ and for all $\vec{q} \in S$, $p(\vec{q}) = 0$. Since this is at most $S$ linear constraints, $P_{0}$ has dimension at least $(\SCDegree+1)^{\SCVars}\SetCardinality{\SCSubset}^{\SSCVars}-\SetCardinality{S}$. Let $B_{0} \in \Field^{n \times (\SCDegree+1)^{\SCVars} \SetCardinality{\SCSubset}^{\SSCVars}}$ be a matrix whose rows form a basis for the vector space $\{\big(p(\vec{\alpha},\vec{\beta})\big)_{\vec{\alpha} \in K^{\SCVars}, \vec{\beta} \in \SCSubset^{\SSCVars}} : p \in P_{0}\}$ of evaluations of polynomials in $P_{0}$ on $K^{\SCVars} \times \SCSubset^{\SSCVars}$; we have $n \geq (\SCDegree+1)^{\SCVars}\SetCardinality{\SCSubset}^{\SSCVars}-\SetCardinality{S}$. By an averaging argument there exists $\vec{\beta}_{0} \in \SCSubset^{k}$ such that the submatrix $B_{\vec{\beta}_{0}}$ consisting of columns $(\vec{\alpha},\vec{\beta}_{0})$ of $B_{0}$ for each $\vec{\alpha} \in K^{\SCVars}$ has rank at least $(\SCDegree+1)^{\SCVars}-\SetCardinality{S}/\SetCardinality{\SCSubset}^{\SSCVars}$.
	
	Let $q \in \PolynomialRingIndOne{\Field}{\SSCVars}{\VariableY}{\SetCardinality{\SSCSubset}-1}$ be the polynomial such that $q(\vec{\beta}_{0}) = 1$, and $q(\vec{y}) = 0$ for all $\vec{y} \in \SSCSubset^{\SSCVars} - \{\vec{\beta}_{0}\}$. For arbitrary $p \in P_{0}$, let $Z(\vec{\VariableX}, \vec{\VariableY}) \DefineEqual q(\vec{\VariableY}) p(\vec{\VariableX},\vec{\VariableY}) \in \PolynomialRingIndOneXY{\Field}{\SCVars}{\VariableX}{\SSCVars}{\VariableY}{\SCDegree}{\SetCardinality{\SCSubset}+\SetCardinality{\SSCSubset}-2}$. Observe that our choice of $\SCSubset$ ensures that the degree of $\StrongRandPoly$ in $\vec{\VariableY}$ is at most $\SCDegree'$. Then for all $i \in \{1, \ldots, \ell\}$, it holds that
	\begin{equation*}
	\sum_{\vec{\alpha} \in K^{\SCVars}} C'_{\vec{\alpha},i} \sum_{\vec{y} \in \SSCSubset^{\SSCVars}} Z(\vec{\alpha},\vec{y})
	= \sum_{\vec{\alpha} \in K^{\SCVars}} C'_{\vec{\alpha},i} \cdot p(\vec{\alpha}, \vec{\beta}_{0})
	= \sum_{\vec{q} \in S} D_{\vec{q},i} \cdot Z(\vec{\alpha},\vec{y}) = 0 \enspace.
	\end{equation*}
	Thus the column space of $C'$ is contained in the null space of $B_{\vec{\beta}_{0}}$, and so the null space of $B_{\vec{\beta}_{0}}$ has rank at least $\rank(C')$. Hence $(d+1)^{m} - \rank(C') \geq \rank(B_{\vec{\beta}_{0}}) \geq (d+1)^{m} - \SetCardinality{S}/\SetCardinality{\SCSubset}^{\SSCVars}$, so $\SetCardinality{S} \geq \rank(C') \cdot \SetCardinality{\SCSubset}^{\SSCVars}$, which yields the theorem. \qed

\subsection{Proof of \corref{cor:partial-sum-indep-vars}}

We will need a simple fact from linear algebra: that `linear independence equals statistical independence'. That is, if we sample an element from a vector space and examine some subsets of its entries, these distributions are independent if and only if there does not exist a linear dependence between the induced subspaces. The formal statement of the claim is as follows; its proof is deferred to the end of this subsection.
	
\begin{claim}
	\label{claim:linear-algebra-fact}
	Let $\Field$ be a finite field and $\Domain$ a finite set. Let $V \subseteq \Field^{\Domain}$ be an $\Field$-vector space, and let $\vec{v}$ be a random variable which is uniform over $V$. For any subdomains $S, S' \subseteq \Domain$, the restrictions $\Restrict{\vec{v}}{S}$ and $\Restrict{\vec{v}}{S'}$ are statistically dependent if and only if there exist constants $(c_{i})_{i \in S}$ and $(d_{i})_{i \in S'}$ such that:
	\begin{itemize}
		\item there exists $\vec{w} \in V$ such that $\sum_{i \in S} c_{i} w_{i} \neq 0$, and
		\item for all $\vec{w} \in V$, $\sum_{i \in S} c_{i} w_{i} = \sum_{i \in S'} d_{i} w_{i}$.
	\end{itemize}
\end{claim}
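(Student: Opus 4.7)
My plan is to reduce the claim to a standard duality between subspaces and linear functionals. The key observation is that when a random variable is uniform over a linear subspace, statistical independence of two linear projections corresponds exactly to the joint image being the full product of the individual images, and such a ``full product'' condition on subspaces is most easily characterized via annihilating functionals.

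First I would set $V_S \DefineEqual \{\Restrict{\vec{w}}{S} : \vec{w} \in V\}$ and $V_{S'} \DefineEqual \{\Restrict{\vec{w}}{S'} : \vec{w} \in V\}$, and let $W \DefineEqual \{(\Restrict{\vec{w}}{S}, \Restrict{\vec{w}}{S'}) : \vec{w} \in V\} \subseteq V_S \times V_{S'}$. Since each of $\Restrict{\cdot}{S}$, $\Restrict{\cdot}{S'}$ and the joint map $\vec{w} \mapsto (\Restrict{\vec{w}}{S}, \Restrict{\vec{w}}{S'})$ is a linear map out of $V$, and linear maps have uniform fibers, the marginals $\Restrict{\vec{v}}{S}$ and $\Restrict{\vec{v}}{S'}$ are uniform over $V_S$ and $V_{S'}$ respectively, while the joint distribution is uniform over $W$. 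Statistical independence is therefore equivalent to $W = V_S \times V_{S'}$, and since $W$ is already a subspace of $V_S \times V_{S'}$, statistical dependence is equivalent to the strict inclusion $W \subsetneq V_S \times V_{S'}$.

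Next I would apply the standard fact that a subspace $W$ is strictly contained in a finite-dimensional ambient space $U = V_S \times V_{S'}$ iff there is a linear functional on $U$ vanishing on $W$ but not identically zero on $U$. Any linear functional on $\Field^S \times \Field^{S'}$ has the form $\phi(a, b) = \sum_{i \in S} c_i a_i - \sum_{i \in S'} d_i b_i$; vanishing on $W$ translates exactly to ``$\sum_{i \in S} c_i w_i = \sum_{i \in S'} d_i w_i$ for all $\vec{w} \in V$,'' and being nonzero somewhere on $V_S \times V_{S'}$ translates (given the previous condition) to ``there exists $\vec{w} \in V$ with $\sum_{i \in S} c_i w_i \neq 0$.'' These are precisely the two bulleted conditions in the claim, so both directions follow. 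The one step requiring a little care is to work with functionals defined on $V_S \times V_{S'}$ rather than on the ambient $\Field^S \times \Field^{S'}$, so that ``nonzero somewhere'' genuinely witnesses a separation of $W$ from $V_S \times V_{S'}$; however, the coefficients $(c_i), (d_i)$ are naturally defined on the ambient coordinates and this causes no further difficulty.
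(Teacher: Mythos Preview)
Your proof is correct and follows essentially the same approach as the paper's. The paper chooses a basis matrix $B$ for $V$ and phrases everything in terms of ranks and null spaces (showing independence holds iff $\rank(B_{S,S'}) = \rank(B_S) + \rank(B_{S'})$, then invoking rank--nullity to translate this into the existence of a vector in $\mathrm{nul}(B_{S,S'}^{T}) \setminus \mathrm{nul}(B_{S}^{T}) \times \mathrm{nul}(B_{S'}^{T})$), whereas you work coordinate-free with the image subspaces $V_S, V_{S'}, W$ and a separating functional; the underlying linear-algebra content is identical.
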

Now observe that 
\begin{equation*}
\Big\{ \Big(\big(\StrongRandPoly(\vec{\gamma})\big)_{\vec{\gamma} \in \Field^{\SCVars+\SSCVars}}, \big(\sum_{\vec{y} \in \SSCSubset^{\SSCVars}} \StrongRandPoly(\vec{\alpha},\vec{y})\big)_{\vec{\alpha} \in \Field^{\SCVars}}\Big) : \StrongRandPoly \in \PolynomialRingIndOneXY{\Field}{\SCVars}{\VariableX}{\SSCVars}{\VariableY}{\SCDegree}{\SCDegree'} \Big\}
\end{equation*}
is an $\Field$-vector space with domain $\Field^{\SCVars+\SSCVars} \cup \Field^{\SCVars}$. Consider subdomains $\Field^{\SCVars}$ and $S$. Since $\SetCardinality{S} < \SetCardinality{\SSCSubset}^{\SSCVars}$, by \thmref{thm:sum-query-lower-bound} there exist no constants $(c_{\vec{\alpha}})_{\alpha \in \Field^{\SCVars}}$, $(d_{\vec{\gamma}})_{\vec{\gamma} \in S}$ such that the conditions of the claim hold. This completes the proof.

\begin{proof}[Proof of \clmref{claim:linear-algebra-fact}]
	For arbitrary $\vec{x} \in \Field^{S}, \vec{x}' \in \Field^{S'}$, we define the quantity
	\begin{equation*}
	p_{\vec{x},\vec{x}'} \DefineEqual \Pr_{\vec{v} \in V}
	\left[
	\Restrict{\vec{v}}{S} = \vec{x} \wedge \Restrict{\vec{v}}{S'} = \vec{x}'
	\right] \enspace.
	\end{equation*}
	Let $d \DefineEqual \dim(V)$, and let $B \in \Field^{D \times d}$ be a basis for $V$. Let $B_{S} \in \Field^{S \times d}$ be $B$ restricted to rows corresponding to elements of $S$, and let $B_{S'}$ be defined likewise. Finally, let $B_{S,S'} \in \Field^{(\SetCardinality{S}+\SetCardinality{S'})\times d}$ be the matrix whose rows are the rows of $B_{S}$, followed by the rows of $B_{S'}$. Then
	\begin{equation*}
	p_{\vec{x},\vec{x}'} = \Pr_{\vec{z} \in \Field^{d}}
	\left[
	B_{S,S'} \cdot \vec{z} = (\vec{x}, \vec{x}')
	\right] \enspace.
	\end{equation*}
	One can verify that, for any matrix $A \in \Field^{m \times n}$,
	\begin{equation*}
	\Pr_{\vec{z} \in \Field^{n}} [A\vec{z} = \vec{b}] = \begin{cases}
	\Field^{-\rank(A)} & \text{ if $\vec{b} \in \mathrm{colsp}(A)$, and} \\
	0 & \text{otherwise.}
	\end{cases}
	\end{equation*}
	The column space $\mathrm{colsp}(B_{S,S'}) \subseteq \mathrm{colsp}(B_{S}) \times \mathrm{colsp}(B_{S'})$, and equality holds if and only if $\rank(B_{S,S'}) = \rank(B_{S}) + \rank(B_{S'})$. It follows that $p_{\vec{x},\vec{x}'} = \Pr_{\vec{v} \in V}[\Restrict{\vec{v}}{S} = \vec{x}] \cdot \Pr_{\vec{v} \in V}[\Restrict{\vec{v}}{S'} = \vec{x}']$ if and only if $\rank(B_{S,S'}) = \rank(B_{S}) + \rank(B_{S'})$. By the rank-nullity theorem and the construction of $B_{S,S'}$, this latter condition holds if and only if $\mathrm{nul}(B_{S,S'}^{T}) \subseteq \mathrm{nul}(B_{S}^{T}) \times \mathrm{nul}(B_{S'}^{T})$. To conclude the proof, it remains only to observe that the condition in the claim is equivalent to the existence of vectors $\vec{c} \in \Field^{S}$, $\vec{d} \in \Field^{S'}$ such that $\vec{c} \notin \mathrm{nul}(B_{S}^{T})$ but $(\vec{c},-\vec{d}) \in \mathrm{nul}(B_{S,S'}^{T})$.
\end{proof}

\subsection{Upper bounds}
\label{sec:upper-bounds}

In this section we show that in certain cases the degree constraints in \thmref{thm:sum-query-lower-bound} are tight.

\subsubsection{The case of multilinear polynomials}
\label{sec:multilinear-polynomials}

The first result is for the case of multivariate polynomials over any finite field, where $\SCSubset \subseteq \Field$ is arbitrary. The proof is a simple extension of a proof due to \cite{JumaKRS09} for the case $\SCSubset = \Bits$.

\begin{theorem}[Multilinear Polynomials]
	\label{thm:multilinear-upper-bound}
	Let $\Field$ be a finite field, $\SCSubset$ a subset of $\Field$, and $\gamma \DefineEqual \sum_{\alpha \in \SCSubset} \alpha$. For every $\PolyA \in \PolynomialRingIndOne{\Field}{\SCVars}{\VariableX}{1}$ (i.e., for every $\SCVars$-variate multilinear polynomial $\PolyA$) it holds that
	\begin{equation*}
	\sum_{\vec{\alpha} \in \SCSubset^{\SCVars}} \PolyA(\vec{\alpha})
	=
	\begin{cases}
	\PolyA\big(\frac{\gamma}{\SetCardinality{\SCSubset}},\dots,\frac{\gamma}{\SetCardinality{\SCSubset}}\big) \cdot \SetCardinality{\SCSubset}^{\SCVars} & \text{ if } \Characteristic{\Field} \nmid \SetCardinality{\SCSubset} \\
	\kappa \cdot \gamma^{\SCVars} & \text{ if } \Characteristic{\Field} \mid \SetCardinality{\SCSubset}
	\end{cases}
	\enspace,
	\end{equation*}
	where $\kappa$ is the coefficient of $\VariableX_{1} \cdots \VariableX_{\SCVars}$ in $\PolyA$.
\end{theorem}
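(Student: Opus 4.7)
The plan is to exploit multilinearity to decompose $\PolyA$ into a sum of monomials and then evaluate the sum over $\SCSubset^{\SCVars}$ term by term, where each monomial sum factors as a product of univariate sums. Specifically, I will write
\begin{equation*}
\PolyA(\vec{\VariableX}) = \sum_{S \subseteq [\SCVars]} c_S \prod_{i \in S} \VariableX_i
\end{equation*}
for coefficients $c_S \in \Field$ (with $c_{[\SCVars]} = \kappa$), and then by linearity of summation
\begin{equation*}
\sum_{\vec{\alpha} \in \SCSubset^{\SCVars}} \PolyA(\vec{\alpha}) = \sum_{S \subseteq [\SCVars]} c_S \prod_{i \in S} \Big(\sum_{\alpha \in \SCSubset} \alpha\Big) \prod_{i \notin S} \Big(\sum_{\alpha \in \SCSubset} 1\Big) = \sum_{S \subseteq [\SCVars]} c_S \, \gamma^{\SetCardinality{S}} \, \SetCardinality{\SCSubset}^{\SCVars - \SetCardinality{S}}.
\end{equation*}

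The two cases of the theorem then correspond to two ways of interpreting this master identity. If $\Characteristic{\Field} \nmid \SetCardinality{\SCSubset}$, then $\SetCardinality{\SCSubset}$ is invertible in $\Field$, so I can pull out $\SetCardinality{\SCSubset}^{\SCVars}$ and rewrite the sum as $\SetCardinality{\SCSubset}^{\SCVars} \sum_{S} c_S (\gamma/\SetCardinality{\SCSubset})^{\SetCardinality{S}}$, which is exactly $\SetCardinality{\SCSubset}^{\SCVars} \cdot \PolyA(\gamma/\SetCardinality{\SCSubset}, \dots, \gamma/\SetCardinality{\SCSubset})$ by multilinearity of $\PolyA$. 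If instead $\Characteristic{\Field} \mid \SetCardinality{\SCSubset}$, then $\SetCardinality{\SCSubset} = 0$ in $\Field$, so every term with $\SetCardinality{S} < \SCVars$ vanishes, leaving only the $S = [\SCVars]$ contribution, namely $c_{[\SCVars]} \cdot \gamma^{\SCVars} = \kappa \cdot \gamma^{\SCVars}$.

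There is no significant obstacle here: the entire argument is a one-line calculation once the multilinear expansion is written down, and both cases follow from elementary manipulations. The only minor care is in asserting that the evaluation $\PolyA(\gamma/\SetCardinality{\SCSubset}, \dots, \gamma/\SetCardinality{\SCSubset})$ in the first case indeed matches the expansion $\sum_S c_S (\gamma/\SetCardinality{\SCSubset})^{\SetCardinality{S}}$, which is immediate from the definition of multilinearity. Thus the proof can be kept to essentially one paragraph.
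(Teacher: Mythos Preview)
Your proof is correct and follows essentially the same approach as the paper: decompose $\PolyA$ into monomials and use the product structure of $\SCSubset^{\SCVars}$ to factor each monomial sum. The paper phrases the first case in the language of expectation (writing $\Expectation[\PolyA(\vec{\alpha})] = \PolyA(\Expectation[\vec{\alpha}])$ for independent $\alpha_i$) and treats the two cases separately, whereas you first derive the single master identity $\sum_{\vec{\alpha}} \PolyA(\vec{\alpha}) = \sum_S c_S \gamma^{|S|} |H|^{m-|S|}$ and then specialize; the underlying computation is identical.
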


\begin{proof}
	First suppose that $\Characteristic{\Field}$ does not divide $\SetCardinality{\SCSubset}$. Let $\vec{\alpha}$ be uniformly random in $\SCSubset^{\SCVars}$; in particular, $\alpha_{i}$ and $\alpha_{j}$ are independent for $i \neq j$. For every monomial $m(\vec{\VariableX}) = \VariableX_{1}^{e_{1}} \cdots \VariableX_{\SCVars}^{e_{\SCVars}}$ with $e_{1},\dots,e_{\SCVars} \in \Bits$,
	\begin{equation*}
	\Expectation[M(\vec{\alpha})]
	= \Expectation[\alpha_{1}^{e_{1}} \cdots \alpha_{\SCVars}^{e_{\SCVars}}]
	= \Expectation[\alpha_{1}^{e_{1}}] \cdots \Expectation[\alpha_{\SCVars}^{e_{\SCVars}}]
	= \Expectation[\alpha_{1}]^{e_{1}} \cdots \Expectation[\alpha_{\SCVars}]^{e_{\SCVars}}
	= M(\Expectation[\alpha_{1}], \dots, \Expectation[\alpha_{\SCVars}])
	\enspace.
	\end{equation*}
	Since $\PolyA$ is a linear combination of monomials, $\Expectation[\PolyA(\vec{\alpha})] = \PolyA(\Expectation[\vec{\alpha}])$. Each $\alpha_{i}$ is uniformly random in $\SCSubset$, so $\Expectation[\alpha_{i}] = \frac{1}{\SetCardinality{\SCSubset}} \sum_{\alpha \in \SCSubset} \alpha = \frac{\gamma}{\SetCardinality{\SCSubset}}$, and thus $\PolyA(\Expectation[\vec{\alpha}]) = \PolyA(\frac{\gamma}{\SetCardinality{\SCSubset}},\dots,\frac{\gamma}{\SetCardinality{\SCSubset}})$, which implies that $\Expectation[\PolyA(\vec{\alpha})] = \PolyA(\frac{\gamma}{\SetCardinality{\SCSubset}},\dots,\frac{\gamma}{\SetCardinality{\SCSubset}})$. To deduce the claimed relation, it suffices to note that $\Expectation[\PolyA(\vec{\alpha})] = \frac{1}{\SetCardinality{\SCSubset}^{\SCVars}}\sum_{\vec{\alpha} \in \SCSubset^{\SCVars}} \PolyA(\vec{\alpha})$.
	
	Next suppose that $\Characteristic{\Field}$ divides $\SetCardinality{\SCSubset}$. For every monomial $m(\vec{\VariableX}) = \VariableX_{1}^{e_{1}} \cdots \VariableX_{\SCVars}^{e_{\SCVars}}$ with $e_{1},\dots,e_{\SCVars} \in \Bits$:
	\begin{itemize}
		
		\item if there exists $j \in [\SCVars]$ such that $e_j = 0$ then
		\begin{equation*}
		\sum_{\vec{\alpha} \in \SCSubset^{\SCVars}} M(\vec{\alpha})
		= \SetCardinality{\SCSubset}
		\sum_{\alpha_{1},\dots,\alpha_{j-1},\alpha_{j+1},\dots,\alpha_{\SCVars} \in \SCSubset}
		\alpha_{1}^{e_{1}} \cdots \alpha_{j-1}^{e_{j-1}} \alpha_{j+1}^{e_{j+1}} \cdots \alpha_{\SCVars}^{e_{\SCVars}}
		= 0
		\enspace.
		\end{equation*}
		
		\item if instead $e_{1} = \cdots = e_{\SCVars} = 1$ then
		\begin{equation*}
		\sum_{\vec{\alpha} \in \SCSubset^{\SCVars}} M(\vec{\alpha})
		= \sum_{\vec{\alpha} \in \SCSubset^{\SCVars}}\prod_{i=1}^{\SCVars} \alpha_{i}
		= \prod_{i=1}^{\SCVars} \sum_{\alpha_{i} \in \SCSubset} \alpha_{i}
		= \left(\sum_{\alpha \in \SCSubset} \alpha \right)^{\SCVars}
		\enspace. \qedhere
		\end{equation*}
	\end{itemize} 
\end{proof}

The following corollary shows that for prime fields of odd size, the value of $\sum_{\vec{\alpha} \in \SCSubset^{\SCVars}} \PolyA(\vec{\alpha})$ can be computed efficiently for any $\SCSubset \subseteq \Field$ using at most a single query to $\PolyA$.

\begin{corollary}
	\label{cor:prime-field-multilinear}
	Let $\Field$ be a prime field of odd size, $\SCSubset$ a subset of $\Field$, and $\gamma \DefineEqual \sum_{\alpha \in \SCSubset} \alpha$. For every $\PolyA \in \PolynomialRingIndOne{\Field}{\SCVars}{\VariableX}{1}$ (i.e., for every $\SCVars$-variate multilinear polynomial $\PolyA$) it holds that
	\begin{equation*}
	\sum_{\vec{\alpha} \in \SCSubset^{\SCVars}} \PolyA(\vec{\alpha})
	=
	\begin{cases}
	\PolyA\big(\frac{\gamma}{\SetCardinality{\SCSubset}},\dots,\frac{\gamma}{\SetCardinality{\SCSubset}}\big) \cdot \SetCardinality{\SCSubset}^{\SCVars} & \text{ if } \Characteristic{\Field} \nmid \SetCardinality{\SCSubset} \\
	0 & \text{ if } \Characteristic{\Field} \mid \SetCardinality{\SCSubset}
	\end{cases}
	\enspace.
	\end{equation*}
\end{corollary}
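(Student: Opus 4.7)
The plan is to derive the corollary as a straightforward specialization of \thmref{thm:multilinear-upper-bound}, where the only work is handling the second case (when the characteristic divides $\SetCardinality{\SCSubset}$) by observing that under the hypotheses, $\gamma$ itself must vanish.

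First I would simply invoke \thmref{thm:multilinear-upper-bound}: the first case of the corollary (when $\Characteristic{\Field} \nmid \SetCardinality{\SCSubset}$) is already exactly the first case of the theorem, so nothing further is needed there. For the second case, I need to show that $\kappa \cdot \gamma^{\SCVars} = 0$ for every multilinear $\PolyA$, and the natural approach is to argue that $\gamma = 0$ in $\Field$.

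Next I would exploit the hypothesis that $\Field$ is a \emph{prime} field of odd size. Writing $p = \SetCardinality{\Field} = \Characteristic{\Field}$, the assumption $\Characteristic{\Field} \mid \SetCardinality{\SCSubset}$ combined with $1 \leq \SetCardinality{\SCSubset} \leq p$ forces $\SetCardinality{\SCSubset} = p$, i.e.\ $\SCSubset = \Field$. Therefore
\begin{equation*}
\gamma = \sum_{\alpha \in \Field} \alpha = \sum_{i=0}^{p-1} i = \frac{p(p-1)}{2}\enspace,
\end{equation*}
and since $p$ is odd the integer $(p-1)/2$ is well-defined, so the sum is divisible by $p$ and hence vanishes in $\Field$. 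Consequently $\gamma^{\SCVars} = 0$ in $\Field$ (using $\SCVars \geq 1$; the degenerate case $\SCVars = 0$ is trivial), so the second case of the theorem yields $\sum_{\vec{\alpha} \in \SCSubset^{\SCVars}} \PolyA(\vec{\alpha}) = \kappa \cdot 0 = 0$, matching the corollary.

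There is no real obstacle: the entire content of the corollary is the observation that in a prime field of odd characteristic, the sum of all field elements is zero, which collapses the second case of \thmref{thm:multilinear-upper-bound} independently of the coefficient $\kappa$. I would write this as a two- or three-sentence proof immediately following the corollary.
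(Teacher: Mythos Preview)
Your proposal is correct and essentially identical to the paper's proof: both invoke \thmref{thm:multilinear-upper-bound} directly for the first case, and for the second case both observe that $\Characteristic{\Field} \mid \SetCardinality{\SCSubset}$ in a prime field forces $\SCSubset = \Field$, then compute $\gamma = p(p-1)/2 \equiv 0 \pmod{p}$ using that $p$ is odd.
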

\begin{proof}
	\thmref{thm:multilinear-upper-bound} directly implies both cases. If $\Characteristic{\Field}$ does not divide $\SetCardinality{\SCSubset}$, then the claimed value is as in the theorem. If instead $\Characteristic{\Field}$ divides $\SetCardinality{\SCSubset}$, then it must be the case that $\SCSubset = \Field$, since $p \DefineEqual \Characteristic{\Field}$ equals $\SetCardinality{\Field}$; in this case, $\gamma = \sum_{\alpha \in \SCSubset} \alpha = (p-1)p/2$, which is divisible by $p$ since $2$ must divide $p-1$ (as $p$ is odd).
\end{proof}

\subsubsection{The case of subsets with group structure}
\label{sec:case-of-groups}

In this section we show that if $\SCSubset$ is assumed to have some group structure, then few queries may suffice even for polynomials of degree greater than one. In particular, the following result shows that a single query suffices for $\SCDegree \leq \SetCardinality{\SCSubset}$ when $\SCSubset$ is a multiplicative subgroup of $\Field$.

\begin{lemma}[Multiplicative Groups]
	\label{lem:multiplicative-group}
	Let $\Field$ be a field, $\RMSubDomain$ a finite multiplicative subgroup of $\Field$, and $\RMVars,\RMDegree$ positive integers with $\RMDegree < \SetCardinality{\RMSubDomain}$. For every $\PolyA \in \PolynomialRingIndOne{\Field}{\RMVars}{\VariableX}{\RMDegree}$,
	\begin{equation*}
	\sum_{\vec{\alpha} \in \RMSubDomain^{\RMVars}} \PolyA(\vec{\alpha})
	= \PolyA(0,\dots,0) \cdot \SetCardinality{\RMSubDomain}^{\RMVars}
	\enspace.
	\end{equation*}
\end{lemma}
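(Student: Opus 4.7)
The plan is to exploit the basic identity that, for a finite multiplicative subgroup $H$ of $\Field^{\times}$ and any non-negative integer $e$,
\[
\sum_{\alpha \in H} \alpha^{e} = \begin{cases} \SetCardinality{H} & \text{if } \SetCardinality{H} \mid e \\ 0 & \text{otherwise}\end{cases}.
\]
This follows from the cyclicity of $H$: fix a generator $\beta$ so that the sum becomes $\sum_{i=0}^{\SetCardinality{H}-1} (\beta^{e})^{i}$, a geometric series whose value is $\SetCardinality{H}$ when $\beta^{e} = 1$ and $(\beta^{e\SetCardinality{H}} - 1)/(\beta^{e} - 1) = 0$ otherwise, since every element of $H$ satisfies $\alpha^{\SetCardinality{H}} = 1$.

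With this identity in hand, I would expand $\PolyA$ as an $\Field$-linear combination of monomials $c_{\vec{e}}\, \VariableX_{1}^{e_{1}} \cdots \VariableX_{\RMVars}^{e_{\RMVars}}$ with each $e_{i} \in \{0,1,\dots,\RMDegree\}$ (using the individual degree bound), and then interchange sums:
\[
\sum_{\vec{\alpha} \in \RMSubDomain^{\RMVars}} \PolyA(\vec{\alpha}) = \sum_{\vec{e}} c_{\vec{e}} \prod_{i=1}^{\RMVars} \Big( \sum_{\alpha_{i} \in \RMSubDomain} \alpha_{i}^{e_{i}} \Big).
\]
By the hypothesis $\RMDegree < \SetCardinality{\RMSubDomain}$, each exponent $e_{i}$ lies in $\{0,1,\dots,\SetCardinality{\RMSubDomain}-1\}$, so the inner sum equals $\SetCardinality{\RMSubDomain}$ when $e_{i} = 0$ and equals $0$ when $1 \le e_{i} \le \RMDegree$. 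Hence the entire product vanishes unless $\vec{e} = \vec{0}$, in which case it contributes $c_{\vec{0}} \cdot \SetCardinality{\RMSubDomain}^{\RMVars}$. Finally, observe that $c_{\vec{0}} = \PolyA(0,\dots,0)$ because it is the constant coefficient of $\PolyA$; this yields the claimed identity.

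There is essentially no obstacle here — the statement is really a standard character-sum computation, and the only `content' is recognizing that the individual-degree bound $\RMDegree < \SetCardinality{\RMSubDomain}$ is exactly what prevents any nontrivial exponent from being a multiple of $\SetCardinality{\RMSubDomain}$, so that all non-constant monomials integrate to zero over $\RMSubDomain^{\RMVars}$.
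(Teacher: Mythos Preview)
Your proof is correct and rests on the same key identity as the paper's: for a cyclic group $H$ with generator $\omega$, $\sum_{i=0}^{|H|-1} (\omega^{j})^{i}$ equals $|H|$ if $j=0$ and $0$ otherwise (for $0 \le j < |H|$). The only organizational difference is that the paper argues by induction on the number of variables $m$ (handling one variable at a time and invoking the hypothesis on the remaining $m-1$), whereas you expand $P$ into monomials and factor the sum over $H^{m}$ into a product of single-variable sums all at once; the content is the same and your direct version is arguably a bit cleaner.
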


\begin{remark}
	\label{rem:necessity-of-degree-bound}
	The hypothesis that $\RMDegree < \SetCardinality{\RMSubDomain}$ is necessary for the lemma, as we now explain. Choose $\RMSubDomain = \Multiplicative{\SubField}$, where $\SubField$ is a proper subfield of $\Field$, $\RMVars=1$, and $\RMDegree=\SetCardinality{\RMSubDomain}$. Consider the polynomial $\VariableX^{\SetCardinality{\RMSubDomain}}$, which has degree at least $\RMDegree$: $\VariableX^{\SetCardinality{\RMSubDomain}}$ vanishes on $0$; however, $\VariableX^{\SetCardinality{\RMSubDomain}}$ evaluates to $1$ everywhere on $\RMSubDomain$ so that its sum over $\RMSubDomain$ equals $\SetCardinality{\RMSubDomain} \neq 0$. (Note that if $\RMSubDomain$ is a multiplicative subgroup of $\Field$ then $\Characteristic{\Field} \nmid \SetCardinality{\RMSubDomain}$ because $\SetCardinality{\RMSubDomain}$ equals $\Characteristic{\Field}^{k}-1$ for some positive integer $k$.)
\end{remark}

\begin{proof}
	The proof is by induction on the number of variables $\RMVars$. The base case is when $\RMVars=1$, which we argue as follows. The group $\RMSubDomain$ is cyclic, because it is a (finite) multiplicative subgroup of a field; so let $\omega$ generate $\RMSubDomain$. Writing $\PolyA(\VariableX_{1})=\sum_{j=0}^{\RMDegree} \beta_{j} \VariableX_{1}^{j}$ for some $\beta_{0},\dots,\beta_{\RMDegree} \in \Field$, we have
	\begin{equation*}
	\sum_{\alpha_{1} \in \RMSubDomain} \PolyA(\alpha_{1})
	= \sum_{i=0}^{\SetCardinality{\RMSubDomain}-1} \PolyA(\omega^i)
	= \sum_{i=0}^{\SetCardinality{\RMSubDomain}-1} \sum_{j=0}^{\RMDegree} \beta_{j} \omega^{ij}
	= \sum_{j=0}^{\RMDegree} \beta_{j} \sum_{i=0}^{\SetCardinality{\RMSubDomain}-1} (\omega^{j})^{i}
	= \beta_{0} \SetCardinality{\RMSubDomain}
	= f(0) \SetCardinality{\RMSubDomain}
	\enspace,
	\end{equation*}
	which proves the base case. The second-to-last equality follows from the fact that for every $\gamma \in \RMSubDomain$,
	\begin{equation*}
	\sum_{i=0}^{\SetCardinality{\RMSubDomain}-1} \gamma^{i}
	=
	\begin{cases}
	\SetCardinality{\RMSubDomain} & \text{if $\gamma=1$} \\
	\frac{\gamma^{\SetCardinality{\RMSubDomain}}-1}{\gamma-1}=0 & \text{if $\gamma \neq 1$}
	\end{cases}
	\enspace.
	\end{equation*}
	
	For the inductive step, assume the statement for any number of variables less than $\RMVars$; we now prove that it holds for $\RMVars$ variables as well. Let $\PolyA_{\alpha}$ denote $\PolyA$ with the variable $\VariableX_{1}$ fixed to $\alpha$. Next, apply the inductive assumption below in the second equality (with $\RMVars-1$ variables) and last one (with $1$ variable), to obtain
	\begin{align*}
	\sum_{\vec{\alpha} \in \RMSubDomain^{\RMVars}} \PolyA(\alpha_{1},\dots,\alpha_{\RMVars})
	&= \sum_{\alpha_{1}\in \RMSubDomain} \sum_{(\alpha_{2},\dots,\alpha_{\RMVars})\in \RMSubDomain^{\RMVars-1}} \PolyA_{\alpha_{1}}(\alpha_{2},\dots,\alpha_{\RMVars}) \\
	&= \SetCardinality{\RMSubDomain}^{\RMVars-1} \sum_{\alpha_{1} \in \RMSubDomain} \PolyA_{\alpha_{1}}(0^{\RMVars-1}) \\
	&= \SetCardinality{\RMSubDomain}^{\RMVars-1} \sum_{\alpha_{1} \in \RMSubDomain} \PolyA(\alpha_{1},0,\dots,0) \\
	&= \SetCardinality{\RMSubDomain}^{\RMVars} \PolyA(0,\dots,0)
	\enspace,
	\end{align*}
	as claimed.
\end{proof}

\begin{lemma}[Additive Groups]
	\label{lem:additive-group}
	Let $\Field$ be a field, $\RMSubDomain$ a finite additive subgroup of $\Field$, and $\RMVars,\RMDegree$ positive integers with $\RMDegree < \SetCardinality{\RMSubDomain}$. For every $\PolyA \in \PolynomialRingIndOne{\Field}{\RMVars}{\VariableX}{\RMDegree}$,
	\begin{equation*}
	\sum_{\vec{\alpha} \in \RMSubDomain^{\RMVars}} \PolyA(\vec{\alpha})
	= \kappa \cdot a_{0}^{\RMVars} \enspace,
	\end{equation*}
	where $\kappa$ is the coefficient of $\VariableX_{1}^{\SetCardinality{\RMSubDomain}-1} \cdots \VariableX_{\RMVars}^{\SetCardinality{\RMSubDomain}-1}$ in $\PolyA$, and $a_{0}$ is the (formal) linear term of the subspace polynomial $\prod_{h \in \RMSubDomain} (\VariableX - h)$. In particular, if $\PolyA$ has total degree strictly less than $\RMVars (\SetCardinality{\RMSubDomain}-1)$, then the above sum evaluates to $0$.
\end{lemma}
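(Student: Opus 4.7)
The plan is to reduce to a univariate calculation using the independence of the sum variables, and then to exploit the linearized structure of the subspace polynomial $\ZeroPoly{\RMSubDomain}$. Writing $\PolyA(\vec{\VariableX}) = \sum_{\vec{j}} \beta_{\vec{j}} \VariableX_{1}^{j_{1}} \cdots \VariableX_{\RMVars}^{j_{\RMVars}}$ (summing over multi-indices with $0 \leq j_{i} \leq \SetCardinality{\RMSubDomain}-1$), Fubini gives
\begin{equation*}
\sum_{\vec{\alpha} \in \RMSubDomain^{\RMVars}} \PolyA(\vec{\alpha})
= \sum_{\vec{j}} \beta_{\vec{j}} \prod_{i=1}^{\RMVars} p_{j_{i}}
\quad\text{where}\quad
p_{j} \DefineEqual \sum_{\alpha \in \RMSubDomain} \alpha^{j}.
\end{equation*}
The main claim then reduces to the univariate statement: $p_{j} = 0$ for $0 \leq j \leq \SetCardinality{\RMSubDomain}-2$, and $p_{\SetCardinality{\RMSubDomain}-1} = a_{0}$. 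Granting this, only the multi-index $\vec{j} = (\SetCardinality{\RMSubDomain}-1,\dots,\SetCardinality{\RMSubDomain}-1)$ contributes, giving $\kappa \cdot a_{0}^{\RMVars}$ as claimed; moreover, a polynomial of total degree less than $\RMVars(\SetCardinality{\RMSubDomain}-1)$ has $\kappa = 0$, yielding the ``in particular'' assertion.

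To establish the univariate claim, I would first note that since $\RMSubDomain$ is a finite additive subgroup of $\Field$, the field $\Field$ has some prime characteristic $p$ and $n \DefineEqual \SetCardinality{\RMSubDomain} = p^{k}$. The key structural input is that the subspace polynomial $Z(\VariableX) \DefineEqual \prod_{h \in \RMSubDomain}(\VariableX - h)$ is $\Field_{p}$-linearized, i.e., only monomials $\VariableX^{p^{i}}$ appear:
\begin{equation*}
Z(\VariableX) = \VariableX^{p^{k}} + c_{k-1}\VariableX^{p^{k-1}} + \cdots + c_{1}\VariableX^{p} + a_{0}\VariableX.
\end{equation*}
This is a classical fact, provable by observing that $Z(\VariableX+\VariableY) - Z(\VariableX) - Z(\VariableY)$ is a polynomial of individual degree less than $n$ in each of $\VariableX,\VariableY$ that vanishes on $\RMSubDomain \times \RMSubDomain$, hence is identically zero (which forces the linearized form via $p$-th power Frobenius considerations). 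A crucial consequence, in characteristic $p$, is that all the higher-order terms have derivative zero, so $Z'(\VariableX) = a_{0}$.

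The power sums are then extracted via the standard logarithmic-derivative identity, viewed in the ring $\Field((\VariableY^{-1}))$ of formal Laurent series in $\VariableY^{-1}$:
\begin{equation*}
\frac{Z'(\VariableY)}{Z(\VariableY)}
= \sum_{\alpha \in \RMSubDomain} \frac{1}{\VariableY-\alpha}
= \sum_{j \geq 0} \frac{p_{j}}{\VariableY^{j+1}}.
\end{equation*}
Substituting $Z'(\VariableY) = a_{0}$ and observing that $Z(\VariableY) = \VariableY^{n} \cdot (1 + O(\VariableY^{-1}))$ in this ring, the left-hand side equals $a_{0}/\VariableY^{n} + O(\VariableY^{-n-1})$; matching coefficients with the right-hand side gives $p_{j} = 0$ for $j < n-1$ and $p_{n-1} = a_{0}$.

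The main obstacle is verifying the linearized form of $Z$ and its consequence $Z' = a_{0}$, since the rest is a direct formal-series manipulation. Once that structural property is in hand, the argument is essentially a few lines: reduce to power sums, kill lower power sums via $Z' = a_{0}$ and the Laurent expansion, and conclude by Fubini. I do not anticipate any subtlety in the degree-counting for the ``in particular'' clause.
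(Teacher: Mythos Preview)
Your argument is correct. Both your proof and the paper's share the same overall structure: reduce the multivariate sum, by independence of the variables (you phrase it as Fubini, the paper as induction on $\RMVars$), to the univariate power sums $p_{j}=\sum_{\alpha\in\RMSubDomain}\alpha^{j}$, and then invoke the fact that $p_{j}=0$ for $j<\SetCardinality{\RMSubDomain}-1$ while $p_{\SetCardinality{\RMSubDomain}-1}=a_{0}$. The difference lies in how that univariate fact is handled. The paper simply cites \cite[Theorem 1]{ByottC99} for it; you instead give a self-contained derivation, using that the subspace polynomial $Z$ is $\Field_{p}$-linearized so that $Z'\equiv a_{0}$, and then reading off the power sums from the Laurent expansion of the logarithmic derivative $Z'/Z=\sum_{j\ge 0}p_{j}Y^{-j-1}$. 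Your route is more elementary in the sense of not relying on an external reference, and it makes transparent \emph{why} $a_{0}$ appears (it is literally $Z'$), at the modest cost of verifying the linearized form of $Z$. Either way the multivariate conclusion and the ``in particular'' clause follow identically.
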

\begin{proof}
	Without loss of generality, let $\RMDegree := \SetCardinality{\RMSubDomain} - 1$. The proof is by induction on the number of variables $\RMVars$. When $\RMVars = 1$, we have that $\PolyA(\VariableX) = \sum_{j=0}^{\RMDegree} \beta_{j} \VariableX^{j}$ for some $\beta_{0}, \ldots, \beta_{\RMDegree} \in \Field$. Then
	\begin{equation*}
	\sum_{\alpha \in \RMSubDomain} \PolyA(\alpha)
	= \sum_{\alpha \in \RMSubDomain} \sum_{j=0}^{\RMDegree} \beta_{j} \alpha^{j}
	= \sum_{j=0}^{\RMDegree} \beta_{j} \sum_{\alpha \in \RMSubDomain} \alpha^{j}
	= \beta_{\RMDegree} a_{0}
	\end{equation*}
	where the final equality follows by \cite[Theorem 1]{ByottC99}, and the fact that $d = \SetCardinality{\RMSubDomain} - 1$.
	
	For the inductive step, assume the statement for $\RMVars-1$ variables; we now prove that it holds for $\RMVars$ variables as well. Let $\PolyA_{\alpha}$ denote $\PolyA$ with the variable $\VariableX_{1}$ fixed to $\alpha$; we have $\PolyA_{\alpha}(\VariableX_{2}, \dots, \VariableX_{\RMVars}) = \sum_{\vec{e} \in \{0,\dots,\RMDegree\}^{\RMVars}} \beta_{\vec{e}} \cdot \alpha^{e_{1}} \VariableX_{2}^{e_{2}} \dots \VariableX_{\RMVars}^{e_{\RMVars}}$. Next, apply the inductive hypothesis below in the second equality (with $\RMVars-1$ variables) to obtain
	\begin{equation*}
	\sum_{\vec{\alpha} \in \RMSubDomain^{\RMVars}} \PolyA(\alpha_{1},\dots,\alpha_{\RMVars})
	= \sum_{\alpha_{1}\in \RMSubDomain} \sum_{(\alpha_{2},\dots,\alpha_{\RMVars})\in \RMSubDomain^{\RMVars-1}} \PolyA_{\alpha_{1}}(\alpha_{2},\dots,\alpha_{\RMVars})
	= \sum_{\alpha_{1}\in \RMSubDomain} a_{0}^{\RMVars-1} \kappa(\alpha_{1}) \enspace,
	\end{equation*}
	where $\kappa(\VariableX_{1}) := \sum_{j=0}^{\RMDegree} \beta_{(j, \RMDegree, \dots, \RMDegree)} \VariableX_{1}^{j}$. Applying the hypothesis again for $1$ variable yields
	\begin{equation*}
	\sum_{\alpha_{1}\in \RMSubDomain} a_{0}^{\RMVars-1} \kappa(\alpha_{1}) = a_{0}^{\RMVars} \cdot \beta_{(\RMDegree, \dots, \RMDegree)} \enspace,
	\end{equation*}
	and the claim follows.
\end{proof}

\dtignore{
\doclearpage
\newcommand{\ILS}{\Lambda}
\section{Impossibility of perfectly hiding efficient-prover interactive locking schemes}

Following \cite{GoyalIMS10}, we define an interactive locking scheme.

\begin{definition}
	An interactive locking scheme $\ILS = (\Prover, \Verifier)$ is an IPCP of the following form:
	
	\begin{itemize}
		\item The common input is of the form $1^{n}$ where $n$ is the security parameter.
		\item The prover $\Prover$ receives a private input $b \in \Bits$ and private randomness $r$.
		\item The protocol has two distinct phases:
		\begin{inparaenum}[(i)]
			\item the commitment phase, consisting of the oracle message $\Proof$ and some rounds of interaction between $\Prover$ and $\Verifier$, and
			\item the decommitment phase, consisting of some rounds of interaction between $\Prover$ and $\Verifier$.
		\end{inparaenum}
		The verifier $\Verifier$ may query the oracle $\Proof$ at any time during the protocol. After the decommitment phase, the verifier may either accept and output $b \in \Bits$ or reject.
		\item \textsc{Completeness. }
		For any $b \in \Bits$, $\Verifier$ accepts with probability $1$ when interacting with $\Prover$.
		\item \textsc{Binding. }
		$\ILS$ is $(1-\delta)$-binding if for any $\Malicious{\Prover}$, the probability over $\Verifier$'s randomness in the commitment phase that the commitment is $\delta$-ambiguous is at most $\delta$. The commitment is $\delta$-ambiguous if there exist malicious provers $\Malicious{\Prover}_{0}$, $\Malicious{\Prover}_{1}$ such that for each $i \in \Bits$, $\Verifier$ accepts and outputs $i$ when interacting with $\Malicious{\Prover}_{i}$ in the decommitment phase with probability at least $\delta$.
		\item \textsc{Hiding. }
		Let $\Malicious{\Verifier}$ be a malicious verifier, and let $X_{b}$ be the random variable corresponding to its view (i.e., oracle queries and prover messages) when interacting with $\Prover$ on input $b$. Then $\ILS$ is $(\epsilon,\QueryBound)$-hiding if the statistical distance between $X_{0}$ and $X_{1}$ is at most $\epsilon$, for all $\Malicious{\Verifier}$ making at most $\QueryBound$ queries to $\Proof$.
		\item \textsc{Efficiency. }
		We say $\ILS$ is efficient if $\Prover$ and $\Proof$ can be implemented by randomized polynomial-time Turing machines. In particular, $\Proof$ may be of exponential length, but there must be a stateless polynomial-time algorithm for computing the result of any query.
	\end{itemize}
	
\end{definition}

\begin{theorem}
	There exists no $\ILS$ which is $\Omega(1)$-binding, $(0,n^{\omega(1)})$-hiding and efficient.
\end{theorem}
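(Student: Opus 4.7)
The plan is to derive a contradiction by constructing a computationally unbounded malicious prover $\Malicious{\Prover}$ that fully breaks binding, mirroring the classical impossibility of perfectly hiding commitment schemes but carefully adapted to the IPCP setting. Throughout, we crucially use efficiency to keep a certain pre-querying malicious verifier within the $n^{\omega(1)}$ hiding budget.

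The cheating strategy is as follows. In the commit phase, $\Malicious{\Prover}$ runs the honest $\Prover(0, r_{0})$ with fresh randomness $r_{0}$, thereby committing to the oracle $\Proof := \Proof_{0,r_{0}}$ and producing commit-phase messages $M_{c}$; let $V_{c} = (M_{c}, Q_{c}, \Proof\vert_{Q_{c}})$ denote the honest verifier's commit view. Given a target bit $b \in \Bits$ to decommit to, $\Malicious{\Prover}$ uses unbounded search to find randomness $r_{b}$ satisfying (i) running $\Prover(b, r_{b})$ against the same verifier randomness reproduces the commit view $V_{c}$, and (ii) the associated oracle $\Proof_{b, r_{b}}$ agrees with the committed $\Proof$ on the set $Q_{d}$ of positions the honest verifier will query in decommit when $\Prover(b, r_{b})$ plays the decommit phase. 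It then executes $\Prover(b, r_{b})$ in decommit, answering all oracle queries from the committed $\Proof$. By conditions (i)--(ii) together with perfect completeness, the honest verifier's decommit view in the cheater's execution is exactly what it would see in an honest run of $\Prover(b, r_{b})$, so it accepts and outputs $b$ with probability $1$. Thus, whenever the required $r_{0}$ and $r_{1}$ both exist, the commitment is $1$-ambiguous, which contradicts $\Omega(1)$-binding as soon as this event occurs with constant probability.

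Existence of $r_{b}$ is obtained from perfect hiding. The key observation is that, since $\Prover$, $\Proof$ and $\Verifier$ are all polynomial-time, one can construct a malicious verifier $\Malicious{\Verifier}$ that (a) during the commit phase runs $\Verifier$ honestly, and (b) additionally simulates a complete decommit-phase interaction internally with a fixed decommit-phase randomness, pre-querying the at most $\poly(n)$ positions $Q_{d}$ that this simulated decommit touches. Since $|Q_{c}| + |Q_{d}| = n^{O(1)} \leq n^{\omega(1)}$, $\Malicious{\Verifier}$ is within the hiding budget, and $(0, n^{\omega(1)})$-hiding forces the joint distribution of $(V_{c}, \Proof\vert_{Q_{d}})$ observed by $\Malicious{\Verifier}$ to be identical under $b=0$ and $b=1$. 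Therefore, for any outcome that $\Prover(0, r_{0})$ produces with positive probability, $\Prover(1, r_{1})$ produces it with the same probability, which is exactly conditions (i) and (ii) for $r_{1}$; symmetrically for $r_{0}$ relative to any commit run under $b=1$.

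The main obstacle is the \emph{adaptivity} of $Q_{d}$: it depends on $\Malicious{\Prover}$'s decommit messages, which depend on $r_{b}$, which we are trying to choose so that $\Proof$ matches on $Q_{d}$. This circularity must be broken by a fixed-point style argument: we plan to first fix a candidate decommit-phase transcript $\tau$ that $\Malicious{\Prover}$ commits to replay, let $\Malicious{\Verifier}$ pre-query the positions $Q_{d}(\tau)$ that the honest verifier would visit against $\tau$, and then verify that with constant probability the $r_{b}$ produced by perfect hiding actually induces the transcript $\tau$. Averaging over $\tau$ drawn from the honest distribution of $\Prover(b, \cdot)$'s decommit transcripts, perfect hiding (together with completeness) guarantees a matching $r_{b}$ exists with constant probability, which suffices to derive the desired contradiction. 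The efficiency hypothesis enters precisely here, since it is what ensures that $\Malicious{\Verifier}$'s internal simulation of $\Verifier$ runs in polynomial time and issues only polynomially many queries.
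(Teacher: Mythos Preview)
Your approach is genuinely different from the paper's and, as written, has a real gap. The paper attacks \emph{hiding} via a malicious verifier; you attack \emph{binding} via a malicious prover.

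The paper's argument is short: without loss of generality the honest verifier makes no oracle queries during commit. The malicious verifier $\Malicious{\Verifier}$ guesses the prover's randomness $r' \in \Bits^{n^{c}}$, then internally simulates the full protocol $k$ times, running $\Verifier$ (with fresh verifier randomness each time) against $\Prover(0;r')$ and answering all oracle queries from the \emph{real} oracle $\Proof$. It outputs $0$ iff every simulated run outputs $0$. If $b=0$, then with probability $2^{-n^{c}}$ the guess is correct, the simulated prover is exactly the real prover, and completeness forces every run to output $0$. If $b=1$, binding bounds the probability that a single simulated run outputs $0$ by $\delta$, so the probability all $k$ runs output $0$ is at most $\delta^{k}$; choosing $k > n^{c}/\log(1/\delta)$ makes this smaller than $2^{-n^{c}}$. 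Total queries are $k \cdot n^{c} = \poly(n)$, well under $n^{\omega(1)}$. Efficiency enters only to bound $|r| \leq n^{c}$, so that guessing succeeds with nonzero probability.

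Your plan runs into trouble precisely at the ``fixed-point style argument,'' and I do not see how to close it. Perfect hiding against $n^{\omega(1)}$ queries equates the distributions of the oracle on any \emph{fixed} poly-size (or even $n^{\omega(1)}$-size) query set; it says nothing about the full oracle function, which is what is actually committed. To equivocate, you need an $r_{1}$ such that $\Proof_{1,r_{1}}$ agrees with the committed $\Proof_{0,r_{0}}$ on the positions $Q_{d}$ the verifier will visit in decommit; but $Q_{d}$ depends on the decommit transcript, which depends on $r_{1}$. Breaking the circularity by having $\Malicious{\Verifier}$ pre-query $Q_{d}(\tau)$ for a candidate $\tau$ only shows, for each fixed $\tau$, that the distribution of $(V_{c},\Proof|_{Q_{d}(\tau)})$ is the same under $b\in\{0,1\}$; it does \emph{not} produce an $r_{1}$ whose decommit transcript \emph{is} $\tau$. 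Your averaging suggestion (draw $\tau$ from the honest $\Prover(1,\cdot)$ distribution) requires $\Malicious{\Verifier}$ to internally sample $\Prover(1,r')$'s decommit transcript, but $\Malicious{\Verifier}$ neither knows the real $r$ nor can identify which $r'$ are consistent with the committed oracle without already knowing where to look. Enumerating candidate $r'$ or candidate $\tau$ gives $2^{\poly(n)}$ possibilities and hence $2^{\poly(n)}$ pre-queries, which need not be within the $n^{\omega(1)}$ budget. The paper sidesteps all of this by guessing $r'$ once (paying only in success probability, not in queries) and then working with the full simulated oracle, so there is no circularity to resolve.
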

\begin{proof}
	Let $\ILS = (\Prover, \Verifier)$ be a locking scheme which is $(1-\delta)$ binding and efficient. In particular, suppose both $\Prover$ and $\Verifier$ can be implemented by a randomized Turing machine in time $n^{c}$. Without loss of generality, we may assume that $\Verifier$ makes no queries to $\Proof$ during the commitment phase: it can instead ask the prover to provide these values and check them during the decommitment phase. We design a verifier making at most $\poly(n)$ queries which distinguishes the cases $b = 0$ and $b = 1$ with nonzero probability, which proves the statement.
	
	Let $\Malicious{\Verifier}$ be a malicious verifier defined as follows. $\Malicious{\Verifier}$ chooses $r' \in \Bits^{n^{c}}$. It then simulates $\Verifier$ interacting with $\Prover(0;r')$ $k$ times, and outputs $0$ if and only if every run outputs $0$; otherwise it outputs $1$.
	
	Suppose that $\Proof$ is generated by $P(0;r)$ for some $r \in \Bits^{n^{c}}$. Then with probability at least $2^{-n^{c}}$, $\Malicious{\Verifier}$ chooses $r'$ that is identical to the real randomness of the prover. By the completeness of the scheme, with probability at least $2^{-n^{c}}$, $\Malicious{\Verifier}$ outputs $0$.
	
	Now suppose that $\Proof$ is generated by $P(1;r)$ for some $r \in \Bits^{n^{c}}$. On each of the $k$ runs, $\Verifier$ outputs $0$ with probability at most $\delta$. Hence the probability that $\Verifier$ outputs $0$ is at most $\delta^{k} < 2^{-n^{c}}$ for $k > n^{c}/(\log 1/\delta)$. The number of queries $\Malicious{\Verifier}$ makes to $\Proof$ is $n^{2c}/(\log 1/\delta)$, which is at most $n^{\omega(1)}$ for $\delta = O(1)$.
\end{proof}
}

\doclearpage
\section{Proof of \thmref{thm:pzk-for-nexp} via sum-product circuits}
\label{sec:zk-nexp-spc}

In this section we re-prove \thmref{thm:pzk-for-nexp} using sum-product circuits.
In particular we reduce $\OracleSATRelation$ to $\SPCSRelation$ by constructing a sum-product circuit whose satisfaction encodes oracle 3-satisfiability. The reduction then yields \thmref{thm:pzk-for-nexp} by the perfect zero knowledge IPCP for $\SPCSRelation$ given in \thmref{thm:pzk-for-SPCS}.

\begin{lemma}[$\OracleSATRelation \to \SPCSRelation$]
	\label{lemma:o3sat-to-spcs}
	Let $\SPSubset$ be an extension field of $\Field_{2}$ with $\SetCardinality{H} \geq 8$, and $\Field$ an extension field of $\SPSubset$. There exist polynomial-time functions $f,g$ such that for every $r, s \in \Naturals$ and boolean formula $B \colon \Bits^{r + 3s + 3} \to \Bits$:
	\begin{enumerate}
		\item if $A \colon \Bits^{s} \to \Bits$ is such that $((r, s, B), A) \in \OracleSATRelation$ then
		\begin{equation*}
		\Pr_{\vec{x}, \vec{y} \gets \Field^{r + 3s}}\Big[
		\Big((f(r, s, B, \vec{x}, \vec{y}), 0, \EmptyVector), g(A)\Big) \in \SPCSRelation
		\Big]=1
		\enspace;
		\end{equation*}
		\item if $(r, s, B) \notin \Language(\OracleSATRelation)$ then
		\begin{equation*}
		\Pr_{\vec{x}, \vec{y} \gets \Field^{r + 3s}}\Big[
		\Big(f(r, s, B, \vec{x}, \vec{y}), 0, \EmptyVector\Big) \in \Language(\SPCSRelation)
		\Big]\leq \frac{r + 3s}{\SetCardinality{\Field}}
		\enspace.
		\end{equation*}
	\end{enumerate}
	Moreover, $f(r, s, B, \vec{x}, \vec{y})$ is a sum-product circuit $\SPCircuitTuple$ with $\InternalVertexSetDegree = O(\BitSize{B} \cdot \SetCardinality{\SPSubset})$, $\SPLeafDegree = O(\SetCardinality{\SPSubset})$, $\SetCardinality{\VertexSet(\Graph)} = \Theta(1)$, $\SPMaxArity(\Graph) = O((r + s)/\log \SetCardinality{\SPSubset})$, $\GraphMaxInDegree{\Graph} = \Theta(1)$, $\Width{\Graph} = \Theta(1)$.
\end{lemma}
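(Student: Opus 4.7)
The plan is to take the BFL arithmetization used in the direct proof of \thmref{thm:pzk-for-nexp} and repackage it as a depth-one sum-product circuit, so that applying \thmref{thm:pzk-for-SPCS} to the output of this reduction cleanly reproduces \thmref{thm:pzk-for-nexp}. Fix lexicographic bijections $\gamma_1\colon\SPSubset^{m_1}\to\Bits^r$ and $\gamma_2\colon\SPSubset^{m_2}\to\Bits^s$ with $m_1=r/\log\SetCardinality{\SPSubset}$ and $m_2=s/\log\SetCardinality{\SPSubset}$, and let $\LD{\gamma}_1,\LD{\gamma}_2,\LD{\gamma},\LD{B}$ be the corresponding low-degree extensions from that proof. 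Build $\Graph$ with only two vertices, a root $\rho$ and a leaf $u$, connected by three parallel edges $e_1,e_2,e_3$ with $\SPFreeProjection[e_i]=\emptyset$ and with the three $\SPSumProjection[e_i]$ selecting disjoint $m_2$-blocks among the $\SPVars[\rho]=m_1+3m_2$ summation coordinates of $\rho$; this forces $\SPArity{\rho}=0$ and $\SPArity{u}=m_2$. Define $\SPPoly[\rho]$ (with $\vec{x},\vec{y}$ hardcoded as constants) to be the arithmetic circuit computing
\[
\LD{B}\bigl(\LD{\gamma}(\vec{\alpha},\vec{\beta}_1,\vec{\beta}_2,\vec{\beta}_3),h_1,h_2,h_3\bigr)\prod_{i=1}^{r+3s}\!\bigl(1+(x_i-1)\LD{\gamma}(\cdot)_i\bigr)+h_1(1-h_1)\prod_{i=1}^{r+3s}\!\bigl(1+(y_i-1)\LD{\gamma}(\cdot)_i\bigr),
\]
where the $h_j$'s are the three incoming child values. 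Set $f(r,s,B,\vec{x},\vec{y}) \DefineEqual \SPCircuit$, and let $g(A)$ label $u$ with the unique individual-degree-$(\SetCardinality{\SPSubset}-1)$ polynomial extending $\vec{\beta}\mapsto A(\gamma_2(\vec{\beta}))$. Both $f$ and $g$ are computable in time $\poly(\BitSize{B},s,\log\SetCardinality{\Field})$ by inspection.

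For completeness, expanding $\SPValueL{\EmptyVector,g(A)}{\SPCircuit}$ and using that $\LD{\gamma}$ agrees with $\gamma$ on $\SPSubset^{m_1+3m_2}$ and that $g(A)(\vec{\beta}_j)=A(\gamma_2(\vec{\beta}_j))$ for $\vec{\beta}_j\in\SPSubset^{m_2}$ recovers exactly the polynomial $F(\vec{x},\vec{y})$ from the proof of \thmref{thm:pzk-for-nexp}; when $A$ witnesses $\OracleSATRelation$ that proof shows $F\equiv 0$, so $\SPValueL{\EmptyVector,g(A)}{\SPCircuit}=0$ for every $\vec{x},\vec{y}$. For soundness, any admissible prover witness at $u$ is a polynomial $\tilde{A}'\in\PolynomialRingIndOne{\Field}{m_2}{\VariableX}{\SetCardinality{\SPSubset}-1}$ and determines a function $\tilde{A}\colon\Bits^s\to\Field$ via $\tilde{A}(\vec{b})\DefineEqual\tilde{A}'(\gamma_2^{-1}(\vec{b}))$; the same computation shows the circuit's value equals $F_{\tilde{A}}(\vec{x},\vec{y})$. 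The monomial-coefficient argument from the soundness analysis of \thmref{thm:pzk-for-nexp} shows $F_{\tilde{A}}\equiv 0$ iff $\tilde{A}$ is boolean and is a valid witness; under the hypothesis $(r,s,B)\notin\Language(\OracleSATRelation)$ no such $\tilde{A}$ exists, so $F_{\tilde{A}}$ is a nonzero polynomial of total degree at most $r+3s$, and Schwartz--Zippel gives the required $(r+3s)/\SetCardinality{\Field}$ probability bound.

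The main obstacle is the degree bookkeeping for $\SPPoly[\rho]$: one must verify $\InternalVertexSetDegree=O(\SetCardinality{B}\cdot\SetCardinality{\SPSubset})$. The $\LD{B}$ factor has total degree $O(\SetCardinality{B})$ in its $r+3s+3$ inputs; substituting the $\LD{\gamma}_j$'s (each coordinate of degree at most $\SetCardinality{\SPSubset}-1$) and the linear $h_j$ contributes at most $O(\SetCardinality{B}\cdot\SetCardinality{\SPSubset})$, while each product $\prod_{i=1}^{r+3s}(1+(x_i-1)\LD{\gamma}(\cdot)_i)$ contributes $O((r+3s)\cdot\SetCardinality{\SPSubset})$, which is also $O(\SetCardinality{B}\cdot\SetCardinality{\SPSubset})$ since $r+3s<\SetCardinality{B}$. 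The remaining parameters---$\SPLeafDegree=\SetCardinality{\SPSubset}-1$, $\SetCardinality{\VertexSet(\Graph)}=2$, $\GraphMaxInDegree{\Graph}=3$, $\Width{\Graph}=1$, and $\SPMaxArity(\Graph)=m_2=O((r+s)/\log\SetCardinality{\SPSubset})$---are immediate from the construction.
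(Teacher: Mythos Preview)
Your construction and argument are essentially identical to the paper's: the same two-vertex ari-graph with three parallel edges, the same projections, the same combiner $\SPPoly[\rho]$ (with $\vec{x},\vec{y}$ hardcoded), the same leaf labeling $g(A)$, and the same completeness/soundness reasoning via $F(\vec{x},\vec{y})$ and Schwartz--Zippel; your degree bookkeeping is in fact more explicit than the paper's. The only slip is setting $\SPLeafDegree=\SetCardinality{\SPSubset}-1$, which violates the requirement $\SPLeafDegree\geq\SetCardinality{\SPSubset}$ in \defref{def:sum-product-circuit}; just take $\SPLeafDegree=\SetCardinality{\SPSubset}$ as the paper does (your leaf polynomial still has individual degree $\SetCardinality{\SPSubset}-1\leq\SPLeafDegree$, and the soundness argument is unaffected since only the values of the witness on $\SPSubset^{m_2}$ enter the sum).
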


\begin{proof}
	Recall that in \secref{sec:zk-nexp} we reduced checking $((B,r,s),A) \in \OracleSATRelation$ to checking whether the following expression is the zero polynomial, for some polynomials $g_{1}, g_{2}$ which depend on the low-degree extension of $A$.
	\begin{align*}
	F(\vec{\VariableX}, \vec{\VariableY}) =
	\sum_{\substack{\vec{\alpha} \in \SCSubset^{m_{1}} \\
			\vec{\beta}_{1}, \vec{\beta}_{2}, \vec{\beta}_{3} \in \SCSubset^{m_{2}}}}
	&g_{1}(\LD{\gamma}(\vec{\alpha}, \vec{\beta}_{1}, \vec{\beta}_{2}, \vec{\beta}_{3}))
	\prod_{i=1}^{r+3s} (1 + (\VariableX_{i} - 1)\LD{\gamma}(\vec{\alpha}, \vec{\beta}_{1}, \vec{\beta}_{2}, \vec{\beta}_{3})_{i}) \\
	&+g_{2}(\LD{\gamma}_{2}(\vec{\beta}_{1}))
	\prod_{i=1}^{r+3s} (1 + (\VariableY_{i} - 1)\LD{\gamma}(\vec{\alpha}, \vec{\beta}_{1}, \vec{\beta}_{2}, \vec{\beta}_{3})_{i}) \enspace.
	\end{align*}
	
	Fix $\vec{x}, \vec{y} \in \Field^{r + 3s}$. We construct a sum-product circuit $\SPCircuit = \SPCircuitTuple$ whose value with explicit input $\EmptyVector$ (the empty map) and auxiliary input $\LD{A}$ is $F(\vec{x}, \vec{y})$. Fix $\InternalVertexSetDegree \DefineEqual O(\BitSize{B} \cdot \SetCardinality{H})$ and $\SPLeafDegree \DefineEqual \SetCardinality{\SPSubset}$. Let $G = (\VertexSet, \EdgeSet, \SPSumProjection, \SPFreeProjection)$ be the ari-graph defined as follows:
	\begin{align*}
	\VertexSet &\DefineEqual \{ v, u \} \enspace, \\
	\EdgeSet &\DefineEqual \{ e_{1}, e_{2}, e_{3} = (v, u) \} \enspace, \\
	\SPFreeProjection[e] &\DefineEqual \varnothing \quad \text{ for all $e \in \EdgeSet$} \enspace, \\
	\SPSumProjection[e_{i}] &\DefineEqual m_{1} + (i-1)m_{2} + [m_{2}] \quad \text{ for $i \in \{1,2,3\}$} \enspace.
	\end{align*}
	The vertex label $\SPPoly$ is given by
	\begin{equation*}
	\SPPoly[v] \DefineEqual
	\left(
	\LD{B}(\vec{\VariableY}, \VariableZ_{1}, \VariableZ_{2}, \VariableZ_{3}) \cdot
	\prod_{i=1}^{r+3s+3}(1 + (x_{i} - 1))\LD{\gamma}(\vec{\VariableY})_{i}
	\right)
	+
	\left(
	\VariableZ_{1} (1-\VariableZ_{1}) \cdot
	\prod_{i=1}^{r+3s} (1 + (y_{i} - 1))\LD{\gamma}(\vec{\VariableY})_{i}
	\right)
	\enspace,
	\end{equation*}
	where the variables $Z_{1},Z_{2},Z_{3}$ correspond to the edges $e_{1},e_{2},e_{3}$ respectively.
	
	From the construction of the circuit, if $\SPAuxInput_{u} = \LD{A}(\LD{\gamma}_{2}(\cdot))$ then $\SPValueL{\SPInput, \SPAuxInput}{\SPCircuit} = F(\vec{x}, \vec{y})$. The stated parameters follow easily.
	
	If $((r, s, B), A) \in \OracleSATRelation$ then $F(\vec{\VariableX}, \vec{\VariableY})$ is the zero polynomial. Hence $F(\vec{x}, \vec{y}) = 0$ for all $\vec{x}, \vec{y} \in \Field^{r + 3s}$. We conclude that if $((r, s, B), A) \in \OracleSATRelation$ then $((\SPCircuit, 0, \SPInput), \SPAuxInput) \in \SPCSRelation$ with probability $1$ over the choice of $\vec{x}, \vec{y}$.
	
	If $(r, s, B) \notin \Language(\OracleSATRelation)$ then there is no choice of $\LD{A}$ such that $F(\vec{\VariableX}, \vec{\VariableY})$ is the zero polynomial. Thus, for any choice of $\LD{A}$, $F(\vec{x}, \vec{y}) = 0$ with probability at most $(r + 3s)/\SetCardinality{\Field}$ over the choice of $\vec{x}, \vec{y}$. We conclude that if $(r, s, B) \notin \Language(\OracleSATRelation)$ then $(\SPCircuit, 0, \SPInput) \in \Language(\SPCSRelation)$ with probability at most $(r + 3s)/\SetCardinality{\Field}$.
\end{proof}

\begin{proof}[Proof of \thmref{thm:pzk-for-nexp}]
	The protocol proceeds as follows. The prover and verifier determinstically (non-interactively) choose an extension field $\SPSubset$ of $\Field_{2}$ and an extension field $\Field$ of $\SPSubset$ such that $\SetCardinality{\Field} = \poly(\SetCardinality{\SPSubset})$ and $\SetCardinality{\SPSubset} = \poly(\BitSize{B}, \log \SCStrength)$. The verifier chooses $\vec{x}, \vec{y} \in \Field^{r + 3s}$ uniformly at random, and sends them to the prover; the prover and verifier construct the sum-product circuit $\SPCircuit \DefineEqual f(r, s, B, \vec{x}, \vec{y})$, and then engage in the protocol of \thmref{thm:pzk-for-SPCS} on the input $(\SPCircuit, 0, \EmptyVector)$, with auxiliary input $g(A)$.
	
	If $((B,r,s),A) \in \OracleSATRelation$ then $((\SPCircuit, 0, \EmptyVector), g(A)) \in \SPCSRelation$ with probability $1$, and so the verifier will accept with probability $1$. If $(B,r,s) \notin \Language(\OracleSATRelation)$ then $(\SPCircuit, 0, \EmptyVector) \in \Language(\SPCSRelation)$ with probability at most $(r + 3s)/\SetCardinality{\Field}$. Also, if $(\SPCircuit, 0, \EmptyVector) \not\in \Language(\SPCSRelation)$, then the verifier accepts with probability at most $O(\InternalVertexSetDegree\SPLeafDegree \cdot \GraphMaxInDegree{\Graph} \cdot (\SPMaxArity(\Graph) + \alpha) \cdot \SetCardinality{\VertexSet(\Graph)}/\SetCardinality{\Field}) = O(\frac{\SetCardinality{\SPSubset}^{2} \cdot\BitSize{B} \cdot (r+s+\log \SCStrength)}{\SetCardinality{\Field} \cdot \log \SetCardinality{\SPSubset}})$. By our choices of the cardinality of $\SPSubset$ and $\Field$, both of these probabilities are $O(1/\BitSize{B})$ so that, by a union bound, the probability that the verifier accepts is also $O(1/\BitSize{B})$, which is less than $1/2$.
	
	The protocol as stated is not an IPCP because there is a round of interaction before the oracle is sent. However, observe that the oracle does not depend on the choice of $\vec{x}, \vec{y}$, and therefore can be sent before this interaction.
\end{proof}

\doclearpage

\clearpage
\printbibliography

\end{document}